\documentclass[a4paper,oneside]{book}
\usepackage[british]{babel}
\usepackage{amsmath,amssymb,graphicx,stmaryrd,bbm,latexsym,calc,
  capt-of,xcolor,alltt,ifthen,caption,subcaption,cancel,hyperref,colortbl}
\usepackage[tikz]{mdframed}
\usepackage{setspace}
\usepackage[left=3.5cm, right=2.5cm, top=2.5cm, bottom=2.5cm]{geometry} 
\usepackage{cite}
\usepackage{multirow}
\captionsetup{font={stretch=1.66}}
\newlength\mylen

\doublespacing
\setcounter{tocdepth}{3}
\setcounter{secnumdepth}{3}

\newcommand{\TeXmacs}{T\kern-.1667em\lower.5ex\hbox{E}\kern-.125emX\kern-.1em\lower.5ex\hbox{\textsc{m\kern-.05ema\kern-.125emc\kern-.05ems}}}
\newcommand{\longequal}{{=\!\!=}}
\newcommand{\mathd}{\mathrm{d}}
\newcommand{\nin}{\not\in}
\newcommand{\nobracket}{}
\newcommand{\nocomma}{}
\let\oldbullet\bullet
\let\oldcirc\circ
\let\oldodot\odot
\let\olddownarrow\downarrow
\let\olduparrow\uparrow
\let\oldcircledast\circledast
\renewcommand{\bullet}{{\oldbullet}}
\renewcommand{\circ}{{\oldcirc}}
\renewcommand{\odot}{{\oldodot}}
\renewcommand{\downarrow}{{\olddownarrow}}
\renewcommand{\uparrow}{{\olduparrow}}
\renewcommand{\circledast}{{\oldcircledast}}
\newcommand{\nospace}{}

\newcommand{\tmcolor}[2]{{\color{#1}{#2}}}
\newcommand{\tmmathbf}[1]{\ensuremath{\boldsymbol{#1}}}
\newcommand{\tmname}[1]{\textsc{#1}}
\newcommand{\tmop}[1]{\ensuremath{\operatorname{#1}}}
\newcommand{\tmrsub}[1]{\ensuremath{_{\textrm{#1}}}}
\newcommand{\tmscript}[1]{\text{\scriptsize{$#1$}}}
\newcommand{\tmtextbf}[1]{{\bfseries{#1}}}
\newcommand{\tmtextit}[1]{{\itshape{#1}}}
\newcommand{\tmtexttt}[1]{{\ttfamily{#1}}}
\newcommand{\tmverbatim}[1]{{\ttfamily{#1}}}
\newcommand{\um}{-}
\newcommand{\upl}{+}
\newenvironment{proof}{\noindent\textbf{Proof\ }}{\hspace*{\fill}$\Box$\medskip}
\newenvironment{tmcode}[1][]{\begin{alltt} }{\end{alltt}}
\newenvironment{tmindent}{\begin{tmparmod}{1.5em}{0pt}{0pt} }{\end{tmparmod}}
\newenvironment{tmparmod}[3]{\begin{list}{}{\setlength{\topsep}{0pt}\setlength{\leftmargin}{#1}\setlength{\rightmargin}{#2}\setlength{\parindent}{#3}\setlength{\listparindent}{\parindent}\setlength{\itemindent}{\parindent}\setlength{\parsep}{\parskip}} \item[]}{\end{list}}
\newtheorem{theorem}{Theorem}
\newcommand{\tmfloatcontents}{}
\newlength{\tmfloatwidth}
\newcommand{\tmfloat}[5]{
  \renewcommand{\tmfloatcontents}{#4}
  \setlength{\tmfloatwidth}{\widthof{\tmfloatcontents}+1in}
  \ifthenelse{\equal{#2}{small}}
    {\setlength{\tmfloatwidth}{0.45\linewidth}}
    {\setlength{\tmfloatwidth}{\linewidth}}
  \begin{minipage}[#1]{\tmfloatwidth}
    \begin{center}
      \tmfloatcontents
      \captionof{#3}{#5}
    \end{center}
  \end{minipage}}
\newmdenv[hidealllines=false,innertopmargin=1ex,innerbottommargin=1ex,innerleftmargin=1ex,innerrightmargin=1ex]{tmornamented}

\let\originaltable\table
\let\endoriginaltable\endtable
\renewenvironment{table}[1][h]{%
  \originaltable[#1]
  \centering}%
  {\endoriginaltable}
  
\let\originalfigure\figure
\let\endoriginalfigure\endfigure
\renewenvironment{figure}[1][h]{%
  \originalfigure[#1]
  \centering}%
  {\endoriginalfigure}

\newcommand\bigforall{\mbox{\LARGE $\mathsurround0pt\forall$}} 
\definecolor{wordblue}{rgb}{0.69,0.851,1.0}
\definecolor{wordgreen}{rgb}{0.8745,1.0,0.8745}
\definecolor{wordyellow}{rgb}{1.0,1.0,0.8745}
\newcolumntype{C}{>{\hfil$}p{\mylen}<{$\hfil}}


%

\addto\captionsbritish{
  \renewcommand{\contentsname}%
    {Table of Contents}%
}

\begin{document}

\title{
  {\Huge \textbf{Low-dimensional}\\
  \textbf{quantum systems}
  }
}

\author{
  \bigskip \Large \tmname{Marcin Szyniszewski}\bigskip\\
  \textit{The Department of Physics, Lancaster University}\\
  \textit{The School of Chemistry, the University of Manchester}
  \\
  \\
  \\
  A dissertation submitted for the degree of
  Doctor~of~Philosophy\\
  at Lancaster University and the University of Manchester\\
  \\
  \\
  \\
  \date{August 2016}
}

\maketitle

\newpage

\pagenumbering{gobble}

\begin{center}
  {\vspace*{10cm}\normalsize{\tmtextit{``Better to illuminate than merely to 
  shine,\\
  to deliver to others contemplated truths than merely to contemplate.''\\
  {\vspace{1cm}\hspace{21.5em}}--- Thomas Aquinas}}}
\end{center}

\chapter*{Foreword}

\pagenumbering{arabic}
\setcounter{page}{3}

This thesis describes work carried out between September 2012 and August 2016
in the Condensed Matter Theory group at the Department of Physics, Lancaster
University, and at the School of~Chemistry in the University of Manchester,
under the supervision of Dr~N.D.~Drummond. The following sections of this
thesis are included in work that has been published, is submitted or to be
submitted:
\begin{description}
  \item[Chapter \ref{ch:1-criticality}:] M.~Szyniszewski, and E.~Burovski,
      ``The generalized $t$-$V$ model in one dimension'', J.~Phys.: Conf. Ser.
      592, 012057 (2015).
  
  \item[Chapter \ref{ch:1-cdw}:] M.~Szyniszewski, ``Charge-density-wave phases
      of the generalized $t$-$V$ model'', available online on arXiv:1511.07043
      [cond-mat.str-el].
      
  \item[Chapter \ref{ch:2-results}:] M.~Szyniszewski, E.~Mostaani,
      N.~D.~Drummond, and V.~I.~Fal'ko, ``Trions and biexcitons in
      two\mbox{-}dimensional transitional metal dichalcogenide semiconductors'',
      to be submitted.\\
      E.~Mostaani, M.~Szyniszewski, C.~Price, R.~Maezono, N.~D.~Drummond, 
      and V.~I.~Fal'ko, ``Complete study of charge carrier complexes in
      two\mbox{-}dimensional semiconductors'', to be submitted.
\end{description}
This thesis is my own work and contains nothing which is the outcome of work
done in collaboration with others, except as specified in the text and
Acknowledgements. This thesis has not been submitted in substantially the same
form for the award of a higher degree elsewhere. This thesis does not exceed
the word limit of 80~000 words.

\

\

\begin{flushright}
  {\tmname{Marcin Szyniszewski}}

  \tmtextit{Lancaster, August 2016}
\end{flushright}

\chapter*{Abstract}

  We study low-dimensional quantum systems with analytical and computational
  methods. Firstly, the one-dimensional extended $t$-$V$ model of fermions
  with interactions of finite range is investigated. The model exhibits a
  phase transition between liquid and insulating regimes. We use various
  analytical approaches to generalise previous theoretical studies. We devise
  a strong coupling expansion to go beyond first-order perturbation theory.
  The method is insensitive to the presence or the lack of integrability of
  the system. We extract the ground state energy and critical parameters of
  the model near the Mott insulating commensurate density. A summary of the
  methods used is provided to give a broader view of their advantages and
  disadvantages.
  
  We also study the possible charge-density-wave phases that exist when the
  model is at the critical density. A complete description of phase diagrams
  of the model is provided: at low critical densities the phases are defined
  analytically, and at higher critical densities we tackle this problem
  computationally. We also provide a future outlook for determining the phases
  that occur at~non\mbox{-}zero temperature.
  
  Secondly, we investigate Mott-Wannier complexes of two (excitons), three
  (trions) and four (biexcitons) charge carriers in two-dimensional
  semiconductors. The fermions interact through an effective interaction of a
  form introduced by Keldysh. Our study also includes impurity-bound
  complexes. We provide a~classification of trions and biexcitons in
  transition\mbox{-}metal dichalcogenides, which incorporates the difference
  of spin polarisation between molybdenum- and tungsten-based materials. Using
  the diffusion Monte Carlo method, which is statistically exact for these
  systems, we extract binding energies of the complexes for a complete set of
  parameters of the model. Our results are compared with theoretical and
  experimental work on transition-metal dichalcogenides. Agreement is found
  for excitonic and trionic results, but we also observe a large discrepancy
  in the theoretical biexcitonic binding energies as compared to the
  experimental values. Possible reasons for this are outlined. Simple
  interpolation formulas for binding energies are provided, that can be used
  to easily determine the values within the accuracy of 5\% for any
  two\mbox{-}dimensional semiconductor. We also calculate contact pair
  densities, which in the future can be used in the determination of the
  contact interaction.

\chapter*{Acknowledgements}

This study would not have been possible without the support of many people.
Firstly, I am grateful to Evgeni Burovski for his supervision during the first
half of this project. I would also like to thank my supervisor throughout the
second part, Neil Drummond, who was kind enough to share his enormous
knowledge and skills with me. His guidance, supervision and incredible
patience were essential for the success of this work. I am also grateful to
Volodya Fal'ko for many useful discussions.

I would like to thank my fellow colleagues, Elaheh Mostaani, Cameron Price,
and Mark Danovich, for keeping me company on this scientific endeavour, and
with whom I have spent countless hours of discussions; and Viktor Z{\'o}lyomi
for his useful advices. Thanks to colleagues from the condensed matter theory
group in Physics Department of Lancaster University, especially Jake
Arkinstall, Simon Malzard, and Matthew Malcomson, for their big and small
suggestions. Very~spe{\nobreak}cial thanks to my friends who share the same
passion for physics and mathematics, Stephen Flood and Nathan Woollett, for
always being determined to make me smile.

This work has been financially supported by the Engineering and Physical
Sciences Research Council (EPSRC), NOWNANO DTC grant number EP/G03737X/1.
Computational facilities have been provided by the Lancaster High End
Computing cluster (HEC). This document was produced with GNU {\TeXmacs}.

Finally, I wish to express my love and gratitude to my dear family and
friends, for their endless support and love.

{\tableofcontents}

\

\chapter*{Introduction}

In the last century, quantum physics has mystified and perplexed the world's
greatest minds. Some of them refused to believe that a theory that has such
counter-intuitive predictions could ever be true. However, nowadays it is a
cornerstone of modern physics and technology. We now know that when we
approach nanoscale systems, they may exhibit unusual behaviour that could
never be explained by classical theory. Additionally, in the last couple of
decades, we have made serious progress in harnessing the extraordinary
properties that quantum systems exhibit to our technological advantage.
Nanotechnology is a new prominent direction, which not only gives us promise
of future advancement, but already delivers materials and devices that we can
use today.

After the experimental discovery of graphene in 2004, a two-dimensional carbon
allotrope with atomic thickness, scientists have realised that low-dimensional
quantum systems may not only be used as toy models, but can be manufactured in
real life, together with all their interesting properties. Materials with low
dimensionality may have possible applications in all areas of our lives, such
as pushing the technological limit of Moore's law into beyond\mbox{-}silicon
electronics.

Theoretical understanding of low-dimensional materials is crucial in
determining their future use and discovering their properties. With today's
advancement of computational power, we can not only tackle this problem
analytically, but also solve quantum systems using computer
simula{\nobreak}tions.

In this work, we have chosen to work with two quantum systems with low
dimensionality that exhibit very interesting properties. Firstly, we study a
system of fermions on a~one-dimensional lattice in which fermions have
long-range interactions and which displays an~insulator--conductor phase
transition. Solving this very general model can give us insight into a full
range of one-dimensional systems and can show us which methods have advantages
and disadvantages in a system with interactions that go beyond nearest
neighbours. Secondly, we investigate a few-particle bound system of fermions
in a~two\mbox{-}dimensional semiconductor. This will advance us towards
understanding and utilising the opto-electronic properties of two-dimensional
materials.

The content of this work is as follows. The first part of this thesis deals
with a~one\mbox{-}dimensional quantum system that exhibits both insulating and
conducting regimes. Chapter \ref{ch:1-intro} introduces the concept of
Luttinger liquids and Mott insulators and talks about the generalised $t$-$V$
model, its known properties and previous results. Chapter
\ref{ch:1-criticality} shows attempts to solve the model under various
conditions. Here we use analytical and numerical methods in order to provide a
successful description of the critical behaviour near the transition between
insulating and conducting phases. In Chapter \ref{ch:1-cdw}, we try to assess
the properties of the Mott insulating phases that can occur in the generalised
$t$-$V$ model and show phase diagrams of the system.

In Part II, we investigate a model of charge carrier complexes in
two-dimensional semiconductors, in particular in transition-metal
dichalcogenides. Chapter \ref{ch:2-theory} contains the theoretical
background. Chapter \ref{ch:2-method} gives a brief overview of the quantum
Monte Carlo framework -- the method we use to simulate the system. Finally,
Chapter \ref{ch:2-results} presents binding energies of charge carrier
complexes and compares our results with other experimental and theoretical
work.

\

\begin{center}
  \part{Luttinger liquids and~Mott~insulators in~one~dimension}
\end{center}

\chapter{Theoretical background}\label{ch:1-intro}

\section{Luttinger liquids and criticality}

The usual description of interacting fermions in metals at low temperatures is
done using the Fermi liquid theory {\cite{Landau1956}}. The ground state of
such a system is composed of fermions occupying all momentum states up to the
Fermi momentum (assuming isotropy), and excitations are quasi-particles, which
carry both charge and spin and obey Fermi statistics. However, Fermi liquid
theory breaks down in one dimension and another theory is needed
{\cite{Voit1995}}.

The correct theory describing interacting electrons in a one-dimensional
conductor is the Tomonaga-Luttinger liquid {\cite{Luttinger1963}}. Here, the
elementary excitations are bosonic fluctuations of two kinds: charge density
waves (plasmons) that carry charge, and spin density waves that carry spin and
propagate independently from the former. This spin-charge separation is the
most prominent difference from Fermi liquid theory {\cite{Mattis1965}}.

For a spinless case, the Hamiltonian of the diagonalised Luttinger liquid
model is {\cite{Mattis1965,Haldane1981}}:
\begin{equation}
  H_{\tmop{LL}} = v_S \sum_k | k | b_k^{\dag} b_k + \frac{\pi^{}}{2 L}  (v_N
  N^2 + v_J J^2), \label{eq:LLHamiltonian}
\end{equation}
where $b_k$ are bosonic charge density excitations with momentum $k = \frac{2
\pi i}{L}, i = \pm 1, \pm 2, \ldots$, $L$ is the system size, $N$ is the
particle number operator (or total charge) and $J$ is the current number
operator. Any bosonic excitation can be thus labelled by quantum numbers $N$
and $J$. For simplicity $\hbar$ in the equation above is set to unity.

There are three parameters present in Equation (\ref{eq:LLHamiltonian}) which
have dimensions of velocity: $v_S$ is sound velocity, similar to Fermi
velocity, which is related to bosonic excitations; $v_N$ is charge velocity,
which measures the changes in chemical potential; and $v_J$ is the current
velocity, which is a measure of the energy needed to create a charge current
throughout the chain. The charge velocity $v_N$ can be defined as
\begin{equation}
  v_N = \frac{L}{\pi}  \frac{\partial^2 E_0}{\partial N^2}, \label{eq:1-vN}
\end{equation}
where $E_0$ is the ground state energy. By introducing a flux of particles
$\phi$ going throughout the system, we can calculate the current velocity,
\begin{equation}
  \left. v_J = \frac{\pi}{L}  \frac{\partial^2 E_0}{\partial \phi^2}
  \right|_{\phi = 0} . \label{eq:1-vJ}
\end{equation}
The sound velocity can be determined using the scaling relation of the
Luttinger liquid {\cite{Haldane1981}},
\begin{equation}
  v_S = \sqrt{v_N v_J} . \label{eq:1-vS}
\end{equation}
In case of a non-interacting model, all the velocities are the same, $v_S =
v_N = v_J$, and equal to the Fermi velocity.

In a spinful case of the model, due to spin-charge separation, the Hamiltonian
will consist of Eq. (\ref{eq:LLHamiltonian}) and a similar part corresponding
to spin density waves. That additional part will be similarly described by
three analogous velocities.

According to Refs.~{\cite{Haldane1980,Haldane1981,Haldane1981eff}}, every
spinless, gapless and interacting system of fermions in 1D is a Luttinger
liquid. Thus, its low-energy physics can be described by two parameters: the
sound velocity $v_S$ and a dimensionless parameter $K$,
\begin{equation}
  K = \frac{1}{2}  \frac{v_S}{v_N} = \frac{1}{2}  \frac{v_J}{v_S},
  \label{eq:1-K}
\end{equation}
which is usually called the Luttinger liquid parameter. The value of $K$
describes the effective strength of interactions in the chain and also fully
characterises all critical exponents of the system, \tmtextit{i.e.} one can
calculate the power-law decay of all local correlation functions.

The theory of Luttinger liquids has already proven to be applicable in
experiments dealing with electrons in carbon nanotubes {\cite{Bockrath1998}},
edge states in the fractional quantum Hall effect
{\cite{Chang1996,Chang2003}}, and crystals of trapped ions
{\cite{Schneider2012}}.

\section{Mott insulators}

Many low-dimensional systems show interesting behaviour, such as the presence
of phases that cannot be explained using classical theory
{\cite{Giamarchi2003}}. One example of such a phase is a Mott insulator. If
one considers only conventional band theory within the nearly free electron
approximation {\cite{KittelBook,AshcroftBook}}, then a material in a Mott
insulating phase should conduct electricity. In other words, there is
a~non\mbox{-}zero density of charge carriers in the system, however the system
behaves like an insulator. Among the first experimentally observed Mott
insulators were some transition metal oxides {\cite{Boer1937}}
(\tmtextit{e.g.} nickel oxide), which have odd number of electrons in a~unit
cell and should therefore be conductors. However, Mott
{\cite{Mott1937,Mott1949}} proposed a~theory in which those materials behave
like insulators due to electron--electron interactions that prevent the
electrons from moving.

Mott insulators have applications ranging from high-temperature
superconductors {\cite{Kohsaka2008}} to a new type of energy-efficient field
effect transistor with fast switching times {\cite{Newns1998}}. Research into
the subject of one- and two-dimensional Mott transistors is currently ongoing
\cite{Inoue2008,Son2011,Nakano2012}. However, to
make an efficient Mott insulating device we first need an accurate description
of the underlying physics of the system.

\section{The generalised $t$-$V$ model}

\subsection{Description of the model}

The generalised $t$-$V$ model of spinless fermions in one dimension was
introduced by G{\'o}mez\mbox{-}Santos {\cite{Gomez-Santos1993}} as an example
of a model exhibiting both Luttinger liquid and Mott insulating regimes. The
Hamiltonian of the model on a periodic chain of $L$ sites is
\begin{equation}
  H = - t \sum_{i = 1}^L \left( c^{\dag}_i c_{i + 1} + \text{h.c.} \right) +
  \sum_{i = 1}^L \sum_{m = 1}^p U_m n_i n_{i + m}, \label{eq:genHamiltonian}
\end{equation}
where $c_i$ and $c^{\dag}_i$ are fermionic annihilation and creation operators
on site $i$, $n_i = c^{\dag}_i c_i$ is the particle number operator on site
$i$, $t$ is the hopping amplitude describing the kinetic part, $U_m$ is the
potential energy between two fermions $m$ sites apart from each other, and $p$
is the maximum range of interactions ($\forall_{m > p} \hspace{0.2em} U_m =
0$). There are no on\mbox{-}site interactions, \tmtextit{i.e.} $U_0 = 0$, and
all the non\mbox{-}zero interactions are repulsive, \tmtextit{i.e.} $U_m > 0$.
The Hamiltonian of the $t$-$V$ model is easily recovered by setting $p = 1$.
In this case the model is integrable (solved by the Bethe ansatz
{\cite{Bethe1931}}{\nocite{Bethe1997}} in Refs.
{\cite{Orbach1958,Walker1959}}) and equivalent to the XXZ Heisenberg model
after a Jordan\mbox{-}Wigner transformation (see Chapter
\ref{ch:1-jordanwigner}). For $p = 2$, the model is sometimes called the
$t$-$V$-$V'$ model or the $t$-$U$-$V$ model.

The kinetic part is assumed to be significantly smaller than the potential:
\begin{equation}
  t \ll U_m,
\end{equation}
and can be treated as a perturbation.

G{\'o}mez-Santos {\cite{Gomez-Santos1993}} introduces one more important
assumption,
\begin{equation}
  \underset{m}{\bigforall}\ U_m < \frac{U_{m - 1} + U_{m + 1}}{2} .
  \label{eq:GSAssumption}
\end{equation}
If the fermion-fermion distance is required to be less than $p$ sites (due to
high density in the system), then the particles will want to be as spread out
as possible. One can for example consider two similar systems, both in Fock
states, which are different only by fermion chains: $(\bullet \circ \circ
\bullet \circ \circ \bullet)$ and $(\bullet \circ \circ \circ \bullet \circ
\bullet)$, where $\bullet$ and $\circ$ denote occupied and empty sites
respectively. Assumption (\ref{eq:GSAssumption}) tells us that the first
system will always have lower energy regardless of the maximum range of
interactions, if $p > 1$. By converting condition (\ref{eq:GSAssumption}) into
\begin{equation}
  \underset{m}{\bigforall}\ \frac{U_{m + 1} + U_{m - 1} - 2 U_m}{a^2} > 0,
\end{equation}
where $a$ is the lattice constant, we can immediately see that this assumption
is a discrete version of the (continuous) inequality
\begin{equation}
  \lim_{\Delta r \rightarrow 0} \frac{U (r + \Delta r) + U (r - \Delta r) - 2
  U (r)}{(\Delta r)^2} = \frac{\mathd^2 U (r)}{\mathd r^2} > 0,
  \label{eq:condUcont}
\end{equation}
or that the potential must always fall with a decreasing rate,
\tmtextit{i.e.} the potential $U (r)$ is strictly convex.
  
One can easily check that assumption (\ref{eq:condUcont}) holds for Coulomb
and dipole potentials, and all potentials of a form (see
Fig.~\ref{fig:1-coulombpot}):
\begin{equation}
  U (r) = \frac{C}{r^k}, \quad k \nin [- 1 ; 0] .
\end{equation}
However, in principle, a potential that does not satisfy such a condition
could also be considered (such as the P{\"o}schl-Teller potential
{\cite{Poschl1933}} used in the description of ultracold atomic gases).

\begin{figure}[h]
  \resizebox{5.6cm}{!}{\includegraphics{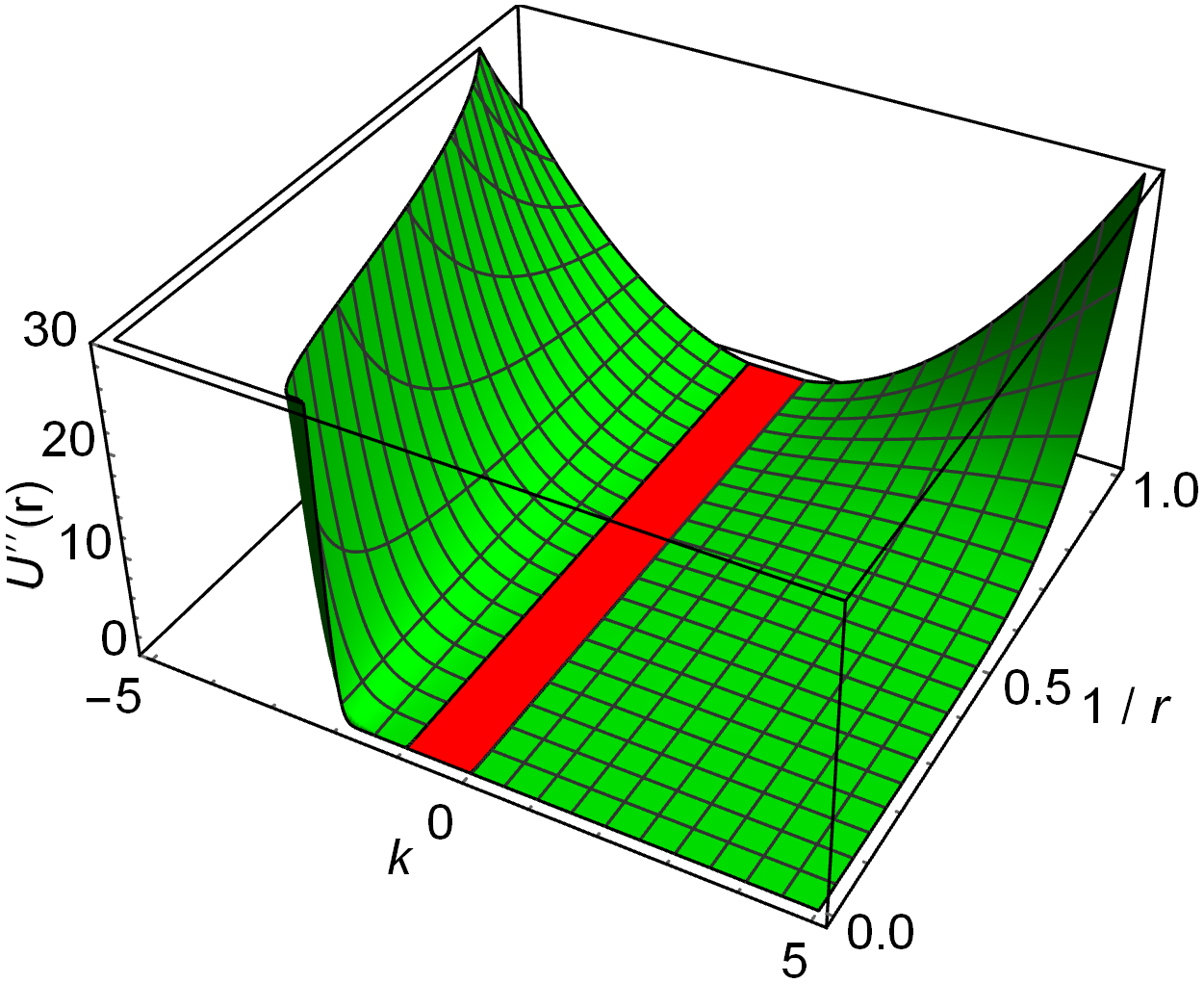}}
  \caption{
  Plot of $U'' (r)$ for $U (r) = r^{- k}$. Red region shows where the 
  assumption (\ref{eq:condUcont}) does not hold.\label{fig:1-coulombpot}
  }
\end{figure}

\subsection{Motivation}

The extended $t$-$V$ model of fermions has one prominent feature that makes it
of theoretical interest: its description is rather general, as the values of
the potentials are not set. Therefore, it could describe an experimentally
realisable one-dimensional system. With the recent nanotechnological
advancements, it has become possible to engineer condensed matter systems
using ultracold atoms in optical lattices {\cite{Lubasch2011,Haller2015}}.
Additionally, because the system exhibits the Luttinger liquid--Mott insulator
phase transition, it may be relevant to the experimental production and the
operation of a~one-dimensional Mott transistor. Finally, the investigation of
the generalised $t$\mbox{-}$V$~model may serve as a theoretical exercise that
helps in the development and classification of usefulness of analytical and
numerical methods that can deal with quantum systems with long-range
interactions.

\subsection{Formulation as a chain of spins}\label{ch:1-jordanwigner}

Every one-dimensional\footnote{\setstretch{1.66}Although the same procedure can 
be used in
higher-dimensional systems, the transformation will produce multiple non-local
terms that require keeping track of many non-local quantum numbers. In one
dimension all the non-local terms (see
Eqs.~(\ref{eq:1-JWnonlocal1}--\ref{eq:1-JWnonlocal2})) include the number of
particles $N$, which is usually fixed during the setup of the system.} system
of spinless fermions can be formulated as an equivalent system of spins
(spin-half). One way to do this is to use the Jordan-Wigner transformation
{\cite{Jordan1928}}, in which fermionic creation and annihilation operators
are redefined as chains of spin operators,
\begin{equation}
  \left\{ \begin{array}{lll}
    c_i & = & e^{- i \pi \sum_{k = 1}^{i - j} \sigma_k^+ \sigma_k^-}
    \sigma_i^-\\
    c_i^{\dag} & = & e^{i \pi \sum_{k = 1}^{i - j} \sigma_k^+ \sigma_k^-}
    \sigma_i^+
  \end{array} \right.,
\end{equation}
where $\sigma_i^{\pm} = (\sigma_i^x \pm i \sigma_i^y) / 2$ and $\sigma^x,
\sigma^y, \sigma^z$ are the Pauli spin matrices:
\begin{equation}
  \sigma^x = \left(\begin{array}{cc}
    0 & 1\\
    1 & 0
  \end{array}\right), \quad \sigma^y = \left(\begin{array}{cc}
    0 & - i\\
    i & 0
  \end{array}\right), \quad \sigma^z = \left(\begin{array}{cc}
    1 & 0\\
    0 & - 1
  \end{array}\right) .
\end{equation}
We can see that
\begin{equation}
  c_i = e^{- i \pi \sum_{k = 1}^{i - j} \sigma_k^+ \sigma_k^-} \sigma_i =
  \prod_{k = 1}^{i - 1} e^{\tmscript{- i \pi \left(\begin{array}{cc}
    1 & 0\\
    0 & 0
  \end{array}\right)_k}} \sigma_i^- = \prod_{k = 1}^{i - 1}
  \left(\begin{array}{cc}
    - 1 & 0\\
    0 & 1
  \end{array}\right)_k \sigma_i^- = \prod_{k = 1}^{i - 1} (- \sigma_k^z)
  \sigma_i^-,
\end{equation}
and the transformation is simply
\begin{equation}
  \left\{ \begin{array}{lll}
    c_i & = & \prod_{k < i} (- \sigma_k^z) \sigma_i^-\\
    c_i^{\dag} & = & \prod_{k < i} (- \sigma_k^z) \sigma_i^+
  \end{array} \right. .
\end{equation}
Therefore one can calculate the following relations:
\begin{eqnarray}
  c_i^{\dag} c_{i + 1} & = & \prod_{k < i} (- \sigma_k^z) \sigma_i^+ \prod_{l
  < i + 1} (- \sigma_l^z) \sigma_{i + 1}^- \\
  & = & (- \sigma_1^z) \cdots (- \sigma_{i - 1}^z) \sigma_i^+ (- \sigma_1^z)
  \cdots (- \sigma_{i - 1}^z) (- \sigma_i^z) \sigma_{i + 1}^- \nonumber\\
  & = & - \sigma_i^+ \sigma_i^z \sigma_{i + 1}^- = - \left(\begin{array}{cc}
    0 & 1\\
    0 & 0
  \end{array}\right)_i \left(\begin{array}{cc}
    1 & 0\\
    0 & - 1
  \end{array}\right)_i \sigma_{i + 1}^- \nonumber\\
  & = & \sigma_i^+ \sigma_{i + 1}^-, \nonumber
\end{eqnarray}
\begin{eqnarray}
  c_{i + 1}^{\dag} c_i & = & \prod_{k < i + 1} (- \sigma_k^z) \sigma_{i + 1}^+
  \prod_{l < i} (- \sigma_l^z) \sigma_i^- \\
  & = & (- \sigma_1^z) \cdots (- \sigma_{i - 1}^z) (- \sigma_i^z) \sigma_{i +
  1}^+ (- \sigma_1^z) \cdots (- \sigma_{i - 1}^z) \sigma_i^- \nonumber\\
  & = & - \sigma_i^z \sigma_i^- \sigma_{i + 1}^+ = - \left(\begin{array}{cc}
    1 & 0\\
    0 & - 1
  \end{array}\right)_i \left(\begin{array}{cc}
    0 & 0\\
    1 & 0
  \end{array}\right)_i \sigma_{i + 1}^+ \nonumber\\
  & = & \sigma_{i + 1}^+ \sigma_i^-, \nonumber
\end{eqnarray}
\begin{eqnarray}
  n_i = c_i^{\dag} c_i & = & \prod_{k < i} (- \sigma_k^z) \sigma_i^+ \prod_{l
  < i} (- \sigma_l^z) \sigma_i^- \\
  & = & (- \sigma_1^z) \cdots (- \sigma_{i - 1}^z) \sigma_i^+ (- \sigma_1^z)
  \cdots (- \sigma_{i - 1}^z) \sigma_i^- \nonumber\\
  & = & \sigma_i^+ \sigma_i^- = \left(\begin{array}{cc}
    0 & 1\\
    0 & 0
  \end{array}\right)_i \left(\begin{array}{cc}
    0 & 0\\
    1 & 0
  \end{array}\right)_i = \left(\begin{array}{cc}
    1 & 0\\
    0 & 0
  \end{array}\right)_i \nonumber\\
  & = & \mathbbm{P}_i^{\uparrow} = (1 + \sigma^z) / 2, \nonumber
\end{eqnarray}
where $\mathbbm{P}_i^{\uparrow}$ is the projector operator to the spin-up
subspace on site $i$. In a periodic chain the hoppings over the boundary must
be calculated independently,
\begin{eqnarray}
  c_L^{\dag} c_1 & = & \prod_{k < L} (- \sigma_k^z) \sigma_L^+ \sigma_1^- = (-
  \sigma_1^z) (- \sigma_2^z) \cdots (- \sigma_{L - 1}^z) \sigma_L^+ \sigma_1^-
  \\
  & = & - \left(\begin{array}{cc}
    1 & 0\\
    0 & - 1
  \end{array}\right)_1 \left(\begin{array}{cc}
    0 & 0\\
    1 & 0
  \end{array}\right)_1  \prod_{k = 2}^{L - 1} (- \sigma_k^z) \sigma_L^+ =
  \sigma_1^- \sigma_L^+ (- 1)^{\sum_{k = 2}^{L - 1} n_k} \nonumber\\
  & = & \sigma_1^- \sigma_L^+ (- 1)^{N - 1}, \label{eq:1-JWnonlocal1}
  \nonumber
\end{eqnarray}
\begin{eqnarray}
  c_1^{\dag} c_L & = & \sigma_1^+ \prod_{k < L} (- \sigma_k^z) \sigma_L^- =
  \sigma_1^+ (- \sigma_1^z) (- \sigma_2^z) \cdots (- \sigma_{L - 1}^z)
  \sigma_L^- \\
  & = & - \left(\begin{array}{cc}
    0 & 1\\
    0 & 0
  \end{array}\right)_1 \left(\begin{array}{cc}
    1 & 0\\
    0 & - 1
  \end{array}\right)_1  \prod_{k = 2}^{L - 1} (- \sigma_k^z) \sigma_L^-
  \nonumber\\
  & = & \sigma_1^+ \sigma_L^- (- 1)^{N - 1}, \label{eq:1-JWnonlocal2}
  \nonumber
\end{eqnarray}
where $N = \sum_k n_k$ is total number of particles. Terms describing hopping
across the boundary can be treated as an effective 
flux\footnote{\setstretch{1.66}Since $(-
1)^{N - 1} = e^{i \pi (N - 1)}$, the flux is $\phi = \pi (N - 1) / L$.
Therefore, in a system with an odd number of particles, there is no additional
phase shift acquired while hopping, and in a system with even $N$, there is a
$- 1 = e^{i \pi}$ phase factor that needs to be included in the bosonic
Hamiltonian.} going through the chain.

The Hamiltonian (\ref{eq:genHamiltonian}) of the generalised $t$-$V$ model
becomes:
\begin{equation}
  H = - t \sum_{i = 1}^{L - 1} (\sigma^+_i \sigma^-_{i + 1}) - t (- 1)^{N - 1}
  \sigma^+_L \sigma_1^- + \text{h.c.} + \sum_{i = 1}^L \sum_{m = 1}^p U_m
  \mathbbm{P}_i^{\uparrow} \mathbbm{P}^{\uparrow}_{i + m},
\end{equation}
or
\begin{equation}
  H = - t \sum_{i = 1}^{L - 1} (\sigma^+_i \sigma^-_{i + 1}) - t (- 1)^{N - 1}
  \sigma^+_L \sigma_1^- + \text{h.c.} + \frac{1}{4} \sum_{i = 1}^L \sum_{m =
  1}^p U_m (1 + \sigma^z_i) (1 + \sigma^z_{i + m}) .
\end{equation}
In case of $p = 1$, the model becomes equivalent to the XXZ Heisenberg model
with background effective field, pierced by the magnetic flux.

\subsection{Solution for the infinite potential}\label{ch:GSSolution}

Here we summarise the solution of the generalised $t$-$V$ model given by
G{\'o}mez-Santos in Ref.~{\cite{Gomez-Santos1993}}. This solution presents a
very simple effective picture of the system, which nevertheless gives a lot of
physical insight. Assumption (\ref{eq:GSAssumption}) will hold during this
consideration.

Firstly, let us consider the case of low energies, when any two fermions are
never close enough to incur a potential energy penalty. The system loses some
degrees of freedom, namely every particle effectively occupies $(p + 1)$
sites. A particle can therefore be thought of as a fermion with a hard core or
a hard rod occupying $(p + 1)$ sites (see Fig.~\ref{fig:1-hardrods1}a). Thus,
the system can be imagined as a chain of $N$ free fermions on
\begin{equation}
  \tilde{L} = L - Np = L (1 - Qp)
\end{equation}
sites, where $Q = N / L$ is the density of the original system. The energy can
be calculated to be simply:
\begin{equation}
  E (\{ \tilde{k} \}) = - 2 t \sum_{i = 1}^N \cos \tilde{k}_i, \quad
  \tilde{k}_i = \frac{2 \pi \tilde{n}_i}{\tilde{L}}, \quad \tilde{n}_i \in \{
  0, \ldots, \tilde{L} - 1 \},
\end{equation}
where $\{ \tilde{k} \}$ denotes a set of all particle momenta $\tilde{k}_i$ in
the system, and index $i$ labels the particles. The ground state energy
density can be calculated by occupying the lowest-energy momentum space and
taking the infinite volume limit:
\begin{equation}
  \frac{E}{L} = - \frac{2 t}{L} \sum_{\tilde{n}_i = - N / 2}^{N / 2} \cos
  \tilde{k}_i \quad \rightarrow \quad - 2 t \frac{1 - Qp}{\pi} \sin \frac{\pi
  Q}{1 - Qp} . \label{1-GSgsenergy}
\end{equation}
\begin{figure}[h]
  \resizebox{14cm}{!}{\includegraphics{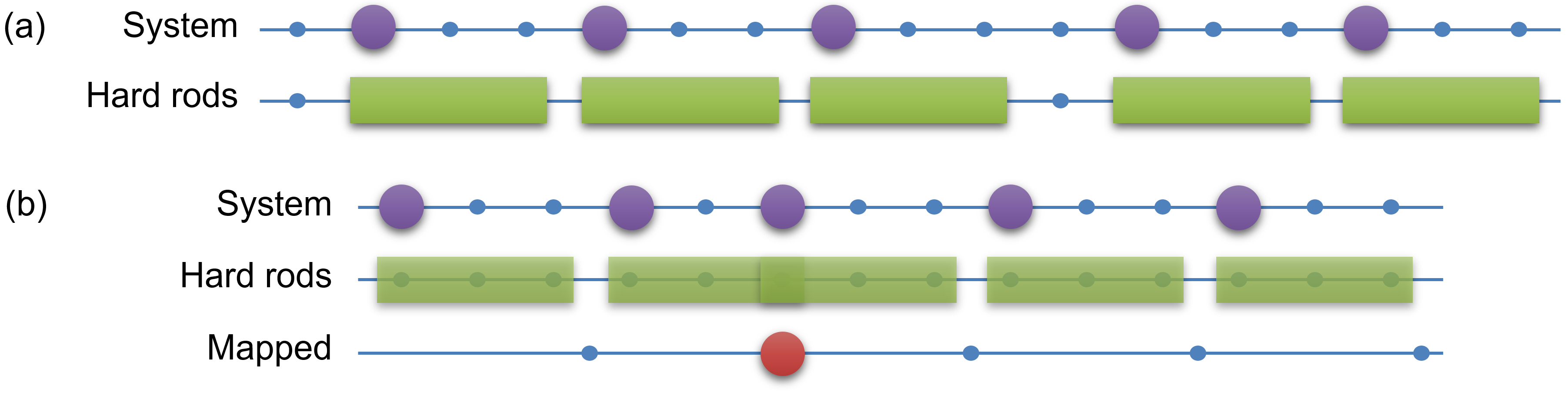}}
  \caption{(a) Low-energy subspace. Fermions are mapped to hard rods. (b)
  High-energy subspace. Overlapping rods (domain walls) are mapped to free
  fermions (red).\label{fig:1-hardrods1}}
\end{figure}

{\hspace{1.5em}}Let us now squeeze the system by removing empty sites one by
one. We notice that there is a critical density (also called commensurate
density), at which particles cannot move, otherwise it would cost them $U_p$
energy,
\begin{equation}
  Q = 1 / (p + 1) .
\end{equation}
At this density, the system is a Mott insulator and no fermion can move
without inducing a~huge energy penalty; hard rods fill the system completely
(see Fig.~\ref{fig:1-hardrods2}). The energy is zero and all Luttinger liquid
velocities also go to zero. By shaking this chain, one can create
a~charge-quasiparticle (an overlap of hard rods) moving through the system. A
coherent superposition of such quasiparticles is called a~charge-density-wave
(CDW).

\begin{figure}[h]
  \resizebox{10cm}{!}{\includegraphics{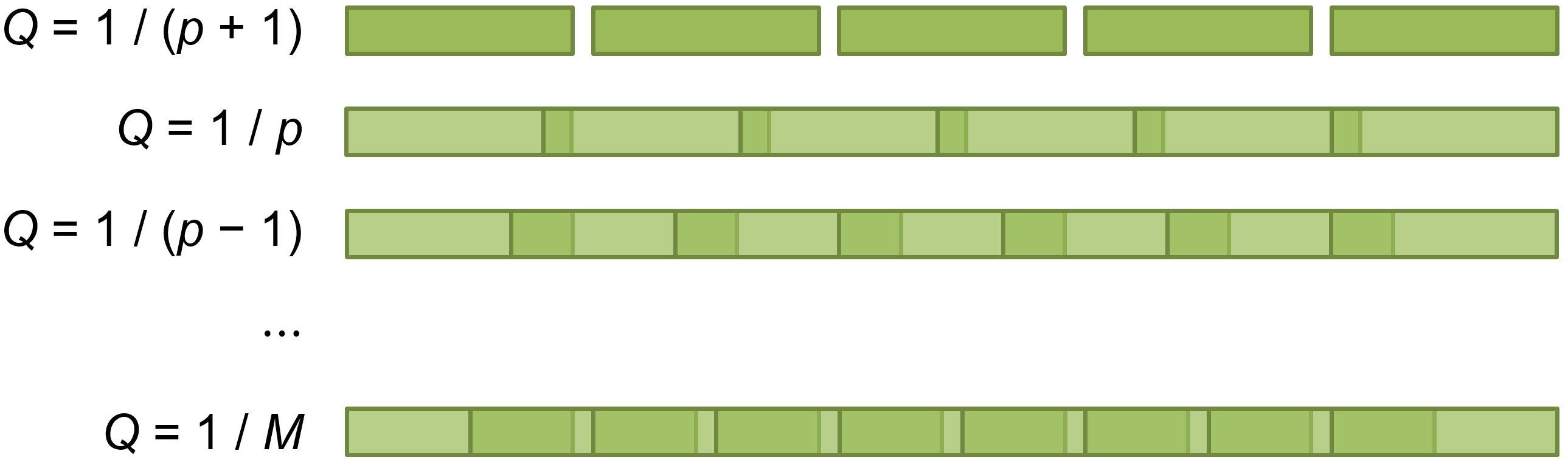}}
  \caption{Commensurate densities at which the system is a Mott
  insulator.\label{fig:1-hardrods2}}
\end{figure}

Squeezing this chain even more will necessarily add a potential energy $U_p$.
A~quasiparticle that is created in the squeezing process will behave as a free
particle in a tight-binding model (see Fig.~\ref{fig:1-hardrods1}b), with
particle hopping by $p$ sites. If we shorten the chain even more, then either
a new quasiparticle is created or the existing quasiparticle incorporates the
new instability. However, the latter is forbidden by Assumption
(\ref{eq:GSAssumption}), and therefore, by squeezing the chain, we will always
create new quasiparticles. Any number of those instabilities will behave like
particles in a tight-binding chain, until there are so many instabilities they
fill the whole system. We again reach an insulating density, this time at $Q =
1 / p$.

A similar trend continues until the chain becomes half-filled. The rest of the
insulating densities can easily be recovered using the particle-hole symmetry:
a system with density $1 - Q$, $Q \in [0, 0.5]$ exhibits the same physics as
the system with density $Q$. In short, the system is a Mott insulator at
critical (commensurate) densities
\begin{equation}
  Q = \frac{1}{m}, \quad m = p + 1, p, \ldots, M, \label{eq:1-MottDensities}
\end{equation}
and is a Luttinger liquid at other, incommensurate densities. $M = \max
(\lfloor p / 2 \rfloor + 1, 2)$ designates the last density $Q = 1 / M$, where
the potential $U_{\lfloor p / 2 \rfloor}$ does not contribute to the potential
energy of the system\footnote{\setstretch{1.66}Going beyond the density $Q = 1 
/ M$ may not be
an issue if the potential energy decreases rapidly. In such a~case, one can
expect a Mott insulating phase at densities $Q = 1 / m, m = p + 1, p, \ldots,
2$.}.

The energy of the system in the high-energy subspace can be calculated by
noticing that the system is now of the effective size $\tilde{L} = LQ$ and the
number of quasiparticles is $\tilde{N} = L (1 - Q \lfloor 1 / Q \rfloor)$.
Similarly to Eq.~(\ref{1-GSgsenergy}), the energy in the infinite volume limit
is:
\begin{equation}
  \frac{E}{L} \rightarrow - 2 t \frac{Q}{\pi} \sin \pi \left( \frac{1}{Q} -
  \left\lfloor \frac{1}{Q} \right\rfloor \right) . \label{1-GSgsenergy-high}
\end{equation}
{\hspace{1.5em}}Using Eqs.~(\ref{1-GSgsenergy}) and (\ref{1-GSgsenergy-high}),
the parameters of the Luttinger liquid were assessed to be:
\begin{equation}
  v_S = \left\{ \begin{array}{ll}
    \frac{2 t}{1 - Qp} \sin \frac{\pi Q}{1 - Qp} & \tmop{for} Q \leqslant
    \frac{1}{p + 1},\\
    \frac{2 t}{Q} \left| \sin \frac{\pi}{Q} \right| & \tmop{otherwise},
  \end{array} \right. \quad \tmop{and} \quad K = \left\{ \begin{array}{ll}
    \frac{1}{2} (1 - Qp)^2 & \tmop{for} Q \leqslant \frac{1}{p + 1},\\
    \frac{1}{2} Q^2 & \tmop{otherwise} .
  \end{array} \right. \label{eq:GSparams}
\end{equation}
This simple yet rich solution gives us the energy spectrum shown in
Fig.~\ref{fig:GSSpectrum}. The sound velocity $v_S$ is plotted in
Fig.~\ref{fig:1-vS}, which also shows the densities at which Luttinger liquid
(blue) and Mott insulator (red) phases are formed. At the insulating density,
all Luttinger liquid velocities go to zero.

\begin{figure}[h]
  \raisebox{-0.5\height}{\includegraphics{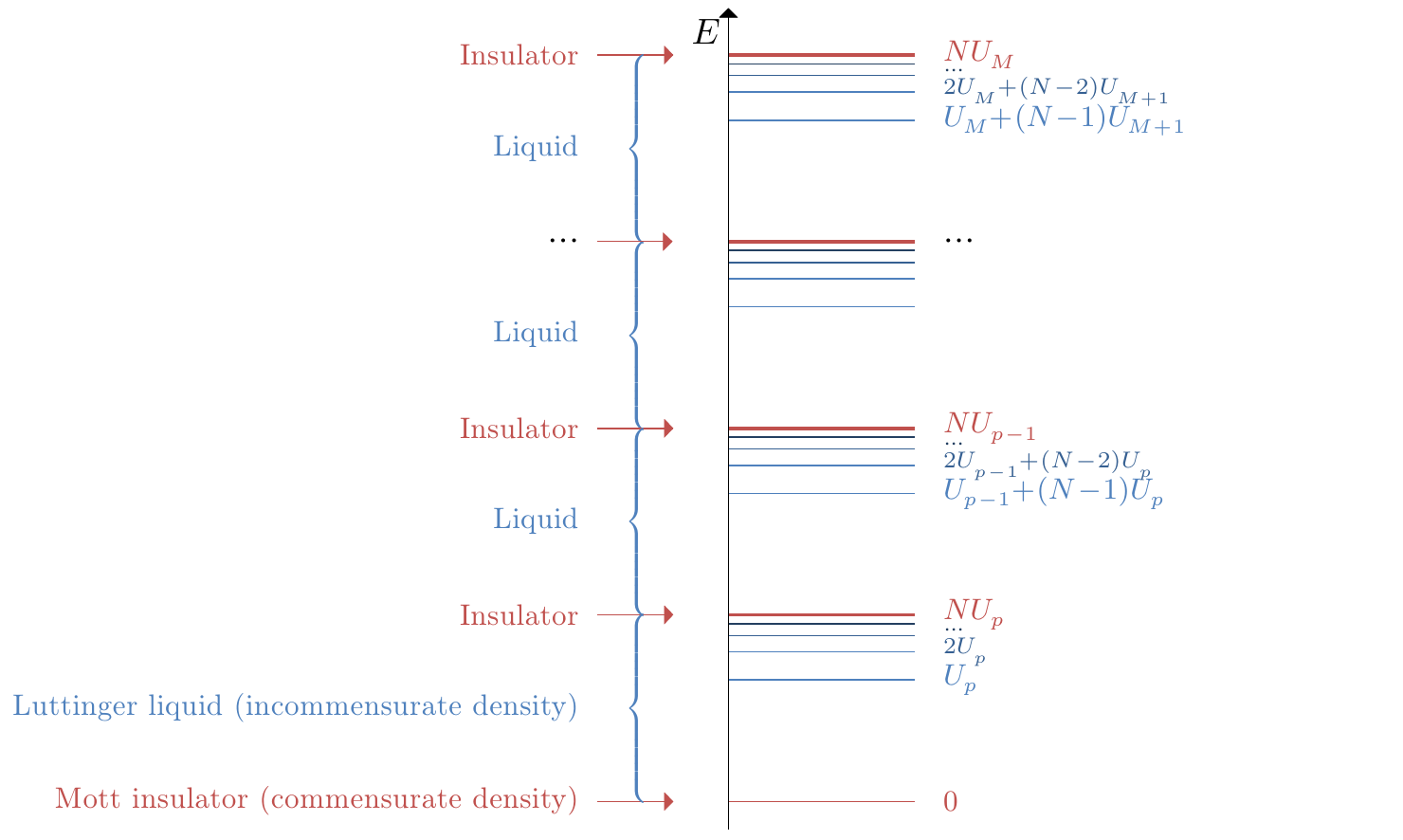}}
  \caption{Spectrum of the potential energy of the generalised $t$-$V$ model
  in G{\'o}mez-Santos's solution. The scale is logarithmic and it is assumed
  that $0 \ll U_p \ll U_{p - 1} \ll \cdots$ and that condition
  (\ref{eq:GSAssumption}) holds.\label{fig:GSSpectrum}}
\end{figure}

\begin{figure}[h]
  \resizebox{10cm}{!}{\includegraphics{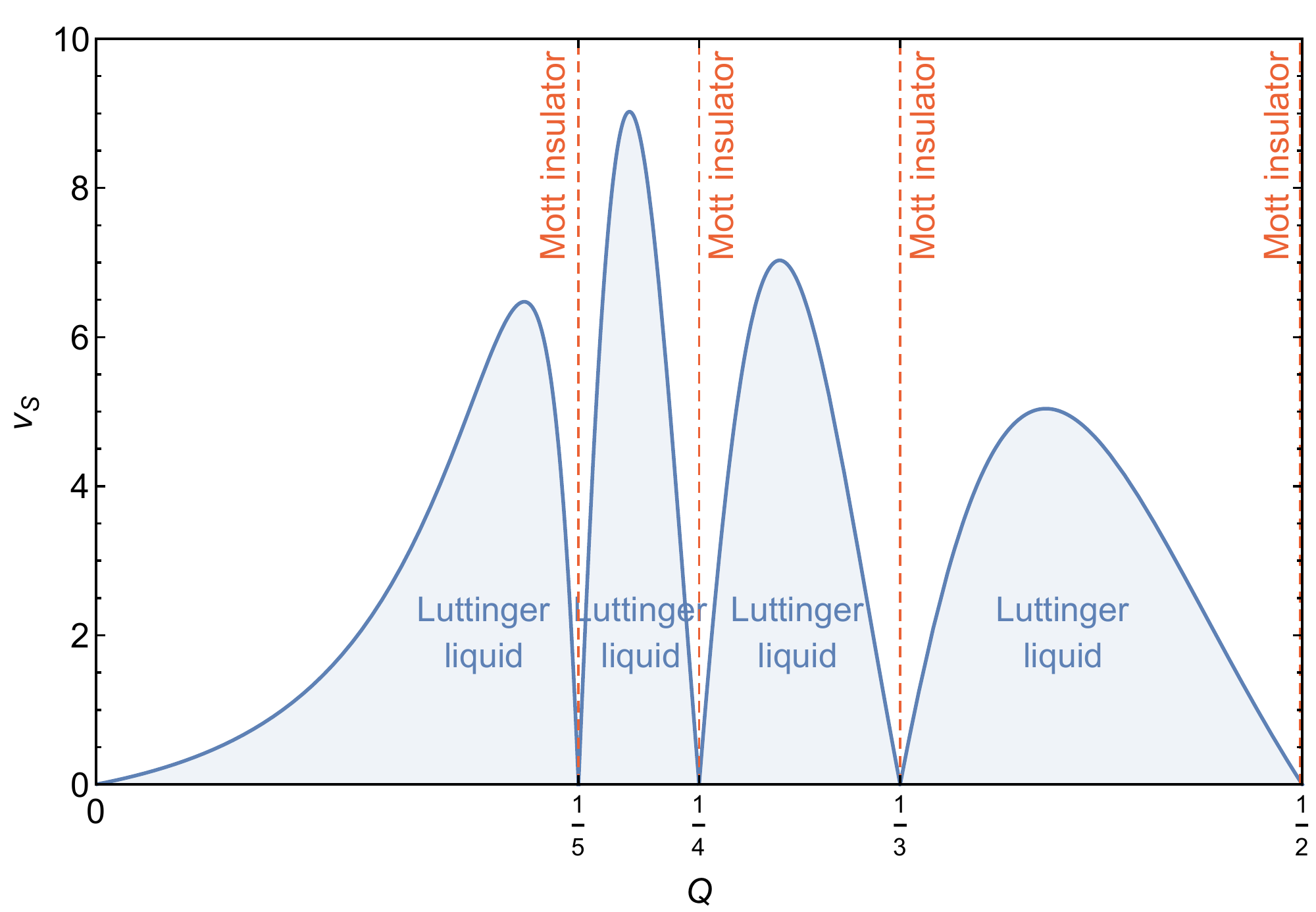}}
  \caption{Sound velocity $v_S$ as a function of density $Q$. Example for $p =
  4$.\label{fig:1-vS}}
\end{figure}

\subsection{Non-Bethe ansatz solution for the system with
nearest\mbox{-}neighbour interactions}\label{ch:DiasSolution}

Here we summarise method alternative to the Bethe ansatz of solving the
$t$-$V$ model ($p = 1$). The method was developed by Dias in
Ref.~{\cite{Dias2000}}. The main motivation behind developing a~different
method is that, although the Bethe ansatz is a~very powerful tool, it is still
based on an educated guess and therefore misses a lot of the underlying
physics.

Firstly, the kinetic part of the Hamiltonian (\ref{eq:genHamiltonian}) is
assumed to be much smaller than the potential ($t \ll U_1$). If $t = 0$, the
Hilbert space consists of degenerate subspaces with states of constant values
of $\sum_{i = 1}^L n_i n_{i + 1}$. For $t \ll U_1$ this degeneracy is lifted,
but in order to calculate the eigenvalues up to the first perturbation order,
we can diagonalise the Hamiltonian within each of the degenerate subspaces.
The projected Hamiltonian (\ref{eq:genHamiltonian}) is:
\begin{eqnarray}
  H & = & - t \sum_{i = 1}^L \left( (1 - n_{i - 1}) c^{\dag}_i c_{i + 1} (1 -
  n_{i + 2}) + \text{h.c.} \right) \\
  & - & t \sum_{i = 1}^L \left( n_{i - 1} c^{\dag}_i c_{i + 1} n_{i + 2} +
  \text{h.c.} \right) + U_1 \sum_{i = 1}^L n_i n_{i + 1} .
  \label{eq:pertHamiltonian} \nonumber
\end{eqnarray}
Now we proceed by observing which hoppings are allowed and which are forbidden
in the projected Hamiltonian. The operators $(1 - n_{i - 1}), (1 - n_{i + 2}),
n_{i - 1}$ and $n_{i + 2}$ ensure that the hopping between sites $i$ and $(i +
1)$ is only possible, if sites $(i - 1)$ and $(i + 2)$ are either both
occupied or both empty. Table \ref{tab:p1mappingearly} shows all possible
hoppings and chains in which hoppings should not be allowed.

\begin{table}[h]
  \begin{tabular}{ccc}
    \hline \hline
    Hopping & Decomposition into two-site chains & Mapping\\
    \hline
    $(\circ \circ \bullet \circ)$ & ${\color[HTML]{0070C0}(\circ \circ)}
    {\color[HTML]{76923C}(\circ \bullet)} (\bullet \circ)$ & $\downarrow
    \odot$\\
    $(\circ \bullet \circ \circ)$ & ${\color[HTML]{76923C}(\circ \bullet)}
    (\bullet \circ) {\color[HTML]{0070C0}(\circ \circ)}$ & $\odot
    \downarrow$\\
    \hline
    $(\bullet \circ \bullet \bullet)$ & $(\bullet \circ)
    {\color[HTML]{76923C}(\circ \bullet)} {\color[HTML]{C0504D}(\bullet
    \bullet)}$ & $\odot \uparrow$\\
    $(\bullet \bullet \circ \bullet)$ & ${\color[HTML]{C0504D}(\bullet
    \bullet)} (\bullet \circ) {\color[HTML]{76923C}(\circ \bullet)}$ &
    $\uparrow \odot$\\
    \hline
    \multicolumn{3}{c}{No hoppings should be allowed } \\
    \hline
    $(\circ \circ \bullet \bullet)$ & ${\color[HTML]{0070C0}(\circ \circ)}
    {\color[HTML]{76923C}(\circ \bullet)} {\color[HTML]{C0504D}(\bullet
    \bullet)}$ & $\downarrow \cancel{\odot} \uparrow$\\
    $(\circ \bullet \circ \bullet)$ & ${\color[HTML]{76923C}(\circ \bullet)}
    (\bullet \circ) {\color[HTML]{76923C}(\circ \bullet)}$ & $\odot \odot$\\
    $(\bullet \circ \bullet \circ)$ & $(\bullet \circ)
    {\color[HTML]{76923C}(\circ \bullet)} (\bullet \circ)$ & $\odot$\\
    $(\bullet \bullet \circ \circ)$ & ${\color[HTML]{C0504D}(\bullet \bullet)}
    (\bullet \circ) {\color[HTML]{0070C0}(\circ \circ)}$ & $\uparrow
    \downarrow$\\
    \hline \hline
  \end{tabular}
  \caption{All possible hoppings from the second to the third site in a
  four-site chain in a $p = 1$ system. Two-site chains are mapped according to
  Eq.~(\ref{eq:DiasMapping}).\label{tab:p1mappingearly}}
\end{table}

In order to simplify the problem, we now decompose each four-site chain into
two-site chains. The two-site chains are then named according to the following
mapping:
\begin{eqnarray}
  \hspace{6em} {\color[HTML]{C0504D}(\bullet \bullet)} & = & \uparrow \quad
  \text{(occupied site, spin-up)} \\
  {\color[HTML]{0070C0}(\circ \circ)} & = & \downarrow \quad \text{(occupied
  site, spin-down)} \nonumber\\
  {\color[HTML]{76923C}(\circ \bullet)} & = & \odot \hspace{0.75em}
  \text{(empty site).} \label{eq:DiasMapping} \nonumber
\end{eqnarray}
Notice that in order to make the mapping unambiguous, the $(\bullet \circ)$
chain is always discarded during the mapping. To maintain the properties of
the system, the mapping must also preserve which states are allowed to hop and
which are forbidden to hop. The spin-up occupied site $\uparrow$ is only
allowed to hop with the empty site $\odot$. Similarly, the spin-down occupied
site $\downarrow$ can only hop with the empty site $\odot$. However, no other
hoppings are allowed.

Using the rules above, we notice that there is one four-site chain, which
introduces a~problem: in $(\circ \circ \bullet \bullet)$ chain the hopping
should not be possible, but after the mapping the chain becomes $(\downarrow
\odot \uparrow)$, in which the empty site $\odot$ can either hop with the
$\downarrow$ site or the $\uparrow$ site. We notice that this problem is due
to the two-site chain $(\circ \bullet)$ acting as a domain wall between empty
sites $\cdots \circ \circ \cdots$ and occupied sites $\cdots \bullet \bullet
\cdots$. Therefore, in order to recover unambiguity of the mapping, we are
forced to remove any chain $(\circ \bullet)$, which acts as a domain wall
between sites $\downarrow$ and $\uparrow$.

\begin{tmornamented}[skipbelow=1em,skipabove=1em,roundcorner=1.7ex]
  \tmtextbf{Example.} A chain (with periodic boundary conditions)
  \begin{equation}
    \circ \circ \circ \circ \bullet \bullet \bullet \circ \circ \circ
  \end{equation}
  can be decomposed into two-site chains
  \begin{equation}
    {\color[HTML]{0070C0}(\circ \circ)} {\color[HTML]{0070C0}(\circ \circ)}
    {\color[HTML]{0070C0}(\circ \circ)} {\color[HTML]{76923C}(\circ \bullet)}
    {\color[HTML]{C0504D}(\bullet \bullet)} {\color[HTML]{C0504D}(\bullet
    \bullet)} (\bullet \circ) {\color[HTML]{0070C0}(\circ \circ)}
    {\color[HTML]{0070C0}(\circ \circ)} {\color[HTML]{0070C0}(\circ \circ)}
  \end{equation}
  and mapped into
  \begin{equation}
    \downarrow \downarrow \downarrow \nospace \cancel{\odot} \uparrow \uparrow
    \downarrow \downarrow \downarrow .
  \end{equation}
  Notice that because the empty site $\odot$ was acting as a domain wall, it
  must be removed during the mapping.
\end{tmornamented}

\begin{tmornamented}[skipbelow=1em,skipabove=1em,roundcorner=1.7ex]
  \tmtextbf{Example.} A chain
  \begin{equation}
    \circ \circ \circ \circ \bullet \circ \bullet \circ \bullet \bullet
    \bullet \circ \circ \circ
  \end{equation}
  can be decomposed into the following two-site chains
  \begin{equation}
    {\color[HTML]{0070C0}(\circ \circ)} {\color[HTML]{0070C0}(\circ \circ)}
    {\color[HTML]{0070C0}(\circ \circ)} {\color[HTML]{76923C}(\circ \bullet)}
    (\bullet \circ) {\color[HTML]{76923C}(\circ \bullet)} (\bullet \circ)
    {\color[HTML]{76923C}(\circ \bullet)} {\color[HTML]{C0504D}(\bullet
    \bullet)} {\color[HTML]{C0504D}(\bullet \bullet)} (\bullet \circ)
    {\color[HTML]{0070C0}(\circ \circ)} {\color[HTML]{0070C0}(\circ \circ)}
    {\color[HTML]{0070C0}(\circ \circ)}
  \end{equation}
  and mapped into
  \begin{equation}
    \downarrow \downarrow \downarrow \nospace \odot \odot \cancel{\odot} 
    \uparrow
    \uparrow \downarrow \downarrow \downarrow .
  \end{equation}
  Again, one of the empty sites $\odot$ was acting as a domain wall between
  $\downarrow$ and $\uparrow$, and therefore it must be removed during the
  mapping.
\end{tmornamented}

So, if in the system there is a spin down $\downarrow$ state followed by empty
site(s) $\odot$ and a~spin up $\uparrow$, one of the empty sites needs to be
removed, because it acts as a domain wall:
\begin{equation}
  \cdots \downarrow \nospace \odot \cdots \nospace \odot \cancel{\odot} \uparrow
  \cdots .
\end{equation}
{\hspace{1.5em}}Finally, let us consider a periodic system, which begins with
two-site chain that will be discarded during the mapping (it can be either a
$(\bullet \circ)$ chain or a $(\circ \bullet)$ that acts as a~domain wall).
Notice that the mapping is no longer unique: translating the system by one
site to the left will result in a system with the same mapping. To mend this
situation, we have to translate the system first, so that the first two-site
chain can be mapped.

\begin{tmornamented}[skipbelow=1em,skipabove=0.5em,roundcorner=1.7ex]
  \tmtextbf{Example.} The periodic system
  \begin{equation}
    \bullet \circ \bullet \circ \circ \bullet \bullet
  \end{equation}
  starts with a $(\bullet \circ)$ chain. Therefore, we use a translation
  operator $T$,
  \begin{equation}
    T (\circ \bullet \circ \circ \bullet \bullet \bullet),
  \end{equation}
  and now we can map the system:
  \begin{equation}
    T \left( \odot \downarrow \nospace \cancel{\odot} \uparrow \uparrow \right) 
    .
    \label{1-Diasmapexampletrans}
  \end{equation}
\end{tmornamented}

Notice that if before the mapping the system has $L$ sites, after the mapping
the system has length:
\begin{equation}
  \tilde{L} = L - N_{\odot} - 2 N_{\tmop{DW}},
\end{equation}
where $N_{\odot}$ is the number of mapped empty sites $\odot$ and
$N_{\tmop{DW}}$ is the number of empty sites acting as domain walls
$\cancel{\odot}$. The number of fermions after the mapping is
\begin{equation}
  \tilde{N} = N_{\uparrow} + N_{\downarrow} .
\end{equation}
{\hspace{1.5em}}If the initial states were designated by positions $a_i$ of
the fermions,
\begin{equation}
  | a_1, \ldots, a_N \rangle = \prod_{i = 1}^N c^{\dag}_{a_i} | 0 \rangle,
\end{equation}
then in the mapped system, the states will be designated by positions of
fermions, $\tilde{a}_i$, and their spins, $\sigma_i$,
\begin{equation}
  | \tilde{a}_1, \ldots, \tilde{a}_{\tilde{N}} ; \sigma_1, \ldots,
  \sigma_{\tilde{N}} \rangle = \prod_{i = 1}^{\tilde{N}}
  \tilde{c}^{\dag}_{\tilde{a}_i \sigma_i} | \tilde{0} \rangle .
\end{equation}
\begin{tmornamented}[skipbelow=1em,skipabove=0em,innerleftmargin=1em,innerrightmargin=1em,roundcorner=1.7ex]
  \tmtextbf{Example.} One can map
  \begin{equation}
    \bullet \bullet \circ \circ \bullet \circ \bullet \circ \bullet
  \end{equation}
  into
  \begin{equation}
    \uparrow \downarrow \nospace \odot \odot \cancel{\odot} \uparrow .
  \end{equation}
  A spinless chain of $N = 5$ fermions on $L = 9$ sites is therefore mapped
  into a spinful chain of $\tilde{N} = 3$ fermions on $\tilde{L} = 5$ sites.
  The initial state is
  \begin{equation}
    | 1, 2, 5, 7, 9 \rangle
  \end{equation}
  and after mapping it becomes
  \begin{equation}
    | \tilde{1}, \tilde{2}, \tilde{5} ; \uparrow, \downarrow, \uparrow \rangle
    .
  \end{equation}
\end{tmornamented}

\begin{tmornamented}[skipbelow=1em,roundcorner=1.7ex]
  \tmtextbf{Example.} The system from Eq.~(\ref{1-Diasmapexampletrans})
  corresponds to the following state:
  \begin{equation}
    | 1, 3, 6, 7 \rangle \longrightarrow T | \tilde{2}, \tilde{3}, \tilde{4} ;
    \downarrow, \uparrow, \uparrow \rangle . \label{eq:1-Diasmapexampletrans2}
  \end{equation}
\end{tmornamented}

Now, we create new states that are invariant by translation and have total
momentum $P$:
\begin{equation}
  | \{ \tilde{a} \}, \{ \sigma \}, P \rangle = | \{ a \}, P \rangle =
  \frac{1}{\sqrt{L}} \sum_{j = 1}^L e^{iPj} T^{j - 1} | \{ a \} \rangle .
\end{equation}
Due to the requirement that the first two-site chain must be possible to map
(see the example state from Eq.~(\ref{eq:1-Diasmapexampletrans2})), the
Hamiltonian elements that correspond to jumps $\tilde{1} \rightarrow
\tilde{L}$ and $\tilde{2} \rightarrow \tilde{1}$, will give additional $e^{\pm
iP}$ factors, that can be incorporated into the hopping constant $t$. A
detailed table of all these elements is shown in Ref.~{\cite{Dias2000}}
p.~7795. The Hamiltonian $H_1 = H - U_1 N_{\uparrow}$ becomes
\begin{equation}
  H_1 (P) = - \sum_{\tilde{i}, \sigma} t_{\tilde{i} \sigma} (1 - n_{\tilde{i}
  \sigma}) \tilde{c}^{\dag}_{\tilde{i} \sigma} \tilde{c}_{\tilde{i} + 1,
  \sigma} (1 - n_{\tilde{i} + 1, \sigma}) + \text{h.c.}, \label{eq:DiasMap}
\end{equation}
with $t_{\tilde{L} \uparrow} = t (- 1)^{L - N}, t_{\tilde{L} \downarrow} =
te^{iP} (- 1)^{L - N}, t_{\tilde{1} \uparrow} = te^{\sigma_{\tilde{N}} iP}$
and $t_{\tilde{i} \sigma} = t$ otherwise. This is the Hamiltonian of a $U
\rightarrow \infty$ Hubbard chain pierced by a magnetic 
flux\footnote{\setstretch{1.66}The
Hubbard model is a model of spinful fermions with the following Hamiltonian:
\begin{equation}
  H = - t \sum_{\langle i, j \rangle, \sigma} \left( c^{\dag}_{i, \sigma}
  c_{j, \sigma} + \text{h.c.} \right) + U \sum_{i = 1}^N n_{i \uparrow} n_{i
  \downarrow} .
\end{equation}
Notice that the sites can be doubly occupied, unless $U \rightarrow \infty$.}.

{\hspace{1.5em}}Now, we consider a subspace of states in the Hamiltonian, with
the same configuration of $\{ \sigma \}$:
\begin{equation}
  H_1 (P, \{ \sigma \}) = - t \sum_{\tilde{i} \neq \tilde{L}}
  \tilde{c}_{\tilde{i} + 1}^{\dag} \tilde{c}_{\tilde{i}} - t_{\tilde{1}
  \sigma_1} \tilde{c}^{\dag}_{\tilde{1}} \tilde{c}_{\tilde{2}} - t_{\tilde{L}
  \sigma_1} \tilde{c}^{\dag}_{\tilde{L}} \tilde{c}_{\tilde{1}} Q +
  \text{h.c.}, \label{eq:DiasSpin}
\end{equation}
where $Q$ is the cyclic spin permutation operator, that permutes $\{ \sigma
\}$ in a state. We will now do the following gauge transformation,
\begin{equation}
  \forall_{\tilde{a}_1 \geqslant 2} \enspace | \tilde{a}_1, \ldots,
  \tilde{a}_{\tilde{N}} ; \uparrow, \ldots, \sigma_{\tilde{N}} \rangle
  \rightarrow e^{\sigma_{\tilde{N}} iP} | \tilde{a}_1, \ldots,
  \tilde{a}_{\tilde{N}} ; \uparrow, \ldots, \sigma_{\tilde{N}} \rangle .
\end{equation}
The Hamiltonian (\ref{eq:DiasSpin}) in the subspace of the same spin
configuration will now have the same form as in Eq.~(\ref{eq:DiasMap}), but
with $t_{\tilde{L} \sigma} = (- 1)^{L - N} te^{\frac{1}{2} (1 + \sigma_1 \cdot
\sigma_{\tilde{N}}) iP}$ and $t_{\tilde{i} \sigma} = t$ otherwise. If we now
hop a fermion across the boundary, it will induce a cyclic permutation of $\{
\sigma \}$ with the same phase factor that is included in $t_{\tilde{L}
\sigma}$. Now, we want to define states that remain invariant under such
cyclic permutations:
\begin{equation}
  Q_{\{ \sigma \}} \left( \sum_{i = 1}^{r_{\alpha_c}} \tilde{a}_i Q^i | \{
  \sigma \} \rangle \right) = e^{i \phi' / r_{\alpha_c}} \left( \sum_{i =
  1}^{r_{\alpha_c}} \tilde{a}_i Q^i | \{ \sigma \} \rangle \right),
  \label{eq:DiasCyclic}
\end{equation}
where $r_{\alpha_c}$ is the periodicity of the spins $\{ \sigma \}$ (notice
that $r_{\alpha_c}$ must be a divisor of $\tilde{N}$), $\alpha_c$ is a number
designating a possible spin configuration, $Q_{\{ \sigma \}}$ is defined by
\begin{equation}
  Q_{\{ \sigma \}} | \sigma_1, \ldots, \sigma_{\tilde{N}} \rangle =
  \frac{t_{\tilde{L} \sigma}}{t} Q | \sigma_1, \ldots, \sigma_{\tilde{N}}
  \rangle,
\end{equation}
and $\phi'$ is the effective flux through the newly redefined system.

\begin{tmornamented}[skipbelow=1.0em,skipabove=0.5em,roundcorner=1.7ex]
  \tmtextbf{Example.} Spin periodicity of some example systems:
  \begin{eqnarray}
    (\odot \uparrow \nospace \odot \uparrow \uparrow) \rightarrow
    \hspace{6.2em} | \tilde{2}, \tilde{4}, \tilde{5} ; \uparrow, \uparrow,
    \uparrow \rangle & \rightarrow & r_{\alpha_c} = 1, \\
    \left( \downarrow \downarrow \nospace \odot \cancel{\odot} \uparrow \nospace
    \odot \downarrow \nospace \odot \odot \downarrow \nospace \cancel{\odot}
    \uparrow \right) \rightarrow | \tilde{1}, \tilde{2}, \tilde{4}, \tilde{6},
    \tilde{9}, \widetilde{10} ; \downarrow, \downarrow, \uparrow, \downarrow,
    \downarrow, \uparrow \rangle & \rightarrow & r_{\alpha_c} = 3, \nonumber\\
    T \left( \odot \downarrow \nospace \cancel{\odot} \uparrow \uparrow \right)
    \rightarrow \hspace{5.4em} T | \tilde{2}, \tilde{3}, \tilde{4} ;
    \downarrow, \uparrow, \uparrow \rangle & \rightarrow & r_{\alpha_c} = 3.
    \nonumber
  \end{eqnarray}
\end{tmornamented}

We want the system to become a~one\mbox{-}particle tight\mbox{-}binding model
with $r_{\alpha_c}$ sites and hopping constant $te^{\frac{1}{2} (1 + \sigma_1
\cdot \sigma_{\tilde{N}}) iP}$ and with one\mbox{-}particle states defined as
\begin{equation}
  Q^{i - 1} | \{ \sigma \} \rangle \equiv | i \rangle .
\end{equation}
To achieve all that, the following gauge transformation is needed,
\begin{equation}
  te^{\frac{1}{2} (1 + \sigma_1 \cdot \sigma_{\tilde{N}}) iP} \rightarrow
  te^{i \phi_1 / r_{\alpha_c}},
\end{equation}
where the total flux $\phi_1$ through this tight-binding chain is found to be
\begin{equation}
  \phi_1 = r_{\alpha_c}  \frac{\tilde{N} - 2 N_{\tmop{DW}}}{\tilde{N}} P,
\end{equation}
where $N_{\tmop{DW}}$ is the number of $(\circ \bullet)$ chains that are
treated as domain walls ($\cancel{\odot}$). The chain now is a~Hubbard model 
with
flux $\phi_1$ and its eigenstates are Bloch states $| \alpha_c, q_c \rangle$,
where $q_c$ is a momentum of the Bloch state in the cyclic permutations of the
$\{ \sigma \}$ configuration,
\begin{equation}
  q_c = \frac{2 \pi n_{\alpha_c}}{r_{\alpha_c}}, \quad n_{\alpha_c} = 0,
  \ldots, r_{\alpha_c} - 1.
\end{equation}
Following Ref.~{\cite{Peres2000}}, the energy (for odd $N$) is given by
\begin{equation}
  E (\{ \tilde{k} \}, q_c, P) = - 2 t \sum_{i = 1}^{\tilde{N}} \cos \left(
  \tilde{k}_i + \alpha \frac{P}{\tilde{L}} + \frac{q_c}{\tilde{L}} \right),
\end{equation}
with $\alpha = \frac{2 \tilde{L} - L}{\tilde{N}} = \frac{\tilde{N} - 2
N_{\tmop{DW}}}{\tilde{N}}$, while for even $N$ there is a $+
\frac{\pi}{\tilde{L}}$ correction under the cosine. There is also the
following condition on the total momentum that needs to be satisfied:
\begin{equation}
  P \frac{L}{\tilde{L}} = \sum_{i = 1}^{\tilde{N}} \tilde{k}_i + \tilde{N} 
  \frac{q_c}{\tilde{L}} \quad (\tmop{mod} 2 \pi) .
\end{equation}
To calculate the ground state energy, we assume that there are no $(\bullet
\bullet)$ chains, and thus no spin-down particles in the mapped system. Then,
the spin periodicity is $r_{\alpha_c} = 1$, and thus $q_c = 0$. Particles
occupy the lowest momentum states, and thus $\sum_i \tilde{k}_i = 0$. The
ground state energy can be therefore evaluated to be:
\begin{equation}
  E = \left\{ \begin{array}{ll}
    - 2 t \frac{\sin \frac{\pi N}{\tilde{L}}}{\sin \frac{\pi}{\tilde{L}}} &
    \tmop{for} \tmop{odd} N,\\
    - 2 t \frac{\sin \frac{\pi N}{\tilde{L}}}{\sin \frac{\pi}{\tilde{L}}} \cos
    \frac{\pi}{L} & \tmop{for} \tmop{even} N.
  \end{array} \right. \label{1-Diasgsenergy}
\end{equation}
{\hspace{1.5em}}In short, by using the mapping (\ref{eq:DiasMapping}), one can
change the spinless system into a spinful chain with magnetic flux. Then, by
alternating gauge transformations and redefinition of states, so that the new
states are invariant by translation (in a mapped position space and mapped
spin space), we obtain a simple Hubbard model with a known solution.

We would like to use a similar method to solve the generalised ($p \neq 1$)
$t$-$V$ model of fermions, since this method provides a more complete
description of the system than, for example, Bethe ansatz. In contrast to
G{\'o}mez-Santos' solution summarised in Chapter \ref{ch:GSSolution}, this
method also considers domains of high energy, that can be present in the
system. For example, the chain $(\circ \circ \bullet \bullet \bullet \circ
\circ \circ)$ is never considered in G{\'o}mez-Santos' solution, but is
present in Dias' solution. Thus, we could not only achieve more physical
understanding, but also include the full spectrum of possible states.

Secondly, the ground state behaviour described in the G{\'o}mez-Santos'
solution is only considered with the infinite volume limit: comparing Eqs.
(\ref{1-GSgsenergy}) and (\ref{1-Diasgsenergy}) shows a discrepancy, but one
can see that they match if $L \rightarrow \infty$:
\begin{equation}
  \lim_{L \rightarrow \infty} E^{\text{Dias}} = E^{\text{G{\'o}mez-Santos}}_{p
  = 1} .
\end{equation}
\chapter{Solving the generalised $t$-$V$ model}\label{ch:1-criticality}

\section{Finite volumes with any interaction range and with infinite
potential}\label{ch:DiasLongSolution}

\subsection{Low-energy subspace}\label{ch:DiasBasedLow}

Similarly to the solution presented in Chapter~\ref{ch:DiasSolution}, we can
introduce a mapping of the original states into a system of lower length. Let
us consider a small subchain of $p + 1$ consecutive sites. Of course there are
$2^{p + 1}$ possible configurations of such a chain. In the low\mbox{-}energy
subspace though, there are no $(p + 1)$-site subchains which include more than
one particle, because otherwise there would be energy penalty $\geqslant U_p$.
Thus we are left with the following possible (one-particle and no-particle)
chains:
\begin{equation}
  (\bullet \circ \circ \cdots \circ) ; (\circ \bullet \circ \cdots \circ) ;
  (\circ \circ \bullet \cdots \circ) ; \cdots ; (\circ \circ \circ \cdots
  \bullet) ; (\circ \circ \circ \cdots \circ) .
\end{equation}
However, to have a unique one-to-one mapping (similarly to mapping
(\ref{eq:DiasMapping})), we forget about all other subchains except $(\circ
\cdots \circ \circ \bullet)$ and $(\circ \circ \circ \cdots \circ)$.
Therefore, we are left with only two subchains, which we will name for the
sake of simplicity
\begin{eqnarray}
  (\circ \cdots \circ \circ \bullet) & = & \circledast \quad (\tmop{occupied}
  \tmop{site}),  \label{map}\\
  (\circ \cdots \circ \circ \circ) & = & \odot \quad (\tmop{empty}
  \tmop{site}) . \nonumber
\end{eqnarray}
Notice that in the whole system with $N$ particles, there are exactly
$\tilde{L} = L - Np$ subchains like these and thus the size of the mapped
system is $\tilde{L}$.

\subsubsection{One particle}

Of course, with only one particle the solution is quite simple and we expect
to have a~free particle that can propagate through the system with momentum $k
= \frac{2 \pi n}{L}, n = 0, \ldots, L - 1$. However, this case serves as an
example of how to proceed in the many-particle case.

Firstly, a state with a particle on site $i$ in the original chain can be
written as:
\begin{equation}
  | i \rangle = c_i^{\dag} | 0 \rangle .
\end{equation}
Using mapping (\ref{map}), we can rename those states to:
\begin{equation}
  | \tilde{i} \rangle = c_{i + p}^{\dag} | 0 \rangle =
  \tilde{c}_{\tilde{i}}^{\dag} | \tilde{0} \rangle,
\end{equation}
where $\tilde{c}_{\tilde{i}}^{\dag}$ is the creation operator of subchain
$\circledast$ and $| \tilde{0} \rangle$ is the ``empty space'' state $\odot
\odot \odot \cdots$ of length $\tilde{L} = L - p$. However, some states $| i
\rangle$ are not included in this new set, namely states for $i \leq p$. We can
use translation operator $T$ to define them:
\begin{equation}
  c_i^{\dag} | 0 \rangle = T^{- p + i - 1} c_{p + 1}^{\dag} | 0 \rangle = T^{-
  p + i - 1} \tilde{c}_{\tilde{1}}^{\dag} | \tilde{0} \rangle \quad \tmop{for}
  i \leq p.
\end{equation}
The Hamiltonian in the new notation is therefore
\begin{eqnarray}
  \frac{H}{- t} & = & \sum_{i = p + 1}^{L - 1} c_{\tilde{i}}^{\dag}
  c_{\tilde{i} + 1} + c_L^{\dag} c_1 + c_1^{\dag} c_2 + \cdots + c_p^{\dag}
  c_{p + 1} + \text{h.c.} \\
  & = & \sum_{\tilde{i} = \tilde{1}}^{\tilde{L} - 1}
  \tilde{c}_{\tilde{i}}^{\dag} \tilde{c}_{\tilde{i} + 1} +
  \tilde{c}_{\tilde{L}}^{\dag} \tilde{c}_{\tilde{1}} T^p + T^{- p}
  \tilde{c}_{\tilde{1}}^{\dag} \tilde{c}_{\tilde{1}} T^{p - 1} + \cdots + T^{-
  1} \tilde{c}_{\tilde{1}}^{\dag} \tilde{c}_{\tilde{1}} + \text{h.c.}
  \nonumber\\
  & = & \sum_{\tilde{i} = \tilde{1}}^{\tilde{L} - 1}
  \tilde{c}_{\tilde{i}}^{\dag} \tilde{c}_{\tilde{i} + 1} +
  \tilde{c}_{\tilde{L}}^{\dag} \tilde{c}_{\tilde{1}} T^p + \sum_{m = 1}^p T^{-
  p + m - 1} \tilde{c}_{\tilde{1}}^{\dag} \tilde{c}_{\tilde{1}} T^{p - m} +
  \text{h.c.} \nonumber
\end{eqnarray}
Following Dias {\cite{Dias2000}}, we introduce an over-complete set of states
invariant by translation and with momentum $k$:
\begin{equation}
  | \tilde{i}, k \rangle = \frac{1}{\sqrt{L}} \sum_{j = 1}^L e^{i \nospace k
  \nospace j} T^{j - 1} | \tilde{i} \rangle . \label{1-Diasovercomplete}
\end{equation}
Creation and annihilation operators of these states will be designated as
$\tilde{c}_{\tilde{i}, k}^{\dag}$ and $\tilde{c}_{\tilde{i}, k}$ respectively.
The Hamiltonian becomes:
\begin{eqnarray}
  \frac{H}{- t} & = & \sum_k \left( \sum_{\tilde{i} = \tilde{1}}^{\tilde{L} -
  1} \tilde{c}_{\tilde{i}, k}^{\dag} \tilde{c}_{\tilde{i} + 1, k} + e^{i
  \nospace k \nospace p} \tilde{c}_{\tilde{L}, k}^{\dag} \tilde{c}_{\tilde{1},
  k} + \sum_{m = 1}^p e^{- i k} \tilde{c}_{\tilde{1}, k}^{\dag}
  \tilde{c}_{\tilde{1}, k} + \text{h.c.} \right) \\
  & = & \sum_k \left( \sum_{\tilde{i} = \tilde{1}}^{\tilde{L} - 1}
  \tilde{c}_{\tilde{i}, k}^{\dag} \tilde{c}_{\tilde{i} + 1, k} + e^{i \nospace
  k \nospace p} \tilde{c}_{\tilde{L}, k}^{\dag} \tilde{c}_{\tilde{1}, k} +
  \text{h.c.} + (2 p \cos k)  \tilde{c}_{\tilde{1}, k}^{\dag}
  \tilde{c}_{\tilde{1}, k} \right) \nonumber\\
  & = & \sum_k \left( \sum_{\tilde{i} = \tilde{1}}^{\tilde{L} - 1}
  \tilde{c}_{\tilde{i}, k}^{\dag} \tilde{c}_{\tilde{i} + 1, k} + e^{i \nospace
  k \nospace p} \tilde{c}_{\tilde{L}, k}^{\dag} \tilde{c}_{\tilde{1}, k} +
  \text{h.c.} \right) . \label{eq:1-Dias1particleHam} \nonumber
\end{eqnarray}
This is the Hamiltonian of a tight-binding model with fictitious flux $e^{i
\nospace k \nospace p}$. Eigenvalues are given by
\begin{equation}
  E (\tilde{k}, k) = - 2 t \cos \left( \tilde{k} - \frac{kp}{\tilde{L}}
  \right), \label{eq:1-Dias1particleenergy}
\end{equation}
with condition
\begin{equation}
  \tilde{k} = k \frac{L}{\tilde{L}} \quad (\tmop{mod} 2 \pi) .
  \label{eq:1-Dias1partcondition}
\end{equation}
This final condition comes from the fact that eigenstates of Hamiltonian
(\ref{eq:1-Dias1particleHam}) are combinations of the over-complete set of
states from Eq.~(\ref{1-Diasovercomplete}) and are non-zero only if
Eq.~(\ref{eq:1-Dias1partcondition}) is fulfilled. Finally, combining condition
(\ref{eq:1-Dias1partcondition}) and equation (\ref{eq:1-Dias1particleenergy})
for the energy, we recover the expected value for the energy of a particle
with momentum $k$:
\begin{equation}
  E (k) = - 2 t \cos k.
\end{equation}
\subsubsection{Many particle case}\label{ch:1-manyparticles}

Let us now consider a general case of $N$ particles. We must remember that any
$(p + 1)$ subchains must contain at most one particle. A specific realisation
of our system can be written as:
\begin{equation}
  | a_1, \ldots, a_N \rangle = \prod_{i = 1}^N c^{\dag}_{a_i} | 0 \rangle
  \label{mpstates},
\end{equation}
where $\{ a_i \}$ are the consecutive positions of particles in our system,
$a_i \in \{ 1, \ldots, L \}$. There is an additional condition on the first
number, $a_1 > p$. This means that the first particle cannot occupy sites from
$1$ to $p$ and thus there are states not included in this naming. However, we
can represent them using the translation operator $T$:
\begin{eqnarray}
  c_1 \prod_{i = 2}^N c^{\dag}_{a_i} | 0 \rangle & = & T^{- p} | p + 1, a_2 +
  p, \ldots, a_N + p \rangle, \\
  c_2 \prod_{i = 2}^N c^{\dag}_{a_i} | 0 \rangle & = & T^{- p + 1} | p + 1,
  a_2 + p - 1, \ldots, a_N + p - 1 \rangle, \nonumber\\
  & \vdots &  \nonumber\\
  c_p \prod_{i = 2}^N c^{\dag}_{a_i} | 0 \rangle & = & T^{- 1} | p + 1, a_2 +
  1, \ldots, a_N + 1 \rangle . \nonumber
\end{eqnarray}
Now, we will use mapping (\ref{map}) and rename states (\ref{mpstates}) to
\begin{equation}
  | \tilde{a}_1, \ldots, \tilde{a}_N \rangle = \prod_{i = 1}^N
  \tilde{c}^{\dag}_{\tilde{a}_i} | \tilde{0} \rangle,
\end{equation}
where $\{ \tilde{a}_i \}$ are now consecutive positions of subchains
$\circledast$, $| \tilde{0} \rangle$ is the ``empty'' system $\odot \odot \odot
\cdots$ of size $\tilde{L} = L - Np$ and $\tilde{c}^{\dag}_{\tilde{a}_i}$
creates a subchain $\circledast$ at position $\tilde{a}_i$. We can now create a
state invariant by translation, with momentum $P = \frac{2 \pi n}{L} \nocomma, n
= 0, \ldots, L - 1$:
\begin{equation}
  | \tilde{a}_1, \ldots, \tilde{a}_N, P \rangle = \frac{1}{\sqrt{L}} \sum_{j =
  1}^L e^{i P j} T^{j - 1} | \tilde{a}_1, \ldots, \tilde{a}_N \rangle .
\end{equation}
The creation and annihilation operators of the new states will be designated
as $\tilde{c}_{\tilde{i}, P}^{\dag}$ and $\tilde{c}_{\tilde{i} + 1, P}$. The
Hamiltonian between the new states is thus:
\begin{equation}
  \frac{H (P)}{- t} = \sum_P \left( \sum_{\tilde{i} \neq \tilde{L}}
  \tilde{c}_{\tilde{i}, P}^{\dag} \tilde{c}_{\tilde{i} + 1, P} + e^{i p P}
  \tilde{c}_{\tilde{1}, P}^{\dag} \tilde{c}_{\tilde{L}, P} \right) +
  \text{h.c.}
\end{equation}
This is a Hamiltonian of a tight-binding chain with $\tilde{L}$ sites and
fictitious flux $p P$. The energy spectrum is given by:
\begin{equation}
  E (\{ \tilde{k}_i \}, P) = - 2 t \sum_{i = 1}^N \cos \left( \tilde{k}_i -
  \frac{p P}{\tilde{L}} \right),
\end{equation}
where $\tilde{k}_i = \frac{2 \pi \tilde{n}_i}{\tilde{L}}, \tilde{n}_i = 0,
\ldots, \tilde{L} - 1$ is the pseudochain momentum. However, not all
combinations of $\{ \tilde{k}_i \}$ and $P$ are possible, and thus we get the
following condition, similar to Eq.~(\ref{eq:1-Dias1partcondition}):
\begin{equation}
  \sum_{i = 1}^N \left( \tilde{k}_i - \frac{p P}{\tilde{L}} \right) - P = 0
  \quad (\tmop{mod} 2 \pi),
\end{equation}
or
\begin{equation}
  P \frac{L}{\tilde{L}} = \sum_{i = 1}^N \tilde{k}_i \quad (\tmop{mod} 2 \pi)
  .
\end{equation}
The ground state energy can be calculated by assuming that only the
lowest-momentum states are occupied and is given by an equations similar to
Eq.~(\ref{1-Diasgsenergy}), but with $\tilde{L} = L - Np$:
\begin{eqnarray}
  E_{\tmop{odd} N} & = & - 2 t \frac{\sin \left( \frac{\pi N}{\tilde{L}}
  \right)}{\sin \left( \frac{\pi}{\tilde{L}} \right)}, 
  \label{eq:1-DiasManyEnergy1}\\
  E_{\tmop{even} N} & = & - 2 t \frac{\sin \left( \frac{\pi N}{\tilde{L}}
  \right)}{\sin \left( \frac{\pi}{\tilde{L}} \right)} \cos \left(
  \frac{\pi}{L} \right) .  \label{eq:1-DiasManyEnergy2}
\end{eqnarray}

\subsubsection{Luttinger liquid parameters and comparison with
G{\'o}mez-Santos res{\nobreak}ults}

We can compare our results to the ones calculated by G{\'o}mez-Santos and
summarised in Chapter \ref{ch:GSSolution}. To do that we need to find the
infinite volume limits of the Luttinger liquid parameters. Using equations
(\ref{eq:1-DiasManyEnergy1}--\ref{eq:1-DiasManyEnergy2}) for the ground state
energy and the definitions (\ref{eq:1-vN}--\ref{eq:1-vJ}), we find the
following expansions for the Luttinger liquid velocities for $L \rightarrow
\infty$:
\begin{eqnarray}
  v_N (E_{\tmop{odd}}) & \approx & \frac{2 t}{(1 - p Q)^3} \sin \frac{\pi Q}{1
  - p Q} \\
  &  & + t \frac{(\pi^2 - 2 p^2 (1 - pQ)^2) \sin \frac{\pi Q}{1 - p Q} - 4
  \pi p (1 - p Q) \cos \frac{\pi Q}{1 - p Q}}{3 (1 - p Q)^5}  \frac{1}{L^2} +
  O \left( \frac{1}{L^4} \right), \nonumber\\
  v_N (E_{\tmop{even}}) & \approx & \frac{2 t}{(1 - p Q)^3} \sin \frac{\pi
  Q}{1 - p Q} \\
  &  & + t \frac{- (\pi^2 (2 - 6 p Q + 3 p^2 Q^2) + 2 p^2 (1 - p Q)^2) \sin
  \frac{\pi Q}{1 - p Q} - 4 \pi p (1 - p Q) \cos \frac{\pi Q}{1 - p Q}}{3 (1 -
  p Q)^5}  \frac{1}{L^2} \nonumber\\
  & & + O \left( \frac{1}{L^4} \right)
  , \nonumber
\end{eqnarray}
\begin{eqnarray}
  v_J (E_{\tmop{odd}}) & \approx & 2 t (1 - p Q) \sin \frac{\pi Q}{1 - p Q} +
  t \frac{\pi^2}{3 (1 - p Q)} \sin \left( \frac{\pi Q}{1 - p Q} \right) 
  \frac{1}{L^2} + O \left( \frac{1}{L^4} \right), \\
  v_J (E_{\tmop{even}}) & \approx & 2 t (1 - p Q) \sin \frac{\pi Q}{1 - p Q}
  \\
  &  & + t \frac{\pi^2 (- 2 + 6 p Q - 3 p^2 Q^2)}{3 (1 - p Q)} \sin \left(
  \frac{\pi Q}{1 - p Q} \right)  \frac{1}{L^2} + O \left( \frac{1}{L^4}
  \right) . \nonumber
\end{eqnarray}
The Luttinger liquid parameters are calculated using Eqs.~(\ref{eq:1-vS}) and
(\ref{eq:1-K}):
\begin{eqnarray}
  v_S (E_{\tmop{odd}}) & \approx & \frac{2 t}{1 - p Q} \sin \frac{\pi Q}{1 - p
  Q}  \label{eq:1-MyCalcvS}\\
  &  & + t \frac{(\pi^2 - p^2 (1 - p Q)^2) \sin \frac{\pi Q}{1 - p Q} - 2 \pi
  p (1 - p Q) \cos \frac{\pi Q}{1 - p Q}}{3 (1 - p Q)^3}  \frac{1}{L^2} + O
  \left( \frac{1}{L^4} \right), \nonumber\\
  v_S (E_{\tmop{even}}) & \approx & \frac{2 t}{1 - p Q} \sin \frac{\pi Q}{1 -
  p Q} \\
  &  & + t \frac{- (\pi^2 (2 - 6 p Q + 3 p^2 Q^2) + p^2 (1 - p Q)^2) \sin
  \frac{\pi Q}{1 - p Q} - 2 \pi p (1 - p Q) \cos \frac{\pi Q}{1 - p Q}}{3 (1 -
  p Q)^3}  \frac{1}{L^2} \nonumber\\
  &  & + O \left( \frac{1}{L^4} \right), \nonumber
\end{eqnarray}
\begin{equation}
  K \approx \frac{1}{2} (1 - p Q)^2 + \frac{1}{12} p (1 - p Q) \left( p (1 - p
  Q) + 2 \pi \cot \frac{\pi Q}{1 - p Q} \right)  \frac{1}{L^2} + O \left(
  \frac{1}{L^4} \right) . \label{eq:1-MyCalcK}
\end{equation}
Notice that critical parameter $K$ has the same form for both even and odd
$N$.

All the results agree with G{\'o}mez-Santos results up to the first order
expansion in $L$.

\subsection{Domain walls -- high energy subspace}\label{ch:DiasBasedHigh}

A similar method can be used for the densities beyond the first critical
density of the system, in which charge density waves will be necessarily
present and can be treated as particles moving throughout the chain. We will
use a picture similar to that which G{\'o}mez-Santos introduced in
Ref.~{\cite{Gomez-Santos1993}}: charge quasiparticles with higher potential
energy, effectively occupying a smaller chain, will be identified as occupied
sites and quasiparticles with lower potential energy, effectively occupying a
larger chain, will be treated as empty sites. Therefore, the mapping is:
\begin{eqnarray}
  (\underbrace{\circ \cdots \circ \circ}_{p - d + 1} \bullet) & = & \odot
  \quad (\tmop{empty}), \\
  (\underbrace{\circ \cdots \circ}_{p - d} \bullet) & = & \circledast \quad
  (\tmop{occupied}), \nonumber
\end{eqnarray}
where $d$ is the number of insulating phases that occurred during squashing of
the chain (see Fig.~\ref{fig:1-valofd}).

\begin{figure}[h]
  \resizebox{7cm}{!}{\includegraphics{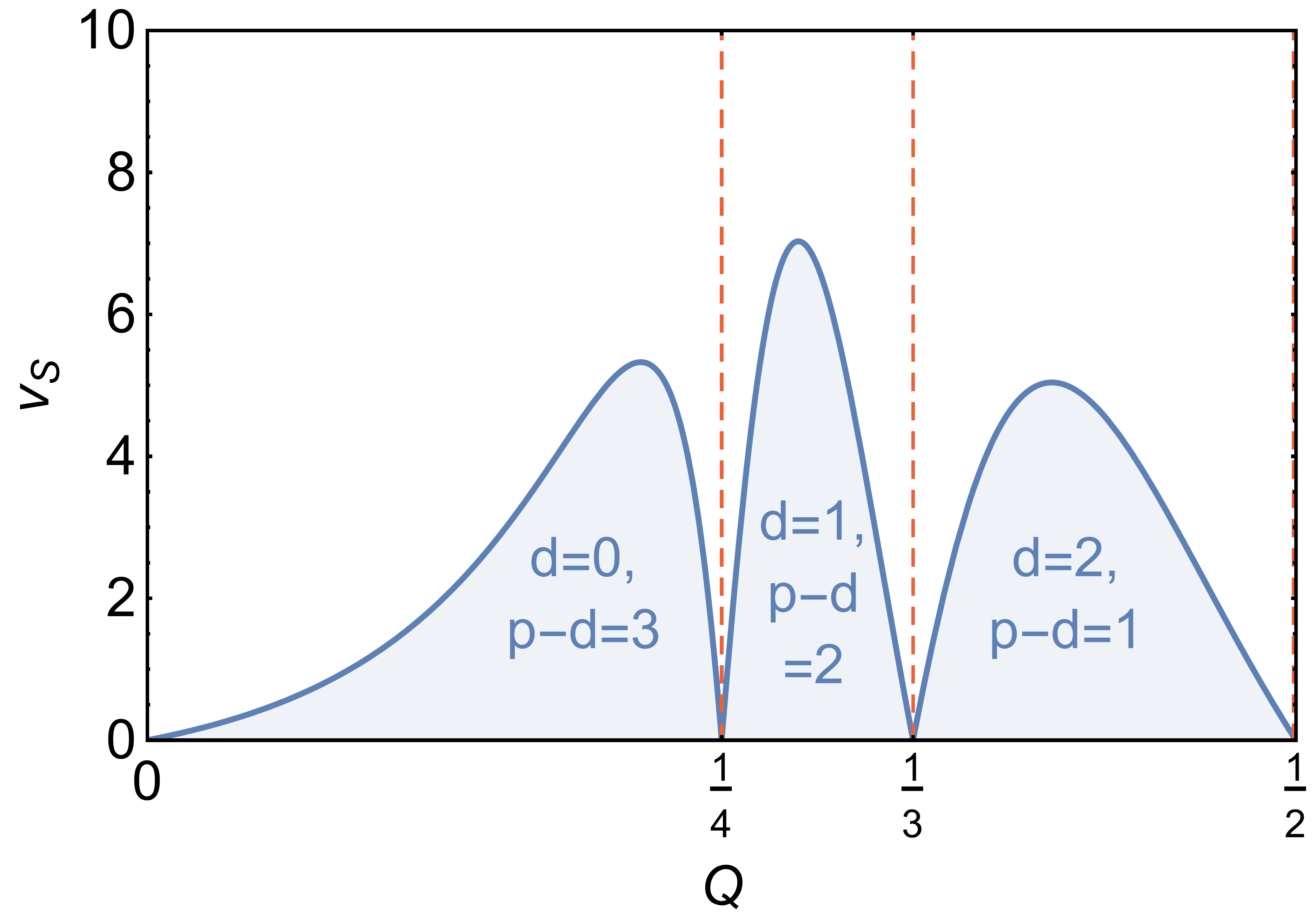}}\qquad\resizebox{7cm}{!}{\includegraphics{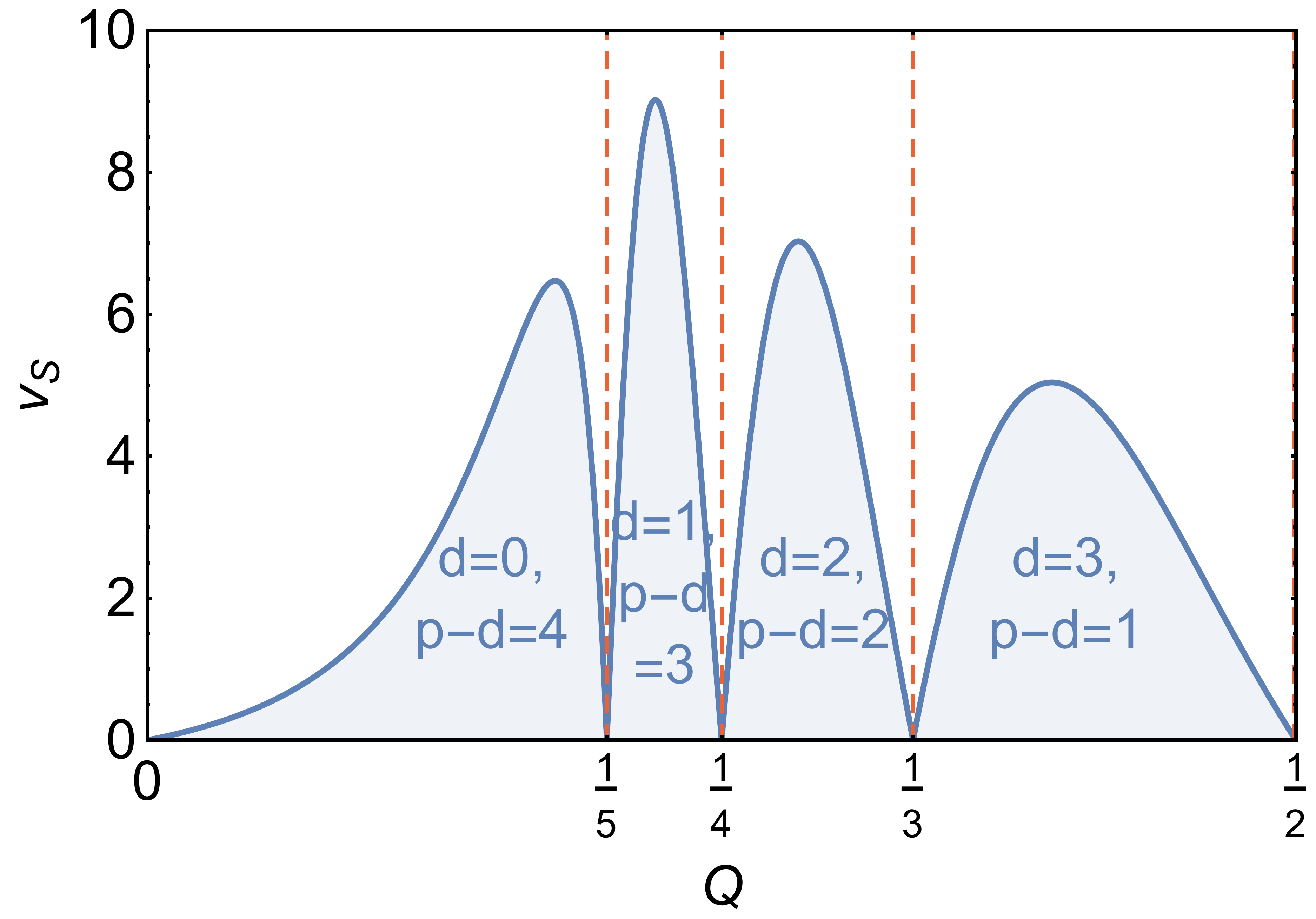}}
  \caption{Example of how to assign values of $d$ for $p = 3$ and $p =
  4$.\label{fig:1-valofd}}
\end{figure}

\begin{tmornamented}[skipbelow=0.5em,skipabove=1em,roundcorner=1.7ex]
  \tmtextbf{Example.} The following mapping can be done, if $p - d = 3$:
  \begin{equation}
    \circ \circ \circ \circ \bullet \circ \circ \circ \circ \bullet \circ
    \circ \circ \bullet \circ \circ \circ \bullet \circ \circ \circ \circ
    \bullet \cdots \xrightarrow[\tmop{mapping}]{} \odot \odot \circledast 
    \circledast \odot \cdots .
  \end{equation}
\end{tmornamented}

The new chain has $N_D = (p - d + 2) N - L$ fermions and a length of $L_D =
N$. Using the steps from Section \ref{ch:1-manyparticles} with the following
modifications,
\begin{equation}
  p \rightarrow (p - d), \qquad N \rightarrow N_D, \qquad \tilde{L}
  \rightarrow L_D,
\end{equation}
the ground state energy of the system is calculated to be:
\begin{eqnarray}
  E_{\tmop{odd} N} & = & - 2 t \frac{\sin \pi \left( p - d - \frac{1}{Q}
  \right)}{\sin \left( \frac{\pi}{N} \right)}, \\
  E_{\tmop{even} N} & = & - 2 t \frac{\sin \pi \left( p - d - \frac{1}{Q}
  \right)}{\sin \left( \frac{\pi}{N} \right)} \cos \left( \frac{\pi}{L}
  \right) . 
\end{eqnarray}
Using Eqs.~(\ref{eq:1-vN}--\ref{eq:1-K}), the Luttinger liquid parameters in
the limit $L \rightarrow \infty$ are found to be (for odd $N$):
\begin{eqnarray}
  v_N & \approx & - \frac{2 t}{Q^3} (- 1)^{p + d} \sin \frac{\pi}{Q} + \frac{t
  (- 1)^{p + d}}{3 Q^5} \left( 4 \pi Q \cos \frac{\pi}{Q} - (\pi^2 - 2 Q^2)
  \sin \frac{\pi}{Q} \right)  \frac{1}{L^2} + O \left( \frac{1}{L^4} \right),
  \\
  v_J & \approx & - 2 tQ (- 1)^{p + d} \sin \frac{\pi}{Q} - \frac{\pi^2 t}{3
  Q} (- 1)^{p + d} \sin \left( \frac{\pi}{Q} \right)  \frac{1}{L^2} + O \left(
  \frac{1}{L^4} \right), \\
  v_S & \approx & \frac{2 t}{Q} \left| \sin \frac{\pi}{Q} \right| - \frac{t}{3
  Q^3} \left( - \pi^2 + Q^2 + 2 \pi Q \cot \frac{\pi}{Q} \right) \left| \sin
  \frac{\pi}{Q} \right|  \frac{1}{L^2} + O \left( \frac{1}{L^4} \right), 
  \label{eq:1-vSDWMine}\\
  K & \approx & \frac{1}{2} Q^2 + \frac{1}{12} Q \left( Q + 2 \pi \cot
  \frac{\pi}{Q} \right)  \frac{1}{L^2} + O \left( \frac{1}{L^4} \right) . 
\end{eqnarray}
Parameters $v_S$ and $K$ are to be consistent with previous results
{\cite{Gomez-Santos1993}} up to the first 
order\footnote{\setstretch{1.66}Velocities $v_N$ and
$v_J$ were not given in Ref.~{\cite{Gomez-Santos1993}} for the high-energy
subspace.}. Figure~\ref{fig:1-vSMine} shows the comparison of the sound
velocity obtained in Eqs.~(\ref{eq:1-MyCalcvS}) and (\ref{eq:1-vSDWMine}), and
its infinite volume limit obtained in the G{\'o}mez-Santos picture.

\begin{figure}[h]
  \begin{tabular}{ll}
    \resizebox{7cm}{!}{\includegraphics{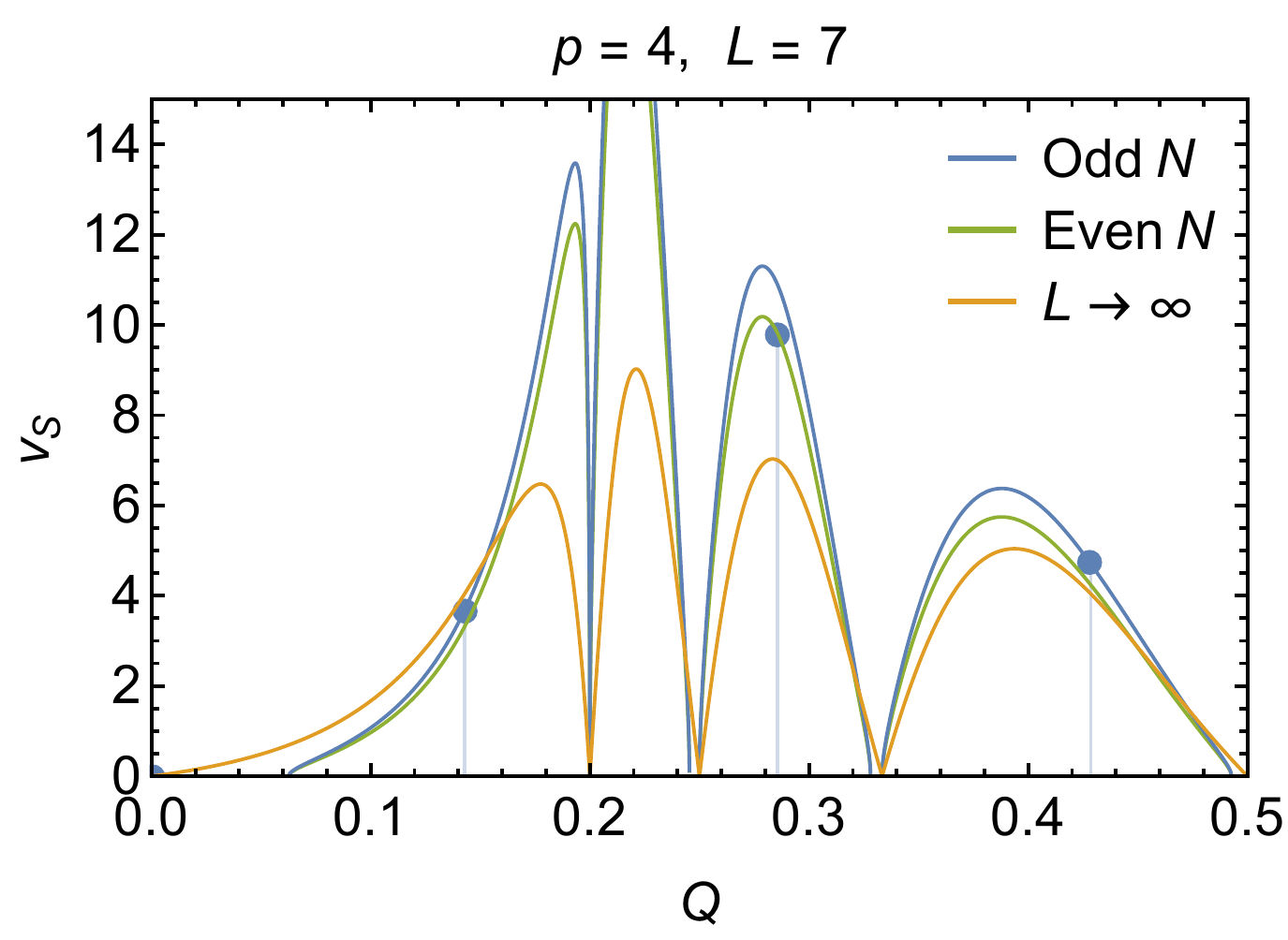}}
    &
    \resizebox{7cm}{!}{\includegraphics{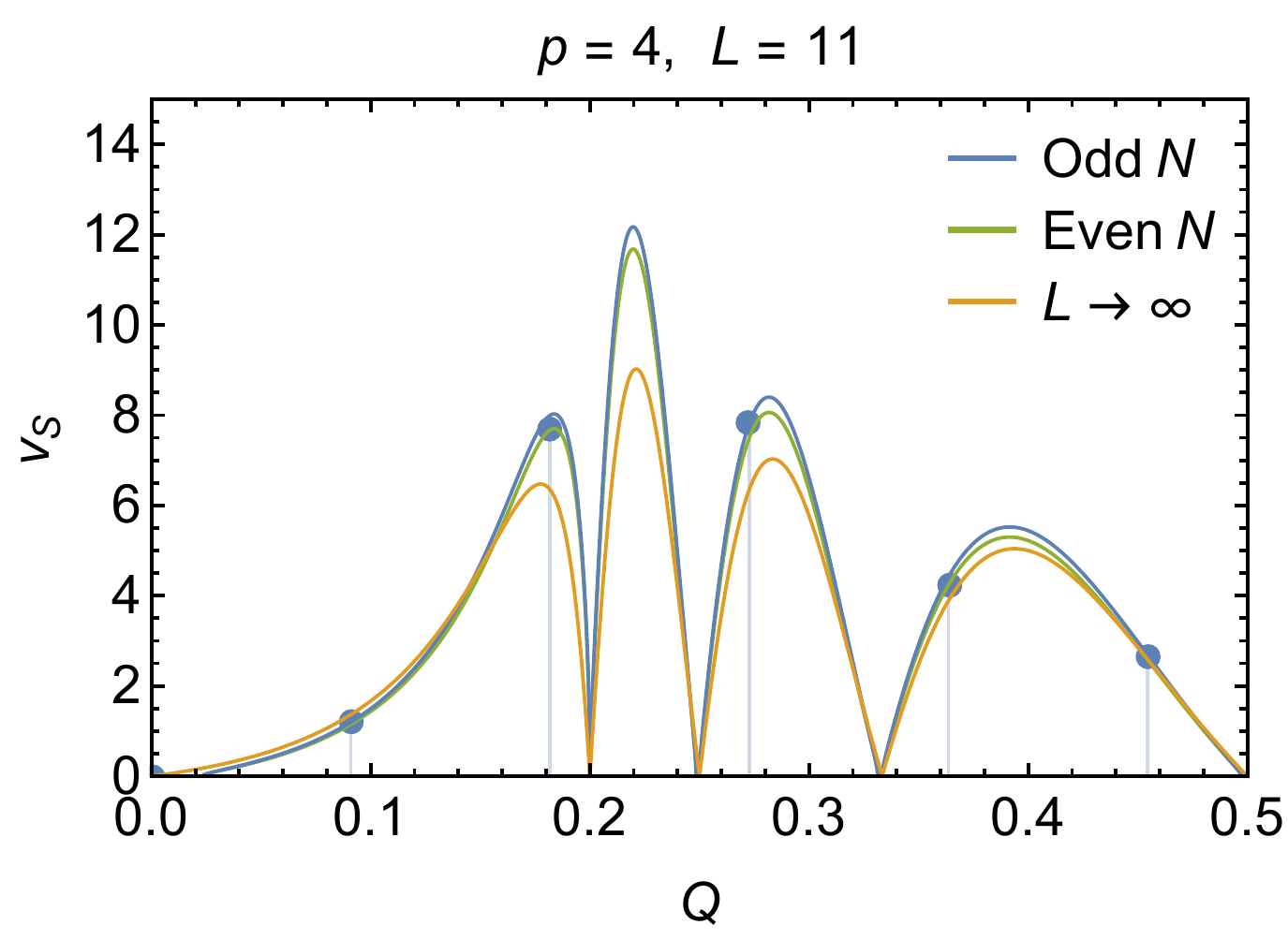}}\\
    \resizebox{7cm}{!}{\includegraphics{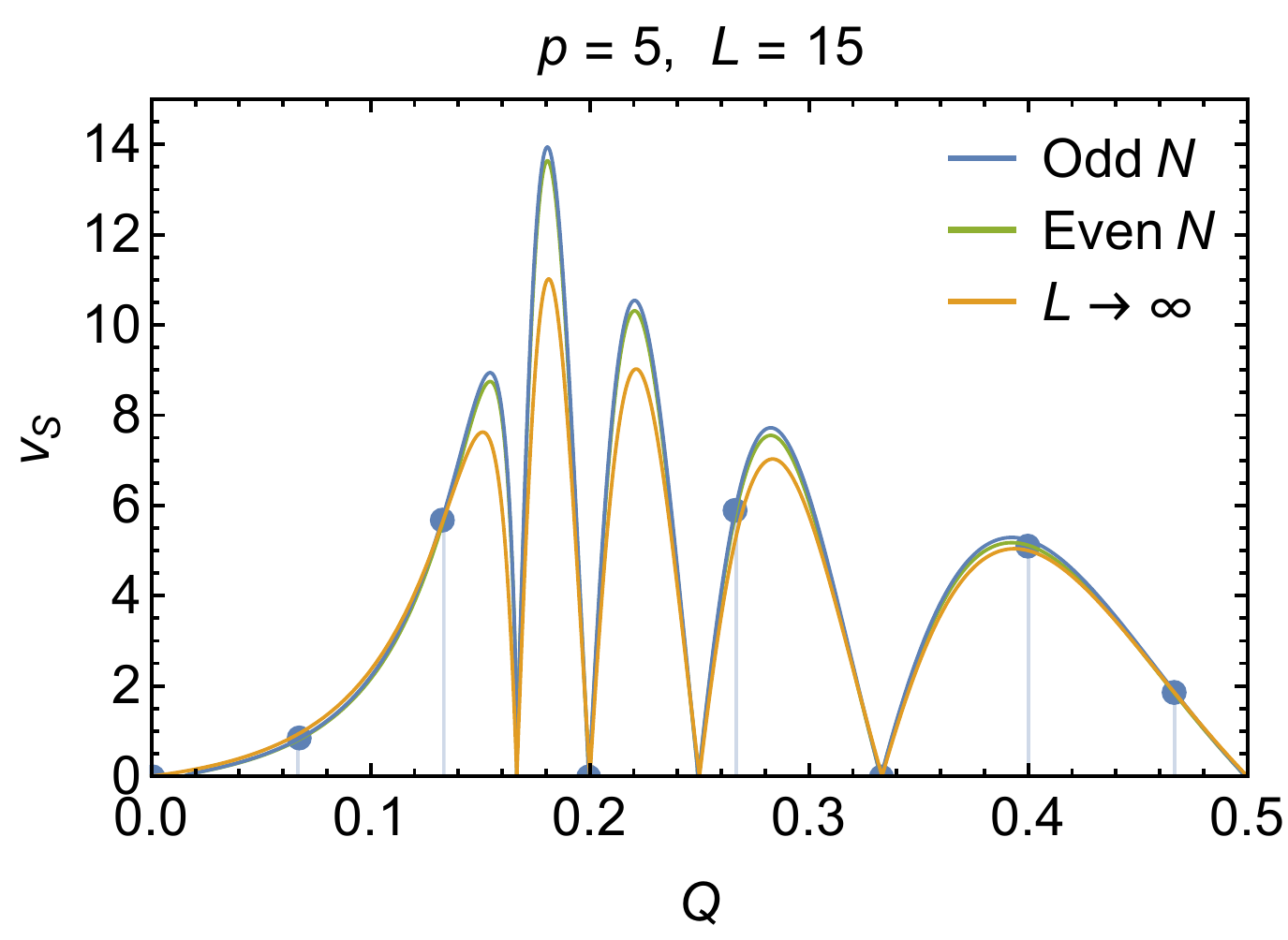}}
    &
    \resizebox{7cm}{!}{\includegraphics{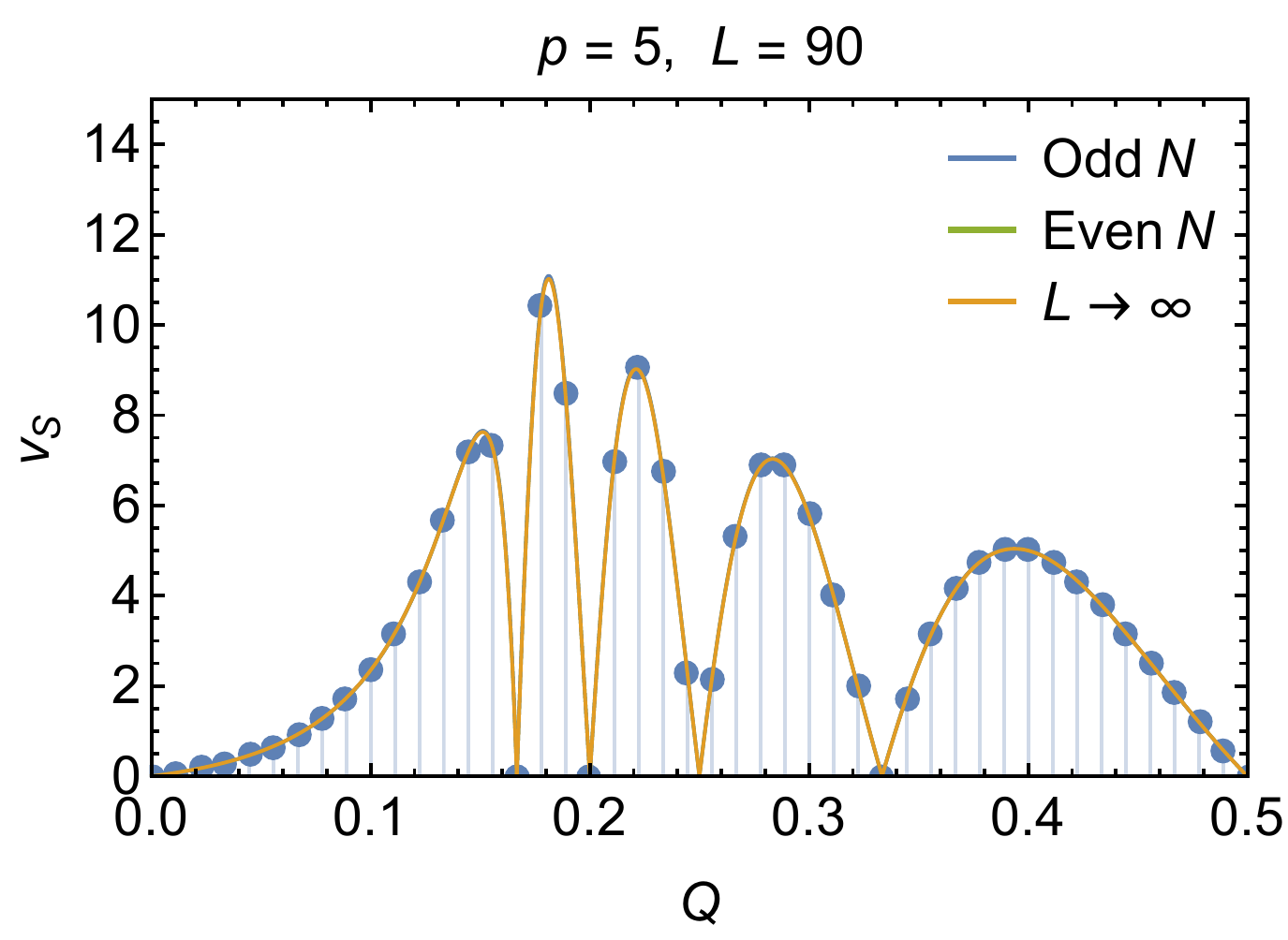}}
  \end{tabular}
  \caption{Sound velocity $v_S$ for different values of $p$ and $L$. Notice
  the difference between the infinite volume limit (yellow line) and finite
  $L$ values for odd (blue) and even (green) $N$.\label{fig:1-vSMine}}
\end{figure}

\subsection{General solution}

Similarly to Dias' solution of the $t$-$V$ model from Chapter
\ref{ch:DiasSolution}, we would like to develop a~solution for the extended
case $(p > 1)$. A full solution would include states that belong to the
high-energy subspace, and thus would give a more complete physical picture
than the results presented in Chapters \ref{ch:DiasBasedLow} and
\ref{ch:DiasBasedHigh}.

Firstly, one needs to know the projected Hamiltonian to first order in $t /
V$. Similarly to Eq.~(\ref{eq:pertHamiltonian}), the Hamiltonian of the
extended $t$-$V$ model is found to be:
\begin{equation}
  H = - t \sum_{i = 1}^L \sum_{\{ j_m \}} \left( \prod_{m = 1}^p
  \mathbbm{R}_{i - m}^{j_m} \right) c_i^{\dag} c_{i + 1} \left( \prod_{m =
  1}^p \mathbbm{R}_{i + 1 + m}^{j_m} \right) + \text{h.c.} + \sum_{i = 1}^L
  \sum_{m = 1}^p U_m n_i n_{i + m}, \label{eq:pertExtHam}
\end{equation}
where $\mathbbm{R}_i^j$ is an operator making sure the site $i$ is occupied
($j = 0$) or empty ($j = 1$), \tmtextit{i.e.}:
\begin{equation}
  \mathbbm{R}_i^j = j + (- 1)^j n_i = \left\{ \begin{array}{ll}
    n_i & \tmop{for} j = 0 \quad (\tmop{site} \tmop{must} \tmop{be}
    \tmop{occupied})\\
    1 - n_i & \tmop{for} j = 1 \quad (\tmop{site} \tmop{must} \tmop{be}
    \tmop{empty})
  \end{array} \right.,
\end{equation}
where $n_i$ is the particle operator. $\sum_{\{ j_m \}}$ is a sum over all
possible sets of $j_m$, \tmtextit{e.g.} $\{ j_1 = 0, j_2 = 1, j_3 = 1, \ldots
\}$ is one possible set. As an example, for $p = 3$ one of the elements in the
Hamiltonian is
\begin{equation}
  - t \sum_{i = 1}^L (1 - \underbrace{\nobracket n_{i - 3}) (1 -
  \underbrace{\nobracket n_{i - 2}) n_i \underbrace{_{- 1} c_i^{\dag} c_{i +
  1} n_i}\hphantom{}_{+ 2} (1 - \nobracket} n_{i + 3}) (1 - \nobracket} n_{i + 
  4}) .
\end{equation}
So, in short, if we want to hop the particle across a pair of sites, it is
only possible if the preceding $p$ sites are exactly a mirror image of the $p$
sites following the pair.

Now, we would like to create a mapping similar to Eq.~(\ref{eq:DiasMapping}),
where we renamed two\mbox{-}site chains. To do this, let us analyse the
requirements that need to be fulfilled in order to have an unambiguous
mapping. Let us look at Table \ref{tab:p1mapping} (adapted from Table
\ref{tab:p1mappingearly} for the reader's convenience), which shows two
possible hoppings that are allowed by the projected Hamiltonian from
Eq.~(\ref{eq:pertHamiltonian}), and four chains in which hoppings should not
be allowed.

\begin{table}[h]
  \begin{tabular}{cccc}
    \hline \hline
    Hopping & Decomposition into two-site chains & Mapping & Final mapping\\
    \hline
    $(\circ \circ \bullet \circ)$ & ${\color[HTML]{0070C0}(\circ \circ)}
    {\color[HTML]{76923C}(\circ \bullet)} (\bullet \circ)$ &
    $\mathbbm{A}\mathbbm{C}$ & $\downarrow \nospace \odot$\\
    $(\circ \bullet \circ \circ)$ & ${\color[HTML]{76923C}(\circ \bullet)}
    (\bullet \circ) {\color[HTML]{0070C0}(\circ \circ)}$ &
    $\mathbbm{C}\mathbbm{A}$ & $\odot \nospace \downarrow$\\
    \hline
    $(\bullet \circ \bullet \bullet)$ & $(\bullet \circ)
    {\color[HTML]{76923C}(\circ \bullet)} {\color[HTML]{C0504D}(\bullet
    \bullet)}$ & $\mathbbm{C}\mathbbm{B}$ & $\odot \nospace \uparrow$\\
    $(\bullet \bullet \circ \bullet)$ & ${\color[HTML]{C0504D}(\bullet
    \bullet)} (\bullet \circ) {\color[HTML]{76923C}(\circ \bullet)}$ &
    $\mathbbm{B}\mathbbm{C}$ & $\uparrow \nospace \odot$\\
    \hline
    \multicolumn{3}{c}{No hoppings should be allowed }  & \\
    \hline
    $(\circ \circ \bullet \bullet)$ & ${\color[HTML]{0070C0}(\circ \circ)}
    (\circ \bullet) {\color[HTML]{C0504D}(\bullet \bullet)}$ & $\mathbbm{A}
    \cancel{\mathbbm{C}} \mathbbm{B}$ & $\downarrow \nospace \cancel{\odot}
    \uparrow$\\
    $(\circ \bullet \circ \bullet)$ & ${\color[HTML]{76923C}(\circ \bullet)}
    (\bullet \circ) {\color[HTML]{76923C}(\circ \bullet)}$ &
    $\mathbbm{C}\mathbbm{C}$ & $\odot \odot$\\
    $(\bullet \circ \bullet \circ)$ & $(\bullet \circ)
    {\color[HTML]{76923C}(\circ \bullet)} (\bullet \circ)$ & $\mathbbm{C}$ &
    $\odot$\\
    $(\bullet \bullet \circ \circ)$ & ${\color[HTML]{C0504D}(\bullet \bullet)}
    (\bullet \circ) {\color[HTML]{0070C0}(\circ \circ)}$ &
    $\mathbbm{B}\mathbbm{A}$ & $\uparrow \downarrow$\\
    \hline \hline
  \end{tabular}
  \caption{All possible hoppings from the second to the third site in a
  four-site chain in a $p = 1$ system. We show how to produce a unique
  mapping, equivalent to the mapping from
  Eq.~(\ref{eq:DiasMapping}).\label{tab:p1mapping}}
\end{table}

From the decomposition of both possible hoppings, we can immediately see that
${\color[HTML]{0070C0}(\circ \circ)}$ and ${\color[HTML]{C0504D}(\bullet
\bullet)}$ switch places with ${\color[HTML]{76923C}(\circ \bullet)} (\bullet
\circ)$ and $(\bullet \circ) {\color[HTML]{76923C}(\circ \bullet)}$
respectively, and therefore should be uniquely mapped in a new system:
${\color[HTML]{0070C0}(\circ \circ)} =\mathbbm{A},
{\color[HTML]{C0504D}(\bullet \bullet)} =\mathbbm{B}$. Now we arrive at a
choice: we can either map {\color[HTML]{76923C}$\left({\circ}{\bullet}\right)$}
or $(\bullet \circ)$ into the third mapped one-particle operator. Either
choice is equally valid, but for the sake of consistency with
Eq.~(\ref{eq:DiasMapping}), we map ${\color[HTML]{76923C}(\circ \bullet)}
=\mathbbm{C}$. By considering which hoppings are allowed, we can see that
$\mathbbm{A}$ can hop with $\mathbbm{C}$ only, and so can $\mathbbm{B}$.
However chain $(\circ \circ \bullet \bullet)$, in which a hopping should not
be allowed, is mapped into $\mathbbm{A}\mathbbm{C}\mathbbm{B}$ and to make the
mapping uniquely defined, we treat $\mathbbm{C}$ in this case as a ``domain
wall''. Natural naming for $\mathbbm{A}, \mathbbm{B}$ and $\mathbbm{C}$ are
$\downarrow, \uparrow$ and $\odot$ (empty site), and therefore we arrive
exactly at the mapping (\ref{eq:DiasMapping}).

The situation is much more complicated, however, if one considers $p > 1$. The
possible number of hoppings is $2^p$ due to Eq.~(\ref{eq:pertExtHam}) and thus
grows exponentially with $p$. Additionally, the number of chains in which we
should not be allowed to hop is $2^{2 p + 1} - 2^{p + 1}$. Let us first
consider $p = 2$ and see if one can systematically devise a mapping for a
general case of any $p$.

\begin{table}[h]
  \begin{tabular}{cccc}
    \hline \hline
    No. & Hopping & Decomposition into three-site chains & Example mapping\\
    \hline
    1 & $(\circ \circ \circ \bullet \circ \circ)$ &
    ${\color[HTML]{0070C0}(\circ \circ \circ)} (\circ \circ \bullet) (\circ
    \bullet \circ) (\bullet \circ \circ)$ & $\mathbbm{A}\mathbbm{C}$\\
    & $(\circ \circ \bullet \circ \circ \circ)$ & $(\circ \circ \bullet)
    (\circ \bullet \circ) (\bullet \circ \circ) {\color[HTML]{0070C0}(\circ
    \circ \circ)}$ & $\mathbbm{C}\mathbbm{A}$\\
    \hline
    2 & $(\circ \bullet \circ \bullet \bullet \circ)$ & $(\circ \bullet \circ)
    (\bullet \circ \bullet) (\circ \bullet \bullet) (\bullet \bullet \circ)$ &
    $\mathbbm{E}\mathbbm{D}$\\
    & $(\circ \bullet \bullet \circ \bullet \circ)$ & $(\circ \bullet
    \bullet) (\bullet \bullet \circ) (\bullet \circ \bullet) (\circ \bullet
    \circ)$ & $\mathbbm{D}\mathbbm{E}$\\
    \hline
    3 & $(\bullet \circ \circ \bullet \circ \bullet)$ & $(\bullet \circ \circ)
    (\circ \circ \bullet) (\circ \bullet \circ) (\bullet \circ \bullet)$ &
    $\mathbbm{C}\mathbbm{E}$\\
    & $(\bullet \circ \bullet \circ \circ \bullet)$ & $(\bullet \circ
    \bullet) (\circ \bullet \circ) (\bullet \circ \circ) (\circ \circ
    \bullet)$ & $\mathbbm{E}\mathbbm{C}$\\
    \hline
    4 & $(\bullet \bullet \circ \bullet \bullet \bullet)$ & $(\bullet \bullet
    \circ) (\bullet \circ \bullet) (\circ \bullet \bullet)
    {\color[HTML]{C0504D}(\bullet \bullet \bullet)}$ &
    $\mathbbm{D}\mathbbm{B}$\\
    & $(\bullet \bullet \bullet \circ \bullet \bullet)$ &
    ${\color[HTML]{C0504D}(\bullet \bullet \bullet)} (\bullet \bullet \circ)
    (\bullet \circ \bullet) (\circ \bullet \bullet)$ &
    $\mathbbm{B}\mathbbm{D}$\\
    \hline \hline
  \end{tabular}
  \caption{Possible hoppings for $p = 2$.\label{tab:p2mapping}}
\end{table}

The decomposition of possible hoppings into three-site chains (see
Table~\ref{tab:p2mapping}) shows that ${\color[HTML]{0070C0}(\circ \circ
\circ)}$ and ${\color[HTML]{C0504D}(\bullet \bullet \bullet)}$ must
necessarily be mapped, ${\color[HTML]{0070C0}(\circ \circ \circ)}
=\mathbbm{A}, {\color[HTML]{C0504D}(\bullet \bullet \bullet)} =\mathbbm{B}$.
However, no map from three-site chains to one-site operators exists, so that
the map defines a one-to-one correspondence between hops in the original chain
and the mapped chain. We could, in principle, additionally use four-site
chains (example mapping $(\circ \circ \bullet) =\mathbbm{C}, (\circ \bullet
\bullet) =\mathbbm{D}, (\circ \bullet \circ \bullet) \tmop{or} (\bullet \circ
\bullet \circ) =\mathbbm{E}$ is shown in Table~\ref{tab:p2mapping}), but this
does not ensure that the map is a bijection, or that it preserves the
inability to hop.

We conclude that if a mapping similar to (\ref{eq:DiasMapping}) exists for the
extended model, it is too complicated to define it for any $p$. The number of
possible mappings and rules that would need to be applied to have a one-to-one
correspondence increases exponentially. Therefore, in the next section, we
present a different approach that can be used to determine the critical
parameters of the model.

\section{Strong coupling expansion -- going beyond the first order
perturbation}

\subsection{Formulation of the strong coupling expansion}

In the article by Hamer {\cite{Hamer1979}}, he introduced a method to truncate
the basis according to how states were connected to the unperturbed initial
subspace. The method is to reorder the basis (usually this is the Fock basis
-- particle number basis), firstly writing the 
desired\footnote{\setstretch{1.66}Notice that we
can choose what states we are interested in, be it the ground state or any of
the excited states.} subspace of unperturbed states that we want to
approximate (0th step), then states connected to them via the Hamiltonian (1st
step), then states connected to the 1st step states (2nd step) and so on. It
is easy to see that this results in a tri-block-diagonal Hamiltonian. We
truncate the basis to the step of our choice, resulting in a smaller,
truncated Hamiltonian, which will describe the full system up to a specific
perturbation order. However, the truncated basis is still usually quite big,
thus we will use an altered version of this method, called the strong coupling
expansion (SCE), and commonly used in investigations of the Schwinger model
{\cite{Crewther1980,Hamer1997,Cichy2013}}, a one-dimensional analogue of
quantum electrodynamics. This method is also equivalent to the block Lanczos
method {\cite{Montgomery1995}}.

We start by writing the Hamiltonian of the system as an unperturbed
Hamiltonian $\hat{H}_0$ and a~perturbation $\hat{V}$, with $\lambda$ being a
small parameter:
\begin{equation}
  \hat{H} = \hat{H}_0 + \lambda \hat{V} .
\end{equation}
For example, for the generalised $t$-$V$ model given by the Hamiltonian
(\ref{eq:genHamiltonian}), we identify:
\begin{equation}
  \hat{H}_0 = \sum_{i = 1}^L \sum_{m = 1}^p U_m n_i n_{i + m}, \quad \hat{V} =
  - \sum_{i = 1}^L \left( c^{\dag}_i c_{i + 1} + \text{h.c.} \right), \quad
  \lambda = t.
\end{equation}
{\hspace{1.5em}}The method proceeds as follows. Firstly, let us select the
desired initial subspace of unperturbed states that we want to approximate.
Usually that will be the ground state, but if one is interested in the
temperature dependance, it could be first excited states, second excited
states, etc. We will designate states in this subspace by $| 0^i \rangle, i =
1, 2, \ldots, N_{\tmop{init}}$, where $N_{\tmop{init}}$ is the number of
initial unperturbed states. Number 0 designates the 0th step of the expansion.

Secondly, to create states of the next step in the SCE, we will act with
perturbation operator $\hat{V}$ on the states from the previous step, $\hat{V}
| n^i \rangle$. $\hat{V} | n^i \rangle$ will be, in general, a~linear
combination of states from orders $n \um 1$, $n$ and $n \upl 1$. It will not
include lower orders, because $\hat{V} | n^i \rangle$ by definition does not
include orders higher than $n \upl 1$, which means $\forall_{m > n \upl 1}
\langle m^j | \hat{V} | n^i \rangle = 0$ and $\forall_{n < m \um 1} \langle
n^i | \hat{V} | m^j \rangle = 0$. This shows that the Hamiltonian in such a
basis is tri-block-diagonal, as it was in the original Hamer method. To
properly define states in the order $n \upl 1$, we have to separate states in
$\hat{V} | n^i \rangle$ according to their unperturbed energy; the states must
be eigenstates of $H_0$. Thus, the new states are:
\begin{equation}
  \hat{V} | n^i \rangle = \sum_j C_j | n \um 1^j \rangle + \sum_k C_k | n^k
  \rangle + \sum_l | \widetilde{n \upl 1}^l \rangle \label{separation},
\end{equation}
where $C_j, C_k$ are normalisation constants. The new states $| \widetilde{n
\upl 1}^l \rangle$ are not yet orthonormal to each other and to the previous
states (which is why we use a tilde to distinguish them from states that are
included in the new basis). After Gram-Schmidt orthonormalisation
{\cite{Arfken2005}} they become:
\begin{equation}
  | n^j \rangle = C_{\tilde{n}, j} | \tilde{n}^j \rangle - \sum_{m = n - 2}^{n
  - 1} \sum_{k = 1}^{k_{\max} (m)} C_{\tilde{n}, j ; m, k} | m^k \rangle -
  \sum_{k = 1}^{j - 1} C_{\tilde{n}, j ; n, k} | n^k \rangle
  \label{orthonormalization},
\end{equation}
where the coefficient $C_{\tilde{n}, j}$ is a normalisation constant and other
coefficients include normalisation and projection: $C_{\tilde{n}, j ; m, k} =
C_{\tilde{n}, j} \langle m^k | \tilde{n}^j \rangle$.

If we continue this procedure infinitely long, we may not however produce the
full basis spanning the whole Hilbert space. Thus, there may be states not
producible by this procedure, which we will call $| \alpha \rangle$, and which
will form together with states $| n^i \rangle$ an orthonormal non-truncated
basis of the system. However, we can easily see that using  (\ref{separation})
and then  (\ref{orthonormalization}) :
\begin{eqnarray}
  \langle \alpha | \hat{V} | n^i \rangle & = & \langle \alpha | \left( \sum_j
  C_j | n \um 1^j \rangle + \sum_k C_k | n^k \rangle + \sum_l | \widetilde{n
  \upl 1}^l \rangle \right) \\
  & = & \langle \alpha | \sum_j C_j | n \um 1^j \rangle + \langle \alpha |
  \sum_k C_k | n^k \rangle + \langle \alpha | \sum_l \frac{1}{C_{\widetilde{n
  \upl 1}, l}} \nonumber\\
  &  & \times \left( | n \upl 1^l \rangle + \sum_{r = n - 1}^n \sum_{k =
  1}^{k_{\max} (r)} C_{\widetilde{n + 1}, l ; r, k} | r^k \rangle + \sum_{k =
  1}^{l - 1} C_{\widetilde{n \upl 1}, l ; n \upl 1, k} | n \upl 1^k \rangle
  \right) \nonumber\\
  & = & \sum_j C_j \langle \alpha | n \um 1^j \rangle + \sum_k C_k \langle
  \alpha | n^k \rangle + \sum_l \frac{1}{C_{\widetilde{n \upl 1}, l}}
  \nonumber\\
  &  & \times \left( \langle \alpha | n \upl 1^l \rangle + \sum_{r = 1}^n
  \sum_{k = 1}^{k_{\max} (m)} C_{\widetilde{n \upl 1}, l ; r, k} \langle
  \alpha | r^k \rangle + \sum_{k = 1}^{l - 1} C_{\widetilde{n \upl 1}, l ; n
  \upl 1, k} \langle \alpha | n \upl 1^k \rangle \right) \nonumber\\
  & = & 0. \nonumber
\end{eqnarray}
This proves that states $| \alpha \rangle$ are in fact part of a completely
different subspace of the Hamiltonian than states $| n^i \rangle$. Therefore,
eigenvalues of the desired subspace that we will be approximating will not
depend on $| \alpha \rangle$ and neither will any averages over states from
the desired subspace.

The Hamiltonian is now in the tri-block-diagonal form:
\begin{equation}
  \settowidth\mylen{300000000}
  \hat{H} = \left(\begin{array}{CCCCC|c}
    \vphantom{\begin{array}{c}a \\ b\end{array}}\cellcolor{wordblue} \hat{E}_0 + \lambda \hat{V}_{00} & \cellcolor{wordblue} \lambda \hat{V}_{01} & 0 & 0 & \cdots &
    \\
    \vphantom{\begin{array}{c}a \\ b\end{array}}\cellcolor{wordblue} \lambda \hat{V}_{01}^{\dag} & \cellcolor{wordblue} \hat{E}_1 + \lambda \hat{V}_{11} & \cellcolor{wordblue} \lambda
    \hat{V}_{12} & 0 & \cdots & \\
    \vphantom{\begin{array}{c}a \\ b\end{array}}0 & \cellcolor{wordblue} \lambda \hat{V}_{12}^{\dag} & \cellcolor{wordblue} \hat{E}_2 + \lambda \hat{V}_{22} &
    \cellcolor{wordblue} \lambda \hat{V}_{23} & \ddots & 0\\
    \vphantom{\begin{array}{c}a \\ b\end{array}}0 & 0 & \cellcolor{wordblue} \lambda \hat{V}_{23}^{\dag} & \cellcolor{wordblue} \hat{E}_3 + \lambda \hat{V}_{33} &
    \cellcolor{wordblue}\ddots & \\
    \vphantom{\begin{array}{c}a \\ b\end{array}}\vdots & \vdots & \ddots & \cellcolor{wordblue}\ddots & \cellcolor{wordblue}\ddots & \\
    \hline
    &  & 0 &  &  & \begin{array}{c}
      \text{Hamiltonian}\\
      \text{elements between}\\
      \text{states } \{ | \alpha \rangle \}
    \end{array}
  \end{array}\right) \label{H}
\end{equation}
where $\hat{V}_{n, m}$ are projections of $\hat{V}$ between states $| n^i
\rangle$ and $| m^j \rangle$ and $\hat{E}_n$ are projections of $\hat{H}_0$
between states $| n^i \rangle$, \tmtextit{i.e.}
\begin{equation}
  \hat{V}_{n, m} = \sum_i \sum_j | n^i \rangle \langle n^i | \hat{V} | m^j
  \rangle \langle m^j |, \qquad \hat{E}_n = \sum_i | n^i \rangle \langle n^i |
  \hat{H}_0 | n^i \rangle \langle n^i | .
\end{equation}
{\hspace{1.5em}}Finally, we use a Hamiltonian truncated to a specific SCE step
to calculate the energy and behaviour of the desired subspace of states.
Notice that the method itself has no limitation on the size of the original
basis of the system, so one can consider systems with finite or infinite basis
sizes (the Schwinger model falls into the latter case, for example).

\

\subsection{Direct relation to the perturbation theory}

We can now use the standard degenerate perturbation theory (see
\tmtextit{e.g.} Ref. {\cite{Zettili2009}}) to show which Hamiltonian elements
contribute to the $m$-th order correction of the desired subspace. For a small
perturbation $\lambda$ the projected Hamiltonian can be written as:
\begin{equation}
  \hat{H} = \hat{H}_0 + \lambda \sum_n \mathbbm{P}_n \hat{V} \mathbbm{P}_n +
  \lambda^2 \sum_n \sum_{k \neq n} \frac{\mathbbm{P}_n \hat{V} \mathbbm{P}_k
  \hat{V} \mathbbm{P}_n}{E_n - E_k} + \cdots,
\end{equation}
where $\mathbbm{P}_n$ is a projection operator into the subspace unperturbed
states with energy $E_n$. In general, the $m$-th order correction in $\lambda$
will include matrices of the form:
\begin{equation}
  \mathbbm{P}_n \hat{V} \mathbbm{P}_{k_1} \hat{V} \mathbbm{P}_{k_2} \cdots
  \hat{V} \mathbbm{P}_{k_{m - 1}} \hat{V} \mathbbm{P}_n .
\end{equation}
In the Hamiltonian (\ref{H}), we can identify the following:
\begin{equation}
  \mathbbm{P}_n \hat{V} \mathbbm{P}_k = \left\{ \begin{array}{ll}
    \hat{V}_{n, n} & \tmop{if} k = n\\
    \hat{V}_{n, n + 1} & \tmop{if} k = n + 1\\
    \hat{V}_{n - 1, n}^{\dag} & \tmop{if} k = n - 1\\
    0 & \tmop{otherwise}
  \end{array} \right.,
\end{equation}
and therefore:
\begin{equation}
  \mathbbm{P}_n \hat{V} \mathbbm{P}_{k_1} \hat{V} \mathbbm{P}_{k_2} \cdots
  \hat{V} \mathbbm{P}_{k_{m - 1}} \hat{V} \mathbbm{P}_n = \hat{V}_{n, k_1}
  \hat{V}_{k_1, k_2} \cdots \hat{V}_{k_{m - 1}, n},
\end{equation}
since $\hat{V}_{k, n} = \hat{V}^{\dag}_{n, k}$. To calculate the corrections
of order $m$ to the desired subspace of states, we set $n = 0$. The term
including the highest number of $\hat{V}_{n, k}$ matrices is
\begin{equation}
  \hat{V}_{01} \hat{V}_{12} \cdots \hat{V}_{p, p + 1} \hat{V}_{p + 1, p}
  \cdots \hat{V}_{21} \hat{V}_{10}, \qquad p = \left\lceil \frac{m}{2}
  \right\rceil - 1,
\end{equation}
and we can immediately conclude that for the perturbation correction of order
$m$ in the desired subspace of states, we need the following matrices:
\begin{equation}
  \underbrace{\hat{E}_0 + \lambda \hat{V}_{00} \nocomma, \quad \lambda
  \hat{V}_{01}, \quad \hat{E}_1 + \lambda \hat{V}_{11}, \quad \ldots, \quad
  \hat{E}_p + \lambda \hat{V}_{pp}, \quad (\lambda \hat{V}_{p, p + 1})}_{m
  \tmop{matrices}} .
\end{equation}
Last matrix is only included if $m$ is even.

Therefore, in every step of Hamer's procedure, by including more states in the
Hamiltonian matrix, we increase the accuracy of the desired subspace of states
by two perturbation orders. More strictly, results obtained in step $k$ of the
SCE results will be accurate to order $2 k + 1$.

Additionally, the method goes beyond the usual perturbation theory: in SCE
step $k$, we not only include all the terms up to order $2 k + 1$, but also
terms such as $\hat{V}_{01} \hat{V}_{10} \hat{V}_{01} \hat{V}_{10} \cdots$,
which belong to higher orders of perturbation.

\section{Strong coupling expansion on the extended $t$\mbox{-}$V$~model near
critical dens{\nobreak}ities}\label{ch:SCESolution}

\subsection{Selected systems}

Although the strong coupling expansion has many advantages, such as
insensitivity to integrability (or the lack of it), it has also one prominent
drawback: one needs to know the unperturbed subspace of states that will be
approximated. However, in the generalised $t$-$V$ model, we know that near the
Mott insulating densities, the degeneracy of the ground state is very small
{\cite{Gomez-Santos1993}} and the Hamiltonian can be diagonalised
analytically. Thus, we will firstly select three critical densities that will
be studied: $Q = 1 / (p + 1)$, $p = 1$ (integrable), $p = 2$ (nonintegrable),
and $p = 3$ (nonintegrable). Then, we will try to generalise our results to
any $p$. Secondly, we will select near-critical densities, in order to assess
critical parameters of the model. Our results will be compared to known
values.

Due to finite computer resources, we choose to use SCE step 3 for the critical
densities and SCE step 1 for near-critical densities. While for critical
densities our results are analytical, the near-critical densities require
numerical estimates.

\subsection{Results at critical densities $Q = 1 / (p +
1)$}\label{ch:1-SCEcritical}

Interestingly, at the Mott insulating density, the size of truncated
Hamiltonian was found to be constant for a given $p$ and $Q$ and completely
independent of the system size $L$. The system size is incorporated into the
Hamiltonian. Additionally, due to the translational symmetry of the system,
the truncated Hamiltonian contains $(p + 1)$ equal subspaces, if the system
size is bigger than $L > (2 \times \tmop{SCE} \tmop{step}) (p + 1)$.

\subsubsection{$Q = 1 / 2$ (half-filling), $p = 1$ (integrable), SCE step 3}

The truncated Hamiltonian for this case is of dimension $16 \times 16$, but for
a very large system size \tmtextit{L }it can be separated into two equal
subspaces of dimension $8\times8$, which can easily be diagonalised,
\begin{equation}
  \resizebox{\textwidth}{!}{$
  H = \left(\begin{array}{cccccccc}
    \cdot & \sqrt{L} & \cdot & \cdot & \cdot & \cdot & \cdot & \cdot\\
    \sqrt{L} & U_1 & 2 & \sqrt{2 L - 10} & \cdot & \cdot & \cdot & \cdot\\
    \cdot & 2 & U_1 & \cdot & 2 & \sqrt{L - 6} & \cdot & \cdot\\
    \cdot & \sqrt{2 L - 10} & \cdot & 2 U_1 & \sqrt{\frac{2}{L - 5}} & (L - 7)
    \sqrt{\frac{8}{(L - 6) (L - 5)}} & \sqrt{\frac{8 (L - 7)}{(L - 6) (L -
    5)}} & \sqrt{\frac{3 (L - 7) (L - 8)}{L - 5}}\\
    \cdot & \cdot & 2 & \sqrt{\frac{2}{L - 5}} & U_1 & \cdot & \cdot & \cdot\\
    \cdot & \cdot & \sqrt{L - 6} & (L - 7) \sqrt{\frac{8}{(L - 6) (L - 5)}} &
    \cdot & 2 U_1 & \cdot & \cdot\\
    \cdot & \cdot & \cdot & \sqrt{\frac{8 (L - 7)}{(L - 6) (L - 5)}} & \cdot &
    \cdot & 2 U_1 & \cdot\\
    \cdot & \cdot & \cdot & \sqrt{\frac{3 (L - 7) (L - 8)}{L - 5}} & \cdot &
    \cdot & \cdot & 3 U_1
  \end{array}\right),
  $}
\end{equation}
where for the sake of clarity zeros are represented as dots and every
off-diagonal element should be multiplied by $(- t)$. The ground
state\footnote{\setstretch{1.66}For the formulas of ground states and truncated 
Hamiltonians
see Appendix \ref{ch:appendixsceham}--\ref{ch:appendixscecorr}.} is therefore
2-fold degenerate and the ground state energy was calculated assessed every
step:
\begin{equation}
  \begin{array}{llllllllll}
    \text{1st step} & \quad & \cellcolor{wordblue}- & \cellcolor{wordblue}\frac{L t^2}{U_1} & + & \frac{L^2
    t^4}{U_1^3} & - & \frac{2 L^3 t^6}{U_1^5} & + & \frac{5 L^4 t^8}{U^7_1},\\
    \text{2nd step} & \quad & \cellcolor{wordblue}- & \cellcolor{wordblue}\frac{L t^2}{U_1} & \cellcolor{wordgreen}+ & \cellcolor{wordgreen}\frac{L
    t^4}{U^3_1} & + & \frac{(- 2 L + L^2 + L^3) t^6}{2 U^5_1} & + & \frac{(L -
    3 L^2 - 19 L^3 - 3 L^4) t^8}{4 U^7_1},\\
    \text{3rd step} & \quad & \cellcolor{wordblue}- & \cellcolor{wordblue}\frac{L t^2}{U_1} & \cellcolor{wordgreen}+ & \cellcolor{wordgreen}\frac{L
    t^4}{U_1^3} & \cellcolor{wordyellow}+ & \cellcolor{wordyellow}0 & + & \frac{(- 30 L + 29 L^2 + 6 L^3 + 6 L^4) t^8}{6
    U^7_1} .
  \end{array}
\end{equation}

We can clearly see that indeed with every step we increase the accuracy of our
result by two orders in $t / U_1$. The ground state energy is thus:
\begin{equation}
  E_0 = - \frac{L t^2}{U_1} + \frac{L t^4}{U_1^3} + O (t^8) .
  \label{eq:1-gsep1}
\end{equation}
The density-density correlation functions $N_m = \left\langle \sum_{i = 1}^L
n_i n_{i + m} \right\rangle$ were found to be:
\begin{eqnarray}
  N_1 & = & L \frac{t^2}{U_1^2} - 3 L \frac{t^4}{U_1^4} + O (t^8), \\
  N_2 & = & \frac{L}{2} - 2 L \frac{t^2}{U_1^2} + 7 L \frac{t^4}{U_1^4} + O
  (t^6), \\
  N_3 & = & 2 L \frac{t^2}{U_1^2} - 5 L \frac{t^4}{U_1^4} + O (t^6), \\
  N_4 & = & \frac{L}{2} - 2 L \frac{t^2}{U_1^2} + 2 L \frac{t^4}{U_1^4} + O
  (t^6), \\
  N_5 & = & 2 L \frac{t^2}{U_1^2} - 2 L \frac{t^4}{U_1^4} + O (t^6) . 
\end{eqnarray}
This particular case of the generalized $t$-$V$ model can be mapped to the
Heisenberg XXZ spin model with background magnetic field, which is solved
analytically by Orbach {\cite{Orbach1958}} and Walker {\cite{Walker1959}}. On
closer inspection we can see that the analytical expressions for the ground
state energy and the density-density correlator $N_1$ (in the language of
spins it is the spin-spin correlator) presented in Ref.~{\cite{Walker1959}}
match our results.

Furthermore, the XXZ model for $t / U_1 \rightarrow 0$ is equivalent to the
Ising model {\cite{Takahashi2005}} for which the long-range density-density
correlators are:
\begin{equation}
  N_m = \left\{ \begin{array}{ll}
    0 & \tmop{for} m \tmop{odd}\\
    \frac{L}{2} & \tmop{for} m \tmop{even}
  \end{array} \right. \label{isingdensity},
\end{equation}
which is fully consistent with our results.

The current density is given by
\begin{equation}
  J = - it \left\langle \sum_{i = 1}^L c^{\dag}_i c_{i + 1} - \text{h.c.}
  \right\rangle,
\end{equation}
and was found to be zero up to order $O (t^8)$ for large systems. The $p = 1$
model was inspected thoroughly in the first-order approximation in Refs.
{\cite{Gomez-Santos1993,Dias2000}} and the ground state energy and current
density were found to vanish for the case of half filling; this also agrees
with our results.

\subsubsection{$Q = 1 / 3, p = 2$ (non-integrable), SCE step 3}

For $p > 1$ the model is non-integrable. In step 3 (7th order of
perturbation), the Hamiltonian is of dimension $36 \times 36$; however it can
be divided into three equivalent subspaces of dimension $12 \times 12$. The
ground state is therefore 3-fold degenerate and its energy was found to be:
\begin{equation}
  E_0 = - \frac{2 L}{3 U_2} t^2 + \left( \frac{2 L}{3 U_2^3} - \frac{2 L}{U_1
  U_2^2} \right) t^4 + \left( \frac{16 L}{3 U_1 U_2^4} - \frac{17 L}{3 U_1^2
  U_2^3} - \frac{10 L}{3 U_1^3 U_2^2} \right) t^6 + O (t^8) .
  \label{eq:1-gsep2}
\end{equation}
The density-density correlators are:
\begin{eqnarray}
  N_1 & = & \frac{2 L}{U_1^2 U_2^2} t^4 + \left( \frac{10 L}{U_1^4 U_2^2} +
  \frac{34 L}{3 U_1^3 U_2^3} - \frac{16 L}{3 U_1^2 U_2^4} \right) t^6 + O
  (t^8), \\
  N_2 & = & \frac{2 L}{3 U_2^2} t^2 + \left( \frac{4 L}{U_1 U_2^3} - \frac{2
  L}{U_2^4} \right) t^4 + \left( \frac{20 L}{3 U_1^3 U_2^3} + \frac{17
  L}{U_1^2 U_2^4} - \frac{64 L}{3 U_1 U_2^5} \right) t^6 + O (t^8), \\
  N_3 & = & \frac{L}{3} - \frac{4 L}{3 U_2^2} t^2 + \left( - \frac{16 L}{3
  U_1^2 U_2^2} - \frac{8 L}{U_1 U_2^3} + \frac{13 L}{3 U_2^4} \right) t^4 + O
  (t^6), \\
  N_4 & = & \frac{2 L}{3 U_2^2} t^2 + \left( \frac{10 L}{3 U_1^2 U_2^2} +
  \frac{4 L}{U_1 U_2^3} - \frac{7 L}{3 U_2^4} \right) t^4 + O (t^6), \\
  N_5 & = & \frac{2 L}{3 U_2^2} t^2 + \left( \frac{10 L}{3 U_1^2 U_2^2} +
  \frac{4 L}{U_1 U_2^3} - \frac{L}{3 U_2^4} \right) t^4 + O (t^6) . 
\end{eqnarray}
As in the case of Eq.~(\ref{isingdensity}), we expect the density-density
correlation functions to take the form
\begin{equation}
  N_m = \left\{ \begin{array}{ll}
    \frac{L}{p + 1} & \tmop{for} m \tmop{divisable} \tmop{by} (p + 1)\\
    0 & \tmop{otherwise}
  \end{array} \right. \label{correlationsgeneral},
\end{equation}
in the limit as $t / U_m \rightarrow 0$ when $Q = 1 / (p + 1)$. This is indeed
true for our results.

Again, the current density is zero up to order $O (t^8)$ for large systems.

\subsubsection{$Q = 1 / 4, p = 3$ (non-integrable), SCE step 3}

This is another non-integrable case. The Hamiltonian is of dimension $52
\times 52$, but it consists of four equal subspaces of dimension $13 \times
13$. The ground state is thus 4-fold degenerate and has energy:
\begin{eqnarray}
  E_0 & = & - \frac{L}{2 U_3} t^2 + \left( \frac{L}{2 U_3^3} - \frac{3 L}{2
  U_2 U_3^2} \right) t^4 \\
  &  & + \left( \frac{4 L}{U_2 U_3^4} - \frac{17 L}{4 U_2^2 U_3^3} - \frac{5
  L}{2 U_2^3 U_3^2} - \frac{5 L}{U_1 U_2^2 U_3^2} \right) t^6 + O (t^8) .
  \label{eq:1-gsep3} \nonumber
\end{eqnarray}
The density-density correlation functions are:
\begin{eqnarray}
  N_1 & = & \frac{5 L}{U_1^2 U_2^2 U_3^2} t^6 + O (t^8), \\
  N_2 & = & \frac{3 L}{2 U_2^2 U_3^2} t^4 + L \left( \frac{15}{2 U_2^4 U_3^2}
  + \frac{17}{2 U_2^3 U_3^3} - \frac{4}{U_2^2 U_3^4} + \frac{10}{U_1 U_2^3
  U_3^2} \right) t^6 + O (t^8), \\
  N_3 & = & \frac{L}{2 U_3^2} t^2 - L \left( \frac{3}{2 U_3^4} - \frac{3}{U_2
  U_3^3} \right) t^4 \\
  & & + L \left( \frac{5}{U_2^3 U_3^3} + \frac{51}{4 U_2^2
    U_3^4} - \frac{16}{U_2 U_3^5} + \frac{10}{U_1 U_2^2 U_3^3} \right) t^6 + 
    O(t^8),\nonumber\\
  N_4 & = & \frac{L}{4} - \frac{L}{U_3^2} t^2 + L \left( \frac{13}{4 U_3^4} -
  \frac{4}{U_2^2 U_3^2} - \frac{6}{U_2 U_3^3} \right) t^4 + O (t^6), \\
  N_5 & = & \frac{L}{2 U_3^2} t^2 + L \left( \frac{2}{U_2^2 U_3^2} +
  \frac{3}{U_2 U_3^3} - \frac{2}{U_3^4} \right) t^4 + O (t^6) . 
\end{eqnarray}
Again, our results for correlators are consistent with equation
(\ref{correlationsgeneral}).

For a large system size the current density was calculated to be zero up to
perturbation order $O (t^8)$.

\subsubsection{Collecting the results for $Q = 1 / (p + 1)$, any $p$}

Interestingly, if one exchanges the system size $L$ for corresponding values
of $N$ and $p$ using $L = N (p + 1)$, formulas in Eqs. (\ref{eq:1-gsep1}),
(\ref{eq:1-gsep2}) and (\ref{eq:1-gsep3}) for the ground state energies become
more systematic and we can easily write the following equation that
incorporates them all
\begin{eqnarray}
  E_0 & = & - \frac{2 N}{U_p} t^2 + \left( \frac{2 N}{U_p^3} - \frac{6 N}{U_{p
  - 1} U_p^2} \right) t^4 \\
  &  & + \left( \frac{16 N}{U_{p - 1} U_p^4} - \frac{17 N}{U_{p - 1}^2 U_p^3}
  - \frac{10 N}{U_{p - 1}^3 U_p^2} - \frac{20 N}{U_{p - 2} U_{p - 1}^2 U_p^2}
  \right) t^6 + O (t^8), \nonumber
\end{eqnarray}
where for $p = 1, 2$ we have to remove terms in which $U_{p - 1}$ or $U_{p -
2}$ is zero. This energy comes directly from the fact that the truncated
Hamiltonian does not change for $p \geqslant 3$ and it is always in the
following form\footnote{\setstretch{1.66}Again, the dots are zeros and all 
off-diagonal
elements should be multiplied by $(- t)$.}:\\
\\
$H=$\\
\\
\resizebox{\textwidth}{!}{
$\left( \begin{array}{ccccccccccccc}
  \cdot & \sqrt{2 N} & \cdot & \cdot & \cdot & \cdot & \cdot & \cdot & \cdot &
  \cdot & \cdot & \cdot & \cdot\\
  \sqrt{2 N} & U_p & \sqrt{3} & 2 & \sqrt{4 N - 10} & \cdot & \cdot & \cdot &
  \cdot & \cdot & \cdot & \cdot & \cdot\\
  \cdot & \sqrt{3} & U_{p - 1} & \cdot & \cdot & \sqrt{\frac{10}{3}} &
  \sqrt{\frac{1}{3}} & \sqrt{2 N - \frac{17}{3}} & \sqrt{\frac{5}{3}} & \cdot
  & \cdot & \cdot & \cdot\\
  \cdot & 2 & \cdot & U_p & \cdot & \cdot & 1 & \cdot & \sqrt{\frac{9}{5}} & 2
  & \sqrt{2 N - 7} & \sqrt{\frac{1}{5}} & \cdot\\
  \cdot & \sqrt{4 N - 10} & \cdot & \cdot & 2 U_p & \cdot & \cdot &
  \sqrt{\frac{2 (6 N - 17)}{2 N - 5}} & \sqrt{\frac{8}{5 (2 N - 5)}} &
  \sqrt{\frac{2}{2 N - 5}} & \sqrt{\frac{8 (2 N - 7)}{2 N - 5}} & -
  \sqrt{\frac{2}{5 (2 N - 5)}} & \sqrt{\frac{6 (N - 4) (2 N - 7)}{2 N - 5}}\\
  \cdot & \cdot & \sqrt{\frac{10}{3}} & \cdot & \cdot & U_{p - 2} & \cdot &
  \cdot & \cdot & \cdot & \cdot & \cdot & \cdot\\
  \cdot & \cdot & \sqrt{\frac{1}{3}} & 1 & \cdot & \cdot & 2 U_p & \cdot &
  \cdot & \cdot & \cdot & \cdot & \cdot\\
  \cdot & \cdot & \sqrt{2 N - \frac{17}{3} } & \cdot & \sqrt{\frac{2 (6 N -
  17)}{2 N - 5}} & \cdot & \cdot & U_{p - 1} + U_p & \cdot & \cdot & \cdot &
  \cdot & \cdot\\
  \cdot & \cdot & \sqrt{\frac{5}{3}} & \sqrt{\frac{9}{5}} & \sqrt{\frac{8}{5
  (2 N - 5)}} & \cdot & \cdot & \cdot & U_{p - 1} & \cdot & \cdot & \cdot &
  \cdot\\
  \cdot & \cdot & \cdot & 2 & \sqrt{\frac{2}{2 N - 5}} & \cdot & \cdot & \cdot
  & \cdot & U_p & \cdot & \cdot & \cdot\\
  \cdot & \cdot & \cdot & \sqrt{2 N - 7} & \sqrt{\frac{8 (2 N - 7)}{2 N - 5}}
  & \cdot & \cdot & \cdot & \cdot & \cdot & 2 U_p & \cdot & \cdot\\
  \cdot & \cdot & \cdot & \sqrt{\frac{1}{5}} & - \sqrt{\frac{2}{5 (2 N - 5)}}
  & \cdot & \cdot & \cdot & \cdot & \cdot & \cdot & U_{p - 1} & \cdot\\
  \cdot & \cdot & \cdot & \cdot & \sqrt{\frac{6 (N - 4) (2 N - 7)}{2 N - 5}} &
  \cdot & \cdot & \cdot & \cdot & \cdot & \cdot & \cdot & 3 U_p
\end{array} \right)$.}
\begin{equation}
  \ 
\end{equation}
We can see that the energy is proportional to the system size, and therefore
it is an~extensive quantity.

Lastly, the correlator $N_{p + 1}$ is found to be:
\begin{equation}
  N_{p + 1} = N - 4 N \frac{t^2}{U_p^2} + \left( \frac{13 N}{U_p^4} - \frac{16
  N}{U_{p - 1}^2 U_p^2} - \frac{24 N}{U_{p - 1} U_p^3} \right) t^4 + O (t^6),
\end{equation}
which agrees well with the prediction from Eq.~(\ref{correlationsgeneral}). We
also expect the other correlators to be zero up to the first perturbation
order. The behaviour of the correlators in the zeroth order reflects the
arrangement of fermions in the unperturbed ground state: fermions will be
spread evenly throughout the system, $p \text{+} 1$ sites away from each
other.

\subsection{Near-critical densities}

At the critical insulating density all Luttinger liquid velocities
{\cite{Gomez-Santos1993}} go to zero. However, we may evaluate near-critical
behaviour by choosing densities not exactly equal to the insulating one. To
determine the charge velocity $v_N$ from Eq.~(\ref{eq:1-vN}), we need a change
in the number $N$ of particles in the system. Let us assume that the system is
at the critical density $Q = 1 / (p + 1)$. By adding one particle to the
system, we increase the number of domain walls (which will now behave like
free particles) by more than one. For example, in a $p = 3$ system,
\begin{equation}
  (\bullet \circ \circ \bullet \circ \circ \bullet \circ \circ \bullet \circ
  \circ \cdots) \xrightarrow[N + 1]{} (\bullet \bullet \circ \bullet \circ
  \circ \bullet \circ \circ \bullet \circ \circ \cdots)
  \xrightarrow[\tmop{equilibrate}]{} (\underset{\smile}{\bullet \circ} 
  \underset{\smile}{\bullet \circ}  \underset{\smile}{\bullet \circ} \bullet \circ \circ
  \bullet \circ \circ \cdots),
\end{equation}
by adding one particle, we create three domain walls. This greatly increases
the degeneracy of the unperturbed ground state. Instead, we propose to change
the density of the system by changing the system size. By adding none, one,
and two empty sites to the insulating system, we can induce the following
densities:
\begin{equation}
  Q_{0 \circ} = \frac{L}{L (p + 1)}, \qquad Q_{1 \circ} = \frac{L - 1}{L (p +
  1)}, \qquad Q_{2 \circ} = \frac{L - 2}{L (p + 1)} .
\end{equation}
Notice that $Q_{2 \circ} < Q_{1 \circ} < Q_{0 \circ}$. The change in the
density is always
\begin{equation}
  \Delta Q = - \frac{1}{L (p + 1)} .
\end{equation}
One and two empty sites will behave like free fermions in the system and the
unperturbed ground state will be $L$-fold degenerate for one additional hole
and $L (L - 2 p - 1) / 2$-fold degenerate for two 
holes\footnote{\setstretch{1.66}The two holes
can be either next to each other, or will be separated by a fermion:
\begin{equation}
  \bullet \underbrace{\circ \cdots \circ \circ \circ}_{p + 2} \bullet
  \underbrace{\circ \cdots \circ}_p \bullet \underbrace{\circ \cdots \circ}_p
  \cdots \qquad \tmop{or} \qquad \bullet \underbrace{\circ \cdots \circ
  \circ}_{p + 1} \bullet \underbrace{\circ \cdots \circ}_p \bullet
  \underbrace{\circ \cdots \circ}_p \cdots \bullet \underbrace{\circ \cdots
  \circ \circ}_{p + 1} \bullet \underbrace{\circ \cdots \circ}_p \bullet
  \underbrace{\circ \cdots \circ}_p \cdots .
\end{equation}
Simple combinatorics can be used to calculate that the first case can be
realised in $L$ ways, and the second case can be done in $L (L - 2 p - 3) / 2$
ways.}. We can now rewrite Eq.~(\ref{eq:1-vN}) for the charge velocity:
\begin{equation}
  v_N = \frac{L}{\pi}  \frac{\partial^2 E}{\partial N^2} = \frac{1}{\pi L} 
  \frac{\partial^2 E}{\partial Q^2},
\end{equation}
or, in a discrete form:
\begin{equation}
  v_N = \frac{1}{\pi L}  \frac{E_{0 \circ} + E_{2 \circ} - 2 E_{1
  \circ}}{(\Delta Q)^2} = \frac{L (p + 1)^2}{\pi}  (E_{0 \circ} + E_{2 \circ}
  - 2 E_{1 \circ}), \label{1-vNdisc}
\end{equation}
where $E_{0 \circ}, E_{1 \circ}$, and $E_{2 \circ}$ are ground state energies
at densities $Q_{0 \circ}, Q_{1 \circ}$, and $Q_{2 \circ}$ respectively.

\subsubsection{Numerical calculation}

Due to higher degeneracy of the unperturbed ground state ($\sim L$ or $\sim
L^2$), we may not be able to obtain analytical results as easily as in the
calculation at the critical density in Chapter \ref{ch:1-SCEcritical}.
Sometimes, numerical estimates will need to be used, especially for higher
orders of perturbation.

For the first-order perturbation correction, $E^{(1)}$, we can use the results
from Eqs.~(\ref{eq:1-DiasManyEnergy1}) and (\ref{eq:1-DiasManyEnergy2})
obtained using the altered Dias' method. To extract the second- and
third-order corrections, we will use strong coupling expansion numerically,
obtain the ground state energies $E^{(\tmop{num})}$ for finite values of $t /
U_p \equiv \lambda$ and use the following formulas:
\begin{equation}
  E^{(2)} = \lim_{\lambda \rightarrow 0} E^{(2)} (\lambda) = \lim_{\lambda
  \rightarrow 0} \frac{E^{(\tmop{num})} - E^{(1)} \lambda}{\lambda^2},
\end{equation}
\begin{equation}
  E^{(3)} = \lim_{\lambda \rightarrow 0} E^{(3)} (\lambda) = \lim_{\lambda
  \rightarrow 0} \frac{E^{(\tmop{num})} - E^{(1)} \lambda - E^{(2)}
  \lambda^2}{\lambda^3} .
\end{equation}
One can therefore plot $E^{(2)} (\lambda)$ and $E^{(3)} (\lambda)$ as a
function of $\lambda$ and then extract the $\lambda = 0$ value by fitting a
polynomial.

\subsubsection{System with one additional empty site and its flux dependence}

In order to calculate the current velocity $v_J$ from Eq.~(\ref{eq:1-vJ}), we
need to assess the dependence of the energy on a small external flux. One can
introduce the flux in the Hamiltonian, by changing the kinetic energy to
include a phase factor $e^{i \phi}$:
\begin{equation}
  - t \sum_{i = 1}^L (c^{\dag}_i c_{i + 1} + c^{\dag}_{i + 1} c_i) \rightarrow
  - t \sum_{i = 1}^L (e^{i \phi} c^{\dag}_i c_{i + 1} + e^{- i \phi}
  c^{\dag}_{i + 1} c_i) .
\end{equation}
This is a small magnetic flux piercing our chain, with values $\phi = \left[ -
\frac{\pi}{L}, \frac{\pi}{L} \right]$.

The second order correction to the energy, $E^{(2)}_{1 \circ}$, was determined
numerically. Its dependencies on $N$ and $p$ are shown in
Figs.~\ref{fig:e1corr2p1} and \ref{fig:e1corr2p2} respectively. For odd $N$,
this dependence is only present in the additive term $- 2 N$. For even $N$,
there is an additional dependence that can be incorporated in the flux: $\phi
\rightarrow \pi / L - | \phi |$.

\begin{figure}[h]
  \resizebox{13cm}{!}{\includegraphics{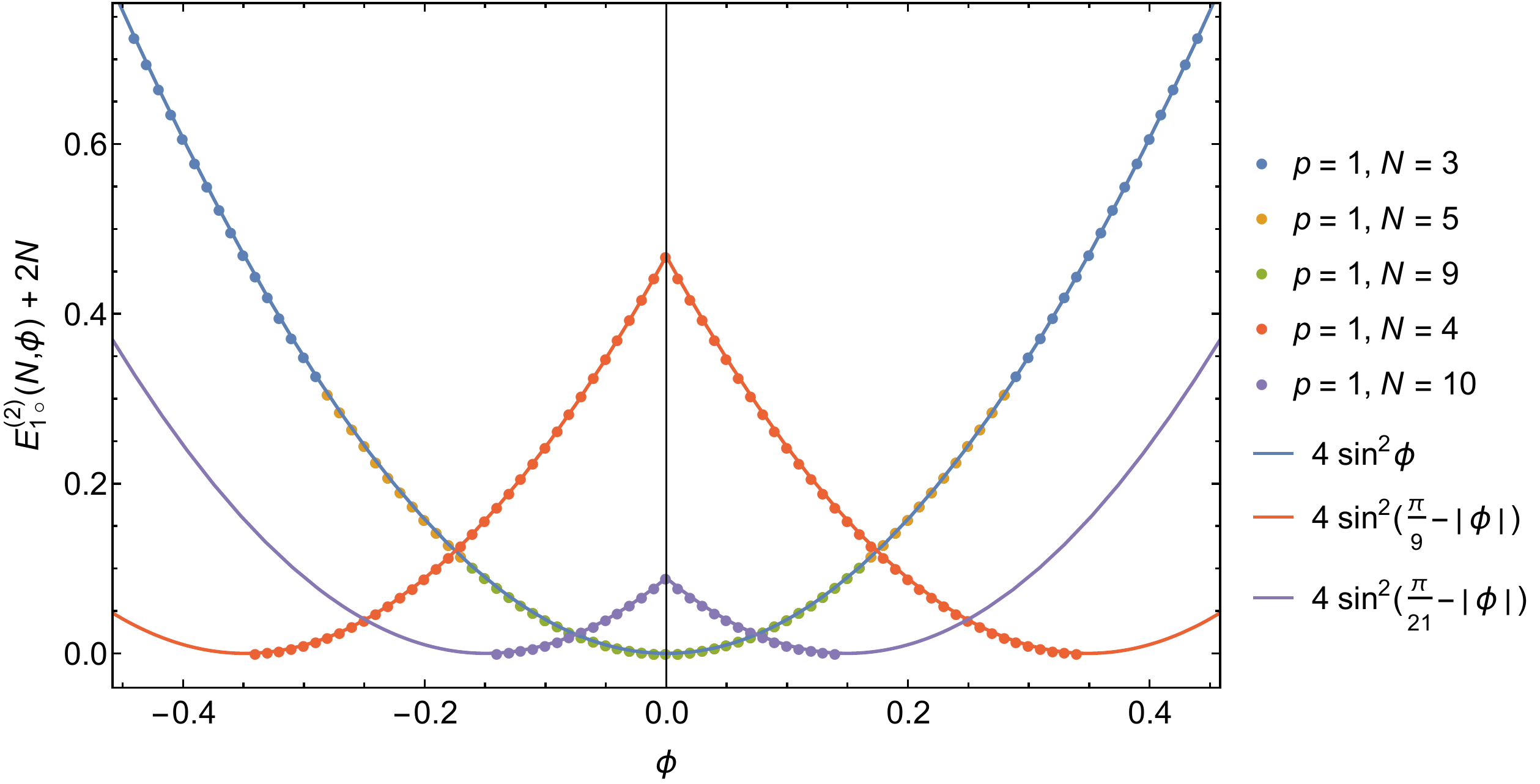}}
  \caption{Second-order correction to the energy, $E^{(2)}_{1 \circ}$, for
  different values of the number of particles $N$.\label{fig:e1corr2p1}}
\end{figure}

\begin{figure}[h]
  \resizebox{13cm}{!}{\includegraphics{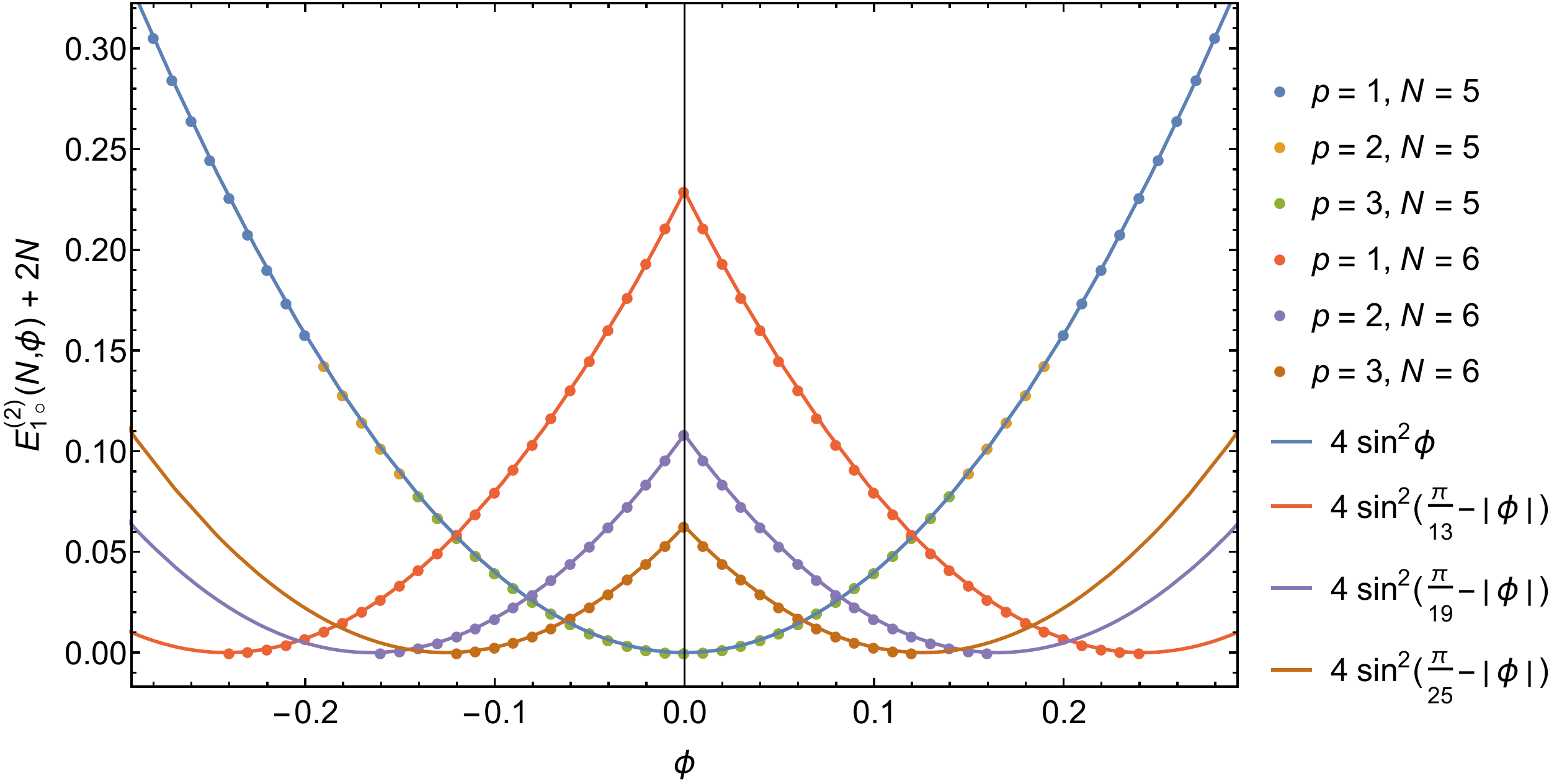}}
  \caption{Second-order correction to the energy, $E^{(2)}_{1 \circ}$, for
  different values of the interaction range $p$.\label{fig:e1corr2p2}}
\end{figure}

The final expression for the correction can be determined to be:
\begin{equation}
  E_{1 \circ}^{(2)} = \left\{ \begin{array}{ll}
    - 2 (N - 2 \sin^2 \phi) & \tmop{for} \tmop{odd} N\\
    - 2 \left( N - 2 \sin^2 \left( \frac{\pi}{L} - | \phi | \right) \right) &
    \tmop{for} \tmop{even} N
  \end{array} \right. .
\end{equation}
This expression is correct within the error bars ($\sim$8 digits of accuracy).

The third-order correction has also been evaluated numerically -- see
Fig.~\ref{fig:e1corr3}. The correction was found to be independent of
interaction range $p$ for a given value of $N$, within the error bars.

\begin{figure}[h]
  \resizebox{12.5cm}{!}{\includegraphics{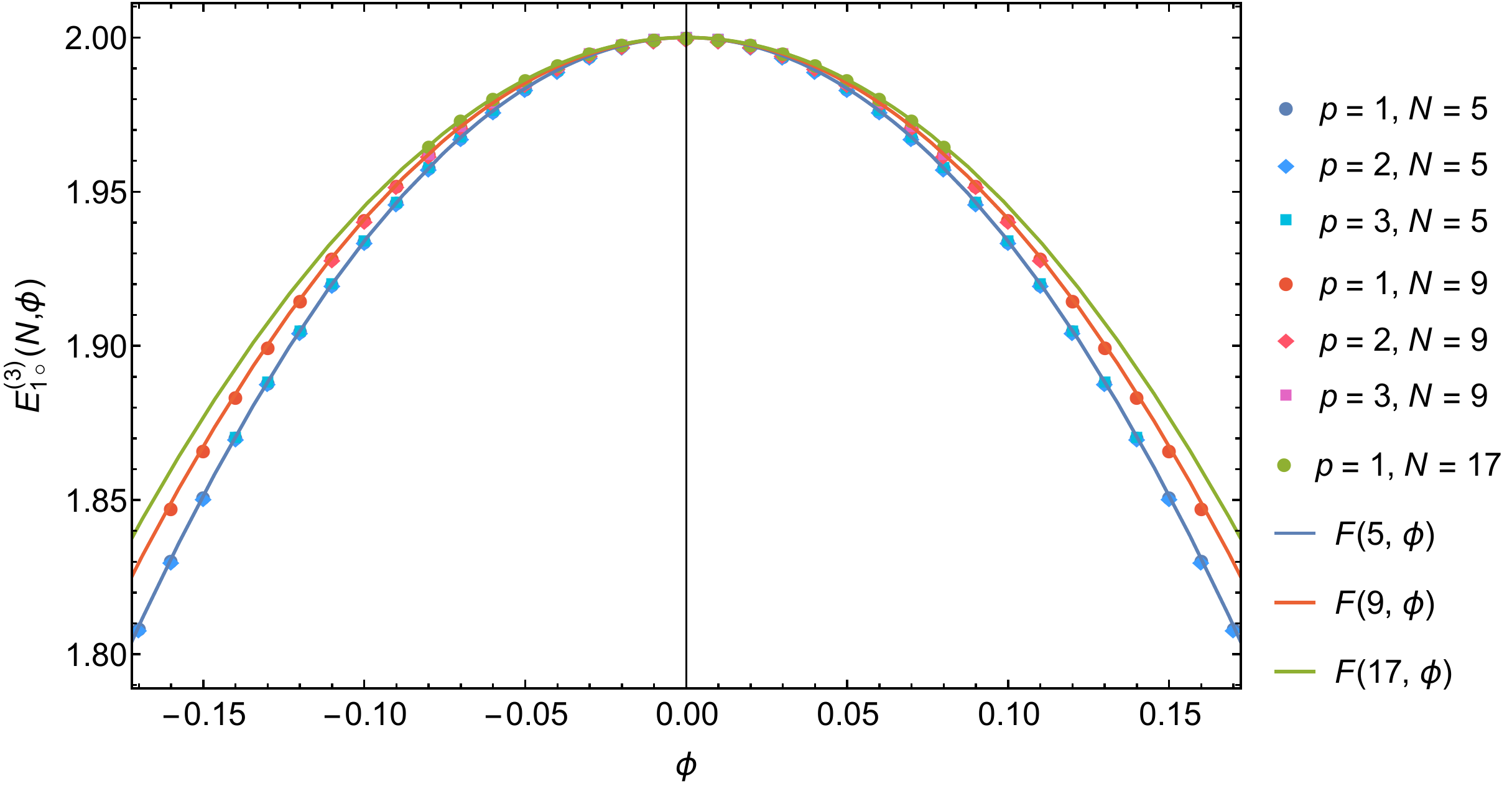}}
  \caption{Third-order correction to the energy, $E^{(3)}_{1 \circ}$, for odd
  $N$.\label{fig:e1corr3}}
\end{figure}

The approximate expression for the correction for odd $N$ can be written
as\footnote{\setstretch{1.66}Since corrections for even $N$ have an additional 
dependence on
$p$, these results have been presented in Appendix~\ref{ch:appendixsceeven}.}:
\begin{equation}
  E^{(3)}_{1 \circ} (N, \phi) \approx 2 - \left( 5 + \frac{8}{N} \right)
  \phi^2 + O (\phi^4) \equiv F (N, \phi) \label{eq:1-corr1} .
\end{equation}
This expression gives less than 0.02\% relative error near the $\phi
\rightarrow 0$ region.

\subsubsection{System with two additional empty sites}

We have also numerically calculated corrections for the system with two
additional empty sites. Both second- and third-order corrections, $E^{(2)}_{2
\circ}$ and $E^{(3)}_{2 \circ}$, are shown in Figure \ref{fig:e2corr}.

\begin{figure}[h]
  \resizebox{13cm}{!}{\includegraphics{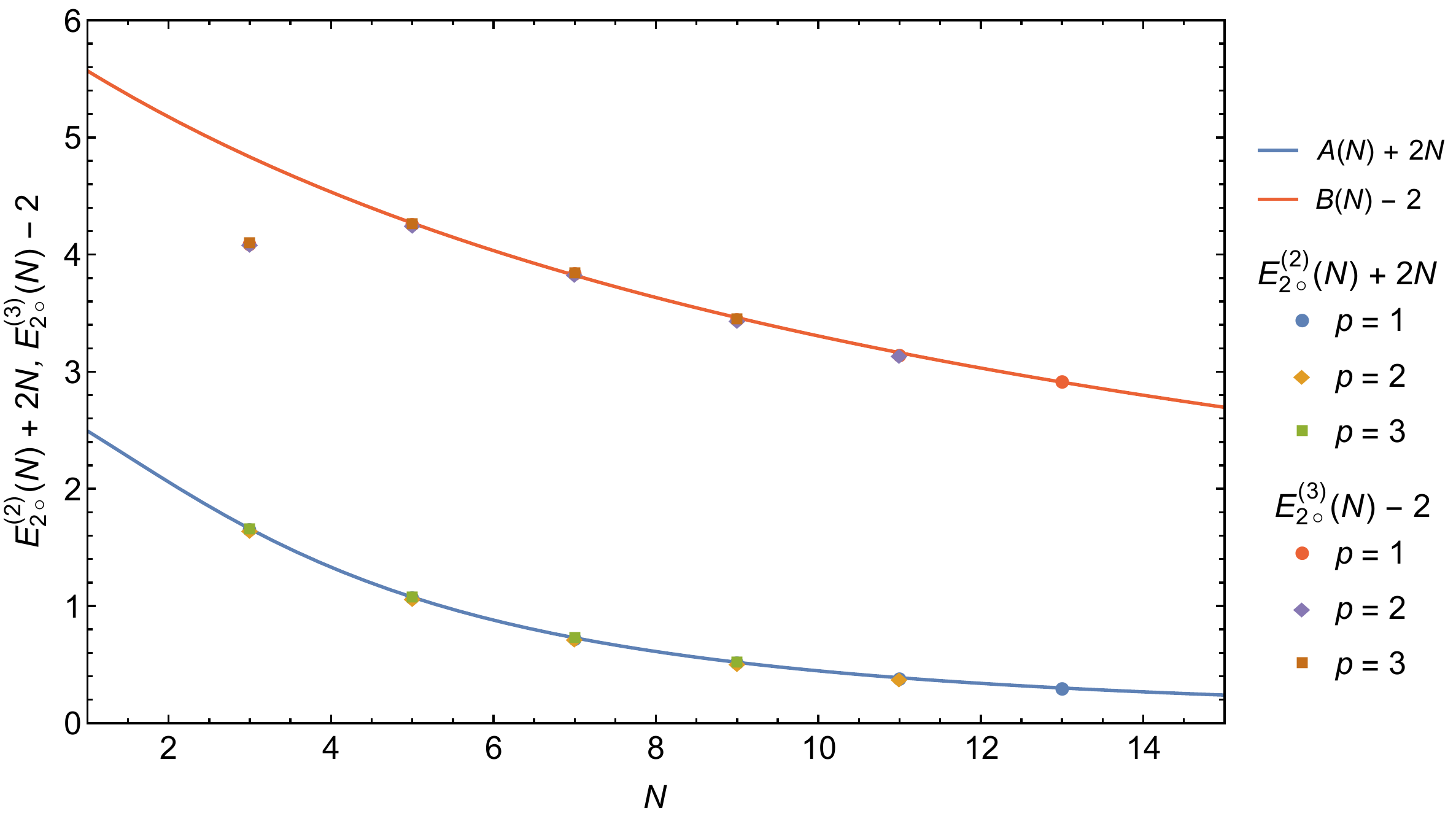}}
  \caption{Second- and third-order energy correction, $E_{2 \circ}^{(2)}$ and
  $E_{2 \circ}^{(3)}$ as a function of the number of particles
  $N$.\label{fig:e2corr}}
\end{figure}

For odd numbers of particles $N$, both second- and third-order corrections
were found to be independent of the range $p$ of the interactions (within the
fitting errors which were on the 8th and 4th digit respectively) and there is
only the dependence on $N$. Using Pad{\'e} approximants, we have devised the
following fitting formulas:
\begin{equation}
  E^{(2)}_{2 \circ} \approx - 2 N + \frac{0.651 N + 56.72}{N^2 + 2.19 N +
  19.81} \equiv A (N), \qquad E^{(3)}_{2 \circ} \approx 2 + \frac{73.17}{N +
  12.14} \equiv B (N) \label{eq:1-corr2},
\end{equation}
which give a relative error of less than 0.003\% for $E^{(2)}_{2 \circ}$, and
less than 0.3\% for $E^{(3)}_{2 \circ}$. Notice that the third-order
correction for $N = 3$ does not follow the trend, which is most likely an
artifact of a~too small system size.

\subsubsection{Calculating the critical parameter $K$}

One can collect the energies for near-critical densities for odd $N$:
\begin{eqnarray}
  E_{0 \circ} & = & - 2 N \frac{t^2}{U_p} + O (t^4), \\
  E_{1 \circ} & = & - 2 (\cos \phi) t - 2 (N - 2 \sin^2 \phi)  \frac{t^2}{U_p}
  + F (N, \phi)  \frac{t^3}{U_p^2} + O (t^4), \\
  E_{2 \circ} & = & - 4 \left( \cos \frac{\pi}{N + 2} \right) t + A (N) 
  \frac{t^2}{U_p} + B (N)  \frac{t^3}{U_p^2} + O (t^4) . 
\end{eqnarray}
with functions given by Eqs.~(\ref{eq:1-corr1}) and (\ref{eq:1-corr2}). Then,
the sound velocity $v_S$ and the critical parameter $K$ of the Luttinger
liquid can be calculated using Eqs.~(\ref{eq:1-vS}) and (\ref{eq:1-K}):
\begin{eqnarray}
  v_S & = & 4 (p + 1) t \sin \frac{\pi (p + 1)}{2 (L + 2 p)} \\
  &  & + \frac{t^2}{4 U_p} \left( (p + 1) A (N) + 2 (L - 2) + 16 (p + 1) 
  \left( 1 - \cos \frac{\pi (p + 1)}{L + 2 p} \right) \right) \sin^{- 1} 
  \frac{\pi (p + 1)}{2 (L + 2 p)} \nonumber\\
  &  & + O (t^3), \nonumber
\end{eqnarray}
\begin{eqnarray}
  K & = & \frac{\pi}{4 L (p + 1)} \sin^{- 1}  \frac{\pi (p + 1)}{2 (L + 2 p)}
  \\
  &  & - \frac{\pi \left( (p + 1) A (N) + 2 (L - 2) - 16 (p + 1) \left( 1 -
  \cos \frac{\pi (p + 1)}{L + 2 p} \right) \right) \sin^{- 1}  \frac{\pi (p +
  1)}{2 (L + 2 p)}}{32 L (p + 1)^2  \left( 1 - \cos \frac{\pi (p + 1)}{L + 2
  p} \right)}  \frac{t}{U_p} \nonumber\\
  &  & + O (t^2) . \nonumber
\end{eqnarray}
The formulas were double checked for consistency with previously calculated
first-order perturbation results, Eqs.~(\ref{eq:1-MyCalcvS}) and
(\ref{eq:1-MyCalcK}).

\section{Conclusions}

In summary, we have shown that one can use Dias' mapping and generalise it for
the $t$\mbox{-}$V$ model with any interaction range. Using G{\'o}mez-Santos's
picture together with Dias' mapping, one can also successfully reach the high
energy subspace. However, the complete mapping that would include states with
high-energy domains has been proven to be too complicated to devise, as its
complexity increases exponentially with the maximum interaction range.

Therefore, we have used another method, strong coupling expansion, that allows
us to reach high perturbation orders, rather than only the first-order
perturbation that both G{\'o}mez-Santos' and Dias' methods are limited to.
Although the method is mainly used for numerical investigations, in the case
of the generalised $t$-$V$ model, we have used it analytically. Indeed one can
prove that with every expansion step, the accuracy of the results is increased
by two orders of perturbation. Since the SCE works best for states with low
degeneracy, we used it for systems with density near the Mott insulating
phase. This allowed us to extract the critical parameters of the Luttinger
liquid, which were consistent with previously calculated values.

\begin{table}[h]
  \begin{tabular}{llll}
    \hline\hline
    Method & Advantages & Disadvantages & Chapter\\
    \hline
    G{\'o}mez-Santos & Simple picture & Only infinite volumes &
    \ref{ch:GSSolution}\\
    & Rich physics & First perturbation order & \\
    & High energy subspace & No high-energy domains & \\
    \hline
    Dias & Includes high-energy domains & Complicated solution &
    \ref{ch:DiasSolution}\\
    & Non-Bethe ansatz & Only for $p = 1$ & \\
    & More physical meaning extracted & First perturbation order & \\
    \hline
    Adapted Dias & Included $p > 1$ & No high-energy domains &
    \ref{ch:DiasLongSolution}\\
    for $p > 1$ & Results for finite $L$ & First perturbation order & \\
    & High energy subspace &  & \\
    \hline
    SCE & Higher perturbation orders & Only for low degeneracy &
    \ref{ch:SCESolution}\\
    & Goes beyond the perturbation theory & Only one setup at a time & \\
    \hline\hline
  \end{tabular}
  \caption{Comparison of methods presented in Chapters \ref{ch:1-intro} and
  \ref{ch:1-criticality}.\label{tab:1-methods}}
\end{table}

All methods that we have used in our investigations are presented in
Table~\ref{tab:1-methods}. Although G{\'o}mez-Santos' picture presents very
rich physics using simple tools, it is very hard to generalise for use in
other fermionic models. On the other hand, a more complicated method by Dias,
presents how one can map a fermion system into another fermion system with a
known solution. A~complete mapping, involving states with high-energy domains,
may be too complicated to use, but one can with little effort calculate the
low-energy behaviour of the system. This method could be in~principle adapted
for other models involving fermions, like in Chapter
\ref{ch:DiasLongSolution}, where it was adapted to include all interaction
ranges.

Finally, the strong coupling expansion is a very versatile method, that can be
used both numerically and analytically. Although analytical analysis may be
only possible for a very low degeneracy of the unperturbed target state, the
method enables one to reach high orders of perturbation relatively easily.
Additionally, the results will include terms going beyond the perturbation
theory, which means the method can be insensitive to divergences that arise
from including only finite number of perturbation orders. SCE has been already
shown to work for lattice field theories, spin systems, and fermion systems,
and can be used on models with both finite and infinite basis.

\chapter{Charge-density-wave{\hspace{0.4em}}phases
in~any~potential}\label{ch:1-cdw}

\section{Motivation}

In the picture presented in Chapter \ref{ch:GSSolution}, at Mott insulating
densities the extended $t$\mbox{-}$V$ model has a~very simple behaviour: if
the density is given by Eq.~(\ref{eq:1-MottDensities}) ($Q = 1 / m$, where $m
= p + 1, p, \ldots$), then the unperturbed $(t \rightarrow 0)$ ground state is
of the form
\begin{equation}
  \bullet \underbrace{\circ \circ \cdots \circ}_{\substack{1 / Q - 1\\
    \tmop{sites}}} \bullet \underbrace{\circ \circ \cdots \circ}_{1 / Q - 1}
  \bullet \underbrace{\circ \circ \cdots \circ}_{1 / Q - 1} \cdots,
\end{equation}
and the energy density is always $EQ / N = QU_{1 / Q}$, if $m > (p + 1) / 2$.
This is however only true, if the assumption from Eq.~(\ref{eq:GSAssumption})
holds.

By abandoning this assumption, Refs.~{\cite{Schmitteckert2004}} and
{\cite{Mishra2011}} have shown a non-trivial behaviour that cannot be
explained by the simple picture from Chapter \ref{ch:GSSolution}.
Investigation was done using models with $p = 2$ and densities $Q = 1 / 3$ and
$1 / 2$. Depending on the values of the potentials $\{ U_m \}$, one can have
different phases in the system: there can be multiple CDW insulating phases, a
long-range bond-order phase, and even a Luttinger liquid phase, despite the
existence of a critical (``insulating'') density.

In this Chapter, we would like to generalise this result for all interaction
ranges. However, to simplify the problem, we shall assume the atomic limit ($t
= 0$), in which the only phases that will be encountered are CDW insulators.

\section{Low critical densities in the atomic limit}\label{ch:1-LowCrit}

\subsection{Critical density $Q = 1 / (p + 1)$}

In the trivial case of $Q = 1 / (p + 1)$, the ground-state energy is always
equal to zero. The ground-state configuration is
\begin{equation}
  \bullet \underbrace{\circ \circ \cdots \circ}_p \bullet \underbrace{\circ
  \circ \cdots \circ}_p \bullet \underbrace{\circ \circ \cdots \circ}_p \cdots
  .
\end{equation}
Such a ground state is $(p + 1)$\mbox{-}fold degenerate since the energy is
invariant under translation.

\subsection{Critical density $Q = 1 / p$}

Let us now show how to construct the CDW phase for a system with any $p$ and
with critical density $Q = 1 / p$. Firstly, assume that $U_p$ is low enough
compared with the other $U_m$ to ensure that the preferable distance between
two fermions is always $p$ and thus we can say that $U_p$ orders the fermions
in the ground state; for example a chain $\bullet \underbrace{\circ \circ
\cdots \circ}_{p - 1} \bullet \underbrace{\circ \circ \cdots \circ}_{p - 1}
\bullet$ has lower energy than $\bullet \underbrace{\circ \circ \circ \cdots
\circ}_p \bullet \underbrace{\circ \cdots \circ}_{p - 2} \bullet$. The ground
state must have the simple form:
\begin{equation}
  \bullet \underbrace{\circ \circ \cdots \circ}_{p - 1} \bullet
  \underbrace{\circ \circ \cdots \circ}_{p - 1} \bullet \underbrace{\circ
  \circ \cdots \circ}_{p - 1} \cdots,
\end{equation}
and its energy is $E_1 = \frac{L}{p} U_p = N U_p$.

Now, let us assume that $U_{p - 1}$ is low enough to order the fermions. We
could use a~series of $\bullet \underbrace{\circ \circ \cdots \circ}_{p - 2}$
sections, but then we would not have the correct density $1 / p$. However, by
addition of sections $\bullet \underbrace{\circ \circ \cdots \circ}_p$ we can
tailor the density without changing the energy of the system. Thus, the
ground-state configuration is
\begin{equation}
  \begin{array}{|l|}
    \hline
    \bullet \underbrace{\circ \circ \cdots \circ}_{p - 2} \bullet
    \underbrace{\circ \circ \cdots \circ}_p\\
    \hline
  \end{array} \begin{array}{|l|}
    \hline
    \bullet \underbrace{\circ \circ \cdots \circ}_{p - 2} \bullet
    \underbrace{\circ \circ \cdots \circ}_p\\
    \hline
  \end{array} \begin{array}{|l|}
    \hline
    \bullet \underbrace{\circ \circ \cdots \circ}_{p - 2} \bullet
    \underbrace{\circ \circ \cdots \circ}_p\\
    \hline
  \end{array} \cdots,
\end{equation}
which gives us the correct density $Q = 1 / p$, and the energy is $E_2 = (L /
2 p) U_{p - 1} = (N / 2) U_{p - 1}$. The boxes are present to show that we
have correctly counted the energy and particle density. In general, however,
the whole subspace of the unperturbed ground states would include Fock states
in which sections with $(p - 2)$ holes could be beside each other, unless they
would change the energy of the system.

If one follows this prescription, in the $n$-th step the following ground
state is obtained:
\begin{equation}
  \begin{array}{|l|}
    \hline
    \bullet \underbrace{\circ \circ \cdots \circ}_{p - n} \overbrace{\bullet
    \underbrace{\circ \circ \cdots \circ}_p \bullet \underbrace{\circ \circ
    \cdots \circ}_p \bullet \underbrace{\circ \circ \cdots \circ}_p \cdots}^{n
    - 1 \tmop{times}}\\
    \hline
  \end{array} \begin{array}{|l|}
    \hline
    \bullet \underbrace{\circ \circ \cdots \circ}_{p - n} \overbrace{\bullet
    \underbrace{\circ \circ \cdots \circ}_p \bullet \underbrace{\circ \circ
    \cdots \circ}_p \bullet \underbrace{\circ \circ \cdots \circ}_p \cdots}^{n
    - 1 \tmop{times}}\\
    \hline
  \end{array} \cdots,
\end{equation}
with the energy
\begin{equation}
  E_n = \frac{L}{1 + p - n + (n - 1) (p + 1)} U_{p + 1 - n} = \frac{L}{n p}
  U_{p - n + 1} = \frac{N}{n} U_{p - n + 1} .
\end{equation}
We can now calculate the exact conditions in which an arbitrary phase
(designated by step $n$) will be dominant in the system:
\begin{equation}
  \underset{k \neq n}{\bigforall}\ E_n < E_k \quad \Rightarrow \quad
  \underset{k \neq n}{\bigforall}\ U_{p - n + 1} < \frac{n}{k} U_{p - k +
  1} .
\end{equation}
Renaming $\alpha = p - n + 1$ and $\beta = p - k + 1$,
\begin{equation}
  \underset{\beta \neq \alpha}{\bigforall}\ U_{\alpha} < \frac{p - \alpha +
  1}{p - \beta + 1} U_{\beta} .
\end{equation}
If this condition is fulfilled, then the phase with energy $E_{(\alpha)} =
\frac{N}{p - \alpha + 1} U_{\alpha}$ is dominant and the ground state consists
of $\frac{N}{p - \alpha + 1}$ blocks of $\bullet \underbrace{\circ \circ
\cdots \circ}_{\alpha - 1}$ and $N \frac{p - \alpha}{p - \alpha + 1}$ blocks
of $\bullet \underbrace{\circ \circ \cdots \circ}_p$ and is $f$-fold
degenerate, where:
\begin{equation}
  f = \left\{ \begin{array}{ll}
    \left( \begin{array}{c}
      N\\
      N / (p - \alpha + 1)
    \end{array} \right) \cdot p & \tmop{if} 2 \alpha > p\\
    \left( \begin{array}{c}
      N \frac{p - \alpha}{p - \alpha + 1}\\
      N / (p - \alpha + 1)
    \end{array} \right) \cdot \frac{p (p - \alpha + 1)}{p - \alpha} &
    \tmop{otherwise} .
  \end{array} \right.
\end{equation}
For $2 \alpha \leqslant p$, the problem with assessing the degeneracy is that
we need to exclude cases where blocks of $\bullet \underbrace{\circ \circ
\cdots \circ}_{\alpha - 1}$ are too close to each other and thus would
increase the energy by $U_{2 \alpha}$.

\section{Higher critical densities in the atomic limit}

For Mott insulators with $Q = 1 / m$, where $m = 1, \ldots, p - 1$, the number
of phases and their energies were found to be more difficult to obtain. Rather
than constructing the phases as done in Chapter \ref{ch:1-LowCrit}, we shall
use a brute-force analysis of the basis for systems of finite size.
Nevertheless, because we are interested in the thermodynamic limit, a periodic
system of $L$ sites can be thought of as an infinite system with a unit cell
of $L$ sites.\footnote{\setstretch{1.66}This may not be true if one does not 
work in the atomic
limit, since the kinetic term may introduce a flux in the bosonic
interpretation of the model.}

\subsection{Properties of the system}

Sampling the full basis in systems with $Q > 1 / p$ is problematic, because
the dimension of the basis grows rapidly with system size. However, many of
the Fock states will have the same energy. In particular, if two states are
cyclic permutations of each other, or cyclic permutations with inversion, then
such states must have the same energy due to the periodicity of the system.
Checking the full basis for cyclic permutations would still be computationally
quite a difficult task: firstly, because generating the full basis would take
a lot of memory, and secondly, because comparing all the states to check if
they are cyclic permutations would require a large computational time (of
complexity $O (2^{2 L})$). An alternative approach to this problem is to
consider the spaces between the fermions in our chain and to develop rules to
generate a set of Fock states that will always contain ground states of the
system.

\begin{theorem}
  For any basis state, the largest space between consecutive fermions must not
  be less than $1 / Q - 1$ sites.\label{1-CDWth1}
\end{theorem}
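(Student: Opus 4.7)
The plan is a simple averaging (pigeonhole) argument on the sizes of the gaps between consecutive fermions in the periodic chain. First I would fix notation: given $N$ fermions on a periodic chain of $L$ sites at density $Q = N/L$, list the occupied positions cyclically as $a_1 < a_2 < \cdots < a_N$, set $a_{N+1} \equiv a_1 + L$, and define the gap sizes $g_i = a_{i+1} - a_i - 1 \geq 0$, so that $g_i$ counts the number of empty sites strictly between the $i$-th and $(i+1)$-th fermion.

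Next I would record the accounting identity
$$\sum_{i=1}^{N} g_i \;=\; L - N,$$
which holds because, under the periodic convention, every empty site lies in exactly one such gap. Dividing by $N$ gives the mean gap size $\bar g = (L-N)/N = 1/Q - 1$.

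The claim then follows from the universal fact that the maximum of a finite nonnegative sequence is at least its mean: $\max_i g_i \geq \bar g = 1/Q - 1$. Equivalently, one can argue by contradiction: if every $g_i$ were strictly less than $1/Q - 1$, summing over $i$ would yield $L - N = \sum_{i=1}^N g_i < N(1/Q - 1) = L - N$, which is impossible.

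There is no real obstacle in this argument; it is the classical averaging principle applied to the gap sizes. The only point that requires any care is the periodic boundary convention, which is what ensures there are exactly $N$ gaps (not $N-1$), and hence that the mean comes out to exactly $1/Q - 1$ rather than something larger. Note also that the bound is tight: at the Mott-insulating density $Q = 1/m$ with equally spaced fermions, every gap is exactly $m - 1 = 1/Q - 1$, so the inequality is attained.
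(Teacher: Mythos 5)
Your proof is correct and rests on the same counting fact as the paper's: the $N$ cyclic gaps share exactly $L-N$ empty sites, so the largest gap is at least the mean $1/Q-1$. The paper phrases this informally (all gaps are equal only in the uniformly spaced configuration, and displacing any fermion enlarges the largest gap), whereas your max-versus-mean pigeonhole argument is a tighter formalization of the same idea, so no further comment is needed.
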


\begin{proof}
  All spaces between consecutive fermions are equal only if all particles are
  $1 / Q - 1$ sites apart, \tmtextit{i.e.}, the configuration is
  \begin{equation}
    \bullet \underbrace{\circ \circ \cdots \circ}_{1 / Q - 1} \bullet
    \underbrace{\circ \circ \cdots \circ}_{1 / Q - 1} \bullet
    \underbrace{\circ \circ \cdots \circ}_{1 / Q - 1} .
  \end{equation}
  Any attempt to move a fermion would make the largest space bigger than $1 /
  Q - 1$.
\end{proof}

Thus, any state will have a space that is larger than or equal to $1 / Q - 1$.
Due to the system's periodicity, we can therefore fix the first $1 / Q$ sites
to be
\begin{equation}
  \bullet \underbrace{\circ \circ \cdots \circ}_{1 / Q - 1} .
  \label{eq:1-CDWth1fix}
\end{equation}
This leaves us with a smaller subspace of the full basis to generate: the
system with size $(N - 1) / Q$ and $(N - 1)$ particles.

\begin{theorem}
  For any ground state of the system, the largest space must not exceed $p$
  sites.\label{1-CDWth2}
\end{theorem}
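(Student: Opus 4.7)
The plan is to argue by contradiction. Suppose some ground-state Fock configuration contains a gap of size $g \geq p+1$ empty sites between two consecutive fermions $A$ and $B$. Since the pairwise potential has range $p$, the distance $g+1 \geq p+2$ places $A$ and $B$ outside the support of every $U_m$; more generally, no fermion lying on one side of the big gap couples to any fermion on the other side. The big gap therefore contributes exactly zero to the potential energy, and the total energy depends only on the internal gap sequence $(g_1, \ldots, g_{N-1})$ of the ``block'' of $N$ fermions obtained by cutting the ring at the big gap.

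I would then construct a candidate lower-energy configuration by removing one empty site from the big gap, shrinking it to $g-1 \geq p$ (still beyond the cross-gap interaction range) and reinserting it at some interior position $i$ of the block, so that gap $g_i$ grows by one. This operation preserves $L$ and $N$ and yields another admissible Fock state; what remains is to exhibit an $i$ for which it strictly decreases the energy. Here the hypothesis $Q = 1/m$ with $m \leq p-1$ is crucial: the average interior gap is strictly smaller than $p-1$, so by the pigeonhole principle at least one interior gap is short enough for its flanking pair to genuinely interact, guaranteeing that a nontrivial energy change is available.

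The main obstacle is to handle the multi-range interactions cleanly, since inserting an empty site at position $i$ shifts every fermion to its right by one site and so turns every pair $(j,k)$ straddling $i$ from $U_{d(j,k)}$ into $U_{d(j,k)+1}$. Because this chapter abandons the convexity assumption (\ref{eq:GSAssumption}) on the $U_m$, individual pair contributions can \emph{a priori} go either way. I plan to overcome this by summing the energy change over all insertion positions $i = 1, \ldots, N-1$: the total telescopes into $\sum_{j<k} (k-j)\bigl(U_{d(j,k)+1} - U_{d(j,k)}\bigr)$, and combining this with the structural information that the block must contain at least one gap of size below average lets one pick a particular $i$ for which the net change is strictly negative. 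The resulting configuration has lower energy than the assumed ground state, delivering the desired contradiction and proving that the largest space in any ground state cannot exceed $p$ sites.
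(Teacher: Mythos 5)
Your overall strategy---contradiction via a local rearrangement that exploits the fact that a gap of more than $p$ sites decouples the two sides---is the right starting point, but the decisive step is missing. Everything hinges on exhibiting an interior gap into which the spare empty site can be inserted with a \emph{strictly negative} energy change, and your argument does not deliver this. The telescoped quantity $\sum_{j<k}(k-j)\bigl(U_{d(j,k)+1}-U_{d(j,k)}\bigr)$ has no definite sign precisely because this chapter drops the convexity assumption (\ref{eq:GSAssumption}): differences $U_{d+1}-U_d$ may be positive for the pairs that dominate the sum, so ``at least one insertion position is negative'' does not follow. Likewise, the pigeonhole remark only shows that the block contains some interacting pair (i.e.\ $E_A>0$); it does not show that lengthening any particular gap lowers the energy, since stretching a pair from distance $d$ to $d+1$ can just as well raise it while simultaneously stretching other straddling pairs. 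As written, the proof asserts rather than proves its key inequality, and it is not clear the single-unit-cell version of the move can always be made to work.

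The paper's proof circumvents exactly this difficulty by never comparing $U_d$ with $U_{d+1}$ at all. It concatenates $p$ copies of the putative ground-state unit cell (legitimate here because the ground state is defined by energy density in the thermodynamic limit, so the unit cell may be enlarged), slides the $p$ surplus empty sites together so that the last block is followed by $2p$ empty sites, and then moves one fermion---chosen, after a possible swap within the block, so that it contributes a nonzero amount $E_{\Delta}$ to the block energy---by $p$ sites into the middle of that large gap. In its new position the fermion is at least $p+1$ sites from every other particle, so it contributes nothing, and the energy density drops by $E_{\Delta}/(pN/Q)>0$, giving the contradiction. If you want to salvage your approach, you would need either to reinstate some structural control on the $U_m$ or to enlarge the unit cell as the paper does, so that the relocated site (or fermion) ends up interacting with nothing rather than with shifted neighbours.
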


\begin{proof}
  Assume that there exists a ground state unit cell with the largest space
  equal to $(p + 1)$ sites. We can write it as
  \begin{equation}
    \underbrace{\bullet ? ? \cdots ? \bullet}_{\tmop{Block} A} 
    \underbrace{\circ \circ \cdots \circ}_{p \tmop{sites}} \circ .
    \label{eq:1-CDWth2unitcell}
  \end{equation}
  Let $E_A$ be the energy of the block $A$, so that the energy density of this
  ground state is $\frac{E_A}{N / Q}$. Let us construct the following unit
  cell, which consists of $p$ consecutive ground-state unit cells
  (\ref{eq:1-CDWth2unitcell}):
  \begin{equation}
    \underbrace{\begin{array}{|l|}
      \hline
      \underbrace{\bullet ? ? \cdots ? \bullet}_{\tmop{Block} A} 
      \underbrace{\circ \circ \cdots \circ}_p \circ\\
      \hline
    \end{array} \begin{array}{|l|}
      \hline
      \underbrace{\bullet ? ? \cdots ? \bullet}_{\tmop{Block} A} 
      \underbrace{\circ \circ \cdots \circ}_p \circ\\
      \hline
    \end{array} \cdots \begin{array}{|l|}
      \hline
      \underbrace{\bullet ? ? \cdots ? \bullet}_{\tmop{Block} A} 
      \underbrace{\circ \circ \cdots \circ}_p \circ\\
      \hline
    \end{array}}_p .
  \end{equation}
  This unit cell has the same energy density $\frac{pE_A}{pN / Q} =
  \frac{E_A}{N / Q}$ as the ground-state unit cell
  (\ref{eq:1-CDWth2unitcell}). Let us now move the additional empty spaces to
  the end of this chain, which still does not change the energy density:
  \begin{equation}
    \begin{array}{|l|}
      \hline
      \underbrace{\bullet ? ? \cdots ? \bullet}_{\tmop{Block} A} 
      \underbrace{\circ \circ \cdots \circ}_p\\
      \hline
    \end{array} \begin{array}{|l|}
      \hline
      \underbrace{\bullet ? ? \cdots ? \bullet}_{\tmop{Block} A} 
      \underbrace{\circ \circ \cdots \circ}_p\\
      \hline
    \end{array} \cdots \begin{array}{|l|}
      \hline
      \underbrace{\bullet ? ? \cdots ? \bullet}_{\tmop{Block} A} 
      \underbrace{\circ \circ \cdots \circ}_p  \underbrace{\circ \circ \cdots
      \circ}_p\\
      \hline
    \end{array} . \label{eq:1-CDWth2unitcell2}
  \end{equation}
  Now, let us assume that the last fermion in block $A$ contributes to the
  potential energy of this block by amount $E_{\Delta}$. If $E_{\Delta} = 0$,
  we can always swap this last fermion with the rest of block $A$ and again
  consider the last fermion of a new block. If we take this last fermion out
  and replace it with a hole, then the energy of the block $A$ will decrease
  by $E_{\Delta}$. Let us now move the last fermion in unit cell
  (\ref{eq:1-CDWth2unitcell2}) by $p$ sites to the right:
  \begin{equation}
    \begin{array}{|l|}
      \hline
      \underbrace{\bullet ? ? \cdots ? \bullet}_{\tmop{Block} A} 
      \underbrace{\circ \circ \cdots \circ}_p\\
      \hline
    \end{array} \begin{array}{|l|}
      \hline
      \underbrace{\bullet ? ? \cdots ? \bullet}_{\tmop{Block} A} 
      \underbrace{\circ \circ \cdots \circ}_p\\
      \hline
    \end{array} \cdots \begin{array}{|l|}
      \hline
      \underbrace{\bullet ? ? \cdots ? \circ}_{\tmop{Block} A'} 
      \underbrace{\circ \cdots \circ}_{p - 1} \bullet \underbrace{\circ \circ
      \cdots \circ}_p\\
      \hline
    \end{array},
  \end{equation}
  where block $A'$ is block $A$ with the last fermion replaced by a hole.
  Block $A'$ has energy $E_A - E_{\Delta}$. The last fermion does not
  contribute now to the overall potential energy, because it is surrounded by
  $p$ sites on both sides. Such a unit cell now has energy density
  \begin{equation}
    \frac{pE_A - E_{\Delta}}{pN / Q} = \frac{E_A}{N / Q} -
    \frac{E_{\Delta}}{pN / Q},
  \end{equation}
  which is lower than the energy of the ground-state unit cell
  (\ref{eq:1-CDWth2unitcell}), and this leads to a~contradiction. A similar
  process can be used to show that a ground state cannot have a~space equal to
  $(p + 2)$ and more sites. Thus, we conclude that the largest space in any
  ground state must have at most $p$ sites.
\end{proof}

Using Theorems \ref{1-CDWth1} and \ref{1-CDWth2}, we can significantly
decrease the number of generated states.

\subsection{Details of the calculation}\label{ch:1-CDWdetailscalc}

Our calculations were performed using Mathematica {\cite{Mathematica}} (see
Appendix \ref{ch:appendixCDWMath}). Firstly a~partial basis for a specific
number of particles $N$, density $Q$ and interaction range $p$ was generated.
States of this partial basis had the first $1 / Q$ sites fixed to the
configuration shown in Eq. (\ref{eq:1-CDWth1fix}) by Theorem \ref{1-CDWth1},
and any states that were not in agreement with Theorem \ref{1-CDWth2} were
removed. Then, the energy density was calculated for every state and this list
of energies was simplified by removing duplicates. In order to discard the
energies that cannot describe the ground state, the expression $\forall_{\beta
\neq \alpha} E_{\alpha} < E_{\beta}$ was assessed. Some energies however could
not be compared without knowing the values of $\{ U_m \}$. The final list
contains the energies of all phases that have the lowest energy for some set
values of $\{ U_m \}$; these are the CDW phases of the system.

\subsection{Results for $Q = 1 / (p - 1)$}

Unit cells and energy densities for $p = 3$, 4, and 5 are presented in Table
\ref{tab:1-CDWphases1}. Due to the finite size of the systems studied, we can
only look for CDW unit cells up to a specific size (designated by $L_{\max}$).
Phase diagrams in Figures \ref{fig:p3q1-2}--\ref{fig:p5q1-4} show what phases
are expected to appear for different values of the potentials $\{ U_m \}$.

\begin{table}[h]
  \begin{tabular}{lcccc}
    \hline\hline
    System & GS unit cell & Energy density & $f$ & \\
    \hline
    $p = 3,$ & $\bullet \circ$ & $\frac{1}{2} U_2$ & 2 &
    {\color[HTML]{00AAFF}$\blacksquare$}\\
    $Q = 1 / 2,$ & $\bullet \bullet \circ \circ$ & $\frac{1}{4} (U_1 + U_3)$ &
    4 & {\color[HTML]{55FF00}$\blacksquare$}\\
    $L_{\max} = 28$ & $\bullet \bullet \bullet \circ \circ \circ$ &
    $\frac{1}{6} (2 U_1 + U_2)$ & 6 & {\color[HTML]{FFFF00}$\blacksquare$}\\
    \hline
    $p = 4,$ & $\bullet \circ \circ$ & $\frac{1}{3} U_3$ & 3 &
    {\color[HTML]{00AAFF}$\blacksquare$}\\
    $Q = 1 / 3,$ & $\bullet \bullet \circ \circ \circ \circ$ & $\frac{1}{6}
    U_1$ & 6 & {\color[HTML]{55FF00}$\blacksquare$}\\
    $L_{\max} = 36$ & $\bullet \circ \bullet \circ \circ \circ$ & $\frac{1}{6}
    (U_2 + U_4)$ & 6 & {\color[HTML]{FFFF00}$\blacksquare$}\\
    & $\bullet \bullet \circ \circ \circ \bullet \circ \circ \circ$ &
    $\frac{1}{9} (U_1 + 2 U_4)$ & 9 & {\color[HTML]{FF0000}$\blacksquare$}\\
    & $\bullet \circ \bullet \circ \bullet \circ \circ \circ \circ$ &
    $\frac{1}{9} (2 U_2 + U_4)$ & 9 & {\color[HTML]{800080}$\blacksquare$}\\
    & $\bullet \circ \bullet \circ \circ \bullet \circ \bullet \circ \circ
    \circ \circ$ & $\frac{1}{12} (2 U_2 + U_3)$ & 12 &
    {\color[HTML]{FF8000}$\blacksquare$}\\
    & $\bullet \bullet \bullet \circ \circ \circ \circ \bullet \circ \bullet
    \circ \circ \circ \circ \bullet \circ \bullet \circ \circ \circ \circ$ &
    $\frac{1}{21} (2 U_1 + 3 U_2)$ & $2 \times 21$ & $\blacksquare$\\
    \hline
    $p = 5,$ & $\bullet \circ \circ \circ$ & $\frac{1}{4} U_4$ & 4 &
    {\color[HTML]{00AAFF}$\blacksquare$}\\
    $Q = 1 / 4,$ & $\bullet \circ \circ \bullet \circ \circ \circ \circ$ &
    $\frac{1}{8} (U_3 + U_5)$ & 8 & {\color[HTML]{55FF00}$\blacksquare$}\\
    $L_{\max} = 32$ & $\bullet \circ \bullet \circ \circ \circ \circ \circ$ &
    $\frac{1}{8} U_2$ & 8 & {\color[HTML]{FFFF00}$\blacksquare$}\\
    & $\bullet \circ \bullet \circ \circ \circ \circ \bullet \circ \circ
    \circ \circ$ & $\frac{1}{12} (U_2 + 2 U_5)$ & 12 &
    {\color[HTML]{FF0000}$\blacksquare$}\\
    & $\bullet \circ \circ \bullet \circ \circ \bullet \circ \circ \circ
    \circ \circ$ & $\frac{1}{12} 2 U_3$ & 12 &
    {\color[HTML]{800080}$\blacksquare$}\\
    & $\bullet \bullet \circ \circ \circ \circ \bullet \circ \circ \circ
    \circ \bullet \circ \circ \circ \circ$ & $\frac{1}{16} (U_1 + 3 U_5)$ & 16
    & {\color[HTML]{FF8000}$\blacksquare$}\\
    & $\bullet \bullet \circ \circ \circ \circ \circ \bullet \bullet \circ
    \circ \circ \circ \circ \bullet \circ \circ \circ \circ \circ$ &
    $\frac{1}{20} 2 U_1$ & 20 & $\blacksquare$\\
    \hline\hline
  \end{tabular}
  \caption{Ground-state (GS) unit cells and their energies in a system with $Q
  = 1 / (p - 1)$. \tmtextit{f }is the degeneracy of the ground state.
  $L_{\max}$ is the maximum size of the unit cell that was analysed. Colours
  designate phases shown in
  Figs.~\ref{fig:p3q1-2}--\ref{fig:p5q1-4}.\label{tab:1-CDWphases1}}
\end{table}

\begin{figure}[h]
  \centering
  \begin{minipage}{.48\textwidth}
    \centering
    \includegraphics[width=6cm]{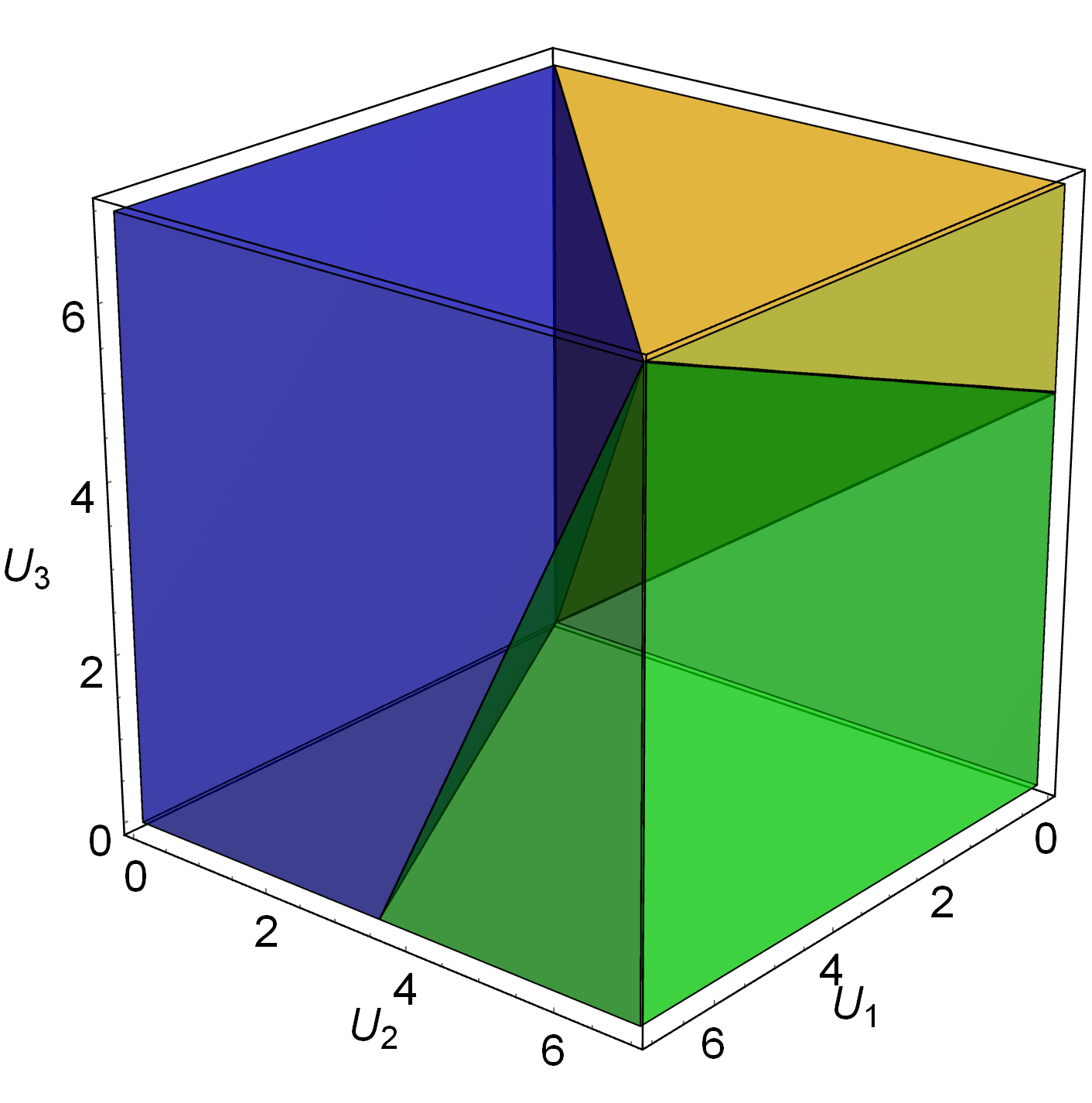}
    \captionof{figure}{
      Phase diagram of $p = 3, Q = 1 / 2$.}
    \label{fig:p3q1-2}
  \end{minipage}\hspace{0.03\linewidth}
  \begin{minipage}{.48\textwidth}
    \centering
    \includegraphics[width=6cm]{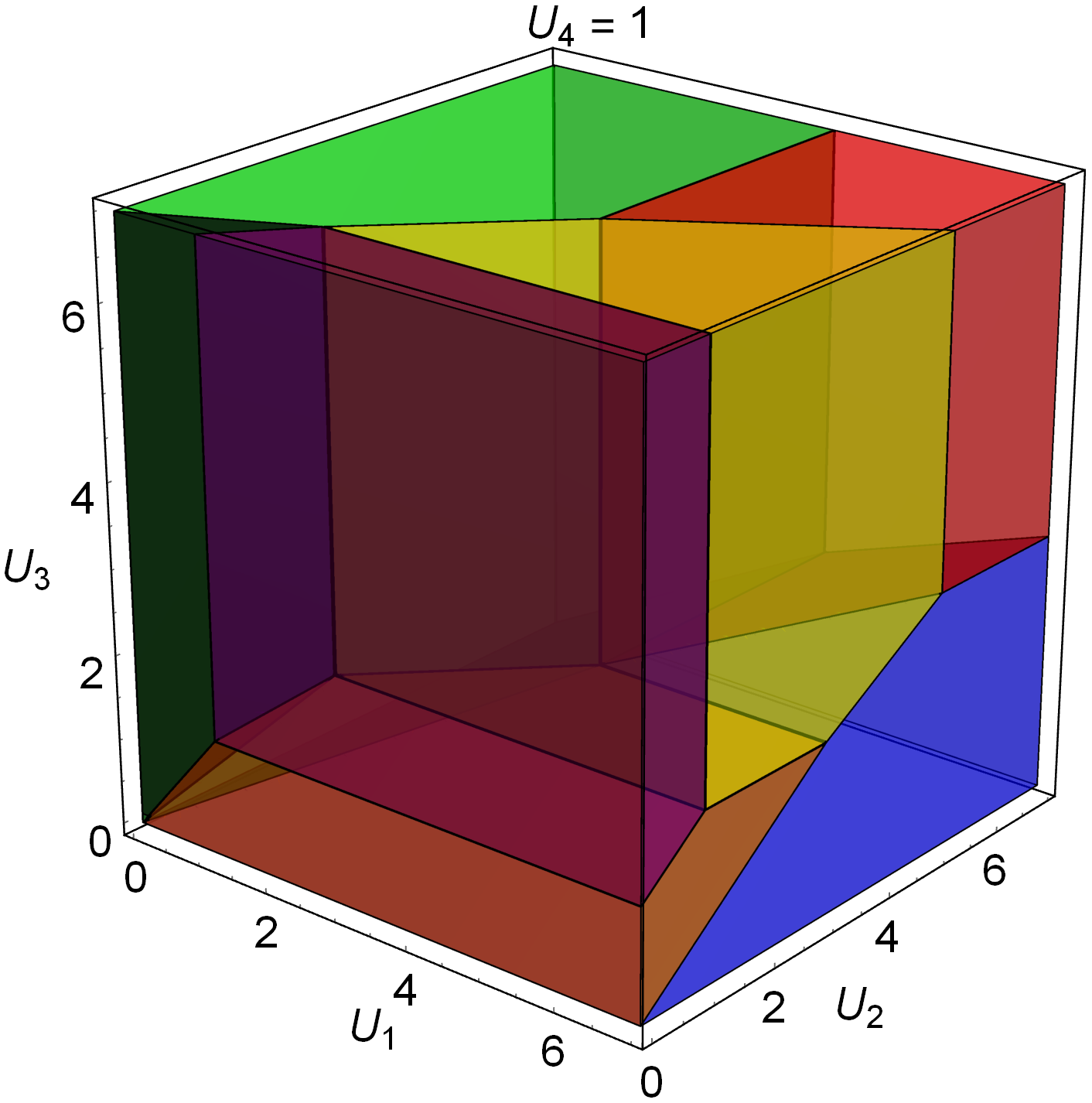}
    \captionof{figure}{
      Phase diagram of $p = 4, Q = 1 / 3$.}
    \label{fig:p4q1-3}
  \end{minipage}
\end{figure}

\begin{figure}[h]
  \begin{tabular}{cc}
    \resizebox{6cm}{!}{\includegraphics{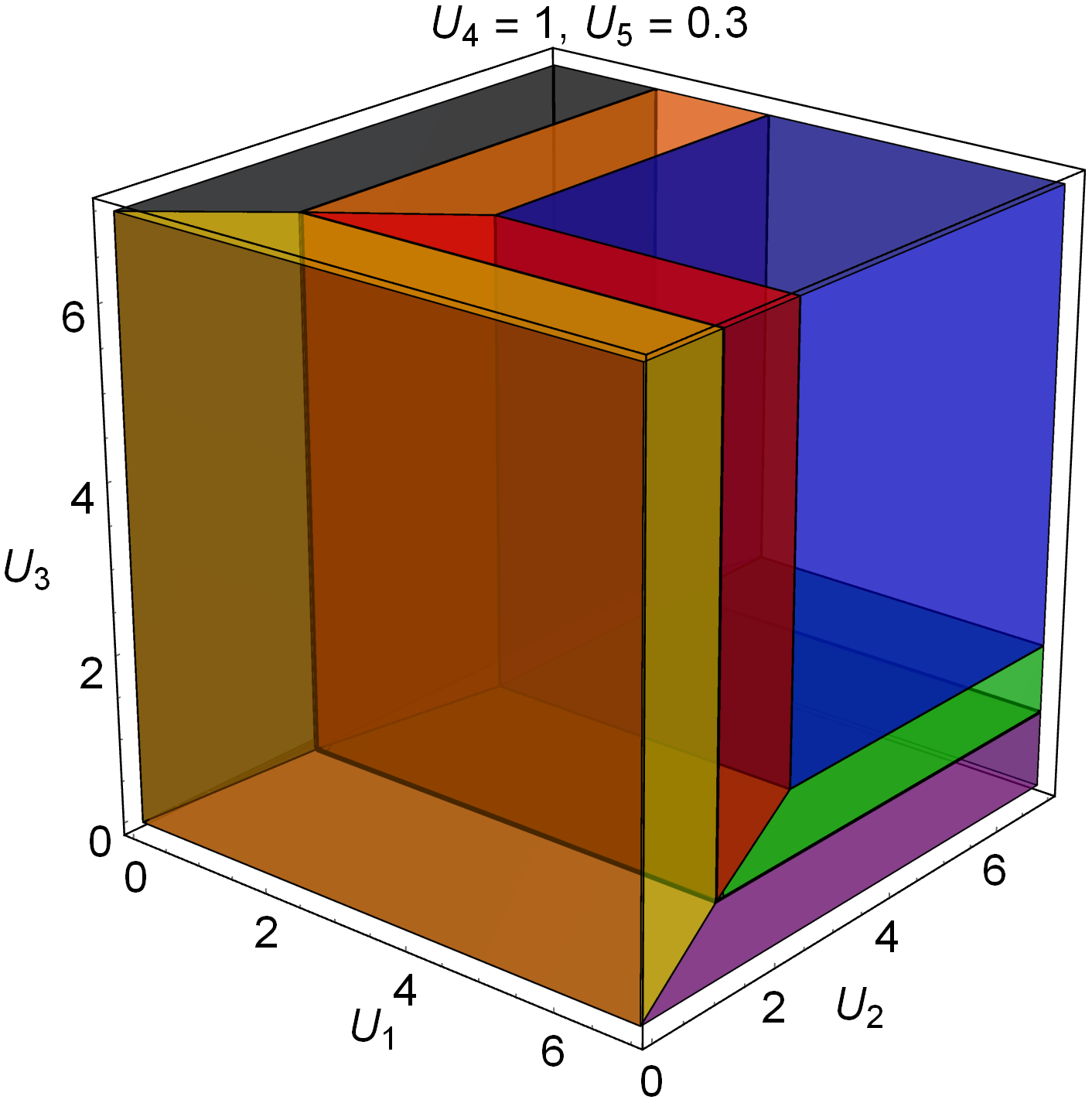}}
    &
    \resizebox{6cm}{!}{\includegraphics{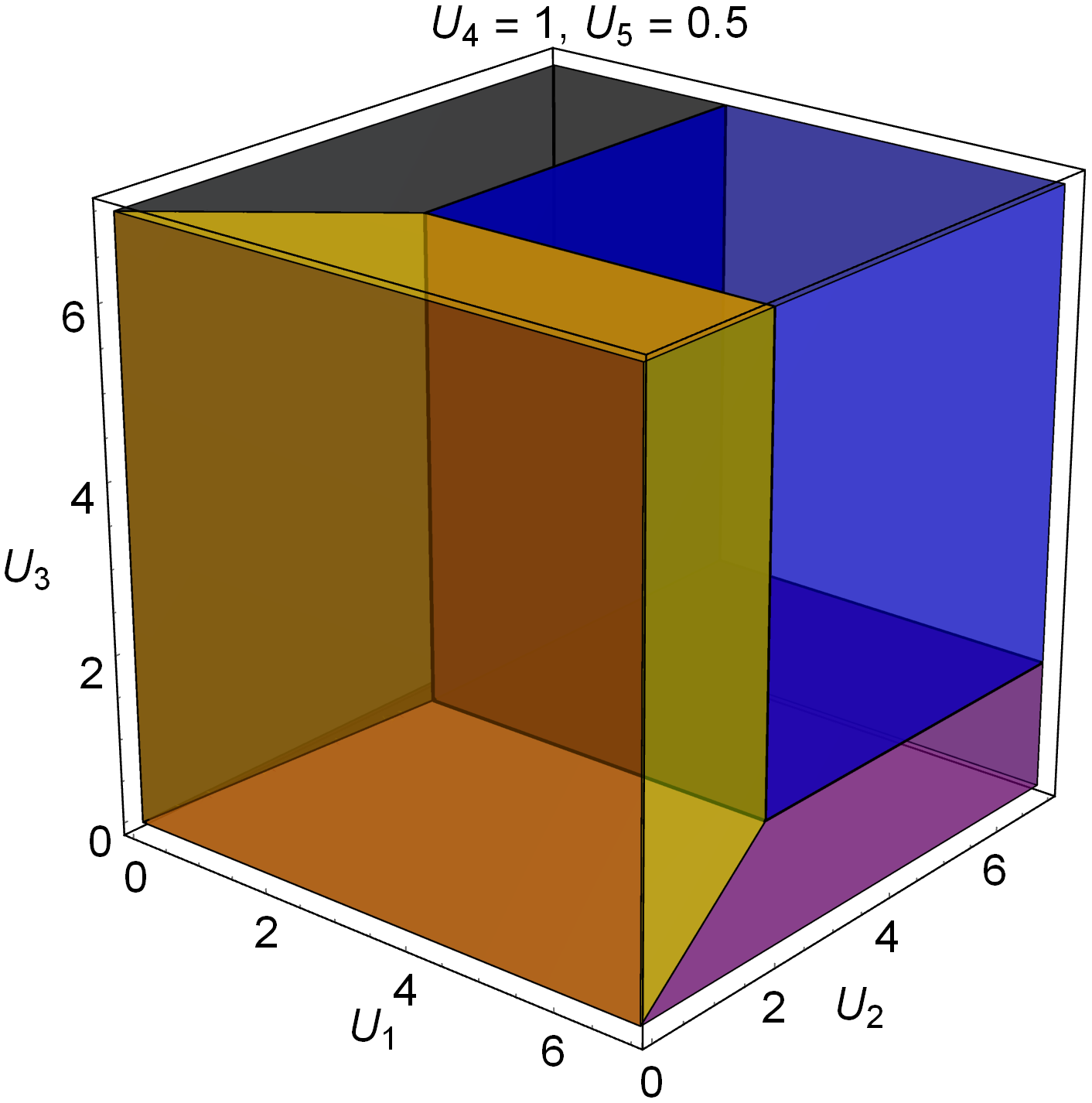}}
  \end{tabular}
  \caption{Phase diagrams of $p = 5, Q = 1 / 4$ with two values of $U_5 = 0.3$
  and $U_5 = 0.5$.\label{fig:p5q1-4}}
\end{figure}

\subsection{Results for $Q = 1 / (p - 2)$ and $Q = 1 / 2$}

Table \ref{tab:1-CDWphases2} presents the unit cells and energy densities for
$p = 4$ and 5 and $Q = 1 / (p - 2)$. Notice that for $p = 5$, we have found
ground-state unit cells up to $(L_{\max} - 3)$ and thus potentially there
could be a ground state containing an even larger unit cell that was not found
in our calculation. Phase diagram of the $p = 4, Q = 1 / 2$ system is shown in
Fig.~\ref{fig:p4q1-2}.

Similarly, results for $Q = 1 / 2, p = 5$ and $6$ are presented in Table
\ref{tab:1-CDWphases3}.

\begin{table}[h]
  \begin{tabular}{lcccc}
    \hline\hline
    System & GS unit cell & Energy density & $f$ & \\
    \hline
    $p = 4,$ & $\bullet \circ$ & $\frac{1}{2} (U_2 + U_4)$ & 2 &
    {\color[HTML]{00AAFF}$\blacksquare$}\\
    $Q = 1 / 2,$ & $\bullet \bullet \circ \circ$ & $\frac{1}{4} (U_1 + U_3 + 2
    U_4)$ & 4 & {\color[HTML]{55FF00}$\blacksquare$}\\
    $L_{\max} = 26$ & $\bullet \bullet \bullet \circ \circ \circ$ &
    $\frac{1}{6} (2 U_1 + U_2 + U_4)$ & 6 &
    {\color[HTML]{FFFF00}$\blacksquare$}\\
    & $\bullet \bullet \bullet \bullet \circ \circ \circ \circ$ &
    $\frac{1}{8} (3 U_1 + 2 U_2 + U_3)$ & 8 &
    {\color[HTML]{FF0000}$\blacksquare$}\\
    & $\bullet \bullet \circ \bullet \circ \circ \bullet \circ$ &
    $\frac{1}{8} (U_1 + 2 U_2 + 3 U_3)$ & 8 &
    {\color[HTML]{800080}$\blacksquare$}\\
    \hline
    $p = 5,$ & $\bullet \circ \circ$ & $\frac{1}{3} U_3$ & $3$ & \\
    $Q = 1 / 3,$ & $\bullet \circ \bullet \circ \circ \circ$ & $\frac{1}{6}
    (U_2 + U_4)$ & $6$ & \\
    $L_{\max} = 27$ & $\bullet \bullet \circ \circ \circ \circ$ & $\frac{1}{6}
    (U_1 + U_5)$ & $6$ & \\
    & $\bullet \bullet \circ \circ \circ \bullet \circ \circ \circ$ &
    $\frac{1}{9} (U_1 + 2 U_4 + 2 U_5)$ & $9$ & \\
    & $\bullet \circ \bullet \circ \bullet \circ \circ \circ \circ$ &
    $\frac{1}{9} (2 U_2 + U_4 + U_5)$ & $9$ & \\
    & $\bullet \circ \bullet \circ \circ \bullet \circ \bullet \circ \circ
    \circ \circ$ & $\frac{1}{12} (2 U_2 + U_3 + 3 U_5)$ & $12$ & \\
    & $\bullet \bullet \circ \circ \bullet \circ \circ \circ \bullet \circ
    \circ \circ \bullet \circ \circ$ & $\frac{1}{15} (U_1 + 2 U_3 + 4 U_4)$ &
    $15$ & \\
    & $\bullet \bullet \bullet \circ \circ \circ \circ \circ \bullet \bullet
    \circ \circ \circ \circ \circ$ & $\frac{1}{15} (3 U_1 + U_2)$ & 15 & \\
    & $\bullet \bullet \circ \circ \bullet \bullet \circ \circ \circ \circ
    \circ \bullet \bullet \circ \circ \circ \circ \circ$ & $\frac{1}{18} (3
    U_1 + U_3 + 2 U_4 + U_5)$ & $18$ & \\
    & $\bullet \bullet \circ \circ \bullet \circ \circ \bullet \bullet \circ
    \circ \circ \circ \circ \bullet \bullet \circ \circ \circ \circ \circ$ &
    $\frac{1}{21} (3 U_1 + 2 U_3 + 2 U_4)$ & $21$ & \\
    & $\bullet \bullet \bullet \circ \circ \circ \circ \bullet \circ \bullet
    \circ \circ \circ \circ \bullet \circ \bullet \circ \circ \circ \circ$ &
    $\frac{1}{21} (2 U_1 + 3 U_2 + 3 U_5)$ & $21$ & \\
    & $\bullet \circ \bullet \circ \circ \circ \circ \circ \bullet \bullet
    \bullet \circ \circ \circ \circ \circ \bullet \bullet \bullet \circ \circ
    \circ \circ \circ$ & $\frac{1}{24} (4 U_1 + 3 U_2)$ & 24 & \\
    \hline\hline
  \end{tabular}
  \caption{As Table \ref{tab:1-CDWphases1}, but for a system with density $Q =
  1 / (p - 2)$. Colours designate phases shown
  in~Fig.~\ref{fig:p4q1-2}.\label{tab:1-CDWphases2}}
\end{table}

\begin{figure}[h!]
  \resizebox{6cm}{!}{\includegraphics{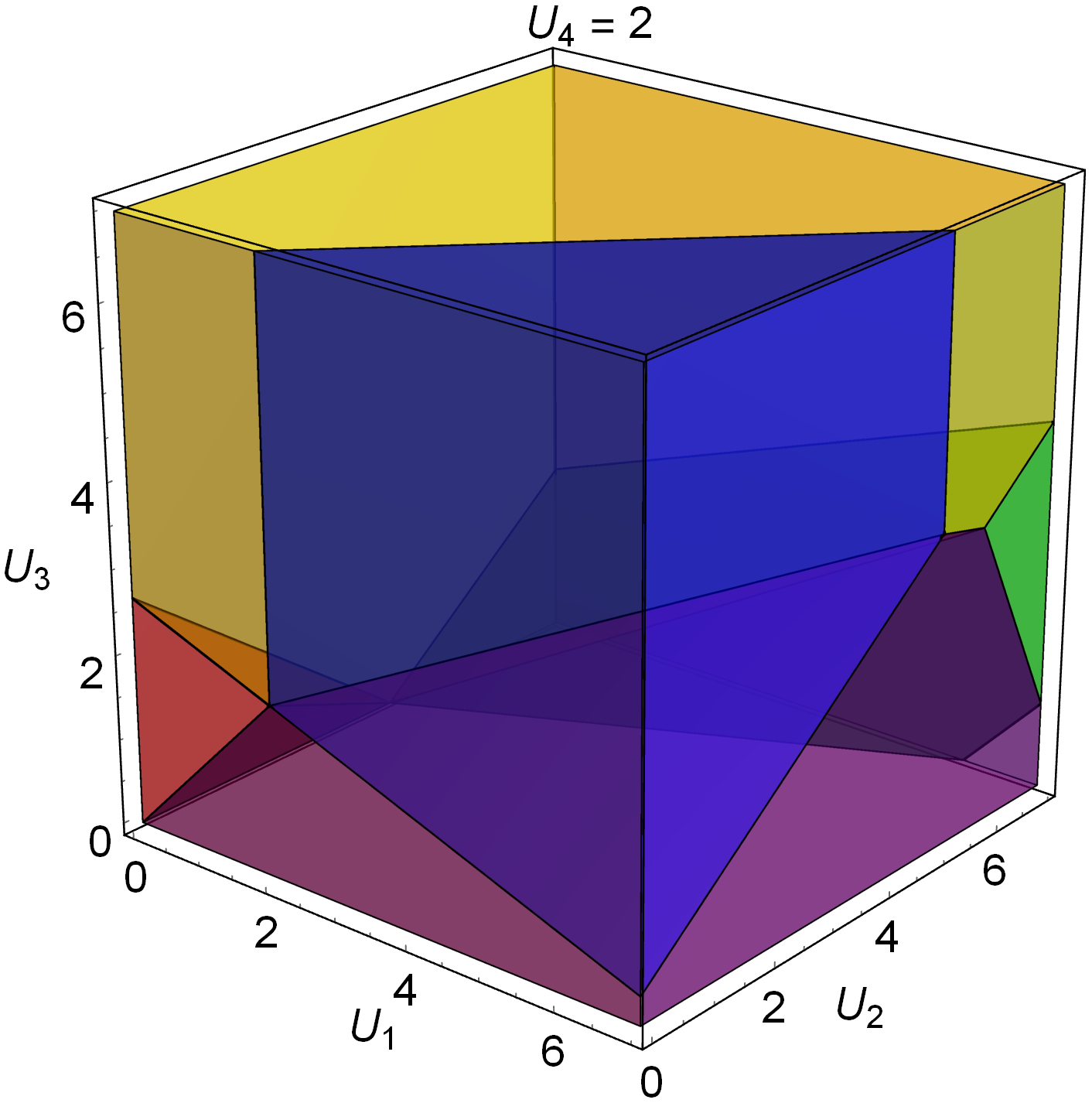}}
  \caption{Phase diagram of $p = 4, Q = 1 / 2$.\label{fig:p4q1-2}}
\end{figure}

\begin{table}[h]
  \begin{tabular}{lccc}
    \hline\hline
    System & GS unit cell & Energy density & $f$\\
    \hline
    $p = 5,$ & $\bullet \circ$ & $\frac{1}{2} (U_2 + U_4)$ & 2\\
    $Q = 1 / 2,$ & $\bullet \bullet \circ \circ$ & $\frac{1}{4} (U_1 + U_3 + 2
    U_4 + U_5)$ & 4\\
    $L_{\max} = 26$ & $\bullet \bullet \circ \bullet \circ \circ$ &
    $\frac{1}{6} (U_1 + U_2 + 2 U_3 + U_4 + U_5)$ & $2 \times 6$\\
    & $\bullet \bullet \bullet \circ \circ \circ$ & $\frac{1}{6} (2 U_1 + U_2
    + U_4 + 2 U_5)$ & 6\\
    & $\bullet \bullet \circ \bullet \circ \circ \bullet \circ$ &
    $\frac{1}{8} (U_1 + 2 U_2 + 3 U_3 + 3 U_5)$ & 8\\
    & $\bullet \bullet \bullet \bullet \circ \circ \circ \circ$ &
    $\frac{1}{8} (3 U_1 + 2 U_2 + U_3 + U_5)$ & 8\\
    & $\bullet \bullet \circ \bullet \bullet \circ \circ \bullet \circ \circ$
    & $\frac{1}{10} (2 U_1 + U_2 + 4 U_3 + 3 U_4)$ & 10\\
    & $\bullet \bullet \bullet \bullet \bullet \circ \circ \circ \circ \circ$
    & $\frac{1}{10} (4 U_1 + 3 U_2 + 2 U_3 + U_4)$ & 10\\
    \hline
    $p = 6,$ & $\bullet \circ$ & $\frac{1}{2} (U_2 + U_4 + U_6)$ & 2\\
    $Q = 1 / 2,$ & $\bullet \bullet \circ \circ$ & $\frac{1}{4} (U_1 + U_3 + 2
    U_4 + U_5)$ & 4\\
    $L_{\max} = 26$ & $\bullet \bullet \circ \bullet \circ \circ$ &
    $\frac{1}{6} (U_1 + U_2 + 2 U_3 + U_4 + U_5 + 3 U_6)$ & $2 \times 6$\\
    & $\bullet \bullet \bullet \circ \circ \circ$ & $\frac{1}{6} (2 U_1 + U_2
    + U_4 + 2 U_5 + 3 U_6)$ & 6\\
    & $\bullet \circ \bullet \bullet \circ \bullet \circ \circ$ &
    $\frac{1}{8} (U_1 + 2 U_2 + 3 U_3 + 3 U_5 + 2 U_6)$ & 8\\
    & $\bullet \bullet \bullet \bullet \circ \circ \circ \circ$ &
    $\frac{1}{8} (3 U_1 + 2 U_2 + U_3 + U_5 + 2 U_6)$ & 8\\
    & $\bullet \bullet \circ \bullet \bullet \circ \circ \bullet \circ \circ$
    & $\frac{1}{10} (2 U_1 + U_2 + 4 U_3 + 3 U_4 + 3 U_6)$ & 10\\
    & $\bullet \bullet \bullet \bullet \bullet \circ \circ \circ \circ \circ$
    & $\frac{1}{10} (4 U_1 + 3 U_2 + 2 U_3 + U_4 + U_6)$ & 10\\
    & $\bullet \circ \bullet \bullet \circ \bullet \circ \bullet \circ \circ
    \bullet \circ$ & $\frac{1}{12} (U_1 + 4 U_2 + 3 U_3 + 2 U_4 + 5 U_5)$ &
    12\\
    & $\bullet \bullet \bullet \bullet \bullet \bullet \circ \circ \circ
    \circ \circ \circ$ & $\frac{1}{12} (5 U_1 + 4 U_2 + 3 U_3 + 2 U_4 + U_5)$
    & 12\\
    & $\bullet \bullet \bullet \circ \bullet \circ \circ \circ \bullet \circ
    \bullet \bullet \bullet \circ \circ \bullet \circ \circ$ & $\frac{1}{18}
    (4 U_1 + 4 U_2 + 4 U_3 + 5 U_4 + 2 U_5 + 3 U_6)$ & 18\\
    & $\bullet \bullet \bullet \circ \bullet \circ \circ \circ \bullet
    \bullet \circ \bullet \bullet \circ \circ \bullet \circ \circ$ &
    $\frac{1}{18} (4 U_1 + 3 U_2 + 5 U_3 + 5 U_4 + 2 U_5 + 3 U_6)$ & $2 \times
    18$\\
    \hline\hline
  \end{tabular}
  \caption{As Table \ref{tab:1-CDWphases1}, but for a system with density $Q =
  1 / 2$.\label{tab:1-CDWphases3}}
\end{table}

\subsection{Discussion of the results}

Our results illustrate how highly nontrivial and unpredictable the
ground-state configurations are for critical densities higher than $Q = 1 /
p$. For example, for a half-filled system ($Q = 1 / 2$), judging only from the
$p = 3$ case, one would naively expect a similar trend to be present in all
other cases: for all units cells to consist of a chain of occupied sites,
followed by a chain of the same length, but with empty sites. However, Table
\ref{tab:1-CDWphases2} shows that for $p = 4$ there exists a ground state with
a~unit cell $(\bullet \bullet \circ \bullet \circ \circ \bullet \circ)$, that
does not follow this prediction. Therefore, it is very difficult to create
a~simple set of rules describing the ground-state properties of all the phases
in the system with high critical density.

We also conclude that the number of possible CDW phases in the system grows
with the maximum interaction range $p$ and the density $Q$. For example, in
the system $p = 5, Q = 1 / 3$ presented in Table \ref{tab:1-CDWphases2}, there
are at least 12 different CDW phases, and we expect $p = 6, Q = 1 / 4$ to
contain even more (preliminary results indicate 23 phases).

\begin{table}[h]
  \begin{tabular}{cc|cccccccc}
    \hline\hline
    &  & \multicolumn{8}{l}{$Q =${\hspace{7em}}$\longleftarrow$ } \\
    &  & 1/2 & 1/3 & 1/4 & 1/5 & 1/6 & 1/7 & 1/8 & $\cdots$\\
    \hline
    $p =$ & 1 & 1 & \cellcolor{gray!25} & \cellcolor{gray!25} & \cellcolor{gray!25} & \cellcolor{gray!25} & \cellcolor{gray!25} & \cellcolor{gray!25} & \cellcolor{gray!25}\\
    & 2 & 2 & 1 & \cellcolor{gray!25} & \cellcolor{gray!25} & \cellcolor{gray!25} & \cellcolor{gray!25} & \cellcolor{gray!25} & \cellcolor{gray!25}\\
    & 3 & 3 & 3 & 1 & \cellcolor{gray!25} & \cellcolor{gray!25} & \cellcolor{gray!25} & \cellcolor{gray!25} & \cellcolor{gray!25}\\
    {$\downarrow$} & 4 & 5 & 7 & 4 & 1 & \cellcolor{gray!25} & \cellcolor{gray!25} & \cellcolor{gray!25} & \cellcolor{gray!25}\\
    & 5 & 8 & 12 & 7 & 5 & 1 & \cellcolor{gray!25} & \cellcolor{gray!25} & \cellcolor{gray!25}\\
    & 6 & 12 & \tmtextit{63} & \tmtextit{23} & 9 & 6 & 1 & \cellcolor{gray!25} & \cellcolor{gray!25}\\
    & 7 & \tmtextit{26} & ? & ? & ? & \tmtextit{11} & 7 & 1 & \cellcolor{gray!25}\\
    & $\vdots$ &  &  &  &  &  &  &  & $\ddots$\\
    \hline\hline
  \end{tabular}
  \caption{Number of different possible CDW phases in the generalised $t$-$V$
  model as a function of interaction range $p$ and density $Q$. Results in
  italic are preliminary.}
\end{table}

For $t \neq 0$, we expect non-CDW phases to be present in the system. If one
considers the phase diagrams from Figs.~\ref{fig:p3q1-2}--\ref{fig:p4q1-2}, on
the interfaces between any two phases there are probably Luttinger liquid and
bond-order phases, similarly to the findings of Refs.
{\cite{Schmitteckert2004}} and {\cite{Mishra2011}}. Therefore, if our
assumption that the number of phases grows quickly with the maximum
interaction range is correct, then we can predict that for high $p$, the phase
diagram consists of mainly non-CDW phases, while CDW insulators are only
present when certain $U_m$ are very high. Thus, a large interaction range may
imply the loss of insulating properties of the material.

\section{Conclusions and outlook}

Mott insulating phases are currently of great interest, partly because of
their possible application in future transistor technology. However, as we
have shown in this chapter, the insulating properties of the material may be
altered depending on the effective interaction between electrons in the
system. We have shown how to construct the ground state of all the Mott
insulating phases at low critical densities in the generalised $t$-$V$ model,
and we have calculated the ground-state unit cells of a few example cases for
higher critical densities. Thus, we provide a description of possible CDW
phases of the system with any interaction range and any critical density in
the atomic limit. The number of possible CDW phases increases with the
interaction range and thus great care is needed in order to determine which
CDW phase will appear in a specific system if the interaction range is large.

For a smoothly varying potential, the main difficulty in obtaining an
insulating system may be due to the emergence of liquid phases in the system.
At a non-zero temperature (and thus high kinetic energy) non-insulating phases
may be prominent in the system, and therefore one would need an analysis of
the phase diagram of the model beyond the atomic limit (\tmtextit{i.e.} with
$t > 0$). For a~one-dimensional system, this can be achieved using novel
renormalisation-group methods based on examination of entanglement in the
system {\cite{Schollwock2005}}. Therefore, in the future, we propose to use
the matrix product states approach {\cite{PerezGarcia2007,Verstraete2008}},
which has proven useful in calculations of lattice models and requires
relatively low computational resources. In order to accurately describe the
long\mbox{-}range correlations in the system, one would need to use high bond
dimension in the matrix product state. Appendix \ref{ch:appendixMPS} shows the
initial steps needed to use this approach on the generalised $t$-$V$ model.
\begin{center}
  \part{Charge\mbox{-}carrier com{\nobreak}plexes
  in~two-dimensional semicon{\nobreak}ductors
  }
\end{center}
\chapter{Theoretical background}\label{ch:2-theory}

\section{Transition-metal dichalcogenides}

After the experimental discovery of graphene {\cite{Novoselov2004,Geim2007}},
two-dimensional (2D) materials have become a~major focus in physics. Due to
their wide potential applications, the properties of 2D materials are studied
intensely nowadays. Prominent examples of those novel materials are
transition\mbox{-}metal dichalcogenide (TMDC) monolayers, which are stable,
hexagonal 2D semiconductors, that have one advantage as compared to graphene,
namely they naturally possess a~band gap. Therefore, TMDCs can be used in
optoelectronics
\cite{Xiao2012,Zeng2012,Mak2012,Sallen2012,Cao2012,Conley2013,Xu2014},
in the production of transistors, photoemitters and photodetectors.

\begin{figure}
  \centering
  \begin{minipage}{.48\textwidth}
    \centering
    \includegraphics[width=.9\linewidth]{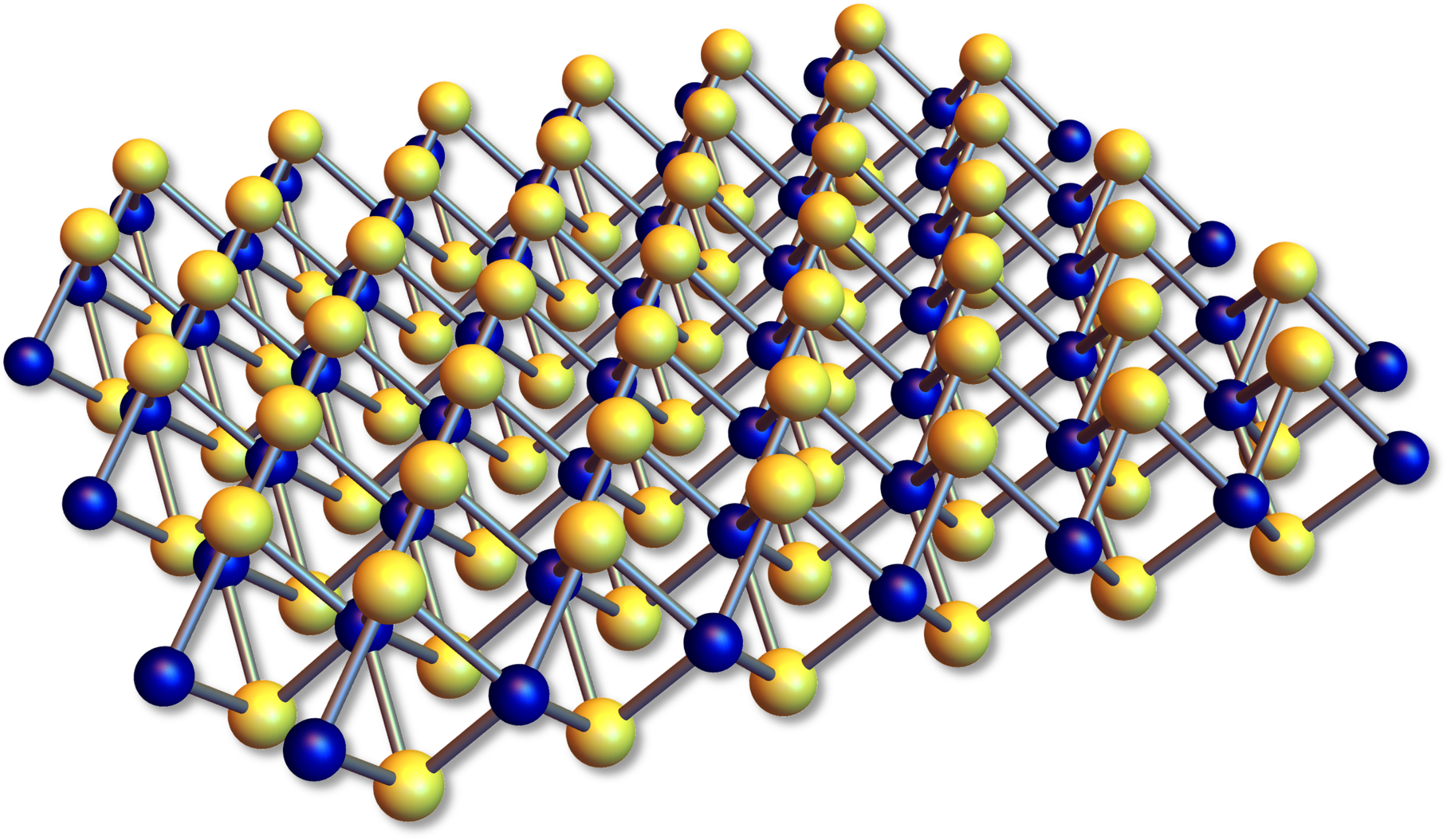}
    \captionof{figure}{
      Atomic structure of a transition-metal dichalcogenide MX$_2$, where 
      yellow atoms are of type X and blue atoms are of type M.}
    \label{fig:2-tmdc1}
  \end{minipage}\hspace{0.03\linewidth}
  \begin{minipage}{.48\textwidth}
    \centering
    \includegraphics[width=.7\linewidth]{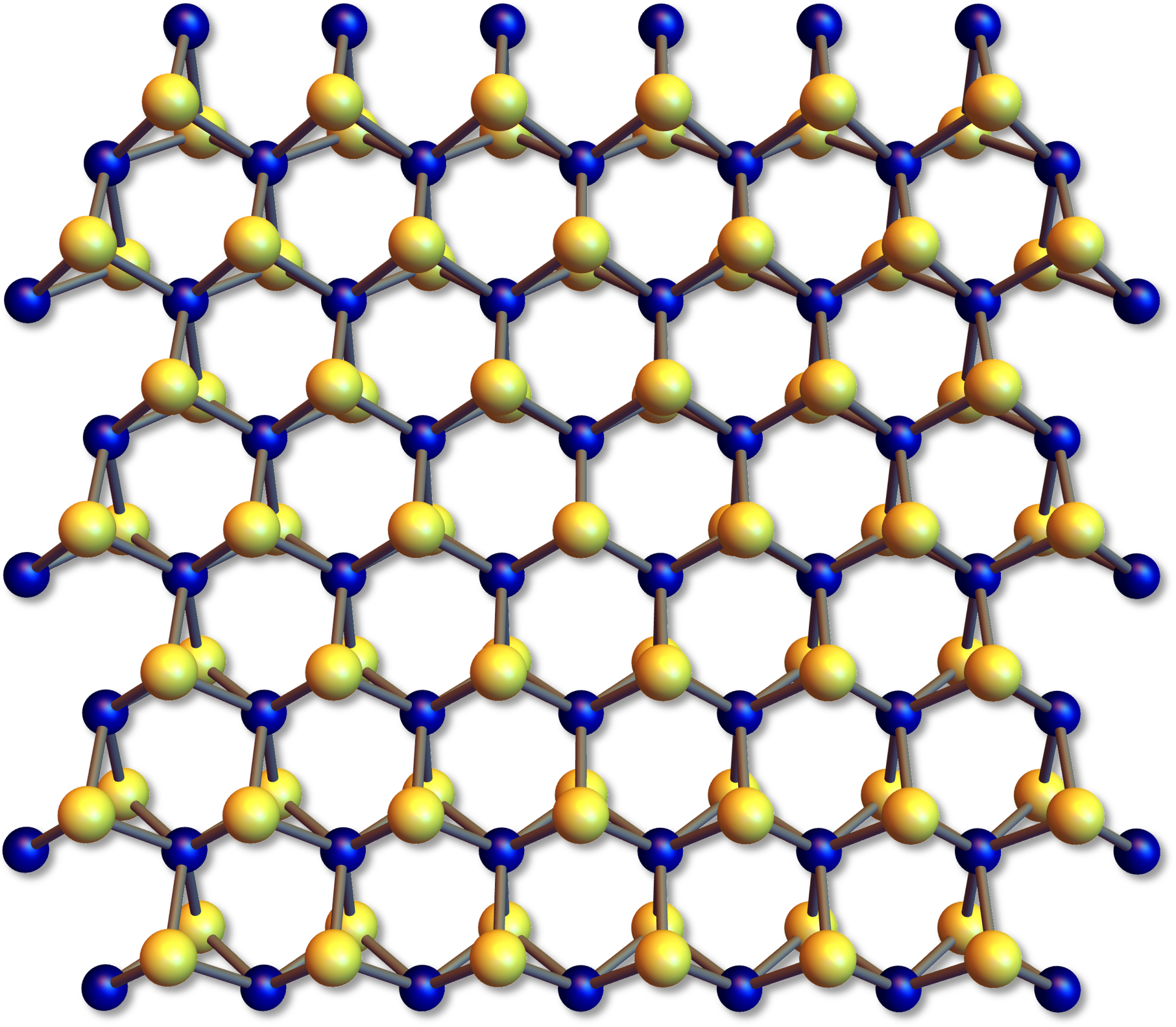}
    \captionof{figure}{
      Structure of a transition-metal dichalcogenide as viewed from the top.}
    \label{fig:2-tmdc2}
  \end{minipage}
\end{figure}

TMDC monolayers have chemical composition of MX$_2$, where M is a transition
metal atom (\tmtextit{e.g.}~molybdenum Mo, or tungsten W), and X is a
chalcogen atom (sulfur S, selenium Se, or tellurium Te). Figure
\ref{fig:2-tmdc1} shows the structure of one layer of a TMDC, where we can see
that the layer of transition metal atoms rests between two layers of chalcogen
atoms. The thickness of the TMDC monolayer is the distance between the two
chalcogen atom layers. Viewed from the top (see Fig.~\ref{fig:2-tmdc2}), the
lattice of the TMDC material is a~honeycomb, similar to graphene.

\begin{figure}
  \centering
  \begin{minipage}{.48\textwidth}
    \centering
    \includegraphics[width=.8\linewidth]{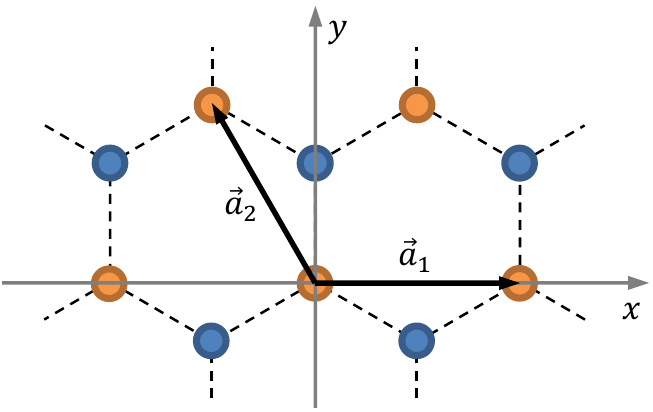}
    \captionof{figure}{
      Bravais lattice of TMDCs, viewed from the top. $\vec{a}_1$ and 
      $\vec{a}_2$ are primitive vectors of hexagonal lattice.}
    \label{fig:2-tmdc-lattice1}
  \end{minipage}\hspace{0.03\linewidth}
  \begin{minipage}{.48\textwidth}
    \centering
    \includegraphics[width=.8\linewidth]{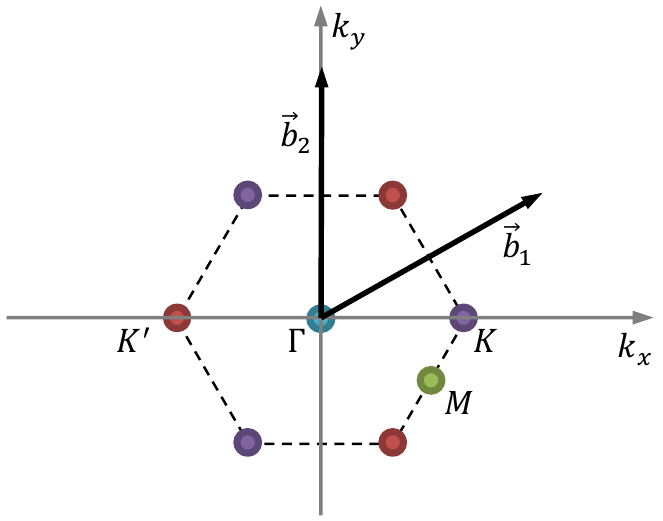}
    \captionof{figure}{
      First Brillouin zone of TMDCs with reciprocal lattice vectors $\vec{b}_1$ 
      and $\vec{b}_2$.}
    \label{fig:2-tmdc-lattice2}
  \end{minipage}
\end{figure}

The Bravais lattice of a TMDC is shown in Fig.~\ref{fig:2-tmdc-lattice1}, and
Fig.~\ref{fig:2-tmdc-lattice2} presents the corresponding reciprocal lattice.
Many TMDCs possess a direct band gap, \tmtextit{i.e.} the energy minimum of
the conduction band has the same momentum as the energy maximum of the valence
band. Those band edges occur at the corners ($K$ and $K'$ points) of the
hexagonal Brillouin zone, where the electron states carry angular momentum and
the bands exhibit spin splitting due to spin-orbit (SO) coupling
\cite{Kuc2011,Zhu2011,Molina2013,Jin2013,Kormanyos2013}.

\section{Motivation}

The SO splitting of TMDCs is large in the valence band and small in the
conduction band. This makes the TMDC photoluminescence sensitive to the valley
and spin polarisation of charge carriers. The optical properties of these
materials will be also influenced by the presence of excitons and similar
charge carrier complexes, and numerous observations of the luminescence
spectra of TMDCs show the presence of peaks ascribed to excitons
\cite{Komsa2012,Qiu2013,Glazov2014,Berghauser2014,Klots2014},
trions
\cite{Mak2013,Zhang2014,Srivastava2015,Zhang2015,Zhu2014,Jones2013}
and biexcitons \cite{Mai2014,Shang2015,You2015}.
Recent experiments performed on higher-quality monolayer TMDCs have shown
additional structure in their spectra
\cite{Srivastava2015nano,He2015,Koperski2015,Chakraborty2015},
that could potentially be explained by the SO splitting and detailed
classification of charge carrier complexes.

The goal of this study is therefore to accurately describe charge carrier
complexes in two\mbox{-}dimensional semiconductors, and to show their
classification that incorporates the SO splitting and spin/valley
polarisation. Using quantum Monte Carlo methods, we will also extract the
binding energies of various charge carrier complexes, including the ones
formed around impurities in the system.

\section{Charge carriers in 2D semiconductors}

\subsection{Effective interaction}

Instead of simulating the charge carriers \tmtextit{ab initio} as particles
scattered in the hexagonal lattice of the 2D semiconductor, one can consider
the carriers in an effective mass approximation being affected by an effective
potential. This is a~Mott\mbox{-}Wannier picture {\cite{Wannier1937}}, in
which we assume the resulting exciton to have a radius much larger than the
lattice spacing, and the effective masses of (quasi)electrons and (quasi)holes
include the effects of the underlying lattice.

To calculate the effective potential, let us consider a charge density $\rho
(x, y) \delta (z)$ that is placed in a 2D semiconductor at $z = 0$. The
electric displacement $\vec{D}$ due to this charge density is then
\begin{equation}
  \vec{D} = \varepsilon_0 \vec{E} + \vec{P} = - \varepsilon_0 \nabla \phi +
  \vec{P},
\end{equation}
where $\vec{E} = - \nabla \phi$ is the electric field, $\vec{P}$ is the
polarisation vector, $\phi$ is the electrostatic potential and $\varepsilon_0$
is the vacuum permittivity. The polarisation field can be expressed as
$\vec{P} (x, y, z) = \vec{P}_{x y} (x, y) \delta (z)$, with $\vec{P}_{x y}$
being the in-plane polarisation and $\delta (z)$ being the Dirac delta
function.

Using Gauss's law ($\nabla \cdot D = \rho \delta (z)$), we can write the
following equation:
\begin{equation}
  - \varepsilon_0 \nabla^2 \phi + \nabla \cdot \vec{P} = - \varepsilon_0
  \nabla^2 \phi + (\nabla \cdot \vec{P}_{x y}) \delta (z) = \rho \delta (z) .
  \label{eq:2-effinteraction1}
\end{equation}
However $\vec{P}_{x y} = \chi \varepsilon_0 \vec{E} (x, y, 0) = - \chi
\varepsilon_0 \nabla \phi (x, y, 0)$, where $\chi$ is the in-plane
susceptibility\footnote{\setstretch{1.66}2D susceptibility can be approximated 
using layer
separation $d$ (of the material in bulk) and the in-plane dielectric constant
$\varepsilon$ as $\chi = d (\varepsilon + 1)$.} of the material (the 2D
susceptibility has units of length). Equation (\ref{eq:2-effinteraction1})
becomes
\begin{equation}
  \varepsilon_0 \nabla^2 \phi = - \rho \delta (z) - \chi \varepsilon_0
  (\nabla^2 \phi (x, y, 0)) \delta (z) . \label{eq:2-effinteraction2}
\end{equation}
Now we take the Fourier transform of Eq.~(\ref{eq:2-effinteraction2}), using
$\vec{\kappa}$ for a~wave vector in the $(x, y)$ plane and $k$ for
a~wavenumber in the $z$ direction:
\begin{equation}
  \phi (\vec{\kappa}, k) = \frac{\frac{1}{\varepsilon_0} \rho (\vec{\kappa}) -
  \chi \kappa^2 \phi (\vec{\kappa}, z = 0)}{\kappa^2 + k^2} .
\end{equation}
However, we can use another Fourier transform to write an expression for $\phi
(\vec{\kappa}, z = 0)$:
\begin{equation}
  \phi (\vec{\kappa}, z = 0) = \frac{1}{2 \pi} \int \phi (\vec{\kappa}, k)
  \mathd k = \frac{1}{2 \kappa} \left( \frac{1}{\varepsilon_0} \rho
  (\vec{\kappa}) - \chi \kappa^2 \phi (\vec{\kappa}, z = 0) \right) .
\end{equation}
Rearranging for $\phi (\vec{\kappa}, z = 0)$ gives:
\begin{equation}
  \phi (\vec{\kappa}, z = 0) = \frac{\rho}{2 \varepsilon_0 \kappa \left( 1 +
  \frac{1}{2} \chi \kappa \right)} = \frac{2 \pi \rho}{4 \pi \varepsilon_0
  \kappa \left( 1 + \frac{1}{2} \chi \kappa \right)} .
\end{equation}
One can therefore find the effective potential between charges $q_i$ and
$q_j$:
\begin{equation}
  v (\kappa) = \frac{2 \pi q_i q_j}{4 \pi \varepsilon_0 \kappa \left( 1 +
  \frac{1}{2} \chi \kappa \right)} = \frac{2 \pi q_i q_j}{4 \pi \varepsilon_0
  \kappa (1 + r_{\ast} \kappa)},
\end{equation}
where $r_{\ast} = \chi / 2$ is a parameter with units of length, directly
related to the in-plane susceptibility of the material. Taking the Fourier
transform to real space,
\begin{eqnarray}
  \frac{4 \pi \varepsilon_0}{q_i q_j} v (r) & = & \frac{1}{(2 \pi)^2} \int_{-
  \infty}^{\infty} \int_{- \infty}^{\infty} \frac{2 \pi}{\kappa (1 + r_{\ast}
  \kappa)} e^{i \kappa_x x + i \kappa_y y} \mathd \kappa_x \mathd \kappa_y \\
  & = & \frac{1}{2 \pi} \int_0^{\infty} \int_{- \pi}^{\pi} \frac{1}{\kappa (1
  + r_{\ast} \kappa)} e^{i \kappa r \cos \phi \cos \theta + i \kappa r \sin
  \phi \sin \theta} \kappa \mathd \kappa \mathd \phi \nonumber\\
  & = & \frac{1}{2 \pi} \int_0^{\infty} \frac{1}{1 + r_{\ast} \kappa} \left(
  \int_{- \pi}^{\pi} e^{i \kappa r \cos (\phi - \theta)} \mathd \phi \right)
  \mathd \kappa \nonumber\\
  & = & \int_0^{\infty} \frac{J_0 (\kappa r)}{1 + r_{\ast} \kappa} \mathd
  \kappa, \nonumber
\end{eqnarray}
where $J_n (x)$ is Bessel function of the first kind. Evaluating the final
integral gives the following equation for the potential, first introduced by
Keldysh in Ref.~{\cite{Keldysh1979}}:
\begin{equation}
  v (r) = \frac{q_i q_j}{4 \pi \varepsilon_0 r_{\ast}} 
  \underbrace{\frac{\pi}{2}  \left( H_0 \left( \frac{r}{r_{\ast}} \right) -
  Y_0 \left( \frac{r}{r_{\ast}} \right) \right)}_{V (r / r_{\ast})},
  \label{eq:2-effinteraction}
\end{equation}
where $H_n (x)$ is a Struve function and $Y_n (x)$ is a Bessel function of the
second kind.

For long-range behaviour or small susceptibility $(r \gg r_{\ast})$, we
recover the usual Coulomb interaction:
\begin{equation}
  v (r) = \frac{q_i q_j}{4 \pi \varepsilon_0}  \frac{1}{r},
  \label{eq:2-Coulomb}
\end{equation}
while at short-range or large susceptibility $(r \ll r_{\ast})$, the potential
(\ref{eq:2-effinteraction}) has the logarithmic form:
\begin{equation}
  v (r) = \frac{q_i q_j}{4 \pi \varepsilon_0}  \frac{\log (2 r_{\ast} / r) -
  \gamma}{r_{\ast}}, \label{eq:2-logarithmic}
\end{equation}
where $\gamma$ is the Euler--Mascheroni constant. See
Fig.~\ref{fig:2-effinteraction} for a comparison of the Keldysh, Coulomb and
logarithmic potentials.

\begin{figure}[h]
  \resizebox{11cm}{!}{\includegraphics{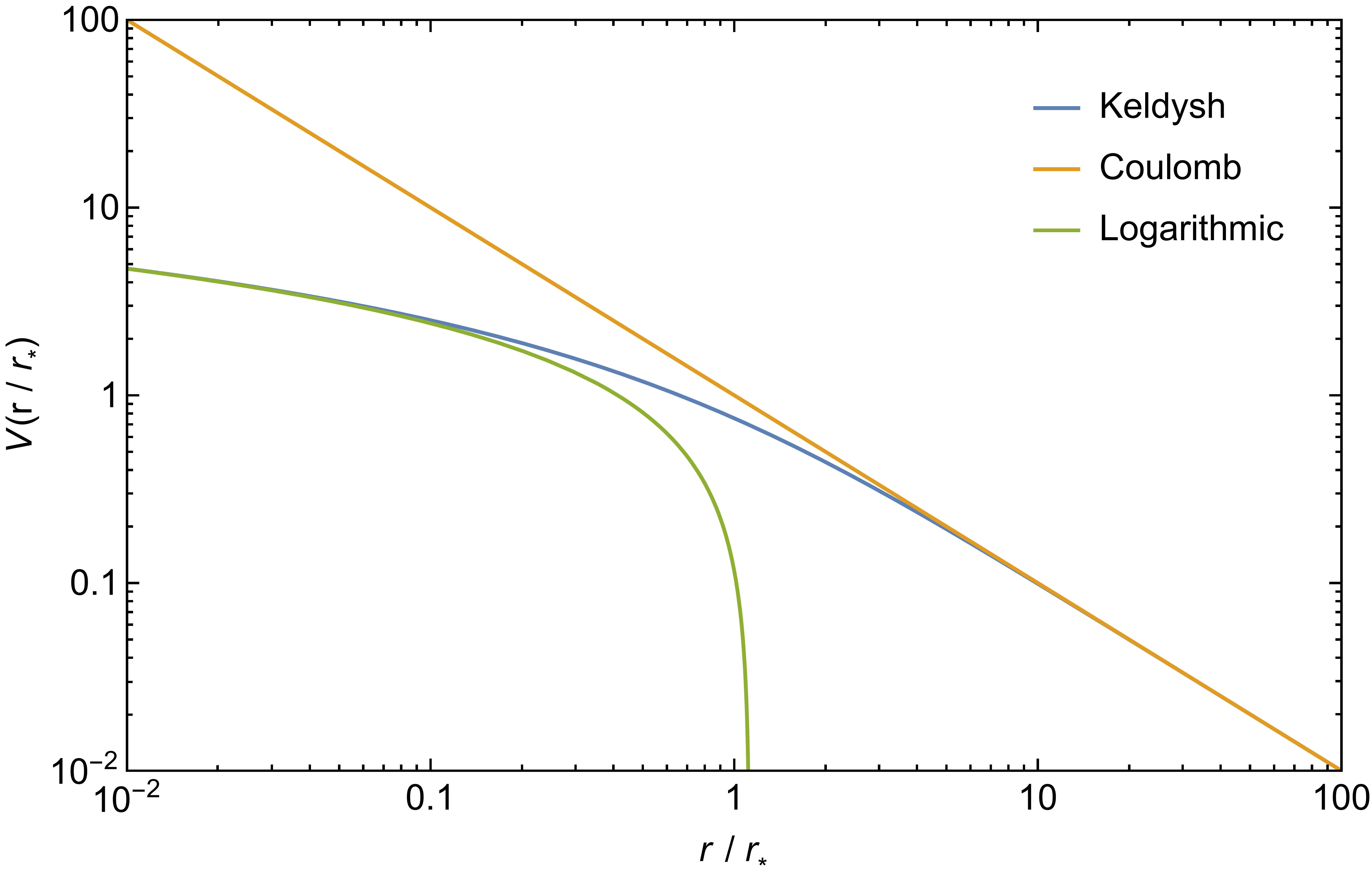}}
  \caption{Comparison of Keldysh, Coulomb and logarithmic
  potentials.\label{fig:2-effinteraction}}
\end{figure}

In short, due to the in-plane susceptibility of the material, the Coulomb
interaction between two charge carriers is modified to another form that
differs significantly from the usual Coulomb interaction, especially in its
short-range behaviour.

\subsection{Numerical evaluation of the effective interaction}

To evaluate the effective interaction, we use a Taylor expansion in $r /
r_{\ast}$ at small $r$ and an expansion in $r_{\ast} / r$ at large $r$. The
small-$r$ expansion is:
\begin{eqnarray}
  V (r) & = & \sum_{i = 0}^{\infty} \frac{(- 1)^{i + 1}}{\prod_{j = 1}^i (2
  j)^2} \left( \frac{r}{r_{\ast}} \right)^{2 i} \left( \log \frac{r}{2
  r_{\ast}} + \gamma - \sum_{k = 1}^i \frac{1}{k} \right) \\
  & + & \sum_{i = 0}^{\infty} \frac{(- 1)^i}{\prod_{j = 0}^i (2 j + 1)^2}
  \left( \frac{r}{r_{\ast}} \right)^{2 i + 1} . \nonumber
\end{eqnarray}
During the calculation, elements of the first sum must be paired with the
corresponding element of the second sum, in order to prevent numerical errors.
The most significant errors will arise due to cancellation of large elements
in the sums (which can happen due to the alternating sign in each element).
Therefore, the numerical error of the final result can be estimated by
checking the precision of the largest element in the sum. Additionally, the
term $\left( \gamma - \sum_{k = 1}^i \frac{1}{k} \right)$, rather than being
evaluated in each calculation, was tabulated for different values of $i$, for
efficiency purposes.

The large-$r$ expansion is:
\begin{equation}
  V (r) = \sum_{i = 0}^{\infty} (- 1)^i \left( \frac{r_{\ast}}{r} \right)^{2 i
  + 1} \prod_{j = 0}^{i - 1} (2 j + 1)^2 .
\end{equation}
This is an asymptotic expansion, and therefore the sum does not converge for
finite values of $r$ and only a finite number of terms should be
used\footnote{\setstretch{1.66}The error in this asymptotic series can be shown 
to be bounded
after }. The absolute value of the element after which the sum stops
converging can be used to estimate the numerical error in the final result.

Figure \ref{fig:2-effinterror} shows the difference between small-$r$ and
large-$r$ expansions that were calculated using the double-precision
arithmetic and higher precision. The switch between small-$r$ and large-$r$
expansion was chosen to be at $r = 18 r_{\ast}$, so that the numerical
precision of the calculated potential has always over 8 digits of accuracy.

\begin{figure}[h]
  \resizebox{10cm}{!}{\includegraphics{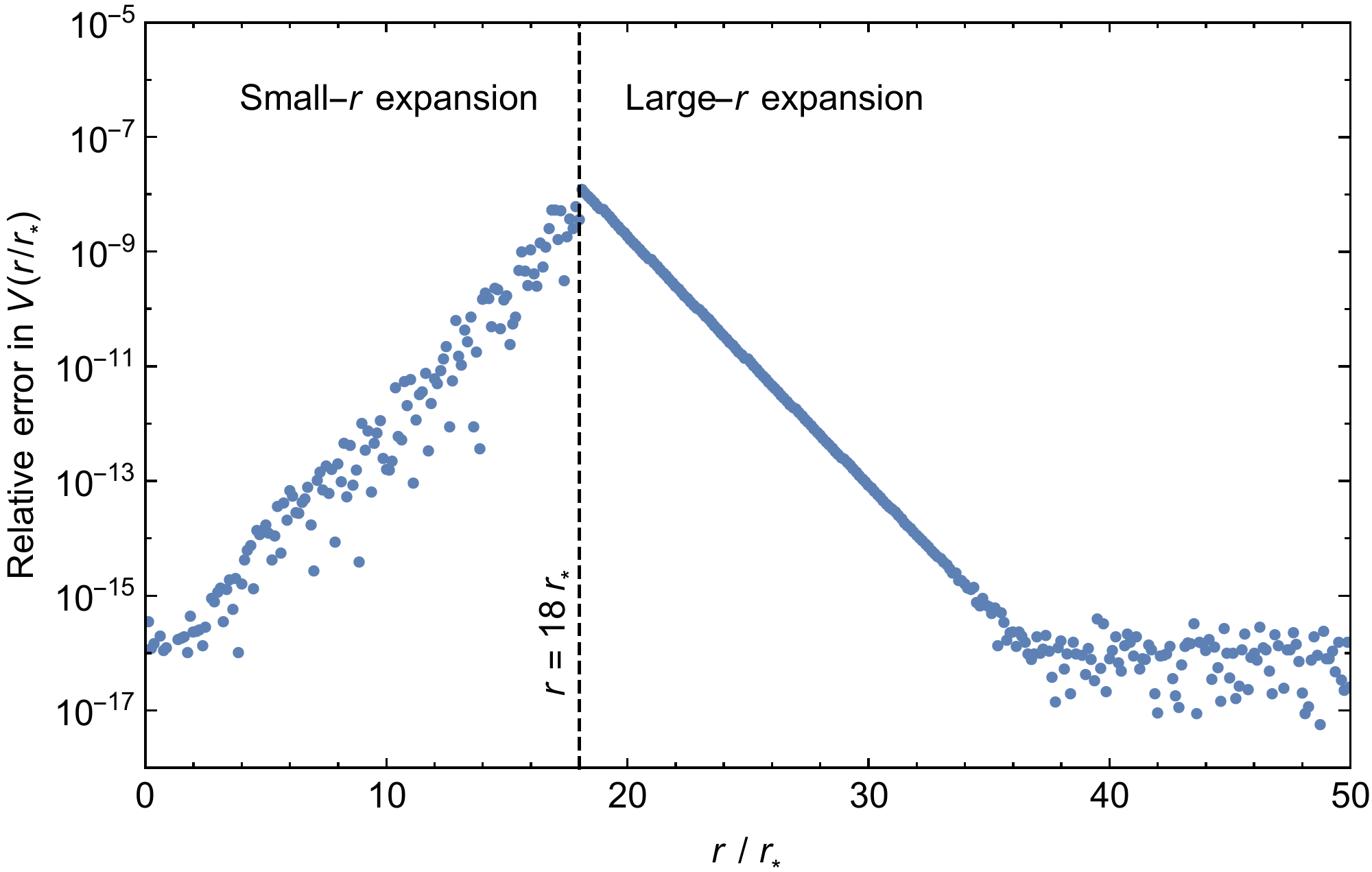}}
  \caption{Relative error in the numerical evaluation of the Keldysh
  interaction $V (r / r_{\ast})$ from
  Eq.~(\ref{eq:2-effinteraction}).\label{fig:2-effinterror}}
\end{figure}

\subsection{The Schr{\"o}dinger equation of a charge carrier complex}

Charge carriers in 2D semiconductors can be thought of as a set of quantum
particles in an effective potential given by Eq.~(\ref{eq:2-effinteraction}).
The Schr{\"o}dinger equation of this system is:
\begin{equation}
  \left[ - \sum_{i = 1}^N \frac{\hbar^2}{2 m_i} \nabla^2_i + \sum_{i = 1}^N
  \sum_{j = i + 1}^N \frac{q_i q_j}{4 \pi \varepsilon_0 r_{\ast}} V \left(
  \frac{r_{i j}}{r_{\ast}} \right) \right] \psi = E \psi, \label{eq:2-Schr}
\end{equation}
where the first sum is the kinetic energy and the second sum is the potential.
$N$ is the total number of charge carriers in the complex, $m_i \in \{ m_e,
m_h \}$ is the effective mass of an electron\footnote{\setstretch{1.66}This is 
not the bare
electron mass, which in this work will be designated $m_e^{\ast} \approx 9.1
\cdot 10^{- 31}$ kg. Rather, this is a~mass of a quasiparticle that arises as
the electron travels through the medium (we can call it a quasielectron) and
is given as a curvature of the conduction band minimum.} or a hole, $q_i$ is
the charge of the carrier ($| q_i | = e$), $V (x)$ is the effective
interaction given in Eq.~(\ref{eq:2-effinteraction}), $r_{i j}$ is the
distance between particle $i$ and particle $j$, $E$ is the energy of the
complex, and $\psi$ is the wave function of the system.

We adapt the following naming for the charge carriers and their complexes.
Constituent particles can be electrons e$^-$, holes h$^+$, and fixed classical
ions: positive donors D$^+$, or negative acceptors A$^-$. Excitons will be
designated as X, negative and positive trions as X$^-$ and X$^+$, and
biexcitons as XX. Donor- or acceptor-bound complexes will be written with
D$^+$ or A$^-$ in front of the corresponding complex symbol, \tmtextit{e.g.}
D$^+$XX is a donor-bound biexciton.

The binding energy of a complex is defined as the energy required to divide
the complex into two smaller complexes that are energetically the most
favourable, and are far away from each other. Table \ref{tab:2-decomposition}
shows the decomposition of various charge carrier complexes.

\begin{table}[h]
  \begin{tabular}{lccc}
    \hline\hline
    Complex & Symbol &  & Decomposition\\
    \hline
    Exciton & X & $\rightarrow$ & $\text{e}^- + \text{h}^+$\\
    Negative trion & X$^-$ & $\rightarrow$ & $\text{X} + \text{e}^-$\\
    Donor-bound exciton & D$^+$X & $\rightarrow$ & $\text{D}^+ \text{e}^- +
    \text{h}^+$\\
    Biexciton & XX & $\rightarrow$ & $\text{X} + \text{X}$\\
    Donor-bound negative trion & D$^+$X$^-$ & $\rightarrow$ & $\text{D}^+
    \text{e}^- + \text{X}$\\
    Donor-bound biexciton & D$^+$XX & $\rightarrow$ & $\text{D}^+ \text{X} +
    \text{X}$\\
    \hline\hline
  \end{tabular}
  \caption{Naming and decomposition of charge carrier
  complexes.\label{tab:2-decomposition}}
\end{table}

The binding energies are therefore defined as:
\begin{eqnarray}
  E_{\text{X}}^{\text{b}} & = & E_{\text{X}}, \\
  E_{\text{X}^-}^{\text{b}} & = & E_{\text{X}} - E_{\text{X}^-}, \\
  E_{\text{D}^+ \text{X}}^{\text{b}} & = & E_{\text{D}^+ \text{e}^-} -
  E_{\text{D}^+ \text{X}}, \\
  E_{\tmop{XX}}^{\text{b}} & = & 2 E_{\text{X}} - E_{\tmop{XX}}, \\
  E_{\text{D}^+ \text{X}^-}^{\text{b}} & = & E_{\text{D}^+ \text{e}^-} +
  E_{\text{X}} - E_{\text{D}^+ \text{X}^-}, \\
  E_{\text{D}^+ \text{XX}}^{\text{b}} & = & E_{\text{D}^+ \text{X}} +
  E_{\text{X}} - E_{\text{D}^+ \text{XX}} . 
\end{eqnarray}
The total energy of both an isolated electron and an isolated hole is zero.
Notice the convention\footnote{\setstretch{1.66}This convention is used in most 
of the
literature dealing with charge carrier complexes in 2D semiconductors.} used:
the binding energy of the exciton is a negative quantity, while other binding
energies are defined so that the binding energy is positive if the complex is
bound.

It will prove useful to do the following transformation of coordinates,
\begin{equation}
  \tilde{r} = r \sqrt{\frac{2 e^2 \mu}{\hbar^2 4 \pi \varepsilon_0 r_{\ast}}},
\end{equation}
where $\mu = m_e m_h / (m_e + m_h)$ is the reduced mass. The Schr{\"o}dinger
equation (\ref{eq:2-Schr}) becomes:
\begin{equation}
  \left[ - \sum_i \frac{\mu}{m_i} \tilde{\nabla}^2_i + \sum_{i, j > i}
  \bar{q}_i \bar{q}_j V \left( \tilde{r}_{i j} \sqrt{\frac{4 \pi \varepsilon_0
  \hbar^2}{2 e^2 \mu r_{\ast}}} \right) \right] \psi = \frac{4 \pi
  \varepsilon_0 r_{\ast}}{e^2} E \psi . \label{eq:2-Schr2}
\end{equation}
where $\bar{q}_i \equiv q_i / e$. We notice that $\mu / m_i$ is a
dimensionless quantity that is only dependent on the ratio of effective
masses, $m_e / m_h$. If $r_{\ast}$ is measured in the units of the excitonic
Bohr radius,
\begin{equation}
  a_{\text{B}}^{\ast} = \frac{4 \pi \varepsilon_0 \hbar^2}{\mu e^2},
  \label{eq:2-excBohrradius}
\end{equation}
then the Schr{\"o}dinger equation (\ref{eq:2-Schr2}) can be written as:
\begin{equation}
  \left[ - \sum_i \frac{\mu}{m_i} \tilde{\nabla}^2_i + \sum_{i, j > i}
  \bar{q}_i \bar{q}_j V \left( \frac{\tilde{r}_{i j}}{\sqrt{2 r_{\ast} /
  a_{\text{B}}^{\ast}}} \right) \right] \psi = \frac{4 \pi \varepsilon_0
  r_{\ast}}{e^2} E \psi . \label{eq:2-Schr3}
\end{equation}
We notice that the left-hand side of the equation is only dependent on two
dimensionless parameters: the mass ratio $m_e / m_h$ and the parameter related
to the susceptibility of the material, $r_{\ast} / a_{\text{B}}^{\ast}$. Both
the dimensionless parameters can range from zero to infinity, so we make the
following transformation, in order to present all the limits on one plot:
\begin{equation}
  \eta = \frac{m_e / m_h}{1 + m_e / m_h}, \qquad \nu = \frac{r_{\ast} /
  a_{\text{B}}^{\ast}}{1 + r_{\ast} / a_{\text{B}}^{\ast}},
\end{equation}
where $\eta$ is the rescaled mass ratio and $\nu$ is the rescaled
susceptibility. Both parameters take values in the $[0, 1]$ interval.

\subsubsection{Exciton complex}\label{ch:2-exciton}

For an exciton, the Schr{\"o}dinger equation (\ref{eq:2-Schr3}) will be
greatly simplified, if one uses the centre of mass as the origin of
coordinates, since then the kinetic part is
\begin{equation}
  - \sum_i \frac{\mu}{m_i} \tilde{\nabla}^2 = - \left( \frac{\mu}{m_e} +
  \frac{\mu}{m_h} \right) \tilde{\nabla}^2 = - \tilde{\nabla}^2,
  \label{eq:2-excitoncoord}
\end{equation}
where the Laplacian is taken with respect to the centre-of-mass coordinates.
The left-hand side of the Schr{\"o}dinger equation (\ref{eq:2-Schr3}) is then
only dependent on $r_{\ast} / a_{\text{B}}^{\ast}$ and independent of the mass
ratio, $m_e / m_h$.

The solution for the exciton complex in the Coulomb limit is known
{\cite{Kohn1955,Zaslow1967,Hassoun1981}} and the exciton energy is:
\begin{equation}
  E_{\text{X}} = - 4 \tmop{Ry}^{\ast} = - 2 \frac{e^2}{4 \pi \varepsilon_0
  a_{\text{B}}^{\ast}},
\end{equation}
where $\tmop{Ry}^{\ast}$ is the excitonic Rydberg energy, $\tmop{Ry}^{\ast} =
e^2 / \left( 2 (4 \pi \varepsilon_0) a_{\text{B}}^{\ast} \right)$. The
solution is identical to the two-dimensional hydrogen atom, and the excitonic
wave function can be evaluated as:
\begin{equation}
  \psi = \sqrt{\frac{2}{\pi}}  \frac{2}{a_{\text{B}}^{\ast}} e^{- 2 r /
  a_{\text{B}}^{\ast}} . \label{eq:2-Xwavefunction}
\end{equation}

\subsubsection{Logarithmic limit}

For the logarithmic limit, the potential is $V (x) = \log (2 / x) - \gamma$.
Thus, the Schr{\"o}dinger equation (\ref{eq:2-Schr3}) becomes:
\begin{equation}
  \left[ - \sum_i \frac{\mu}{m_i} \tilde{\nabla}^2_i + \sum_{i, j > i}
  \bar{q}_i \bar{q}_j \left( \log \left( \frac{2}{\tilde{r}_{i j}} \sqrt{2
  \frac{r_{\ast}}{a_{\text{B}}^{\ast}}} \right) - \gamma \right) \right] \psi
  = \frac{4 \pi \varepsilon_0 r_{\ast}}{e^2} E \psi,
\end{equation}
or
\begin{equation}
  \left[ - \sum_i \frac{\mu}{m_i} \tilde{\nabla}^2_i + \sum_{i, j > i}
  \bar{q}_i \bar{q}_j \left( \log \left( \frac{2}{\tilde{r}_{i j}} \right) +
  \frac{1}{2} \log \left( \frac{2 r_{\ast}}{a_{\text{B}}^{\ast}} \right) -
  \gamma \right) \right] \psi = \frac{4 \pi \varepsilon_0 r_{\ast}}{e^2} E
  \psi,
\end{equation}
or
\begin{equation}
  \left[ - \sum_i \frac{\mu}{m_i} \tilde{\nabla}^2_i + \sum_{i, j > i}
  \bar{q}_i \bar{q}_j \left( \log \frac{2}{\tilde{r}_{i j}} - \gamma \right)
  \right] \psi = \left( \frac{4 \pi \varepsilon_0 r_{\ast}}{e^2} E -
  \frac{1}{2}  \sum_{i, j > i} \bar{q}_i \bar{q}_j \log \frac{2
  r_{\ast}}{a_{\text{B}}^{\ast}} \right) \psi . \label{eq:2-logSchr}
\end{equation}
The left-hand side is now independent of $r_{\ast}$ and therefore the
right-hand side is a constant $C$, independent of $r_{\ast}$:
\begin{equation}
  C = \frac{4 \pi \varepsilon_0 r_{\ast}}{e^2} E - \frac{1}{2} \log \left(
  \frac{2 r_{\ast}}{a_{\text{B}}^{\ast}} \right) \sum_{i, j > i} \bar{q}_i
  \bar{q}_j .
\end{equation}
The energy of the complex can therefore be written as
\begin{equation}
  \frac{4 \pi \varepsilon_0}{e^2} E = \frac{C + \frac{1}{2} \left( \sum
  \bar{q}_i \bar{q}_j \right) \log \left( 2 r_{\ast} / a_{\text{B}}^{\ast}
  \right)}{r_{\ast}},
\end{equation}
which for $r_{\ast} \rightarrow \infty$ goes to zero. Additionally, one can
notice that
\[ \sum_{i, j > i} \bar{q}_i \bar{q}_j = \left\{ \begin{array}{ll}
     - 1 & \text{for neutral exciton},\\
     - 1 & \text{for positive or negative trion},\\
     - 2 & \text{for neutral biexciton},\\
     - 2 & \text{for donor- or acceptor-bound neutral biexciton},\\
     \frac{(n_1 - n_2)^2 - n_1 - n_2}{2} & \text{for complex with $n_1$
     positive and $n_2$ negative carriers} .
   \end{array} \right. \]
\begin{equation}
  \ 
\end{equation}
{\hspace{1.5em}}On the other hand, if one measures the energies in units
proportional to $1 / r_{\ast}$, the equation reads:
\begin{equation}
  \frac{4 \pi \varepsilon_0 r_{\ast}}{e^2} E = C + \frac{1}{2} \left( \sum
  \bar{q}_i \bar{q}_j \right) \log \frac{2 r_{\ast}}{a_{\text{B}}^{\ast}},
  \label{eq:2-logcontrib}
\end{equation}
which diverges as $r_{\ast} \rightarrow \infty$. However, for a binding energy
(other than an exciton), the logarithmic contribution to the energy will
cancel out. For example, in the case of a trion complex:
\[ \frac{4 \pi \varepsilon_0 r_{\ast}}{e^2}  \left( E_{\text{X}} -
   E_{\text{X}^-} \right) = \left( C_{\text{X}} + \frac{1}{2} (- 1) \log
   \frac{2 r_{\ast}}{a_{\text{B}}^{\ast}} \right) - \left( C_{\text{X}^-} +
   \frac{1}{2} (- 1) \log \frac{2 r_{\ast}}{a_{\text{B}}^{\ast}} \right) =
   C_{\text{X}} - C_{\text{X}^-} . \]
\begin{equation}
  \ 
\end{equation}
Therefore a natural unit to measure binding energy in the logarithmic limit is
\begin{equation}
  \frac{e^2}{4 \pi \varepsilon_0 r_{\ast}} .
\end{equation}
{\hspace{1.5em}}To summarise, in the logarithmic limit of large $r_{\ast}$,
the total and binding energies of complexes are zero, if one uses excitonic
units of energy similar to Hartree or Rydberg. On the other hand, in the units
of $e^2 / (4 \pi \varepsilon_0 r_{\ast})$, the total energy of the complexes
(and the binding energy of an exciton) will diverge, however the binding
energy of other complexes will be finite.

For an exciton, one can additionally notice that the left-side of
Eq.~(\ref{eq:2-logSchr}) is independent of the effective masses, due to
Eq.~(\ref{eq:2-excitoncoord}). Therefore, the constant $C_{\text{X}}$ is a
number, and the dependence of the excitonic energy on the mass in the
logarithmic limit is completely defined via the logarithmic contribution, $-
\frac{1}{2} \log \frac{r_{\ast}}{a_{\text{B}}^{\ast}}$.

\subsubsection{Units of the energy}

In order to present our results, so that all the energy values are finite, the
following choice of units is made. The excitonic binding energy will be
measured in the units of
\begin{equation}
  \frac{e^2}{4 \pi \varepsilon_0 a_{\text{B}}^{\ast}} = 2 \tmop{Ry}^{\ast} .
\end{equation}
Although the logarithmic limit is lost and will be zero in these units, if one
determines the constant $C_{\text{X}}$, the energy dependence on the mass
ratio will be then given by Eq.~(\ref{eq:2-logcontrib}), which has now the
following form:
\begin{equation}
  E_{\text{X}} = \frac{e^2}{4 \pi \varepsilon_0 r_{\ast}} \left( C_{\text{X}}
  - \frac{1}{2} \log \frac{2 r_{\ast}}{a_{\text{B}}^{\ast}} \right) .
  \label{eq:2-logexcitonbehaviour}
\end{equation}
{\hspace{1.5em}}In the case of any other complex, the binding energy will be
measured in the following unit
\begin{equation}
  \frac{e^2}{4 \pi \varepsilon_0  \left( r_{\ast} + a_{\text{B}}^{\ast}
  \right)}, \label{eq:2-unitE}
\end{equation}
which for the Coulomb limit $(r_{\ast} \rightarrow 0)$ is equal to $2
\tmop{Ry}^{\ast}$ and for the logarithmic limit is equal to the natural unit
of the logarithmic interaction, $e^2 / (4 \pi \varepsilon_0 r_{\ast})$.
Finally, the Schr{\"o}dinger equation (\ref{eq:2-Schr3}) attains the
dimensionless form:
\begin{equation}
  \left[ - \sum_{i = 1}^{N_e} (1 - \eta) \tilde{\nabla}^2_i - \sum_{i =
  1}^{N_h} \eta \tilde{\nabla}^2_i + \sum_{\langle i, j \rangle} \bar{q}_i
  \bar{q}_j V \left( \tilde{r}_{ij} \sqrt{\frac{1 - \nu}{2 \nu}} \right)
  \right] \psi = \nu \mathcal{E} \psi,
\end{equation}
where $\mathcal{E}$ is the energy measured in unit (\ref{eq:2-unitE}), $N_e$
and $N_h$ are the number of electrons and number of holes in the system
respectively.

\subsubsection{Electron-hole symmetry}

We also notice that the Schr{\"o}dinger equation (\ref{eq:2-Schr}) is
symmetric under electron--hole exchange, if we also switch the effective
masses,
\begin{equation}
  \tilde{m}_e = m_h, \hspace{4em} \tilde{m}_h = m_e, \label{eq:2-chconj}
\end{equation}
where $\tilde{m}_e$ and $\tilde{m}_h$ are the effective masses in the
conjugated system. Therefore, a negative trion system with effective masses
$m_e$ and $m_h$ is equivalent to the positive trion with effective masses
$\tilde{m}_e$ and $\tilde{m}_h$. Table \ref{tab:2-conj} presents a selection
of charge carrier complexes and their conjugate equivalents.

\begin{table}[h]
  \centering
  \begin{tabular}{llll}
    \hline\hline
    \multirow{ 2}{*}{Complex} & \multirow{ 2}{*}{Conjugated system} & 
    \multicolumn{2}{c}{Extreme 
    mass ratios } \\
    \cline{3-4}
    & & $m_e / m_h \rightarrow 0$ & $m_e / m_h \rightarrow
    \infty$\\
    \hline
    {\color[HTML]{4F81BD}X} & {\color[HTML]{4F81BD}X} &
    {\color[HTML]{C0504D}D$^+$e$^-$ $\boxdot$} &
    {\color[HTML]{C0504D}A$^-$h$^+$ $\boxdot$}\\
    X$^-$ & X$^+$ & D$^+$e$^-$e$^-$ & {\color[HTML]{76923C}A$^-$A$^-$h$^+$
    $\boxast$}\\
    D$^+$X & A$^-$X & {\color[HTML]{76923C}D$^+$D$^+$e$^-$ $\boxast$} &
    D$^+$A$^-$h$^+$ $\rightarrow$ h$^+$\\
    {\color[HTML]{4F81BD}XX} & {\color[HTML]{4F81BD}XX} &
    {\color[HTML]{8064A2}D$^+$D$^+$e$^-$e$^-$ $\boxbox$} &
    {\color[HTML]{8064A2}A$^-$A$^-$h$^+$h$^+$ $\boxbox$}\\
    D$^+$X$^-$ & A$^-$X$^+$ & {\color[HTML]{8064A2}D$^+$D$^+$e$^-$e$^-$
    $\boxbox$} & D$^+$A$^-$A$^-$h$^+$\\
    D$^+$XX & A$^-$XX & D$^+$D$^+$D$^+$e$^-$e$^-$ &
    D$^+$A$^-$A$^-$h$^+$h$^+$\\
    \hline\hline
  \end{tabular}
  \caption{Charge carrier complexes and their conjugates. Complexes in blue
  are their own conjugates. Extreme mass ratio limits are also shown, with
  coloured symbols indicating which complexes are
  equivalent.\label{tab:2-conj}}
\end{table}

In the dimensionless units, the rescaled mass ratio in the conjugated system
can be obtained:
\begin{equation}
  \tilde{\eta} = 1 - \eta .
\end{equation}
{\hspace{1.5em}}Therefore, we immediately see that the exciton and biexciton
complexes must be symmetric under electron--hole exchange, since they are
their own charge conjugates. It is thus sufficient to investigate mass ratios
$m_e / m_h \in [0, 1]$ or $\eta = [0, 0.5]$ in order to cover the whole
parameter space.

\subsubsection{Extreme mass ratios}\label{ch:2-extrememass}

If one of the effective masses of quasiparticles is much larger than the other
one, we may treat the massive particle as an ion, or a fixed particle. Such a
particle will not have any kinetic energy and will only interact via the
potential energy. For extreme mass ratios, we replace electrons with negative
acceptors or holes with positive donors.
\begin{eqnarray}
  \text{e}^- & \xrightarrow[m_e / m_h \rightarrow \infty]{} &
  \text{A}^- \\
  \text{h}^+ & \xrightarrow[m_e / m_h \rightarrow 0]{} & \text{D}^+  
\end{eqnarray}
Table \ref{tab:2-conj} shows how complexes will look at extreme mass ratios.

We notice that if we have multiple fixed ions of the same electric charge in
the complex, we will have to find the geometry of the system that minimises
the total energy. However, if the complex consists of fixed particles with
both negative and positive charges, more thought is needed, since a donor and
an acceptor will want to overlap, which would cause total energy to diverge.

Let us consider donor-bound complexes, where we take the mass ratio to
infinity, causing the electrons to be very heavy, namely D$^+$X, D$^+$X$^-$
and D$^+$XX. We need to separate three mass scales: infinite mass of a donor,
heavy mass of an electron and light mass of a hole.

We can write the binding energy of a donor-bound exciton as:
\begin{equation}
  E^{\text{b}}_{\text{D}^+ \text{X}} = E_{\text{D}^+ \text{e}^-} -
  E_{\text{$\text{D}^+ \text{X}$}}  \xrightarrow[m_e / m_h \rightarrow 
  \infty]{} E \left( \text{$\text{D}^+
  \text{e}$}_{\tmop{heavy}}^- \right) - E \left( \text{$\text{D}^+ \text{e}
  _{\tmop{heavy}}^-$h}_{\tmop{light}}^+ \right) .
\end{equation}
So this binding energy is that of a hole bound to a D$^+$A$^-$ complex.
However, the hole will see an effective potential of zero, since the donor and
the acceptor will coalesce.

The binding energy of a donor-bound negative trion is:
\begin{eqnarray}
  E^{\text{b}}_{\text{D}^+ \text{X}^-} & \xrightarrow[m_e / m_h \rightarrow 
  \infty]{} & E \left( \text{D}^+ \text{e}_{\tmop{heavy}}^-
  \right) + E \left( \text{e}_{\tmop{heavy}}^- \text{h}_{\tmop{light}}^+
  \right) - E \left( \text{D}^+ \text{e}_{\tmop{heavy}}^-
  \text{e}_{\tmop{heavy}}^- \text{h}_{\tmop{light}}^+ \right) \\
  & = & E \left( \text{D}^+ \text{e}_{\tmop{heavy}}^- \right) + E \left(
  \text{e}_{\tmop{heavy}}^- \text{h}_{\tmop{light}}^+ \right) \nonumber\\
  &  & - \left[ \underset{}{} \right. E \left( \text{D}^+
  \text{e}_{\tmop{heavy}}^- \right) + \underbrace{E \left( \text{D}^+
  \text{e}_{\tmop{heavy}}^- \text{e}_{\tmop{heavy}}^- \right) - E \left(
  \text{D}^+ \text{e}_{\tmop{heavy}}^- \right)}_{= - E^{\text{b}} \left(
  \text{D}^+ \text{e}^- \text{e}^- \right)} \nonumber\\
  &  & + \underbrace{E \left( \text{D}^+ \text{e}_{\tmop{heavy}}^-
  \text{e}_{\tmop{heavy}}^- \text{h}_{\tmop{light}}^+ \right) - E \left(
  \text{D}^+ \text{e}_{\tmop{heavy}}^- \text{e}_{\tmop{heavy}}^- \right)}_{= E
  \left( \text{A}^- \text{h}^+ \right) \text{, since hole will only see one
  negative charge}} \left. \underset{}{} \right] \nonumber\\
  & = & E^{\text{b}}_{\text{D}^+ \text{e}^- \text{e}^-} . \nonumber
\end{eqnarray}
However, we need to remember to convert between the units used for the
D$^+$e$^-$e$^-$ complex, $e^2 / {\nobreak} \left( 4 \pi \varepsilon_0 \hbar^2
\left( r_{\ast} + a_{\text{B}}^{(e)} \right) \right)$, where
$a_{\text{B}}^{(e)}$ is the quasielectron Bohr radius, and the units of energy
from Eq.~(\ref{eq:2-unitE}). We notice that:
\begin{equation}
  \frac{e^2}{4 \pi \varepsilon_0 \hbar^2 \left( r_{\ast} + a_{\text{B}}^{\ast}
  \right)}  \xrightarrow[m_e / m_h \rightarrow \infty]{} 
  \frac{e^2}{4 \pi \varepsilon_0 \hbar^2 \left( r_{\ast} + a_{\text{B}}^{(h)}
  \right)} = \frac{e^2}{4 \pi \varepsilon_0 \hbar^2 \left( r_{\ast} +
  a_{\text{B}}^{(e)} \right)}  \frac{\left( r_{\ast} + a_{\text{B}}^{(e)}
  \right)}{\left( r_{\ast} + a_{\text{B}}^{(h)} \right)},
\end{equation}
where $a_{\text{B}}^{(h)}$ is the quasihole Bohr radius. The quasielectron
Bohr radius is zero, since it is inversely proportional to the quasielectron
mass, and thus we need to multiply the binding energy of the D$^+$e$^-$e$^-$
complex by $r_{\ast} / \left( r_{\ast} + a_{\text{B}}^{(h)} \right)$ in order
to recover the binding energy of the D$^+$A$^-$A$^-$h$^+$ complex in the units
from Eq.~(\ref{eq:2-unitE}).

Similarly, for a donor-bound biexciton, the binding energy is:
\begin{eqnarray}
  E^{\text{b}}_{\text{D}^+ \tmop{XX}} & \xrightarrow[m_e / m_h \rightarrow 
  \infty]{} & E \left( \text{D}^+ \text{e}_{\tmop{heavy}}^-
  \text{h}_{\tmop{light}}^+ \right) + E \left( \text{e}_{\tmop{heavy}}^-
  \text{h}_{\tmop{light}}^+ \right) - E \left( \small{\text{D}^+
  \text{e}_{\tmop{heavy}}^- \text{e}_{\tmop{heavy}}^-
  \text{h}_{\tmop{light}}^+ \text{h}_{\tmop{light}}^+} \right) \\
  & = & E \left( \text{D}^+ \text{e}_{\tmop{heavy}}^-
  \text{h}_{\tmop{light}}^+ \right) + E \left( \text{e}_{\tmop{heavy}}^-
  \text{h}_{\tmop{light}}^+ \right) \nonumber\\
  &  & - \left[ \underset{}{} \right. E \left( \text{D}^+
  \text{e}_{\tmop{heavy}}^- \right) + \underbrace{E \left( \text{D}^+
  \text{e}_{\tmop{heavy}}^- \text{e}_{\tmop{heavy}}^- \right) - E \left(
  \text{D}^+ \text{e}_{\tmop{heavy}}^- \right)}_{= - E^{\text{b}} \left(
  \text{D}^+ \text{e}^- \text{e}^- \right)} \nonumber\\
  &  & + \underbrace{E \left( \text{D}^+ \text{e}_{\tmop{heavy}}^-
  \text{e}_{\tmop{heavy}}^- \text{h}_{\tmop{light}}^+ \right) - E \left(
  \text{D}^+ \text{e}_{\tmop{heavy}}^- \text{e}_{\tmop{heavy}}^- \right)}_{= E
  \left( \text{A}^- \text{h}^+ \right) \text{, since hole will only see one
  negative charge}} \nonumber\\
  &  & + \underbrace{E \left( \text{D}^+ \text{e}_{\tmop{heavy}}^-
  \text{e}_{\tmop{heavy}}^- \text{h}_{\tmop{light}}^+
  \text{h}_{\tmop{light}}^+ \right) - E \left( \text{D}^+
  \text{e}_{\tmop{heavy}}^- \text{e}_{\tmop{heavy}}^-
  \text{h}_{\tmop{light}}^+ \right)}_{= - E^{\text{b}} \left( \text{A}^-
  \text{h}^+ \text{h}^+ \right) \text{, since holes will again only see one
  negative charge}} \left. \underset{}{} \right] \nonumber\\
  & = & \underbrace{E \left( \text{D}^+ \text{e}_{\tmop{heavy}}^-
  \text{h}_{\tmop{light}}^+ \right) - E \left( \text{D}^+
  \text{e}_{\tmop{heavy}}^- \right)}_{= - E^{\text{b}}_{\text{D}^+ \text{A}^-
  \text{h}^+} = 0 \text{, as discussed above}} + E^{\text{b}} \left(
  \text{D}^+ \text{e}^- \text{e}^- \right) + E^{\text{b}} \left( \text{A}^-
  \text{h}^+ \text{h}^+ \right) \nonumber\\
  & = & E^{\text{b}} \left( \text{D}^+ \text{e}^- \text{e}^- \right) +
  E^{\text{b}} \left( \text{A}^- \text{h}^+ \text{h}^+ \right) . \nonumber
\end{eqnarray}
{\hspace{1.5em}}For a near extreme mass, we notice that if the complex
consists of two heavy particles with the same charge, we can use the
Born-Oppenheimer approximation {\cite{Born1927}}{\nocite{WilsonBook}} to
determine the dependence of the binding energy on the mass ratio. For example,
in the limit of $m_e / m_h \rightarrow \infty$ a negative trion will resemble
a~two-dimensional H$_2^+$ ion and a biexciton will resemble a~two-dimensional
H$_2$ molecule. The correction to the binding energy can be thus approximated
as the harmonic zero\mbox{-}point energy of the ion-ion vibrations. We expand
the potential energy near the minimum $r_0$,
\begin{equation}
  U (r) = U (r_0) + \frac{1}{2} U'' (r_0)  (r - r_0)^2 + O ((r - r_0)^3),
\end{equation}
and we identify the second-order correction as the vibrational energy of the
ions,
\begin{equation}
  \frac{1}{2} U'' (r_0)  (r - r_0)^2 = \frac{\nobracket p^2 |_{r = r_0}}{2
  \mu'} = \frac{\mu' \omega^2 (r - r_0)^2}{2},
\end{equation}
where $\mu'$ is the reduced mass of the pair of daughter complexes (see
Tab.~\ref{tab:2-decomposition}). The binding energy can be therefore written
as
\begin{equation}
  E = U (r_0) + \frac{\hbar \omega}{2} = U (r_0) + \frac{\hbar}{2}
  \sqrt{\frac{U'' (r_0)}{\mu'}} .
\end{equation}
Let us now assume that the hole mass is infinite, while the electron mass is
finite. We can write the potential in Rydberg units as $U (r) =\mathcal{U}
\left( r / a_{\text{B}} \right) \tmop{Ry}$, where $\mathcal{U}$ is a
dimensionless quantity. Therefore:
\begin{eqnarray}
  E & = & \mathcal{U} (r_0) \tmop{Ry} + \frac{\hbar}{2}
  \sqrt{\frac{\mathcal{U}'' \left( r_0 / a_{\text{B}} \right) \tmop{Ry}}{\mu'
  a_{\text{B}}^2}} \\
  & = & \left[ \mathcal{U} (r_0) \tmop{Ry} \frac{4 \pi \varepsilon_0 \left(
  r_{\ast} + a_{\text{B}}^{\ast} \right)}{e^2} + \sqrt{\frac{\hbar^2
  \mathcal{U}'' \left( r_0 / a_{\text{B}} \right) \tmop{Ry}}{4 \mu'
  a_{\text{B}}^2} \left( \frac{4 \pi \varepsilon_0 \left( r_{\ast} +
  a_{\text{B}}^{\ast} \right)}{e^2} \right)^2} \right]  \frac{e^2}{4 \pi
  \varepsilon_0 \left( r_{\ast} + a_{\text{B}}^{\ast} \right)} . \nonumber
\end{eqnarray}
Since Rydberg energy is proportional to electron mass, the electron (exciton)
Bohr radius is inversely proportional to the electron (exciton reduced) mass,
and $m_e / \mu' \approx m_e / m_h$ in the limit of infinite hole mass, then
the correction to the energy is proportional to
\begin{equation}
  E - U (r_0) \sim \sqrt{\frac{m_e}{m_h}}  \frac{e^2}{4 \pi \varepsilon_0
  \left( r_{\ast} + a_{\text{B}}^{\ast} \right)} .
\end{equation}
\chapter{Details of quantum Monte Carlo simulations}\label{ch:2-method}

\section{Introduction to Monte Carlo methods}

The problem of solving a many-body quantum system reliably and accurately is a
long standing problem of physics. Analytical solutions only exist in some
special cases, or in approximate conditions. With the advancement of the
world's computational power, numerical simulation seems to be the method of
choice for numerous studies. However, many computational methods, such as
exact diagonalisation, suffer from not being truly parallelisable and thus do
not use the full potential of now-common parallel machines and supercomputers.

Monte Carlo methods are algorithms based on repeated random updates of the
system {\cite{Balescu}}. The broad idea is that we can generate many
realisations of the same system using a~specific probability distribution. If
the ``randomness'' in the system is tailored 
correctly\footnote{\setstretch{1.66}\textit{I.e.} the
balance equation is satisfied.}, then by the law of large numbers, we can
calculate the physical observables by taking an average over all generated
copies of the system. The key aspect of most of the Monte Carlo methods is
that because the final results are calculated as averages, instead of
performing one large simulation, one can do separate uncorrelated simulations
at the same time to arrive at the answer. Thus, the methods are almost
perfectly parallelisable.

Here, we are interested in the Monte Carlo methods that deal with quantum
many-body systems and bear a general name of quantum Monte Carlo (QMC). These
methods will rely heavily on evaluations of multi-dimensional integrals, which
can be efficiently done using Monte Carlo 
integration\footnote{\setstretch{1.66}Actually, for
any multi-dimensional integral, Monte Carlo is always a method of choice,
unless one can simplify the integral to four or less dimensions.}. In this
chapter we briefly summarise the main ideas of two quantum Monte Carlo
methods, variational Monte Carlo and diffusion Monte Carlo. More detailed
explanation can be found, \tmtextit{e.g.}, in Ref.~{\cite{Hammond}}. All our
calculation were performed using the {\tmname{casino}} code
{\cite{Needs2010}}.

\section{Variational Monte Carlo}

The variational Monte Carlo (VMC) method relies mainly on the variational
principle of quantum mechanics {\cite{Zettili2009}}. If we define the
following functional,
\begin{equation}
  E [\psi (\vec{R})] = \frac{\int \psi^{\ast} (\vec{R}) H \psi (\vec{R})
  \mathd \vec{R}}{\int \psi^{\ast} (\vec{R}) \psi (\vec{R}) \mathd \vec{R}},
  \label{eq:2-varprinc}
\end{equation}
where $\psi (\vec{R})$ is a wave function of the system, and $H$ is its
Hamiltonian, then the variational principle states that $E [\psi (\vec{R})]$
reaches its global minimum only if $\psi (\vec{R})$ is exactly equal to the
ground state of the system, and then $E [\psi (\vec{R})]$ is equal to the
ground state energy. Otherwise, $E [\psi (\vec{R})]$ can only have values
higher than the ground state energy.

The total energy of the system can be evaluated using the equation
\begin{equation}
  E = \langle H \rangle = \frac{\int E_L (\vec{R}) | \psi_{\tmop{trial}}
  (\vec{R}) |^2 \mathd \vec{R}}{\int | \psi_{\tmop{trial}} (\vec{R}) |^2
  \mathd \vec{R}},
\end{equation}
where $\psi_{\tmop{trial}} (\vec{R})$ is a trial wave function -- a best guess
of the wave function of the system we want to simulate, and $E_L (\vec{R}) =
\frac{1}{\psi_{\tmop{trial}} (\vec{R})} H (\vec{R}) \psi_{\tmop{trial}}
(\vec{R})$ is the local energy. One configuration $\vec{R}$ is a~set of
positions of all particles: $\vec{R} = \{ \vec{r}_1, \ldots,
\vec{r}_{N_{\tmop{particles}}} \}$. In order to calculate this expression, we
must generate configurations, so that they are distributed according to $|
\psi_{\tmop{trial}} |^2$. Then, the total energy of the system is simply an
average of local energies of every configuration.

In order to generate configurations with a given distribution, Metropolis
updates {\cite{Metropolis1953}} of the configuration of the system are
performed. Every step in the simulation consists of proposing a move of one of
the particles and randomly accepting or rejecting this move. The probability
of acceptance of a move from $\vec{R}$ to $\vec{R}'$ is given by
\begin{equation}
  p (\vec{R}' \leftarrow \vec{R}) = \min \left\{ 1, \frac{|
  \psi_{\tmop{trial}} (\vec{R}') |^2}{| \psi_{\tmop{trial}} (\vec{R}) |^2}
  \right\} .
\end{equation}
Here we use a symmetric transition probability density. After the Metropolis
algorithm reaches an equilibrium, we can start accumulating the results. The
local energy does not need to be evaluated at every step. Additionally, the
configurations may be serially correlated with each other, so we need to wait
a number of moves between accumulation steps, so that the measured
configurations are independent.

The variance of the (Gaussian) Metropolis transition probability density is
referred to as the VMC time step $\tau_{\tmop{VMC}}$ and has to be carefully
chosen. If $\tau_{\tmop{VMC}}$ is too large, then many moves will be rejected
and thus the generated configurations will be greatly correlated. On the other
hand, if $\tau_{\tmop{VMC}}$ is too small, then the new configuration will not
be much different from the old one, which again leads to unwanted correlation
in configuration space. The time step therefore must be optimised, so that we
avoid any serial correlation: usually the value of $\tau_{\tmop{VMC}}$ is
chosen so that the acceptance probability is equal to 50\%.

\subsection{Wave function optimisation}

The accuracy of calculating the total energy relies on a good choice of
$\psi_{\tmop{trial}}$, the trial wave function. This trial wave function can
be either imported, for example, from an electronic structure calculation such
as density functional theory (DFT), or one can devise the form of
$\psi_{\tmop{trial}}$. Any knowledge of the system in question is crucial in
developing an applicable formula for the trial wave function. After having a
guess at $\psi_{\tmop{trial}}$, one needs then to optimise the unknown
parameters incorporated in the form of $\psi_{\tmop{trial}}$, in the spirit of
the variational principle from Eq.~(\ref{eq:2-varprinc}). Here, evaluation of
the energy expectation of the system using the Metropolis algorithm is
especially useful, due to its low computational requirements. One can quickly
generate configurations using the trial wave function, calculate their energy,
propose a change in the parameters in the trial wave function, generate new
configurations and calculate the new energy of the system for comparison with
the old one. This optimisation method is called \tmtextit{energy minimisation}
{\cite{Umrigar2007}}. To minimise the energy we diagonalise the Hamiltonian
matrix in the basis defined by the initial wave function and its derivatives
with respect to the parameter values. The matrix elements are calculated using
VMC.

In some cases however, we need a more robust method, as energy minimisation
may converge too slowly in our optimisation, or may have reached a local
minimum. \tmtextit{Variance minimisation} {\cite{Umrigar1988,Neil2005}} is
preferred in this case, where we minimise the variance of the energy,
\begin{equation}
  \sigma^2 = \frac{\int | \psi_{\tmop{trial}} (\vec{R}) |^2  | E_L (\vec{R}) -
  E |^2 \mathd \vec{R}}{\int | \psi_{\tmop{trial}} (\vec{R}) |^2 \mathd
  \vec{R}} .
\end{equation}
Here the idea is that if $\psi_{\tmop{trial}}$ was exact, then any
configuration would have the same local energy, and therefore the variance of
the energy would be zero.

Generally in all our calculations, we have firstly used variance minimisation
in order to make sure that we reach the vicinity of the global minimum of the
energy. Secondly, energy minimisation was used to further pinpoint the values
of optimisable parameters. In cases with a large number of particles in
a~complex such as a~donor-bound biexciton, variance minimisation was found to
be sometimes unreliable (the energy diverged during optimisation), and only
energy minimisation could be used.

\subsection{Numerical errors}\label{ch:2-reblock}

The VMC method suffers from two prominent sources of error. The statistical
error comes from the fact that we accumulate only a finite number of
configurations $N_{\tmop{config}}$, but can be lowered by simply increasing
$N_{\tmop{config}}$. One can easily see that this error goes as $\sim 1 /
\sqrt{N_{\tmop{config}}}$, just by using the error propagation formula for
averages. Due to configurations being correlated, there is also a problem with
estimating the value of the standard error for the energy mean. In order to
remove this issue, we perform a~reblocking analysis: we gather the results in
blocks and average them separately in every block. For too small blocks, the
standard error estimate is too small due to serial correlation. If the block
is big enough (the block length is bigger than the correlation length), then
the averages of different blocks are uncorrelated and thus the variances are
unbiased. Therefore, by plotting the error estimates against the block size,
we should see a plateau for high enough block sizes. Of course, if the block
size is too big, then the statistical error in the standard error estimate is
large since there are not enough blocks for accurate estimation. A good choice
of block size is obtained after reaching the plateau, but before the error in
the estimate of the standard error becomes too large.

Secondly, we have an error coming from the choice of the trial wave function,
which is usually the major issue. Our chosen trial wave function is presented
in Chapter~\ref{ch:2-trialWF}.

\section{Diffusion Monte Carlo}

Let us start by recalling the Schr{\"o}dinger equation for a general many-body
system, that includes a~constant energy shift, $E_0$,
\begin{equation}
  i \hbar \frac{\partial}{\partial t} \psi (\vec{R}, t) = - \sum_i
  \frac{\hbar^2}{2 m_i} \nabla^2 \psi (\vec{R}, t) + (V (\vec{R}) - E_0) \psi
  (\vec{R}, t) .
\end{equation}
We can rewrite it using the imaginary time, $\tau = it$. The Schr{\"o}dinger
equation now reads:
\begin{equation}
  \hbar \frac{\partial}{\partial t} \psi (\vec{R}, \tau) = \begin{array}{l}
    \sum^{}_i \frac{\hbar^2}{2 m_i} \nabla^2 \psi (\vec{R}, \tau)
  \end{array} + \begin{array}{l}
    (E_0 - V (\vec{R})) \psi (\vec{R}, \tau)
  \end{array} . \label{eq:2-imagtime}
\end{equation}
Now, if we omit the second term of the right-hand side of
Eq.~(\ref{eq:2-imagtime}), we get the following equation:
\begin{equation}
  \frac{\partial}{\partial t} \psi (\vec{R}, \tau) = \begin{array}{l}
    \sum^{}_i D_i \nabla^2 \psi (\vec{R}, \tau)
  \end{array}, \label{eq:2-diffusion}
\end{equation}
where $D_i = \hbar / 2 m_i$. This is a diffusion equation with diffusion
constant $D_i$ and $\psi (\vec{r}, \tau)$ the density of the diffusing
particles. On the other hand, if we omit the first term on the right-hand side
of Eq.~(\ref{eq:2-imagtime}), the equation becomes
\begin{equation}
  \frac{\partial}{\partial t} \psi (\vec{R}, \tau) = \begin{array}{l}
    k (\vec{R}) \psi (\vec{R}, \tau)
  \end{array} . \label{eq:2-branching}
\end{equation}
This is a first order rate equation or branching process with rate constant $k
(\vec{R})$ equal to $k (\vec{R}) = (E_0 - V (\vec{R})) / \hbar$. Notice that
if the reference energy, $E_0$, is exactly equal to the ground state energy of
the system, then $\psi (\vec{R}, \tau)$ is constant, or the density of the
diffusing particles remains the same. Additionally, if $\tau \rightarrow
\infty$, then the density of diffusing particles becomes independent of $\tau$
if $\psi = \psi_0$, where $\psi_0$ is the ground-state energy eigenfunction.

Therefore, in diffusion Monte Carlo (DMC) method, two processes are used (see
Fig.~\ref{fig:2-dmc}): a~diffusion process, in which we diffuse the
configurations according to Eq.~(\ref{eq:2-diffusion}), and a branching
process, in which we randomly duplicate or destroy configurations according to
the rate equation (\ref{eq:2-branching}). The main idea is that the average
over the distribution of configurations after the equilibration of the
diffusion process, will produce a distribution that will reflect exactly the
ground state wave function.

\begin{figure}[h]
  \resizebox{10cm}{!}{\includegraphics{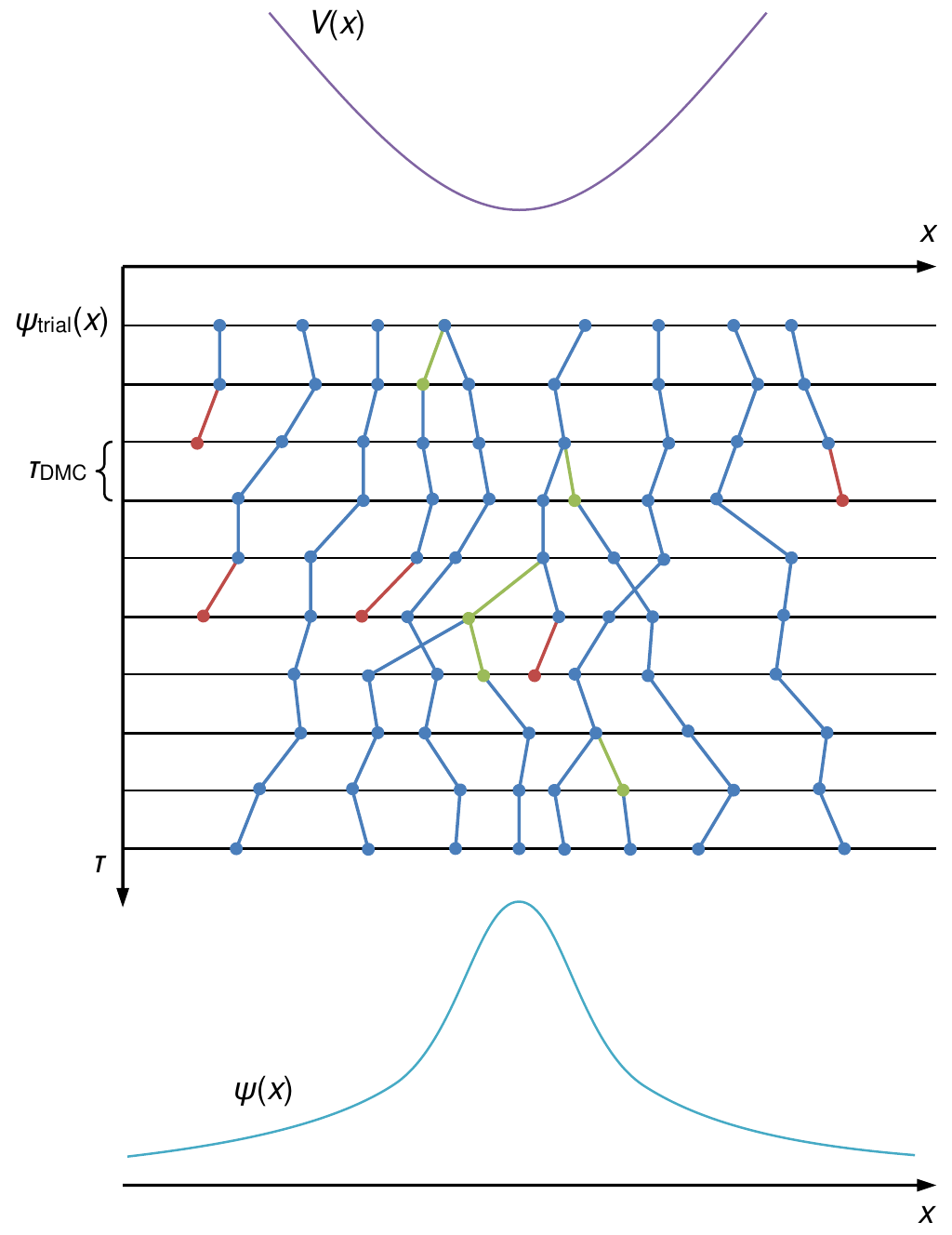}}{\hspace{2.5em}}
  \caption{Schematic view of the DMC method in a one-parameter system (adapted
  from Ref.~{\cite{Foulkes2001}}): the trial wave function
  $\psi_{\tmop{trial}}$ is used to generate the initial population, then the
  population is allowed to diffuse (blue) and the branching process will
  either create (green) or destroy (red) the configurations. Averaging over
  the distribution of configurations gives us the (statistically exact) ground
  state energy.\label{fig:2-dmc}}
\end{figure}

However, if the potential is very large and negative, the branching process
from Eq.~(\ref{eq:2-branching}) will induce a large fluctuation in the number
of configurations, and therefore a large uncertainty in $\psi (\vec{R}, \tau)$
and in estimating the energy. A solution to this problem is to introduce
importance sampling, in which we define an importance-sampled wave function
(or a mixed wave function),
\begin{equation}
  f (\vec{R}, \tau) = \psi (\vec{R}, \tau) \psi_{\tmop{trial}} (\vec{R}, \tau)
  .
\end{equation}
Then, the Schr{\"o}dinger equation (\ref{eq:2-imagtime}) becomes:
\begin{equation}
  \hbar \frac{\partial f}{\partial \tau} = \sum_i \frac{\hbar^2}{2 m_i}
  [\nabla^2_i f - \nabla_i \cdot (\vec{F}_i (\vec{R}) f)] + (E_0 - E_L
  (\vec{R})) f,
\end{equation}
where $\vec{F}_i (\vec{R}) = \psi_{\tmop{trial}}^{- 1} \nabla_i
\psi_{\tmop{trial}}$ is the drift velocity. Now, if the trial wave function is
close to the ground state, the local energy $E_L (\vec{R})$ is a smooth,
well-behaved function, and the branching term will not be too large to cause
huge fluctuations in the number of configurations {\cite{Foulkes2001}}.
Therefore, for the DMC method to work well, we need a good initial guess of
the trial wave function. This will be realised by optimising the trial wave
function using VMC.

In order to determine the probabilities of the diffusion and branching
processes, we write the Schr{\"o}dinger equation using the Green's function:
\begin{equation}
  f (\vec{R}', \tau + \tau_{\tmop{DMC}}) = \int f (\vec{R}, \tau) G (\vec{R}'
  \leftarrow \vec{R}, \tau_{\tmop{DMC}}) \mathd \vec{R},
\end{equation}
where $\tau_{\tmop{DMC}}$ is the DMC time step, and $G (\vec{R}' \leftarrow
\vec{R}, \tau_{\tmop{DMC}})$ is the Green's function. Its approximate form can
be chosen to be {\cite{Vrbik1986}}:
\begin{equation}
  G (\vec{R}' \leftarrow \vec{R}, \tau_{\tmop{DMC}}) = G_B (\vec{R}'
  \leftarrow \vec{R}, \tau_{\tmop{DMC}}) G_D (\vec{R}' \leftarrow \vec{R},
  \tau_{\tmop{DMC}}) + O (\tau_{\tmop{DMC}}^2), \label{eq:2-greensfunc}
\end{equation}
where $G_B$ and $G_D$ are Green's functions of the branching and diffusion
processes respectively,
\begin{eqnarray}
  G_B (\vec{R}' \leftarrow \vec{R}, \tau_{\tmop{DMC}}) & = & \exp \left( -
  \frac{\tau_{\tmop{DMC}}}{2 \hbar} (E_L (\vec{R}) + E_L (\vec{R}') - 2 E_0)
  \right), \\
  G_D (\vec{R}' \leftarrow \vec{R}, \tau_{\tmop{DMC}}) & = & \frac{1}{(2 \pi
  \tau_{\tmop{DMC}})^{dN_{\tmop{particles}} / 2}} \exp \left( - \frac{|
  \vec{R} - \vec{R}' - \tau_{\tmop{DMC}}  \vec{F} (\vec{R}') |^2}{2
  \tau_{\tmop{DMC}}} \right), 
\end{eqnarray}
where $d$ is the dimensionality of the system. $G_B$ and $G_D$ define the rate
of duplicating or destroying the configurations and the rate of the diffusion
process. The approximate form of $G$ must always be chosen so that $G$ becomes
exact as $\tau_{\tmop{DMC}} \rightarrow 0$ and the error in $G$ falls off
faster than linear in $\tau_{\tmop{DMC}}$.

\subsection{Statistical, time step and population
errors}\label{ch:2-dmcerrors}

Statistical error is introduced in every Monte Carlo method and comes from the
central limit theorem. It is proportional to $1 / \sqrt{N_{\tmop{accum}}}$,
where $N_{\tmop{accum}}$ is the number of accumulation steps that occur after
the equilibration. To accurately estimate the standard error in the mean of
the results, we use the reblocking method described in
Chapter~\ref{ch:2-reblock}.

In order to make sure that the population of configurations does not become
too small or too large, the reference energy $E_0$ will be varied during the
calculation -- this process is called population control. However, this will
introduce a small bias in the configurations, which will slightly promote
configurations with higher energies. This error is inversely proportional to
the population\footnote{\setstretch{1.66}The error in the energy, will be $1 /
\sqrt{N_{\tmop{population}}}$ due to central limit theorem, and the weights of
each configuration are inversely proportional to the local energy of the
configuration, and thus the total error in weighting will also have errors of
$1 / \sqrt{N_{\tmop{population}}}$.}, and thus is it necessary to extrapolate
the results to infinite population.

There is one additional error, introduced in Eq.~(\ref{eq:2-greensfunc}),
which is due to non-zero DMC time step. Similarly, we have to extract the
zero-time-step value of the energy, if we want to reduce this error. The
easiest way to extrapolate our results to zero time step and infinite
population at the same time is to perform two DMC calculations for two
different time steps with population varied in inverse proportion to time step
and then simultaneously extract the zero-time-step and
infinite\mbox{-}population value.

To assess the relevant time step $\tau_{\tmop{DMC}}$ and the number of
equilibration steps $N_{\tmop{equil}}$, we keep in mind two characteristic
lengths in the charge carrier complex system. One is the size of the exciton
in the logarithmic limit of $r_{\ast} \rightarrow \infty$, which is equal to:
\begin{equation}
  r_0 = \sqrt{\frac{\hbar^2 r_{\ast}}{2 e^2 \mu}},
\end{equation}
and the second one is the excitonic Bohr radius from
Eq.~(\ref{eq:2-excBohrradius}), which describes the size of the exciton in the
Coulomb limit.

On the other hand, during the DMC calculation, we can consider a root mean
squared diffusive distance $d_{\tmop{rms}}$, which is defined as:
\begin{equation}
  d_{\tmop{rms}} = \sqrt{\frac{2 N_{\tmop{DMC}} \tau_{\tmop{DMC}}}{m}},
\end{equation}
for a particle with mass $m$ that was diffused through $N_{\tmop{DMC}}$ steps
during the calculation. For the time step to be small enough, this distance
for one single step must be smaller than the smallest length scale in the
system. Similarly, the number of equilibration steps must be chosen so that
the diffusive distance after the equilibration procedure is always larger than
the longest length scale.

Therefore, one can use the following inequalities to determine the time step
and the number of equilibration steps:
\begin{eqnarray}
  \sqrt{\frac{2 \tau_{\tmop{DMC}}}{m_{\tmop{lightest}}}} & \ll &
  L_{\tmop{smallest}}, \label{eq:2-dtdmc} \\
  \sqrt{\frac{2 N_{\tmop{equil}} \tau_{\tmop{DMC}}}{m_{\tmop{heaviest}}}} &
  \gg & L_{\tmop{longest}}, \label{eq:2-Nequil} 
\end{eqnarray}
where $m_{\tmop{lightest}}$ and $m_{\tmop{heaviest}}$ are respectively the
effective masses of the lighter and heavier of the two species of particles in
the system (either $m_e$ or $m_h$), and $L_{\tmop{smallest}}$ and
$L_{\tmop{longest}}$ are the smallest and longest length scales in the system
(either $r_0$ or $a_{\text{B}}^{\ast}$).

The DMC method also experiences errors due to the fixed node approximation.
However, in our study the particles are distinguishable, which means there are
no nodes in the ground\mbox{-}state wave function of this system.

After removing both non-zero time-step and finite population errors, the only
errors that are left are the statistical ones. Therefore, we can say that our
DMC energy is statistically exact, \tmtextit{i.e.} one reaches the true ground
state energy of the system in the limit of infinite accumulation steps.

\subsection{Observables}\label{ch:2-extrapolatedestimation}

Finally, we show how to calculate the energy and other observables in the DMC
method. The ground state energy can be written as a mixed-estimator:
\begin{equation}
  E = \frac{E \langle \psi | \psi_{\tmop{trial}} \rangle}{\langle \psi |
  \psi_{\tmop{trial}} \rangle} = \frac{\langle \psi | H | \psi_{\tmop{trial}}
  \rangle}{\langle \psi | \psi_{\tmop{trial}} \rangle} = \frac{\int fE_L
  \mathd \vec{R}}{\int f \mathd \vec{R}},
\end{equation}
assuming the DMC wave function $\psi$ is exactly equal to the ground state.

For observables that do not commute with the Hamiltonian, we can use two
averages, the VMC and the DMC mixed average:
\begin{eqnarray}
  \langle A \rangle_{\tmop{VMC}} & = & \frac{\langle \psi_{\tmop{trial}} | A |
  \psi_{\tmop{trial}} \rangle}{\langle \psi_{\tmop{trial}} |
  \psi_{\tmop{trial}} \rangle} = \langle A \rangle + A' [\psi -
  \psi_{\tmop{trial}}] + O [(\psi - \psi_{\tmop{trial}})^2], \\
  \langle A \rangle_{\tmop{DMC}} & = & \frac{\langle \psi_{\tmop{trial}} | A |
  \psi \rangle}{\langle \psi_{\tmop{trial}} | \psi \rangle} = \langle A
  \rangle + 2 A' [\psi - \psi_{\tmop{trial}}] + O [(\psi -
  \psi_{\tmop{trial}})^2], 
\end{eqnarray}
where $A' [\psi]$ is a functional that occurs during the expansions in $(\psi
- \psi_{\tmop{trial}})$. Comparing both expansions, we extract the following
equation for the extrapolated estimator {\cite{Ceperley1979}},
\begin{equation}
  \langle A \rangle = 2 \langle A \rangle_{\tmop{DMC}} - \langle A
  \rangle_{\tmop{VMC}} + O [(\psi - \psi_{\tmop{trial}})^2],
\end{equation}
which is correct up to the second order in $(\psi - \psi_{\tmop{trial}})$.

\section{Trial wave function}\label{ch:2-trialWF}

In this study, the only complexes considered consist of distinguishable
particles, \tmtextit{i.e.} there is always a~quantum number that is different
for each particle. Therefore, the wave function of the complex will always be
symmetric under the exchange of two particles and will not have any
antisymmetric part or nodes. The trial wave function can therefore be written
in the Jastrow form:
\begin{equation}
  \psi_{\tmop{trial}} (\vec{R}) = \exp (J (\vec{R})),
\end{equation}
where $J (\vec{R})$ is a Jastrow factor, in a form proposed in
Ref.~{\cite{Neil2004}},
\begin{eqnarray}
  J (\{ \vec{r}_i \}, \{ \vec{r}_I \}) & = & \sum_{i = 1}^{N - 1} \sum_{j = i
  + 1}^N u (r_{i j}) + \sum_{i = 1}^N \sum_{I = 1}^{N_{\tmop{ions}}} \chi_I
  (r_{i I}) \\
  & + & \sum_{i = 1}^{N - 2} \sum_{j = i + 1}^{N - 1} \sum_{k = j + 1}^N h
  (r_{i j}, r_{i k}, r_{j k}) \nonumber\\
  & + & \sum_{i = 1}^{N - 1} \sum_{j = i + 1}^N \sum_{I =
  1}^{N_{\tmop{ions}}} f_I (r_{i I}, r_{j I}, r_{i j}) \nonumber\\
  & + & \sum_{i = 1}^{N + N_{\tmop{ions}} - 1} \sum_{j = i + 1}^{N +
  N_{\tmop{ions}}} u_{\text{EX2D}} (r_{i j}), \nonumber
\end{eqnarray}
with $N$ being the number of fermions in the system and $N_{\tmop{ions}}$
being the number of ions, or fixed particles (they only enter the
Schr{\"o}dinger equation through the potential term). Terms in the Jastrow
factor are:
\begin{itemize}
  \item $u$ term, describing correlation between two fermions,
  
  \item $h$ term, describing correlation between three fermions,
  
  \item $\chi$ term, describing correlation between an ion (fixed particle)
  and one fermion,
  
  \item $f$ term, describing correlation between an ion and two fermions,
  
  \item $u_{\text{EX2D}}$ term, that imposes cusp conditions relevant for the
  interaction in 2D semiconductors.
\end{itemize}
Terms $u, h, \chi$ and $f$ have the form {\cite{Neil2004,Lopez2012}} of a
general polynomial expansion in $r$, which goes to zero at a~cutoff length
specific to the term used. These truncated polynomials are continuous and have
continuous first and second derivatives even at the cutoff point, ensuring
that the gradient of the term and the local energy $E_L$ are both continuous.
Forms of the $u_{\text{EX2D}}$ term will be introduced in a subsection below.

Of course, for a given complex, only the relevant terms will be used. For
example, for a~trion, $u, h$ and $u_{\text{EX2D}}$ terms will be included in
$\psi_{\tmop{trial}}$. To additionally simplify the problem, we assume that
there is no spin dependence in the terms, \tmtextit{e.g.} the e$^{\uparrow}$-h
interaction in a trion will be the same as the e$^{\downarrow}$-h interaction;
the e$^{\uparrow}$-e$^{\downarrow}$-h$^{\uparrow}$ interaction in a biexciton
will be the same as the e$^{\uparrow}$-e$^{\downarrow}$-h$^{\downarrow}$
interaction.

\subsection{Kato cusp conditions}

The Kato cusp conditions {\cite{Kato1957,Pack1966}} are conditions that the
wave function must satisfy in order to make sure that the local energy is
nondivergent at zero distance, when two charges coalesce. The local energy can
be written as:
\begin{equation}
  E_L = \frac{H \psi_{\tmop{trial}}}{\psi_{\tmop{trial}}},
\end{equation}
where $H$ is the Hamiltonian of the system. For the system of charge
complexes, we simply require that
\[ \lim_{r \rightarrow 0} \left[ - \frac{\hbar^2 (m_1 + m_2)}{2 m_1 m_2
   \psi_{\tmop{trial}}} \left( \frac{\partial^2 \psi_{\tmop{trial}}}{\partial
   r^2} + \frac{1}{r} \frac{\partial \psi_{\tmop{trial}}}{\partial r} \right)
   + \frac{q_1 q_2}{4 \pi \varepsilon_0 r_{\ast}}  \frac{\pi}{2} \left( H_0
   \left( \frac{r}{r_{\ast}} \right) - Y_0 \left( \frac{r}{r_{\ast}} \right)
   \right) \right] = \tmop{const} . \]
\begin{equation}
  \ 
\end{equation}
Notice that in general, $\underset{r \rightarrow 0}{\lim} E_L$ may not exist.

\subsection{Devising the $u_{\text{EX2D}}$ term}

The $u_{\text{EX2D}}$ term needs to satisfy two conditions: firstly, we need
to satisfy the Kato cusp conditions for small $r$, and secondly, the wave
function must fall to zero at large $r$.

Initially the following form of the $u_{\text{EX2D}}$ term was used:
\begin{eqnarray}
  u_{\text{EX2D}}^{\tmop{eh}} (r) & = & \lambda_{\tmop{eh}} r^2 \log (r) e^{-
  c_1 r^2} - c_2 r (1 - e^{- c_1 r^2}),  \label{5-trialwf1-1}\\
  u_{\text{EX2D}}^{\tmop{ee}} (r) & = & \lambda_{\tmop{ee}} r^2 \log (r) e^{-
  c_3 r^2},  \label{5-trialwf1-2}
\end{eqnarray}
where $c_1, c_2$ and $c_3$ are optimisable parameters, and
$\lambda_{\tmop{eh}}$ and $\lambda_{\tmop{ee}}$ are fixed by the Kato cusp
conditions:
\begin{equation}
  \lambda_{\tmop{eh}} = \frac{e^2 \mu}{2 (4 \pi \varepsilon_0) \hbar^2
  r_{\ast}}, \qquad \lambda_{\tmop{ee}} = - \frac{e^2 m_e}{4 (4 \pi
  \varepsilon_0) \hbar^2 r_{\ast}} . \label{eq:2-katocuspparam}
\end{equation}
Also, in order for $u_{\text{EX2D}}^{\tmop{eh}} (r)$ not to diverge, we must
have: $c_1 > 0, c_2 > 0$ and $c_3 > 0$. This form was used for exciton, trion
and donor-bound exciton complexes.

A second form was devised:
\begin{equation}
  u_{\text{EX2D}} (r) = \frac{\lambda r^2 \log r + c_1 r^2 + c_2 r^3}{1 + c_3
  r^2}, \label{eq:5-trialwf2}
\end{equation}
where $c_1, c_2$ and $c_3$ are optimisable parameters, and $\lambda$ is fixed
by the Kato cusp conditions similarly to Eq.~(\ref{eq:2-katocuspparam}):
\begin{equation}
  \lambda = - \frac{q_1 q_2 m_1 m_2}{2 (4 \pi \varepsilon_0) \hbar^2 r_{\ast}
  (m_1 + m_2)} .
\end{equation}
In order to make sure that the term does not diverge as $r_{\ast} \rightarrow
\infty$, the following conditions must be applied: $c_2 < 0$ and $c_3 > 0$.
The second form of the term was found to be much easier to optimise,
especially near the limits of extreme effective mass ratio of the complex.

In case of the purely Coulomb interaction, the following term was used:
\begin{equation}
  u_{\text{EX2D}} = \frac{\lambda r + c_1 r^2}{1 + c_2 r},
\end{equation}
where $c_1$ and $c_2$ are optimisable parameters and $\lambda$ is fixed by the
Kato cusp condition:
\begin{equation}
  \lambda = \frac{2 q_1 q_2 m_1 m_2}{4 \pi \varepsilon_0 \hbar^2} .
\end{equation}
In order to make the term non-divergent for $r_{\ast} \rightarrow \infty$, we
use $c_1 < 0$ and $c_2 > 0$.

\subsection{Kimball cusp conditions}

The Kimball cusp conditions {\cite{Kimball1973}} are analogues of the Kato
cusp conditions, but for the pair correlation function. Because the pair
correlation function is proportional to the wave function squared,
\begin{equation}
  g (r) \sim \psi_{\tmop{trial}}^2,
\end{equation}
we may easily use the Kato cusp conditions to determine the behaviour of $g
(r)$. As in the case of the wave function, we express the pair correlation in
exponential form:
\begin{equation}
  g (r) = \exp [\tilde{g} (r)] .
\end{equation}
The expansion of $\tilde{g} (r)$ near $r \rightarrow 0$ using either the first
form of the wave function from Eqs.~(\ref{5-trialwf1-1}--\ref{5-trialwf1-2})
or the second form of the wave function from Eq.~(\ref{eq:5-trialwf2}) was
calculated to be:
\begin{equation}
  \tilde{g} (r) = a_0 + 2 \lambda r^2 \log r + a_2 r^2 + a_3 r^3 + \cdots .
  \label{eq:2-kimball1}
\end{equation}
In case of the Coulomb interaction, the behaviour of $\tilde{g} (r)$ is
\begin{equation}
  \tilde{g} (r) = a_0 + 2 \lambda r + a_2 r^2 + \cdots . \label{eq:2-kimball2}
\end{equation}
\chapter{Results}\label{ch:2-results}

\section{Classification of trions and biexcitons in transition metal 
  dichalcogenides}

In monolayer TMDCs the conduction-band minimum and valence-band maximum occur
at the $K$ and $K'$ points of the hexagonal Brillouin zone. In molybdenum
dichalcogenides (MoX\tmrsub{2}), within each valley the valence-band maximum
has the same spin as the conduction-band minimum, while in tungsten
dichalcogenides (WX\tmrsub{2}) such states have opposite spins
{\cite{Kormanyos2015}}. Figure~\ref{fig:2-trionclassification}a presents
examples of how negative trions can be formed in both MoX\tmrsub{2} and
WX\tmrsub{2}, while Fig.~\ref{classification}a presents similar examples for
biexcitons.

To classify possible trionic and biexcitonic complexes, we will use the
following notation: the symbol $\text{T}_{k_3 s_3}^{k_1 s_1 k_2 s_2}$
designates a negative trion consisting of conduction band electrons in valleys
$k_1$ and $k_2$ and with spins $s_1$ and $s_2$ respectively; and of a valence
band hole in valley $k_3$ and with spin $s_3$. Similarly, $\text{XX}_{k_3 s_3
k_4 s_4}^{k_1 s_1 k_2 s_2}$ denotes a biexciton. For example, both trions in
Fig.~\ref{fig:2-trionclassification}a can be written as $\text{T}_{K
\downarrow}^{K \downarrow K' \uparrow}$, while both biexcitons shown in
Fig.~\ref{classification}a can be designated as $\text{XX}_{K \downarrow K'
\uparrow}^{K \downarrow K' \uparrow}$.

Recombination of trions and biexcitons may be prevented if there is no
electron--hole pair with the same spin -- such a complex is called a
\tmtextit{dark} trion or biexciton. Otherwise, we are dealing with a
\tmtextit{bright} complex, for which after a finite amount of time, it will
recombine to a free electron (for a trion) or to an exciton (for a biexciton).
The dark complexes may recombine only in a higher order process, which will
have a much lower rate of recombination and will result in the emission of
multiple photons -- there will be no well-defined light frequency.

The complexes in which multiple charge carriers of the same spin occupy the
same band will not be considered in this work. Fermions in the system would
not be distinguishable (\tmtextit{i.e.} would not have unique quantum
numbers), and such complexes ($e.g.$ $\text{T}_{K \downarrow}^{K \downarrow K
\downarrow}$) would be very weakly bound or not bound at
all.\footnote{\setstretch{1.66}Biexcitons with indistinguishable particles were 
analysed by
Elaheh Mostaani and they were found to be unbound for most of the $m_e / m_h$
and $r_{\ast}$ values.}

\begin{figure}[h]
  {\includegraphics{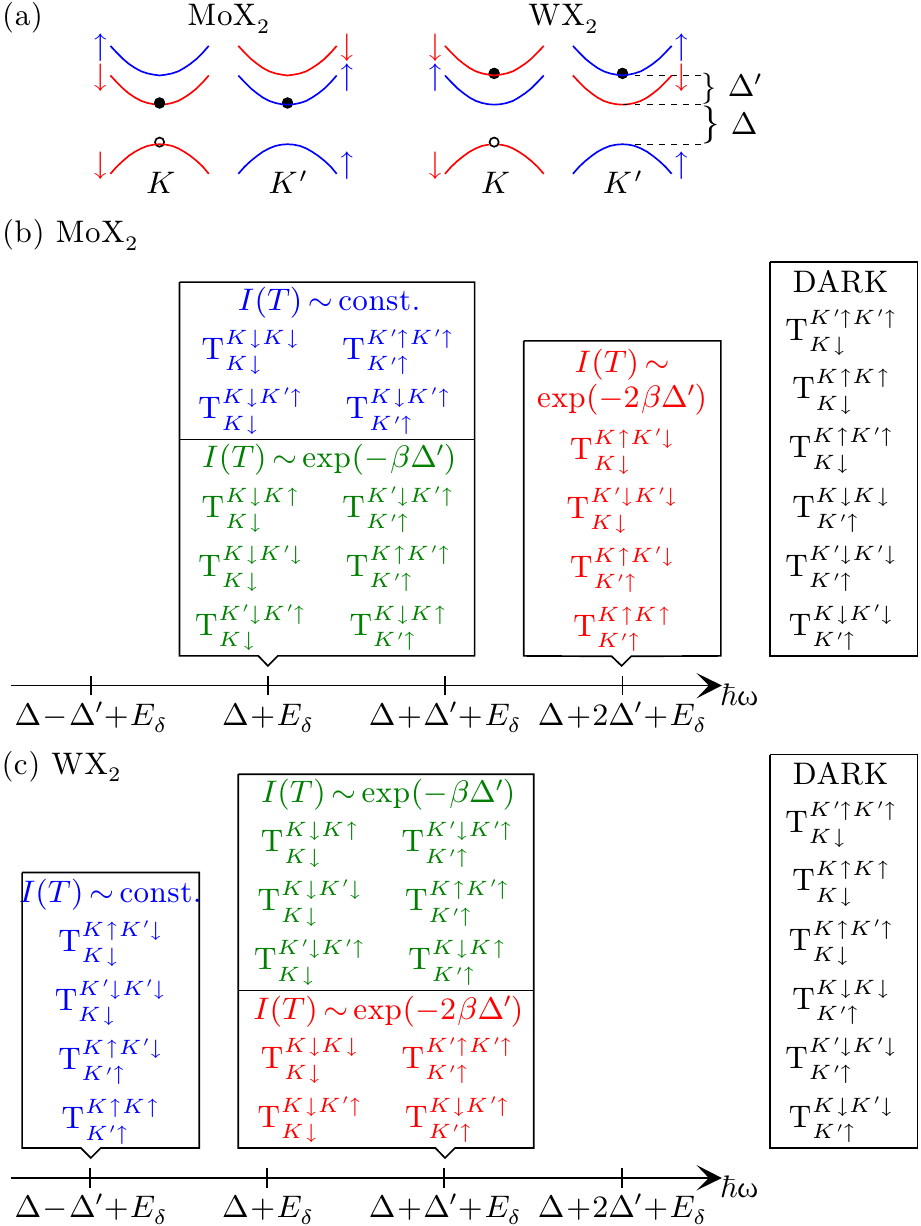}}
  \caption{Classification of trion recombination processes. Here $E_{\delta} =
  E_{\text{T}} = E_{\text{X}} - E_{\text{T}}^{\text{b}}$ is the difference
  between the exciton and trion binding
  energies.\label{fig:2-trionclassification}}
\end{figure}

Figures \ref{fig:2-trionclassification}b and \ref{fig:2-trionclassification}c
show the photon energies for bright trions in MoX\tmrsub{2} and WX\tmrsub{2},
respectively. Notice how the difference in spin polarisation of energy bands
in molybdenum- and tungsten-based materials changes the classification of
complexes. The precise photon energies depend on whether the electrons occupy
the higher- or lower-energy spin-split bands in the initial and final states.
Also, due to energy-momentum conservation, some recombination processes will
involve momentum exchange between the two electrons (\tmtextit{e.g.}, the
bright trion line for $\text{T}^{K' \downarrow K \uparrow}_{K \downarrow}$ in
MoX\tmrsub{$2$} corresponds to a final state in which there is a~single
spin-up electron in the $K'$ valley).

\begin{figure}[h!]
  {\includegraphics{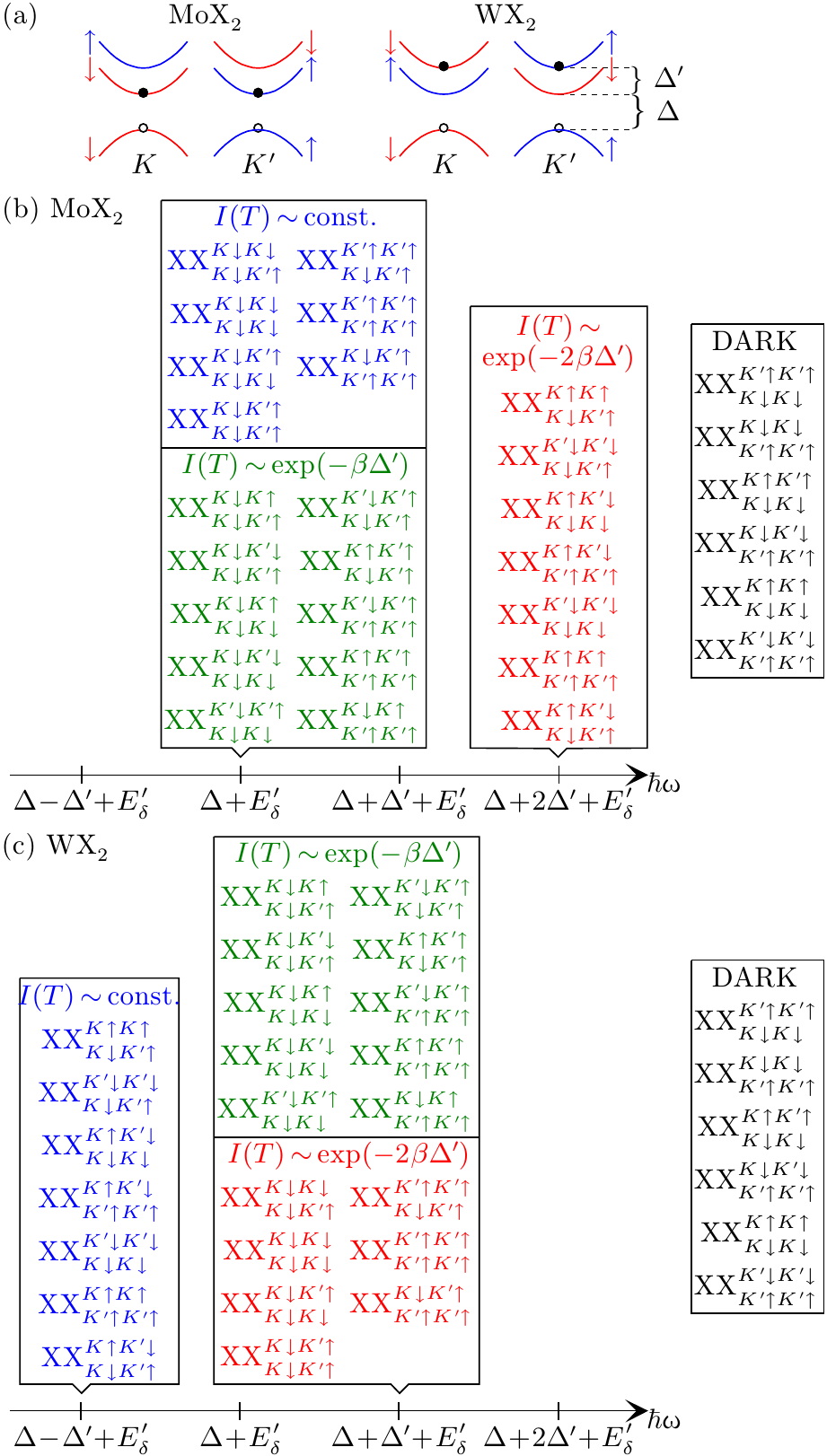}}
  \caption{(a) Difference in polarisation in molybdenum- and tungsten-based
  dichalcogenides in biexciton formation. (b,c) Classification of biexciton
  recombination processes in $\tmop{MoX}_2$ and $\tmop{WX}_2$. $E_{\delta}' =
  E_{\tmop{XX}} - E_{\text{X}} = E_{\text{X}} - E_{\text{XX}}^{\text{b}}$ is
  the difference between the binding energies of exciton and
  biexciton.\label{classification}}
\end{figure}

Furthermore, the intensity of a bright trion line depends on the thermal
occupancy of the initial state. This intensity has the following temperature
dependence,
\begin{equation}
  I (T) \sim \left\{ \begin{array}{ll}
    \tmcolor{blue}{\tmop{const} .} & \text{for no electrons in the upper
    spin-splitting conduction band,}\\
    {\color[HTML]{008000}e^{- \beta \Delta'}} & \text{for one electron in the
    upper spin-splitting conduction band},\\
    \tmcolor{red}{e^{- 2 \beta \Delta'}} & \text{for two electrons in the
    upper spin-splitting conduction band,}
  \end{array} \right.
\end{equation}
where $\beta = 1 / \left( k_{\text{B}} T \right)$, $k_{\text{B}}$ is the
Boltzmann's constant, $T$ is the temperature, and $\Delta'$ is the spin-orbit
induced splitting in the conduction band. For example, we expect the intensity
of the photoemission line for $\text{T}^{K' \downarrow K \uparrow}_{K
\downarrow}$ in MoS\tmrsub{2} at low temperature to be much lower than that of
$\text{T}^{K \downarrow K' \uparrow}_{K \downarrow}$, due to the thermal
suppression coming from the $e^{- 2 \beta \Delta'}$ factor.

Similarly, Figures~\ref{classification}b and \ref{classification}c present a
classification of biexcitons in $\tmop{MoX}_2$ and $\tmop{WX}_2$ with respect
to the recombination energy and the intensity of the emitted photon.

Finally, judging from this classification, one can predict possible lines and
their intensities on the absorption spectrum of a transition-metal
dichalcogenide monolayer. In a~photoabsorption or photoluminescence
experiment, we expect to see energies attributed to different kinds of trions
and biexcitons and emission lines of varying intensity, as presented in
Fig.~\ref{spectrum}.

\begin{figure}[h]
  {\includegraphics{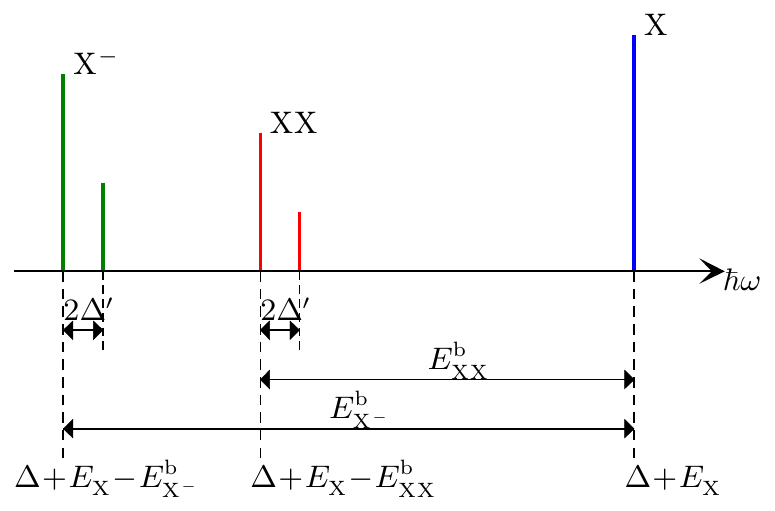}}
  \caption{Expected photoemission/photoabsorption spectrum showing lines for
  various complexes in $\tmop{MoX}_2$. For WX$_2$, the recombination energies
  will be slightly different ($\Delta \rightarrow \Delta -
  \Delta'$).\label{spectrum}}
\end{figure}

\section{Numerical setup}

Every charge carrier complex is defined by providing two parameters: the
effective mass ratio $m_e / m_h$, and the parameter $r_{\ast}$ related to the
in-plane susceptibility. We chose the following set of possible values of
$r_{\ast}$, which would cover different scales between the Coulomb limit
($r_{\ast} \rightarrow 0$) and the logarithmic limit ($r_{\ast} \rightarrow
\infty$):
\begin{equation}
  r_{\ast} / a_{\text{B}} \in \{ 0.1, 0.2, 0.5, 1.0, 2.0, 4.0, 6.0, 8.0 \},
\end{equation}
where $a_B = 4 \pi \varepsilon_0 \hbar^2 / (m_e^{\ast} e^2)$ is the electron
Bohr radius, with $m_e^{\ast}$ being the bare electron mass ($m_e^{\ast}
\approx 9.1 \cdot 10^{- 31}$ kg, as opposed to the effective electron
mass)\footnote{\setstretch{1.66}Our usage of atomic units here was due to the 
{\tmname{casino}}
implementation.}. On the other hand, the possible values of the effective
masses were chosen to be:
\begin{equation}
  m_e / m_e^{\ast} = 1, \qquad m_h / m_e^{\ast} \in \left\{ \frac{1}{0.1},
  \frac{1}{0.2}, \ldots, \frac{1}{0.9}, 1 \right\},
\end{equation}
so that the mass ratios are $m_e / m_h = \{ 0.1, \ldots, 1.0 \}$.

Additionally, we also simulate the conjugated system, with $m_e \rightarrow
m_h$ and $m_h \rightarrow m_e$. The limits of the pure Coulomb and pure
logarithmic interactions were treated separately by using either the pure
Coulomb or the pure logarithmic potential. The limits of extreme mass ratios
were also considered separately, by changing appropriate fermions to fixed
particles.

The final chosen grid of parameters is presented in
Fig.~\ref{fig:2-spaceparams-atomic}. However, because one would ideally want
to use excitonic units, instead of atomic units, this grid will be skewed, as
presented in Fig.~\ref{fig:2-spaceparams}.

\begin{figure}[h!]
  \centering
  \begin{minipage}{.48\textwidth}
    \centering
    \includegraphics[width=.9\linewidth]{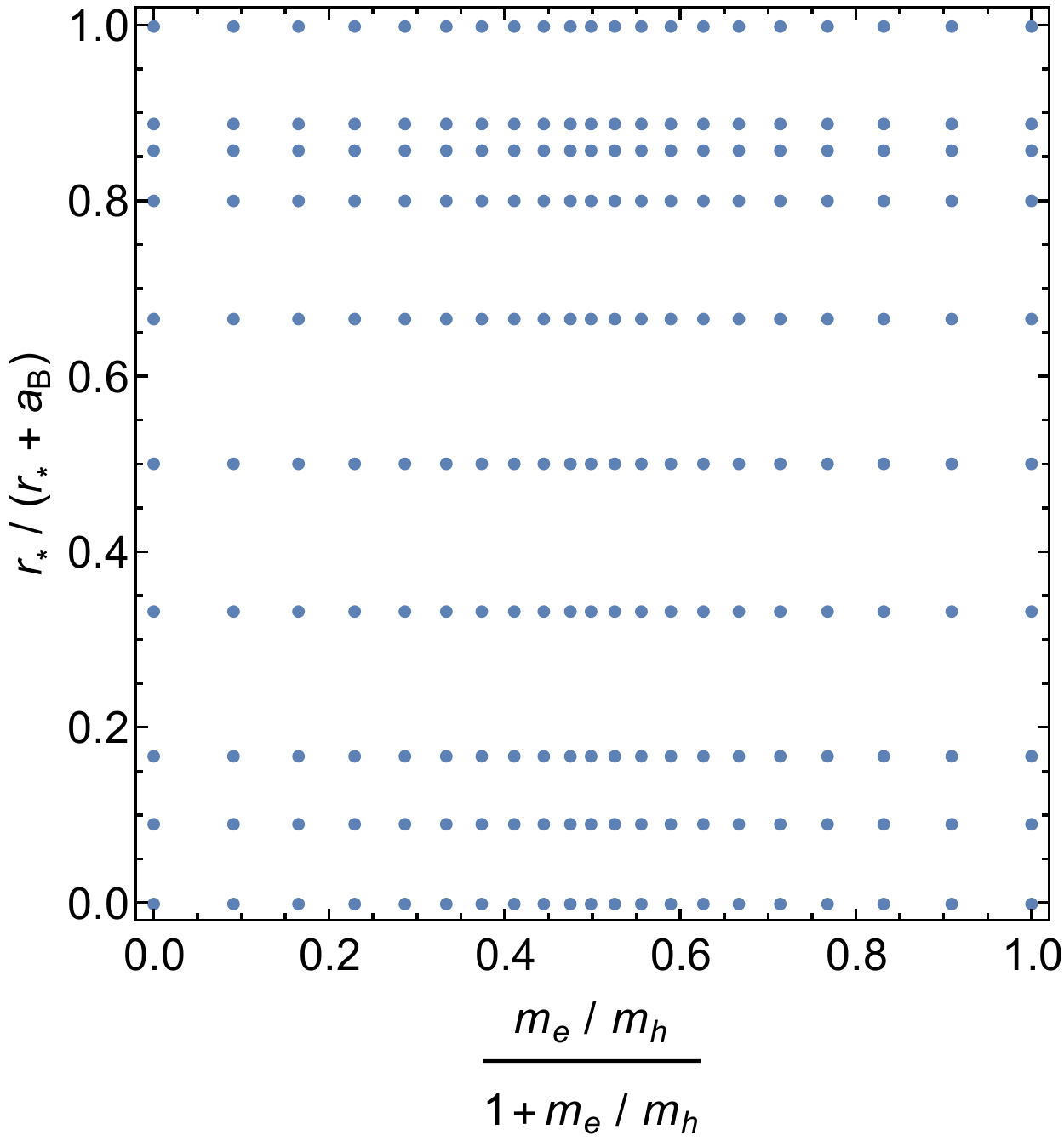}
    \captionof{figure}{
      Grid of chosen parameters in atomic units.
    }
    \label{fig:2-spaceparams-atomic}
  \end{minipage}\hspace{0.03\linewidth}
  \begin{minipage}{.48\textwidth}
    \centering
    \includegraphics[width=.9\linewidth]{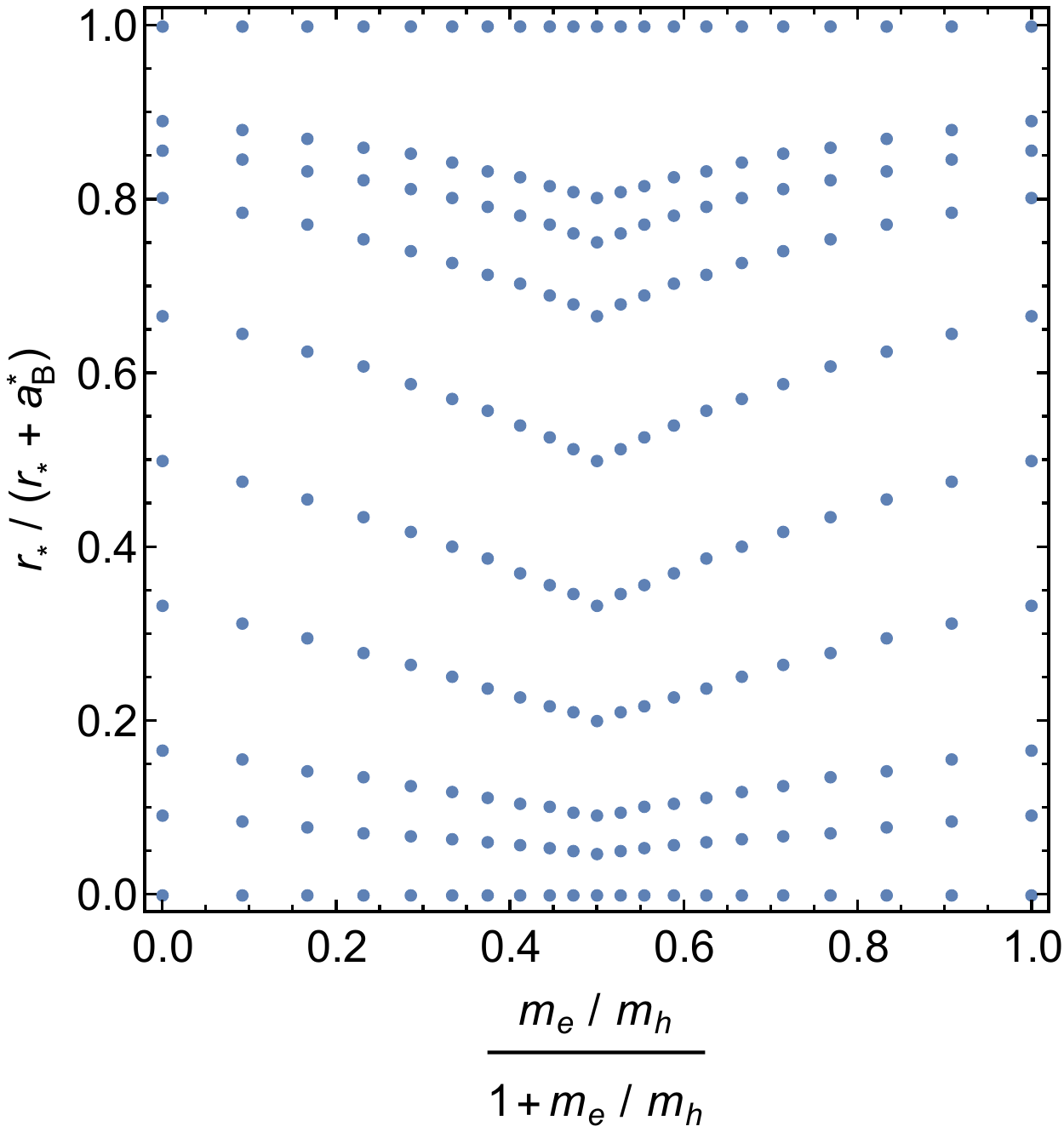}
    \captionof{figure}{
      Grid of chosen parameters in excitonic units.
    }
    \label{fig:2-spaceparams}
  \end{minipage}
\end{figure}

For some example systems, we have analysed what values of time step and
population we should use in order to be in the linear extrapolation regime to
zero time step (see Figs.~\ref{fig:2-dtdmcscaling} and
\ref{fig:2-dtdmcscaling2}). Using condition (\ref{eq:2-dtdmc}), we have chosen
the following setup: for $r_{\ast} / a_{\text{B}} \geqslant 0.5$, we will use
$\tau_{\tmop{DMC}} \in \{ 0.01, 0.04 \}$ with corresponding populations $\{
4096, 1024 \}$, while for $r_{\ast} / a_{\text{B}} < 0.5$, we will use
$\tau_{\tmop{DMC}} \in \{ 0.005, 0.01 \}$ with corresponding populations $\{
8192, 4096 \}$. The number of equilibration steps was chosen to be
$N_{\tmop{equil}} \geqslant 200000$, which agrees with condition
(\ref{eq:2-Nequil}).

\begin{figure}[h!]
  \centering
  \begin{minipage}{.48\textwidth}
    \centering
    \includegraphics[width=.9\linewidth]{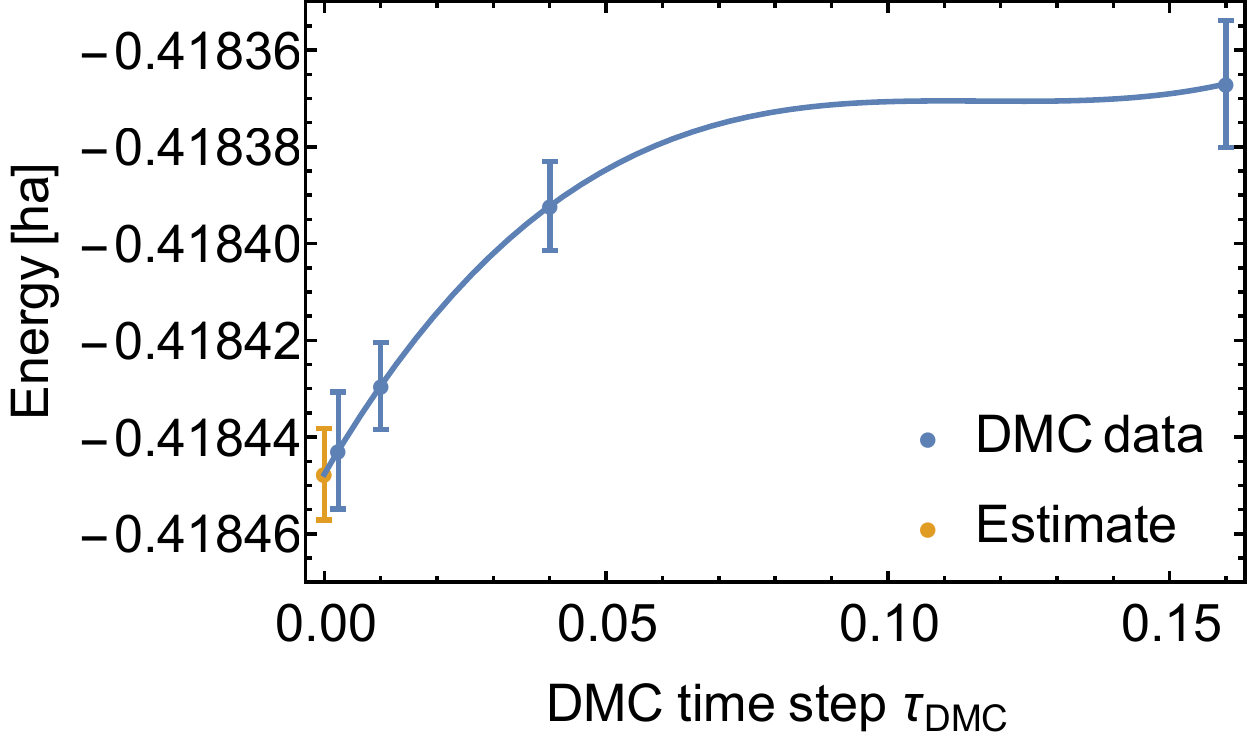}
    \captionof{figure}{
      Determination of the linear region in energy vs. time step scaling for 
      $m_e / m_h = 1,$ $r_{\ast} / a_{\text{B}} = 1$ for a negative trion. The 
      population for every time step is changed as $\sim 1 / \tau_{\tmop{DMC}}$.
    }
    \label{fig:2-dtdmcscaling}
  \end{minipage}\hspace{0.03\linewidth}
  \begin{minipage}{.48\textwidth}
    \centering
    \includegraphics[width=.9\linewidth]{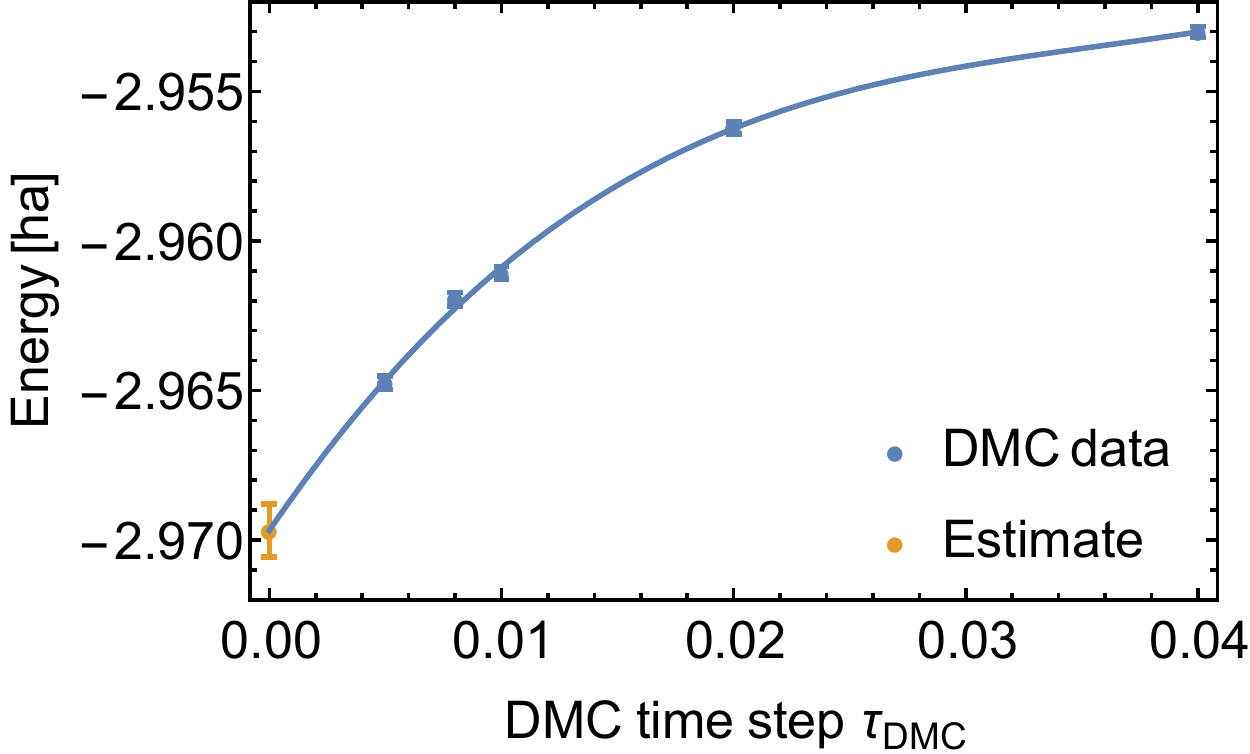}
    \captionof{figure}{
      Similarly to Fig.~\ref{fig:2-dtdmcscaling}, but for a donor-bound 
      biexciton with $m_e / m_h = 0.3, r_{\ast} / a_{\text{B}} = 0.1$. D$^+$XX 
      results for $0 < r_{\ast} < \infty$ and $0 < m_e / m_h < \infty$ were 
      calculated by Ryo Maezono.
    }
    \label{fig:2-dtdmcscaling2}
  \end{minipage}
\end{figure}

\begin{figure}[h!]
  \resizebox{14.5cm}{!}{\includegraphics{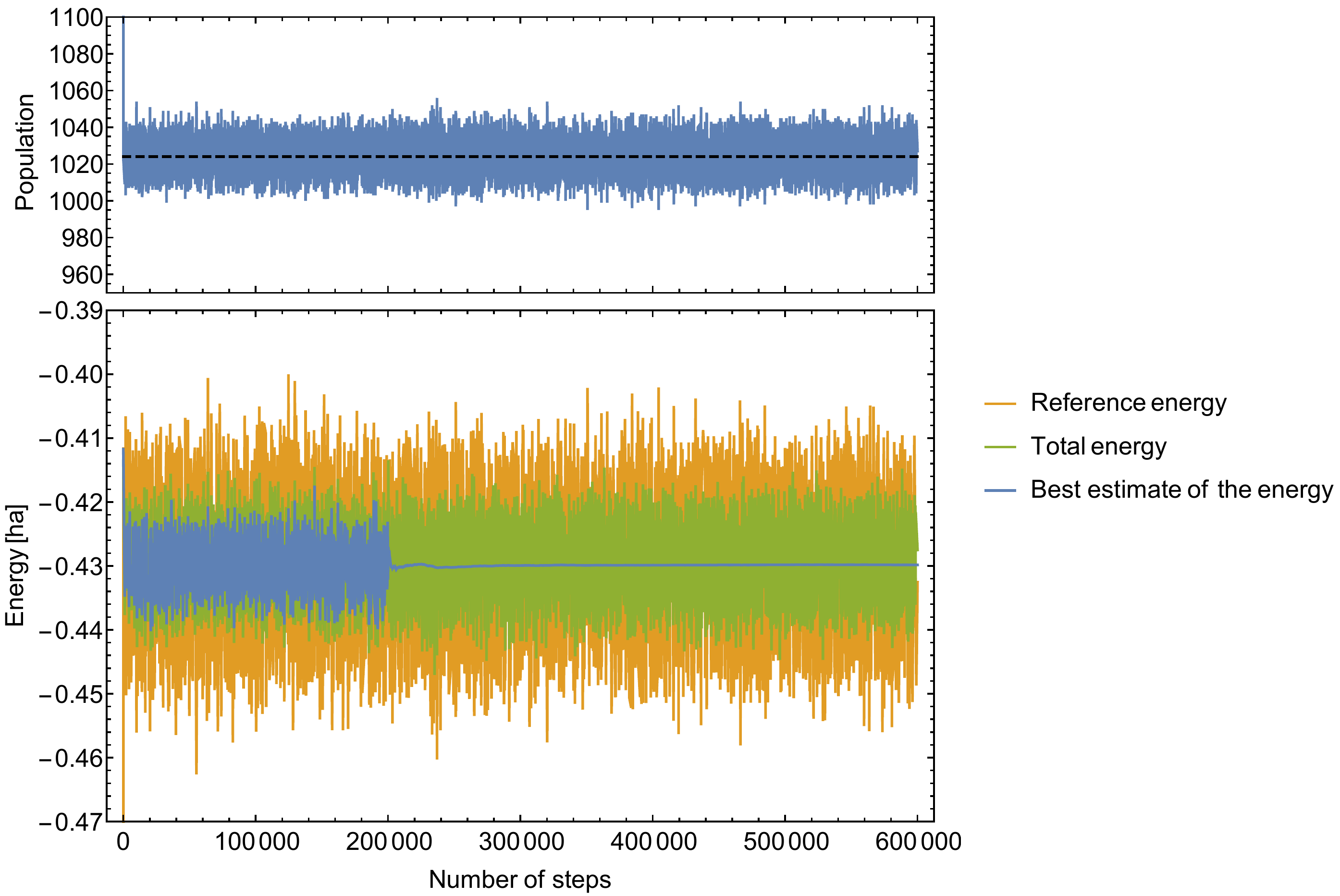}}
  \caption{Population and energy during the DMC calculation. One can see that
  the population control is keeping the population near the fixed value of
  1024 (black dashed line). We show reference energy, total energy and the
  best estimate of the DMC energy. The system is a negative trion with $m_e /
  m_h = 0.9, r_{\ast} / a_{\text{B}} = 1.$ The equilibration phase ends after
  200{\hspace{0.15em}}000 steps.\label{fig:2-dmcenergypopulation}}
\end{figure}

Figure \ref{fig:2-dmcenergypopulation} shows an example of the reference
energy, total energy of a system (defined as the average over the
configuration population of the local energy at any given iteration), and the
best estimate of the DMC energy during the DMC simulation. After
$N_{\tmop{equil}}$ steps, the system is well equilibrated and we start the
accumulation stage. The average population in this example was set to 1024 and
we can see that the population control mechanism is working correctly, keeping
the population near this number.

Finally, in Fig.~\ref{fig:2-reblock} we show an example of the reblocking
method. The reblocked standard error in the energy reaches a plateau as
explained in Chapter \ref{ch:2-dmcerrors}, and this value is used for the
standard error estimate.

\begin{figure}[h!]
  \resizebox{11cm}{!}{\includegraphics{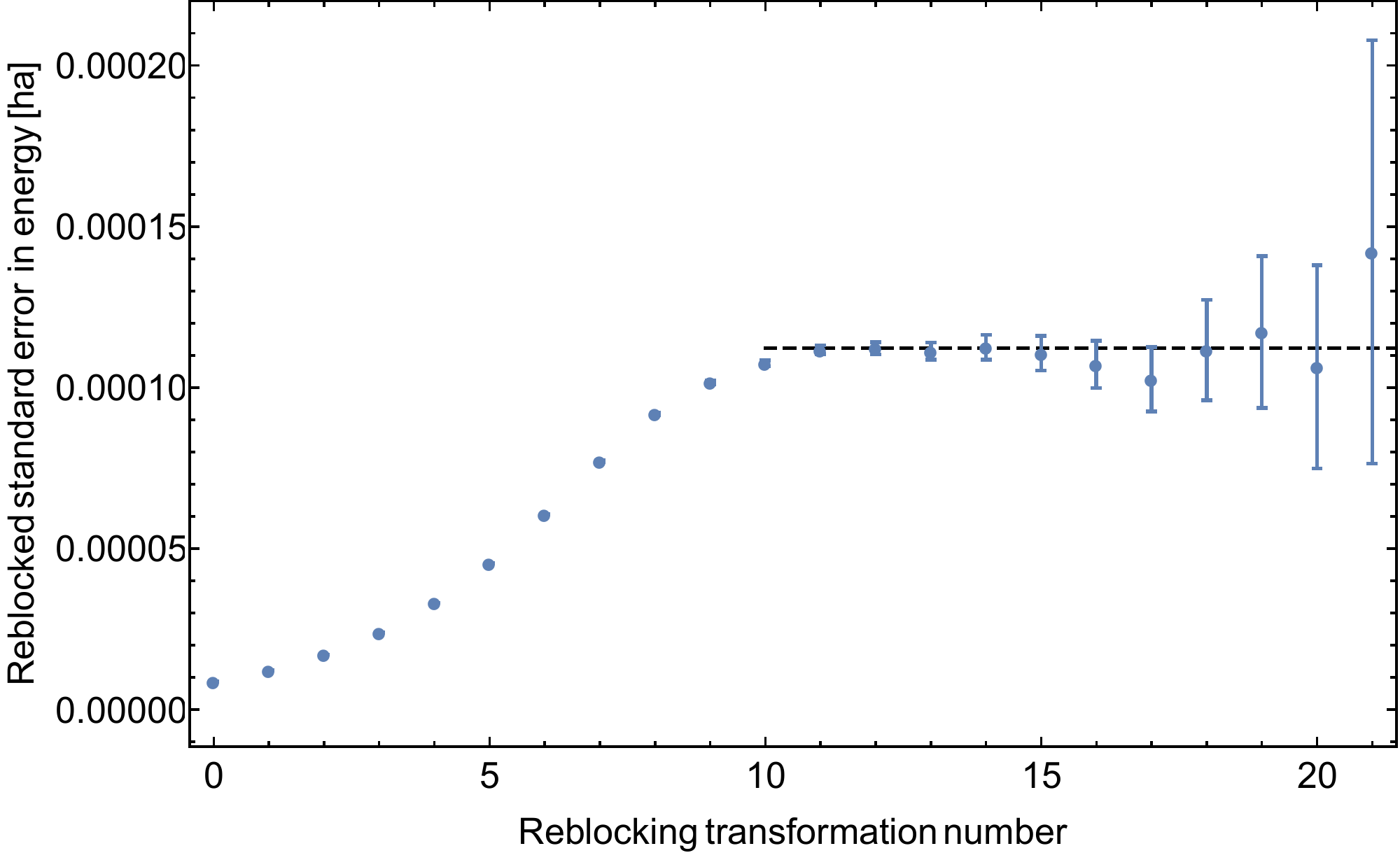}}
  \caption{Reblocking method used on the donor-bound biexciton system with
  $m_e / m_h = 0.1, r_{\ast} / a_{\text{B}} = 1$. The reblocking
  transformation number is the binary logarithm of the block size. One can see
  the plateau in the reblocked error after reaching a block size of
  $2^{11}$.\label{fig:2-reblock}}
\end{figure}

\section{Binding energy}

\subsection{Exciton}

\begin{figure}[h!]
  {\hspace{3em}}\resizebox{13.5cm}{!}{\includegraphics{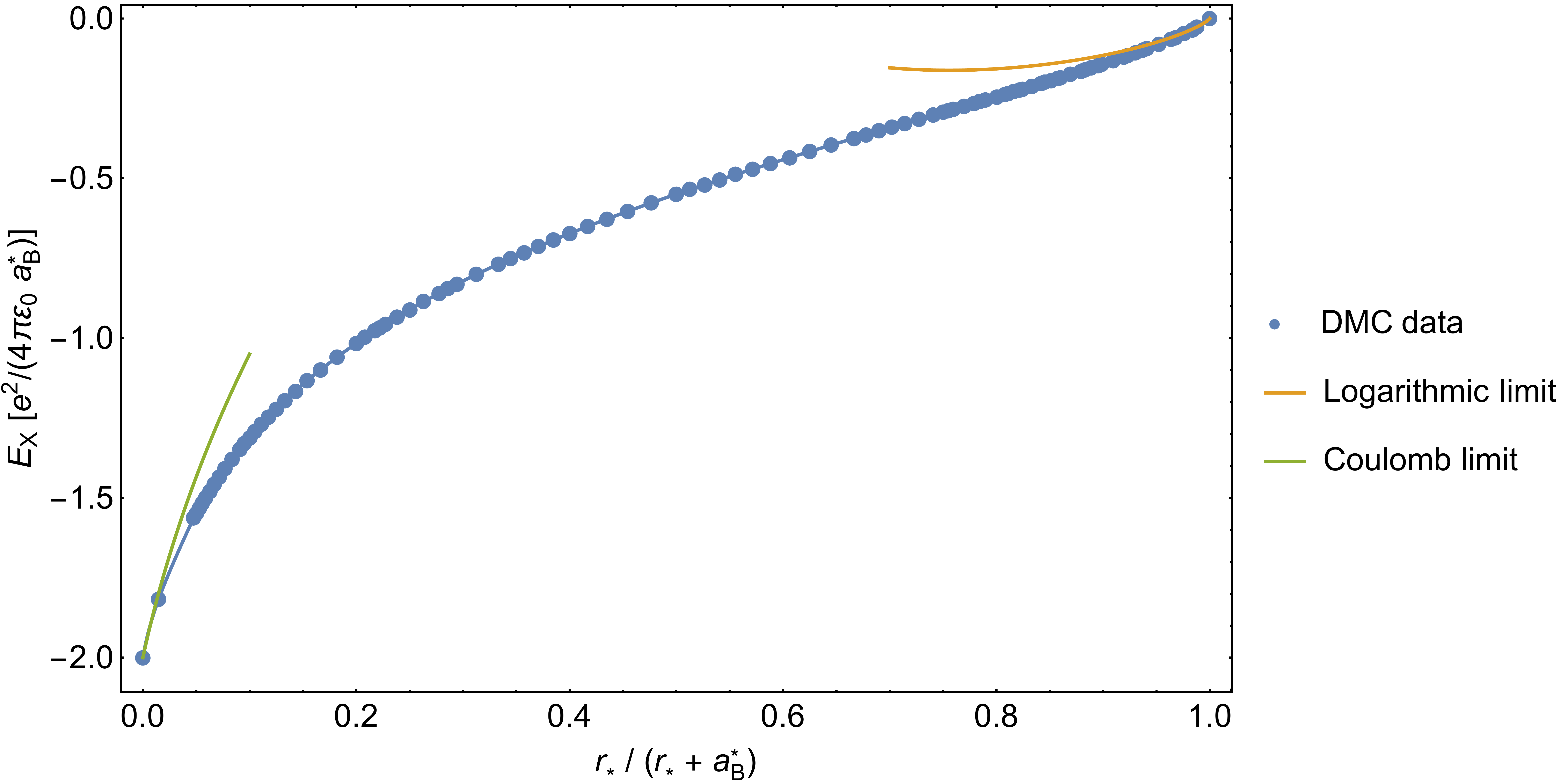}}
  \caption{Exciton binding energy, $E_{\text{X}}$. The plot also shows the
  first order corrections to the energy in the logarithmic (orange) and
  Coulomb (green) limits.\label{fig:2-excitonbinding}}
\end{figure}

The binding energy of an exciton is presented in Figure
\ref{fig:2-excitonbinding} and is indeed independent of the mass ratio if one
uses excitonic units of energy as explained in Chapter \ref{ch:2-exciton}. In
the Coulomb limit $(r_{\ast} \rightarrow 0)$, we recover the well-known
excitonic energy of $- 4 \tmop{Ry}^{\ast}$. In order to determine the
behaviour of the energy near the Coulomb limit, we evaluate the first order
correction as:
\begin{equation}
  \langle \Delta v \rangle = \frac{\langle \psi | \Delta v | \psi
  \rangle}{\langle \psi | \psi \rangle} = \frac{2 \pi \int_0^{\infty} \psi^2
  \Delta vr \mathd r}{2 \pi \int_0^{\infty} \psi^2 r \mathd r} =
  \frac{\int_0^{\infty} \psi^2 v_{\tmop{Keldysh}} r \mathd r}{\int_0^{\infty}
  \psi^2 r \mathd r} - \frac{\int_0^{\infty} \psi^2 v_{\tmop{Coulomb}} r
  \mathd r}{\int_0^{\infty} \psi^2 r \mathd r},
\end{equation}
where $\Delta v = v_{\tmop{Keldysh}} - v_{\tmop{Coulomb}}$ is the difference
between the full Keldysh potential from Eq.~(\ref{eq:2-effinteraction}) and
its zeroth\mbox{-}order expansion for $r \rightarrow 0$, \tmtextit{i.e.} the
Coulomb potential, Eq.~(\ref{eq:2-Coulomb}). The excitonic wave function from
Eq.~(\ref{eq:2-Xwavefunction}) was used and the correction was found to be:
\begin{equation}
  \langle \Delta v \rangle  \underset{r_{\ast} \rightarrow 0}{\longequal} 
  \frac{e^2}{4 \pi \varepsilon_0 a_{\text{B}}^{\ast}}  \frac{16
  r_{\ast}}{a_{\text{B}}^{\ast}} + O (r_{\ast}^2),
\end{equation}
which is linear in $r_{\ast}$. The correction is shown in
Fig.~\ref{fig:2-excitonbinding} (green line).

The logarithmic limit constant was determined to be
\begin{equation}
  C_{\text{X}} = 0.41057748 (10) .
\end{equation}
The logarithmic limit behaviour from Eq.~(\ref{eq:2-logexcitonbehaviour}) is
also shown in Fig.~\ref{fig:2-excitonbinding} (orange line) and matches the
DMC data near $r_{\ast} \rightarrow \infty$. The correction to the energy near
the logarithmic limit, if the energy is measured in the units of $e^2 / (4 \pi
\varepsilon_0 r_{\ast})$, was also evaluated. The correction was evaluated
numerically using VMC: firstly the wave function was optimised using the pure
logarithmic interaction from Eq.~(\ref{eq:2-logarithmic}), and then the wave
function was used to evaluate $\langle \Delta v \rangle = \langle
v_{\tmop{Keldysh}} - v_{\tmop{logarithmic}} \rangle$. The results are
presented in Fig.~\ref{fig:2-xcorr} and the correction was found to have a
square root dependence in $1 / r_{\ast}$.

\begin{figure}[h]
  \resizebox{9cm}{!}{\includegraphics{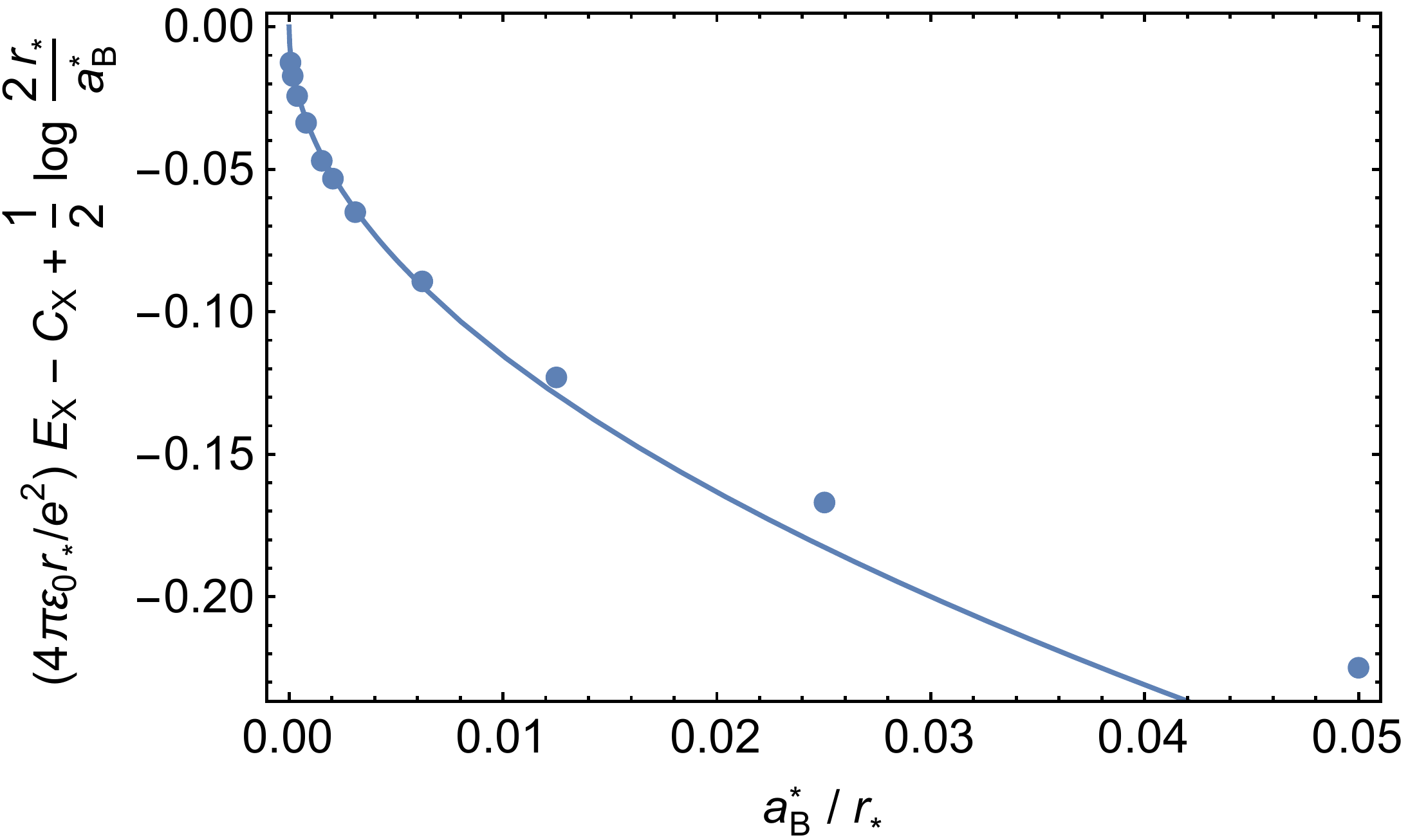}}
  \caption{Correction to the energy near the logarithmic limit. The solid line
  is a fit to the square root behaviour $- a \sqrt{1 / r_{\ast}}$, with $a =
  1.154 (23)$.\label{fig:2-xcorr}}
\end{figure}

Since the energy is independent of the mass ratio, one can easily take the
limits of heavy electron mass and light electron mass, in which the complex
will look like a donor-bound electron and an acceptor-bound hole respectively.
The numerical results remain the same, however in the units, we need to
remember to exchange $\mu \rightarrow m_h$ for a heavy electron and $\mu
\rightarrow m_e$ for a light electron.

Two interpolation formulas were devised for the exciton binding energy. The
simpler formula is accurate to 5\%:
\begin{equation}
  E_{\text{X}} = \frac{e^2}{4 \pi \varepsilon_0 a_{\text{B}}^{\ast}}  (1 -
  \nu)  \frac{- 2 + 0.5 \nu \log (1 - \nu)}{1 + 1.31 \sqrt{\nu}} .
\end{equation}
The second, more complicated but also more accurate formula was devised:
\begin{equation}
  E_{\text{X}} = \frac{e^2}{4 \pi \varepsilon_0 a_{\text{B}}^{\ast}}  (1 -
  \nu)  \frac{- 2 + 16 \nu + a_1 \nu^{3 / 2} + a_2 \nu^2 + a_3 \nu^{5 / 2} +
  a_4 \nu^4}{1 + b_1 \nu^2 + b_2 \nu^{5 / 2}} . \label{eq:2-xbindfit}
\end{equation}
Fitting parameters are given in Appendix \ref{ch:appendixQMCfittingX}. The
formula has a relative error of 0.05\%, which can be seen on the histogram in
Fig.~\ref{fig:2-xbindhisto}. In the Coulomb limit, the fitting formula
recovers both the zeroth and first order perturbation to the energy.

\begin{figure}[h]
  \resizebox{6.5cm}{!}{\includegraphics{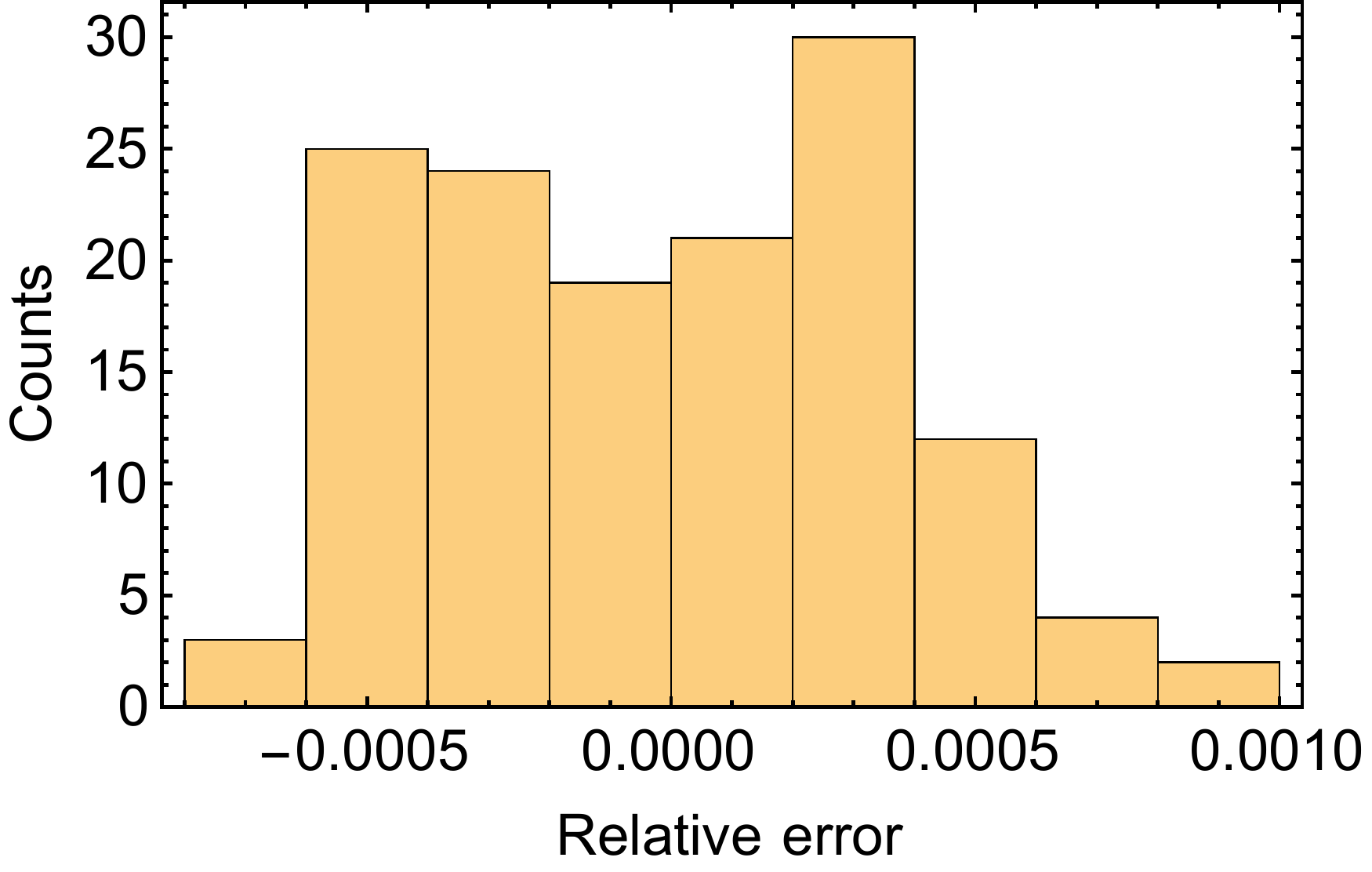}}
  \caption{Histogram for the relative error in the exciton binding energy
  fitting formula from Eq.~(\ref{eq:2-xbindfit}).\label{fig:2-xbindhisto}}
\end{figure}

\subsection{Trion}

Figure \ref{fig:2-trionbinding} shows the negative trion binding energy as a
3D surface plot that includes dependence on both the mass ratio and the
susceptibility. The main purpose of this plot is to observe the general
behaviour of the energy. On the other hand, Fig.~\ref{fig:2-trionbinding2D}
shows only the dependence on the rescaled susceptibility for lines of constant
mass ratio, but can be used to extract numerical values of the trion binding
energy.

\begin{figure}[h]
  \resizebox{12cm}{!}{\includegraphics{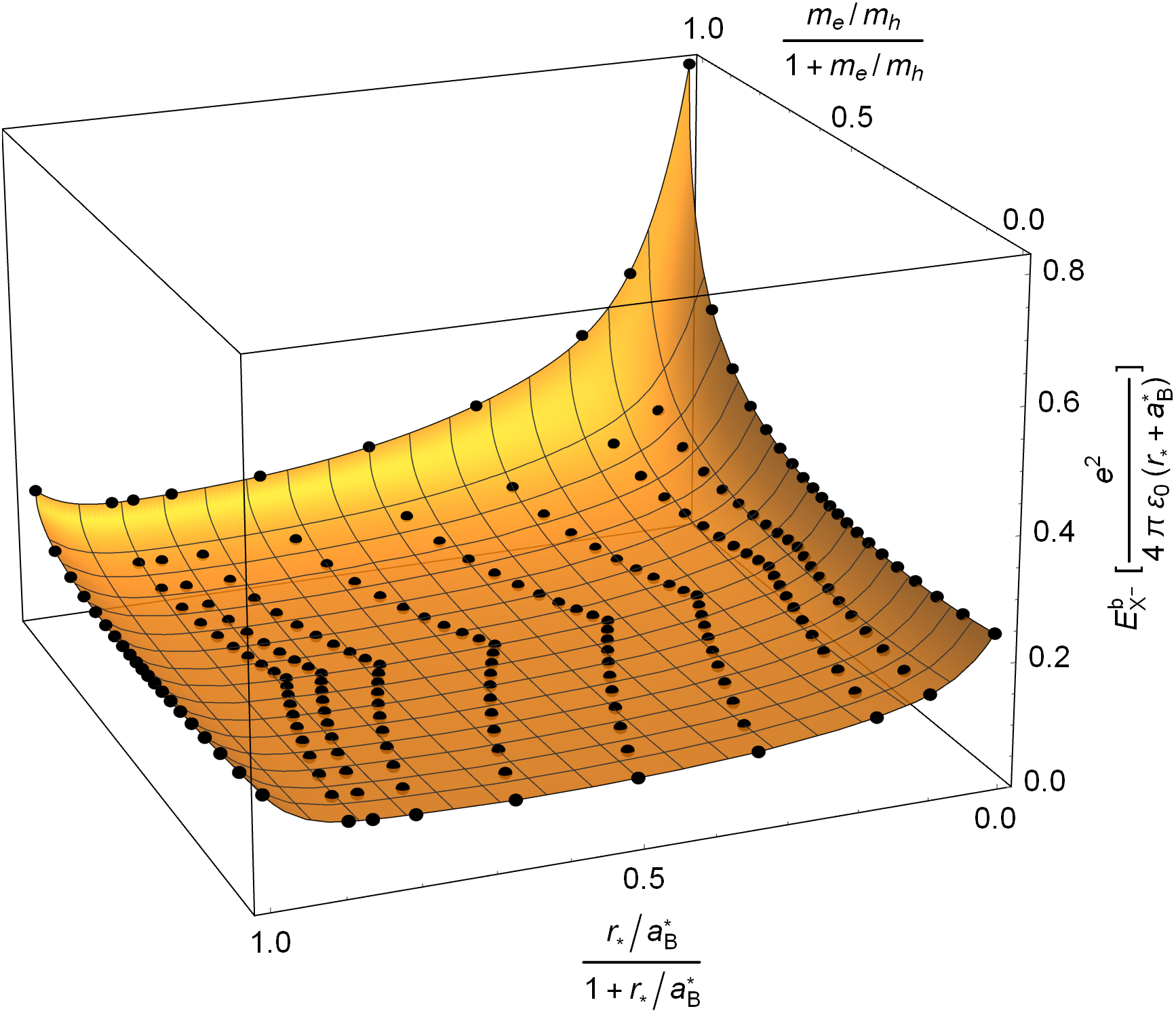}}
  \caption{Binding energy of a negative trion,
  $E_{\text{X}^-}^{\text{b}}$.\label{fig:2-trionbinding}}
\end{figure}

\begin{figure}[h]
  \resizebox{12cm}{!}{\includegraphics{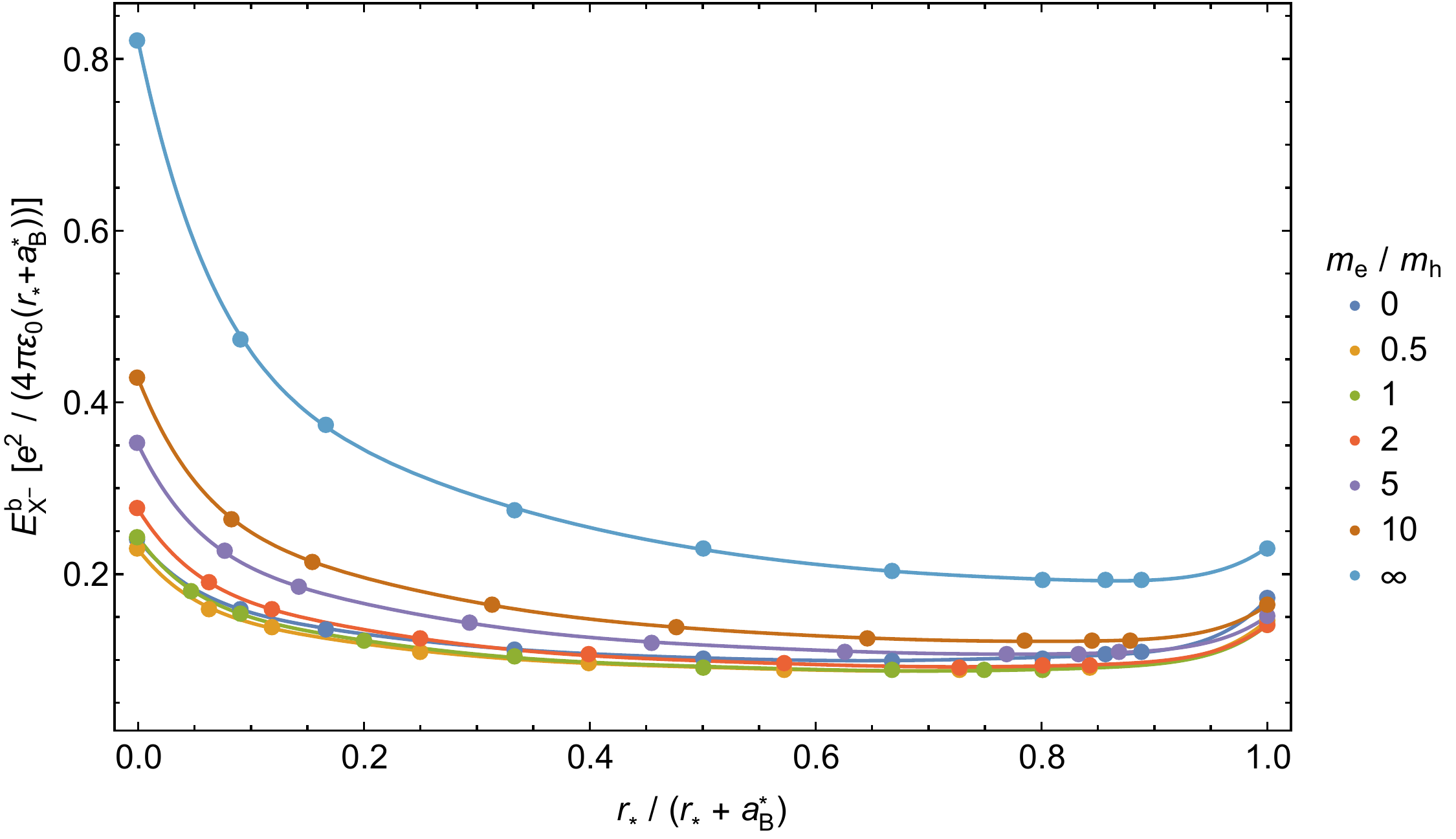}}
  \caption{Negative trion binding energy, $E_{\text{X}^-}^{\text{b}}$, as a
  function of rescaled susceptibility.\label{fig:2-trionbinding2D}}
\end{figure}

The Coulomb limit results agree up to the error bar with the results of
Ref.~{\cite{Spink2015}}. The logarithmic limit results are taken from
Ref.~{\cite{Bogdan}}, where the DMC calculations were found to be in agreement
with the analytical results obtained using the shooting method. The $m_e =
m_h$ values were also compared with the results of
Ref.~{\cite{Velizhanin2015}} and agreement was found.

In the heavy electron limit $(m_e / m_h \rightarrow \infty)$, the complex
resembles an H$_2^+$ molecule, and the energy has the expected square-root
behaviour in the mass ratio for a constant susceptibility, as predicted by the
Born-Oppenheimer approximation in Chapter \ref{ch:2-extrememass}. Since the
negative trion in this limit behaves like a complex with two acceptors and a
hole, in order to determine the total energy of this complex, it was necessary
to vary the separation of the two acceptors to find the most energetically
favourable position. See Fig.~\ref{fig:2-sep1} for an example of how the
minimum of energy as a~function of the separation is found and
Fig.~\ref{fig:2-sep2} to see the optimal separation of acceptors for each
value of susceptibility.

\begin{figure}
  \centering
  \begin{minipage}{.48\textwidth}
    \centering
    \includegraphics[width=.9\linewidth]{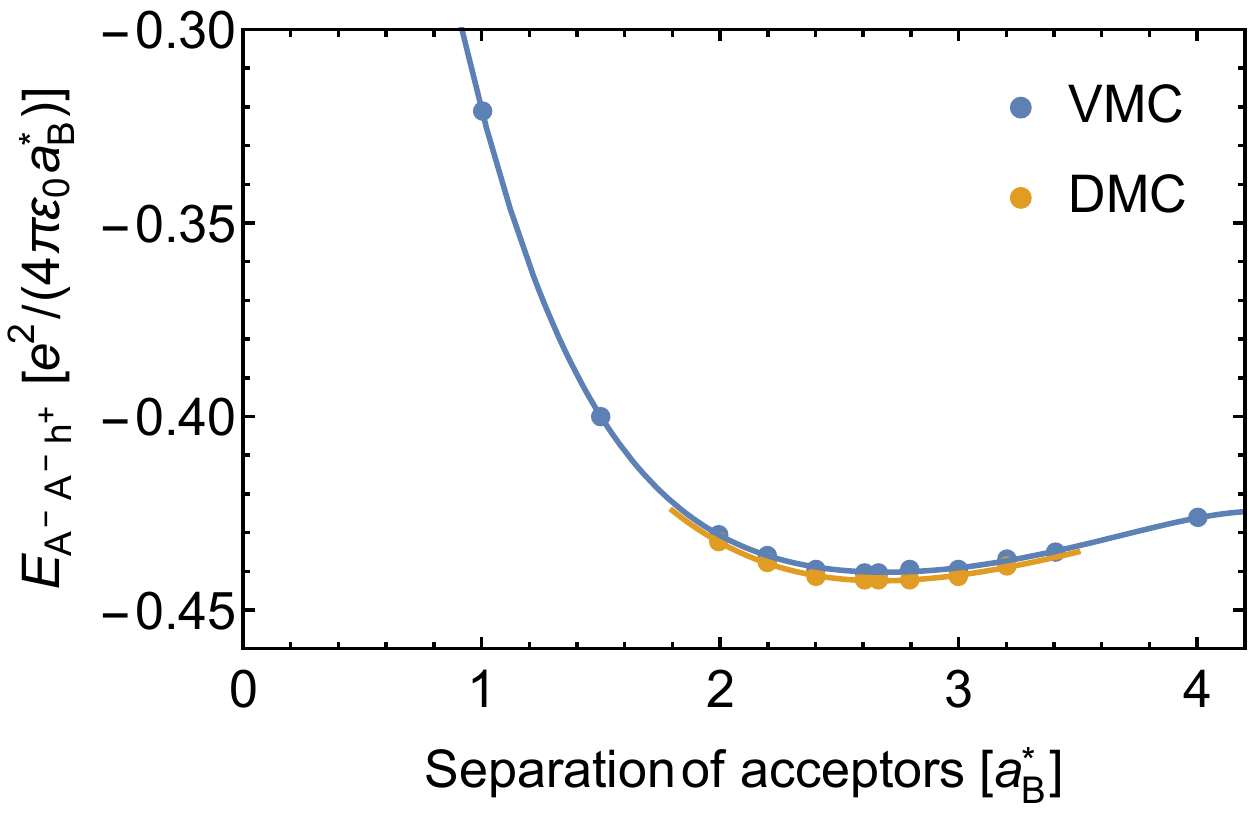}
    \captionof{figure}{
      The distance between two acceptors is varied, and the minimum of energy 
      is found for each value of $r_{\ast} / a_{\text{B}}^{\ast}$. Example for 
      $r_{\ast} / a_{\text{B}}^{\ast} = 2$.
    }
    \label{fig:2-sep1}
  \end{minipage}\hspace{0.03\linewidth}
  \begin{minipage}{.48\textwidth}
    \centering
    \includegraphics[width=.9\linewidth]{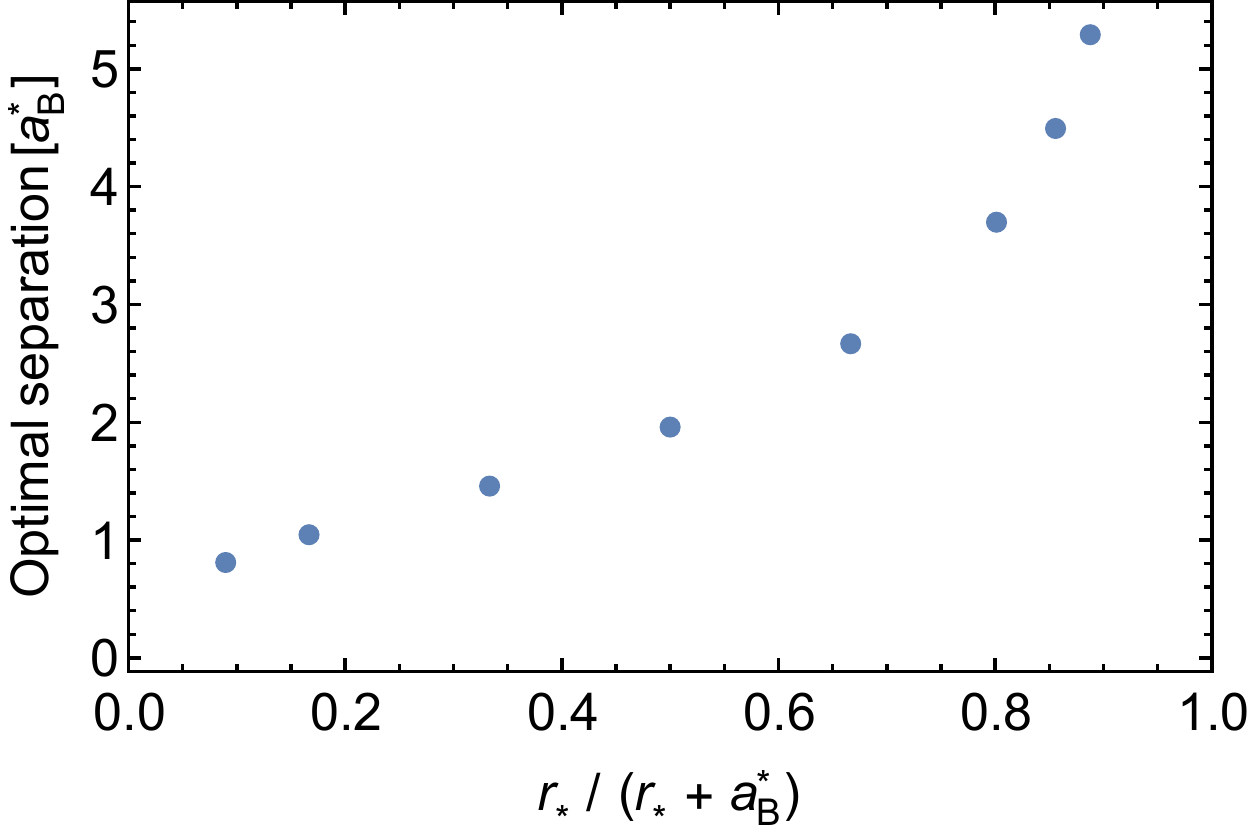}
    \captionof{figure}{
      Separation of the acceptors which
      minimises \ the energy \ in the A$^-$A$^-$h$^+$ complex,
      for each value of the rescaled susceptibility.
    }
    \label{fig:2-sep2}
  \end{minipage}
\end{figure}

Finally, one can see that there is a steep decrease of values near the
logarithmic limit. In order to understand this decrease, we recall that for an
exciton, there was a square root dependence present near the logarithmic limit
if one uses the energy units of $e^2 / (4 \pi \varepsilon_0 r_{\ast})$. We
conclude that a~similar behaviour must be present in case of a trion, and this
is why we observe the steep decrease. This also suggests that previous studies
that used only the logarithmic limit to determine the binding energy of
complexes in 2D semiconductors greatly overestimated the values.

The DMC data was fitted (to an accuracy within 5\%) using the interpolation
formula,
\begin{eqnarray}
  E_{\text{X}^-}^{\text{b}} & = & \frac{e^2}{4 \pi \varepsilon_0
  a_{\text{B}}^{\ast}}  \left( 1 - \sqrt{\nu} \right) \\
  &  & \times \left[ \left( 0.73 - 0.58 \sqrt{\nu} + 0.22 \nu^2 \right) (2 -
  \eta) \underset{}{} - \left( 1.2 - 1. \sqrt{\nu} + 0.32 \nu^2 \right)
  \sqrt{1 - \eta} \right] . \nonumber
\end{eqnarray}
Alternatively, one can use a general fitting ansatz:
\begin{equation}
  E_{\text{X}^-}^{\text{b}} = \frac{e^2}{4 \pi \varepsilon_0  \left( r_{\ast}
  + a_{\text{B}}^{\ast} \right)} \left( \sum_i \sum_j a_{ij} \nu^i \left(
  \sqrt{1 - \eta} \right)^j \right) \label{eq:2-trionbindfit},
\end{equation}
with fitting parameters given in Appendix \ref{ch:appendixQMCfittingXminus},
that gives results with relative error of 0.5\% (see the histogram in
Fig.~\ref{fig:2-trionbindhisto}). The fitting has the correct square root
behaviour in the Born-Oppenheimer limit.

\begin{figure}[h]
  \resizebox{6.5cm}{!}{\includegraphics{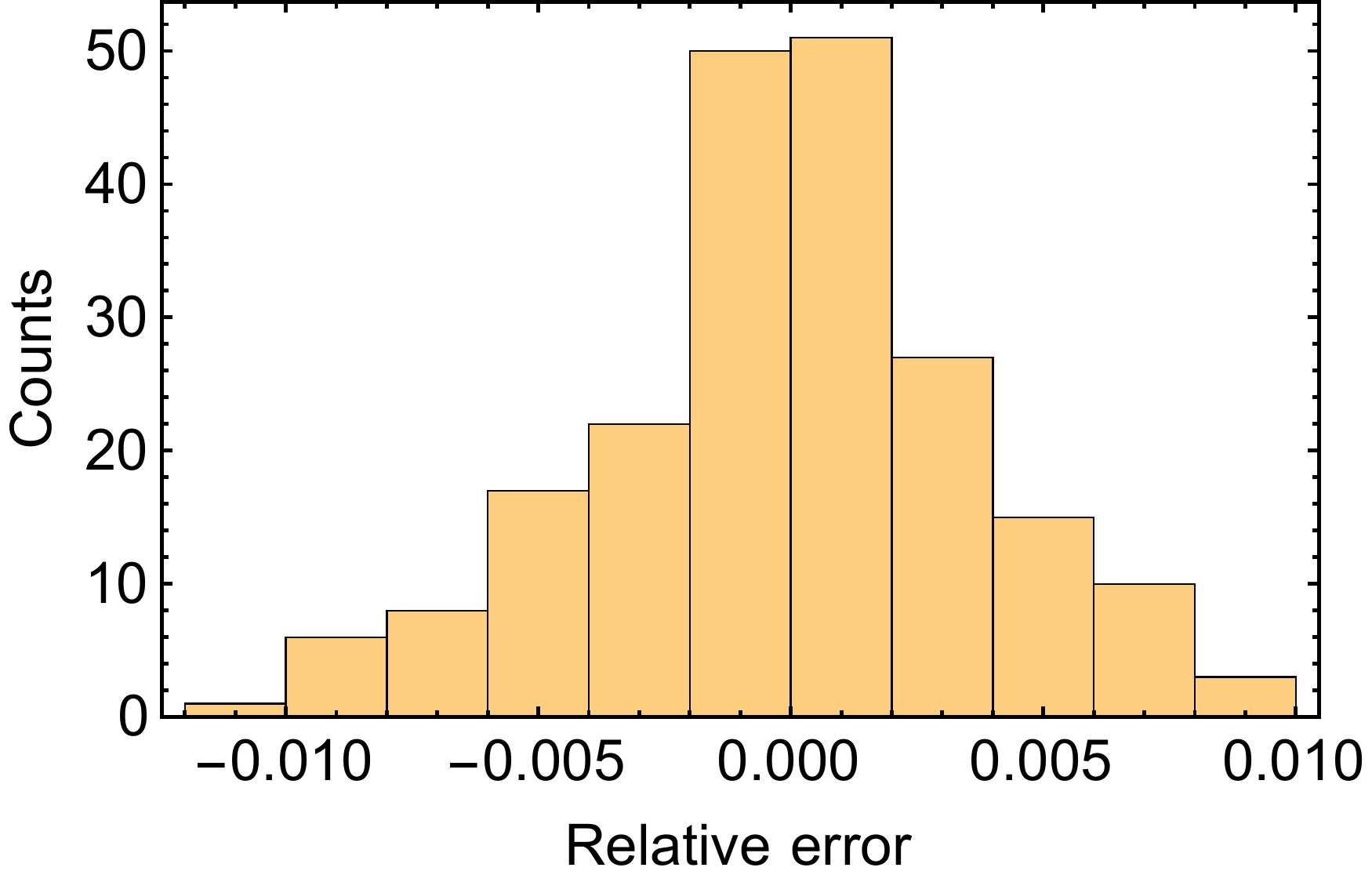}}
  \caption{Histogram of the relative error of the trion binding energy fitting
  formula from Eq.~(\ref{eq:2-trionbindfit}).\label{fig:2-trionbindhisto}}
\end{figure}

\subsection{Donor-bound exciton}

We present the binding energy of a donor-bound exciton complex in
Figs.~\ref{fig:2-DXbinding}--\ref{fig:2-DXbinding2}. One can see that for $m_e
/ m_h \gtrsim 1$, the binding energy reaches values close to zero. In this
region, the calculations were especially difficult, since the complex would
tend to unbind very easily thus producing not a~ground state energy, but a
local minimum. During the wave function optimisation, the cutoff lengths (see
Ch. \ref{ch:2-trialWF}) were set to small values, in order to keep the complex
bound. The region where we expect the complex to be either very weakly bound
or completely unbound is shown in Fig.~\ref{fig:2-DXbinding-unbound}.

\begin{figure}[h!]
  \resizebox{12cm}{!}{\includegraphics{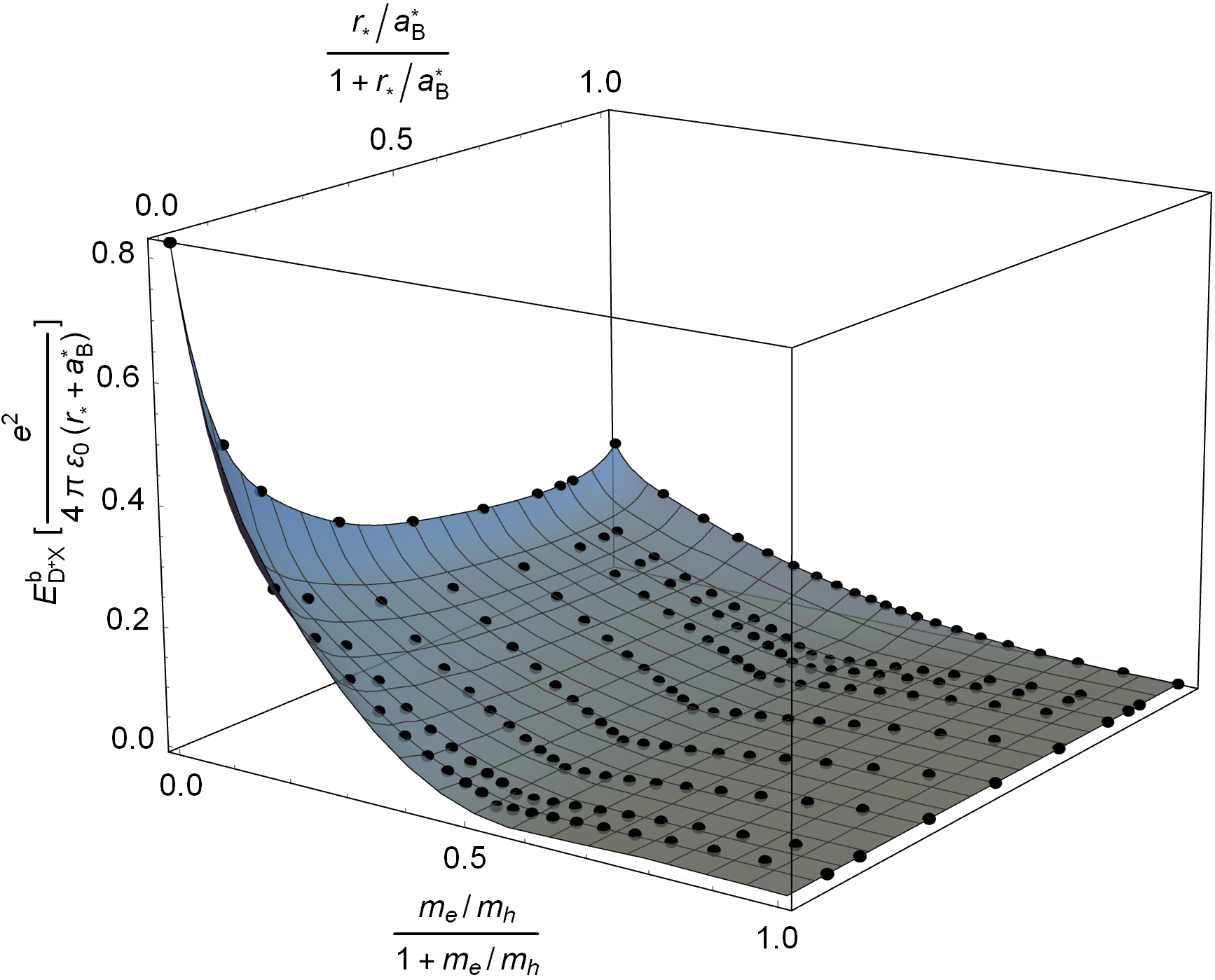}}
  \caption{Binding energy of the donor-bound exciton complex,
  $E_{\tmop{DX}}^{\text{b}}$. Logarithmic limit results are taken from
  Ref.~{\cite{Bogdan}}.\label{fig:2-DXbinding}}
\end{figure}

\begin{figure}[h!]
  \centering
  \begin{minipage}{.48\textwidth}
    \centering
    \includegraphics[width=.99\linewidth]{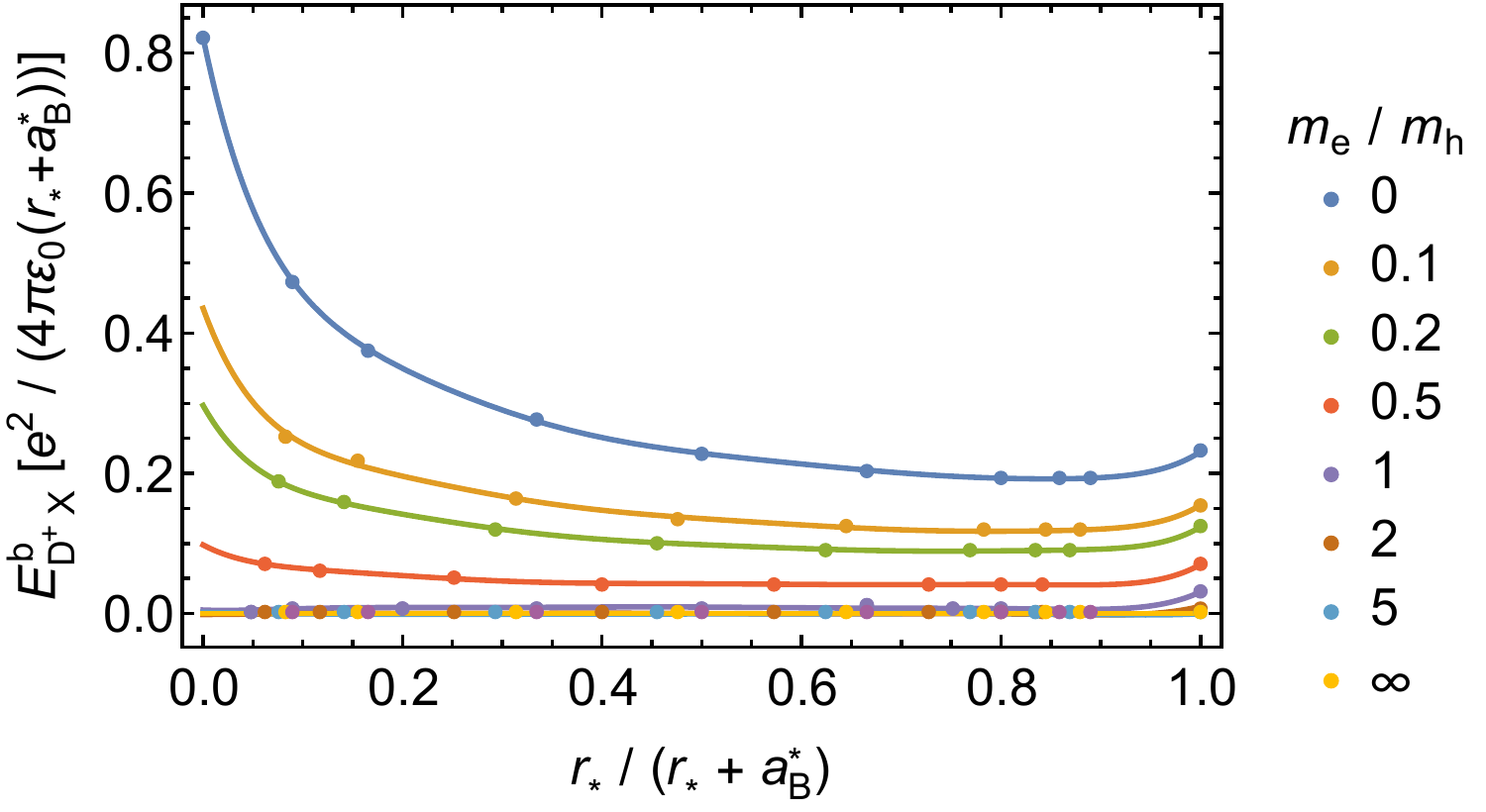}
    \captionof{figure}{
      Binding energy of the donor-bound exciton complex, 
      $E_{\tmop{DX}}^{\text{b}}$, as a function of the rescaled susceptibility 
      for different values of the mass ratio.
    }
    \label{fig:2-DXbinding2}
  \end{minipage}\hspace{0.03\linewidth}
  \begin{minipage}{.48\textwidth}
    \centering
    \includegraphics[width=.9\linewidth]{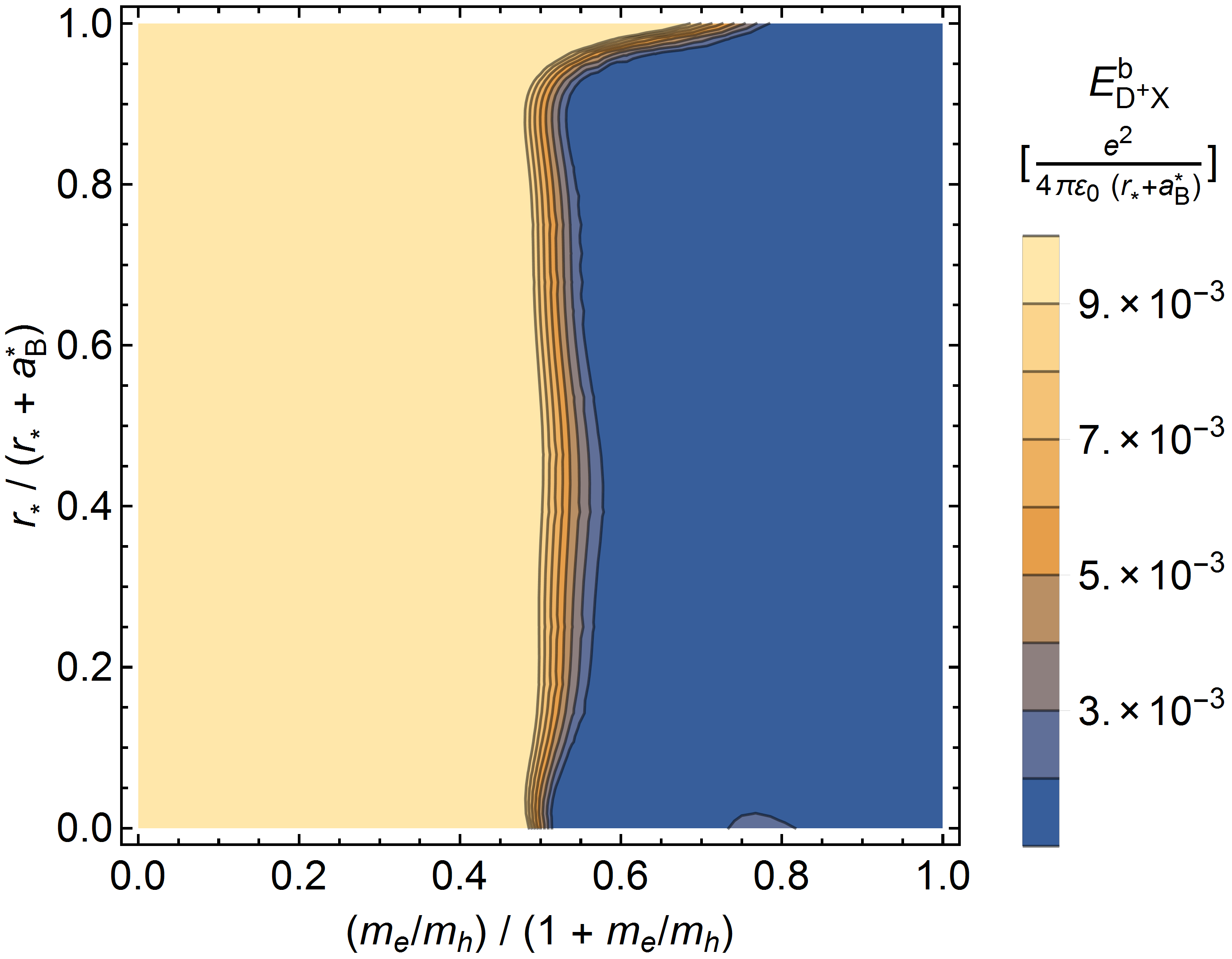}
    \captionof{figure}{
      Contour plot of the binding energy $E_{\tmop{DX}}^{\text{b}}$, showing 
      the region that is probably unbound (in blue).
    }
    \label{fig:2-DXbinding-unbound}
  \end{minipage}
\end{figure}

In the extreme mass ratio limit of a very heavy hole ($m_e / m_h \rightarrow
0$), the complex consists of two positive donors and an electron, and it has
the same binding energy values as the negative trion in the limit of infinite
electron mass (\tmtextit{cf.}
Figs.~\ref{fig:2-trionbinding}--\ref{fig:2-trionbinding2D}). On the other
hand, in the limit of $m_e / m_h \rightarrow \infty$, the complex is unbound,
as explained in Ch.~\ref{ch:2-extrememass}.

\subsection{Biexciton}

\begin{figure}[h]
  \resizebox{12cm}{!}{\includegraphics{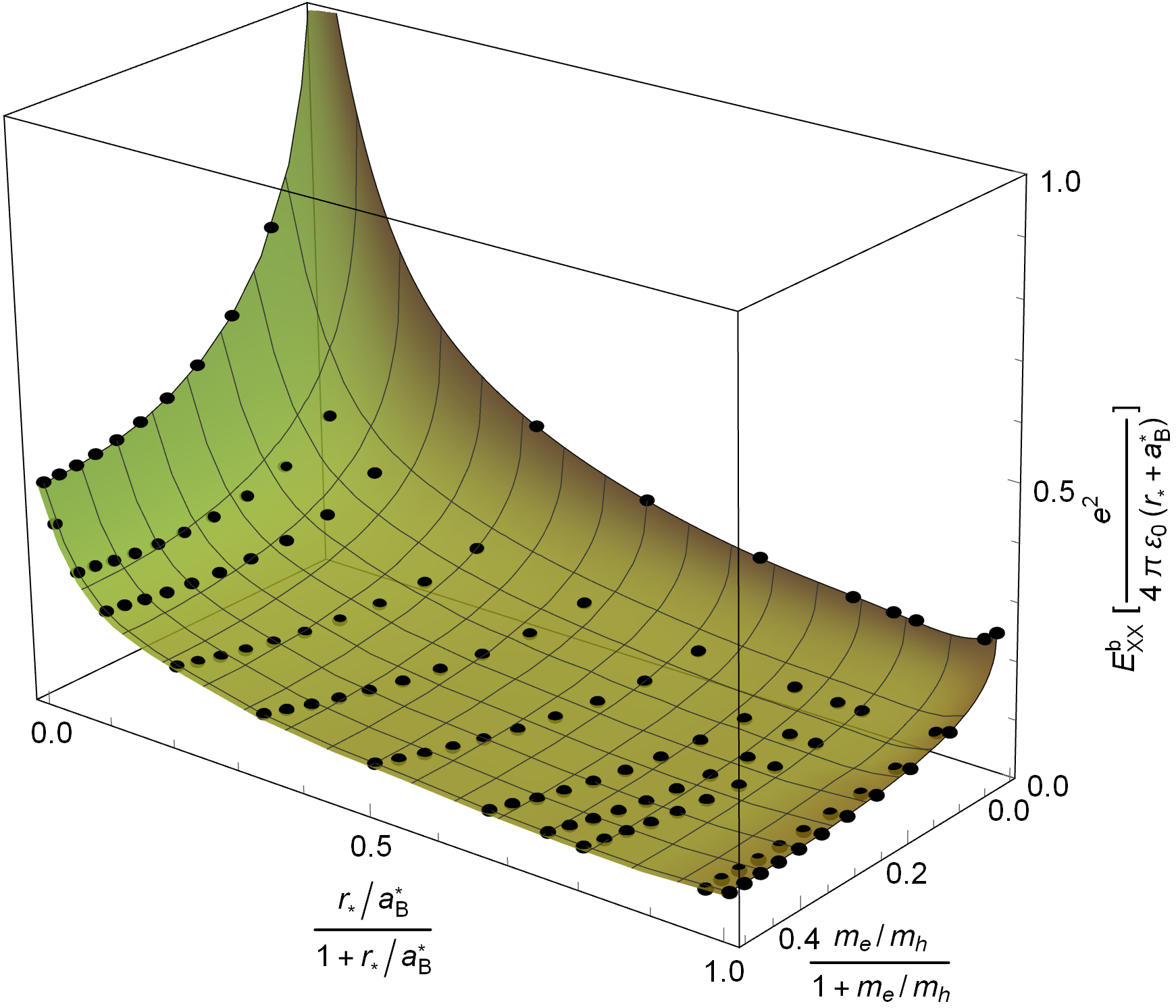}}
  \caption{Biexciton binding energy, $E_{\tmop{XX}}^{\text{b}}$, as a function
  of the rescaled mass ratio and rescaled susceptibility. The surface is the
  interpolation formula from Eq.~(\ref{eq:2-XXbinding-fit}). The Coulomb
  limit, the logarithmic limit, extreme mass ratio limit of $m_e / m_h
  \rightarrow 0$, equal mass ratio results and results for $r_{\ast} /
  a_{\text{B}} = \{ 0.03, 60 \}$ were calculated by
  E.~Mostaani.\label{fig:2-XXbinding}}
\end{figure}

The biexciton binding energies are presented in
Figs.~\ref{fig:2-XXbinding}--\ref{fig:2-XXbinding2}. Since
$E_{\tmop{XX}}^{\text{b}}$ in our units of choice should be invariant under
electron--hole exchange, the plot is symmetric in the plane of $m_e / m_h =
1$, and only results with $m_e / m_h \leqslant 1$ need to be shown.

\begin{figure}[h!]
  \centering
  \begin{minipage}{.48\textwidth}
    \centering
    \includegraphics[width=.95\linewidth]{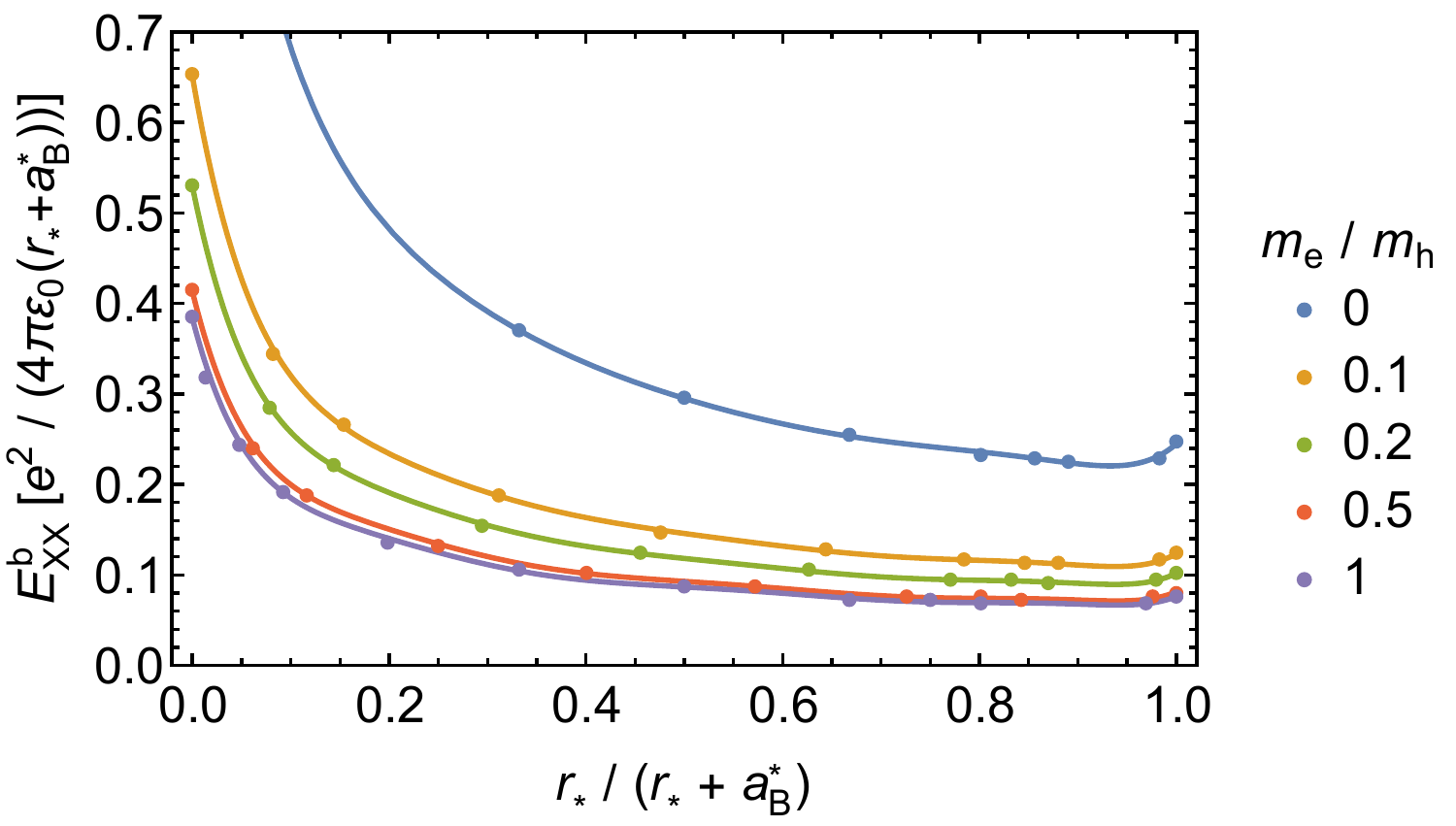}
    \captionof{figure}{
      Biexciton binding energy as a function of rescaled susceptibility for set 
      values of the mass ratio.
    }
    \label{fig:2-XXbinding2}
  \end{minipage}\hspace{0.03\linewidth}
  \begin{minipage}{.48\textwidth}
    \centering
    \includegraphics[width=.8\linewidth]{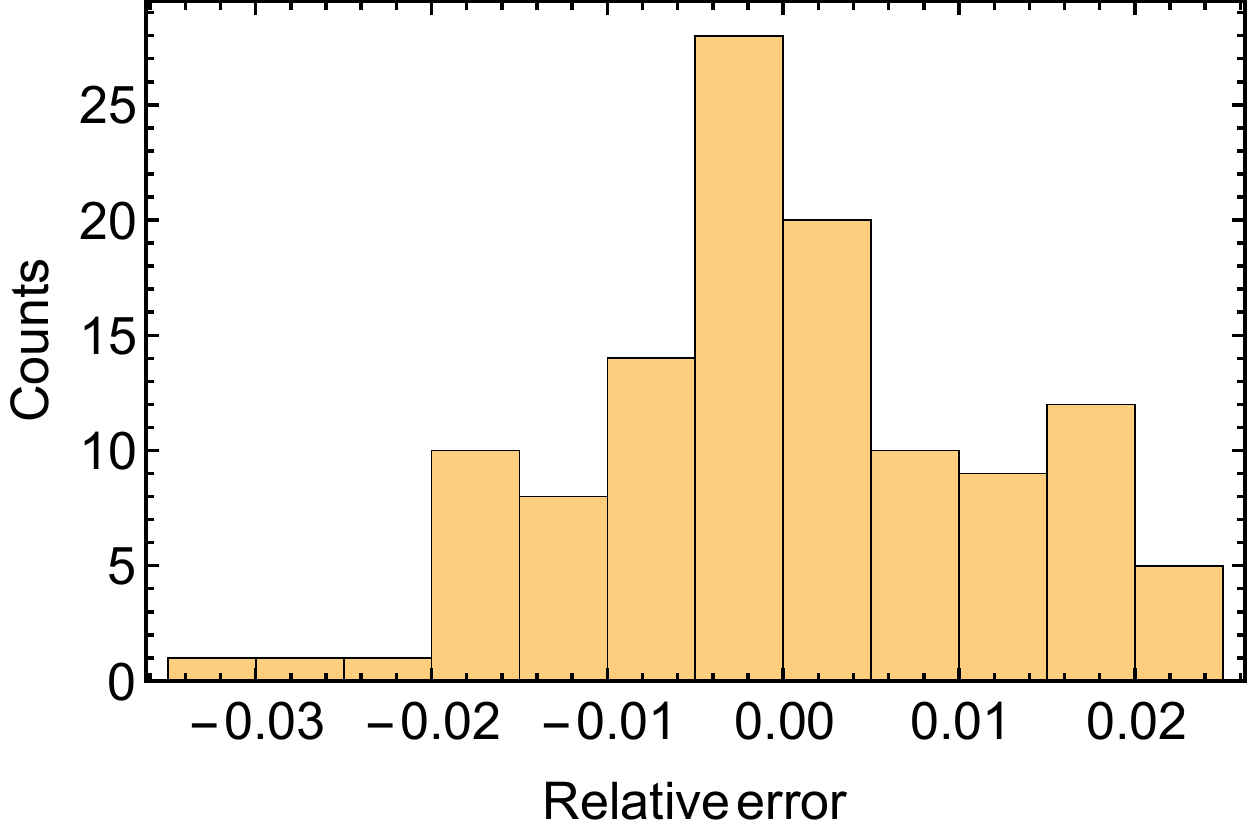}
    \captionof{figure}{
      Histogram of the relative error of the interpolation formula from
      Eq.~(\ref{eq:2-XXbinding-fit}).
    }
    \label{fig:2-XXhisto}
  \end{minipage}
\end{figure}

In the extreme mass ratio of $m_e / m_h \rightarrow 0$, the complex resembles
the H$_2$ molecule, and near the limit we can use the Born-Oppenheimer
approximation to see that the binding energy should have a square root
behaviour in the mass ratio (\tmtextit{cf.} Ch.~\ref{ch:2-extrememass}), which
is indeed observed on the plot in Fig.~\ref{fig:2-XXbinding}. Since we expect
the binding energy to be a smooth function of parameters $m_e / m_h$ and
$r_{\ast}$, and because it is symmetric at the line $m_e / m_h = 1$, we
conclude that at $m_e / m_h = 1$ the biexcitonic binding energy must have zero
slope.

A fitting formula was devised that incorporates the behaviour of the
biexcitonic binding energy described in the previous paragraph,
\begin{eqnarray}
  E_{\tmop{XX}}^{\text{b}} & = & \frac{e^2}{4 \pi \varepsilon_0
  a_{\text{B}}^{\ast}}  \left( 1 - \sqrt{\nu} \right)  \left( 1 - 1.2
  \sqrt{\eta (1 - \eta)} \right)  \\
  &  & \times \left[ \underset{}{} 2. - 17. \nu + 43. (\nu^{3 / 2} + \nu^2) +
  15.7 \nu^{5 / 2} \right] . \nonumber
\end{eqnarray}
The formula is accurate up to 5\% for $0.2 \leqslant m_e / m_h \leqslant 5$.

A more accurate formula was also devised, which correctly reproduces all the
limits,
\begin{equation}
  E_{\tmop{XX}}^{\text{b}} = \frac{e^2}{4 \pi \varepsilon_0  \left( r_{\ast} +
  a_{\text{B}}^{\ast} \right)}  \sum_i \sum_j a_{i j} [(1 - \eta)^{i / 2} +
  \eta^{i / 2}] \nu^j, \label{eq:2-XXbinding-fit}
\end{equation}
with the fitting coefficients given in Appendix
\ref{ch:appendixQMCfittingXXminus}. The relative error histogram from
Fig.~\ref{fig:2-XXhisto} shows that the estimated values have $\sim 2\%$
accuracy.

\subsection{Donor-bound trion and donor-bound biexciton}

In Fig.~\ref{fig:2-loglimit} we present the binding energies in the
logarithmic limit of all previously considered complexes, and the donor-bound
trion and donor-bound biexciton complexes.

The D$^+$X$^-$ complex in the limit of infinite hole mass resembles an H$_2$
molecule and we can see that there is a square root behaviour in the mass
ratio near this limit, in agreement with the Born\mbox{-}Oppenheimer
approximation. We can also see that there seems to be a~square root behaviour
in the limit of a very heavy electron.

\begin{figure}[h]
  \resizebox{12cm}{!}{\includegraphics{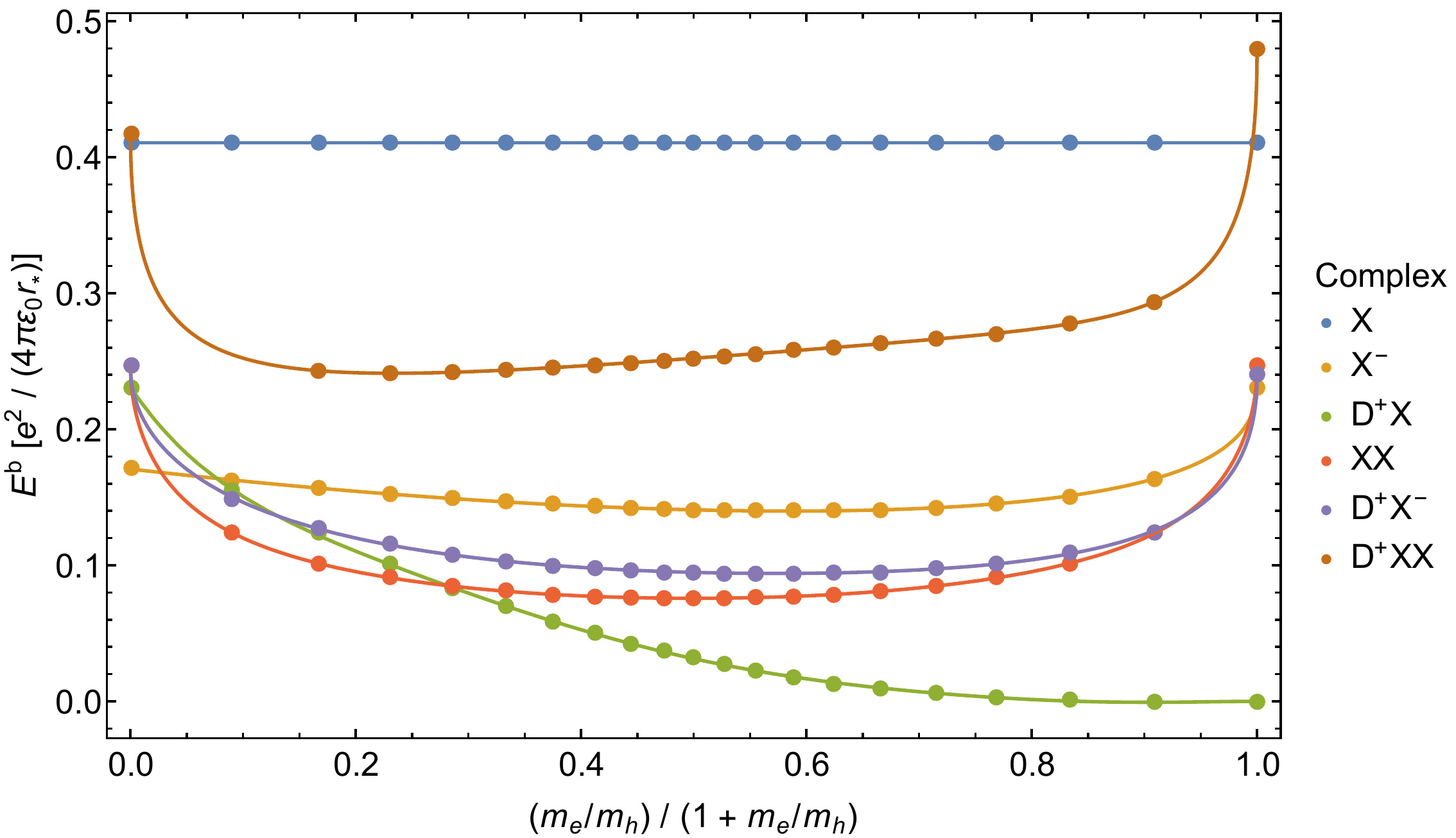}}
  \caption{Binding energy of charge carrier complexes in the logarithmic
  limit. Excitonic energy does not include the logarithmic divergence. Trion
  and donor-bound exciton results are taken from Ref.~{\cite{Bogdan}},
  biexciton results were calculated by Elaheh Mostaani, and the donor-bound
  trion results were obtained by Cameron Price.\label{fig:2-loglimit}}
\end{figure}

The donor-bound biexciton in the limit of heavy electrons has already been
discussed in Ch.~\ref{ch:2-extrememass}. In the limit of heavy holes ($m_e /
m_h \rightarrow 0$), this complex consists of three fixed donors and two light
electrons and there is a question of how the three donors can be positioned
with respect to each other. The most natural position that three particles of
the same sign would assume is an equilateral triangle. To check if this
assumption is correct we first determined how the total energy changes if we
distribute the three donors in the corners of equilateral triangle and then
vary the triangle side. Figure \ref{fig:2-DDDee-side} shows an example case of
$r_{\ast} / a_{\text{B}}^{\ast} = 1$.

\begin{figure}[h]
  \resizebox{8.5cm}{!}{\includegraphics{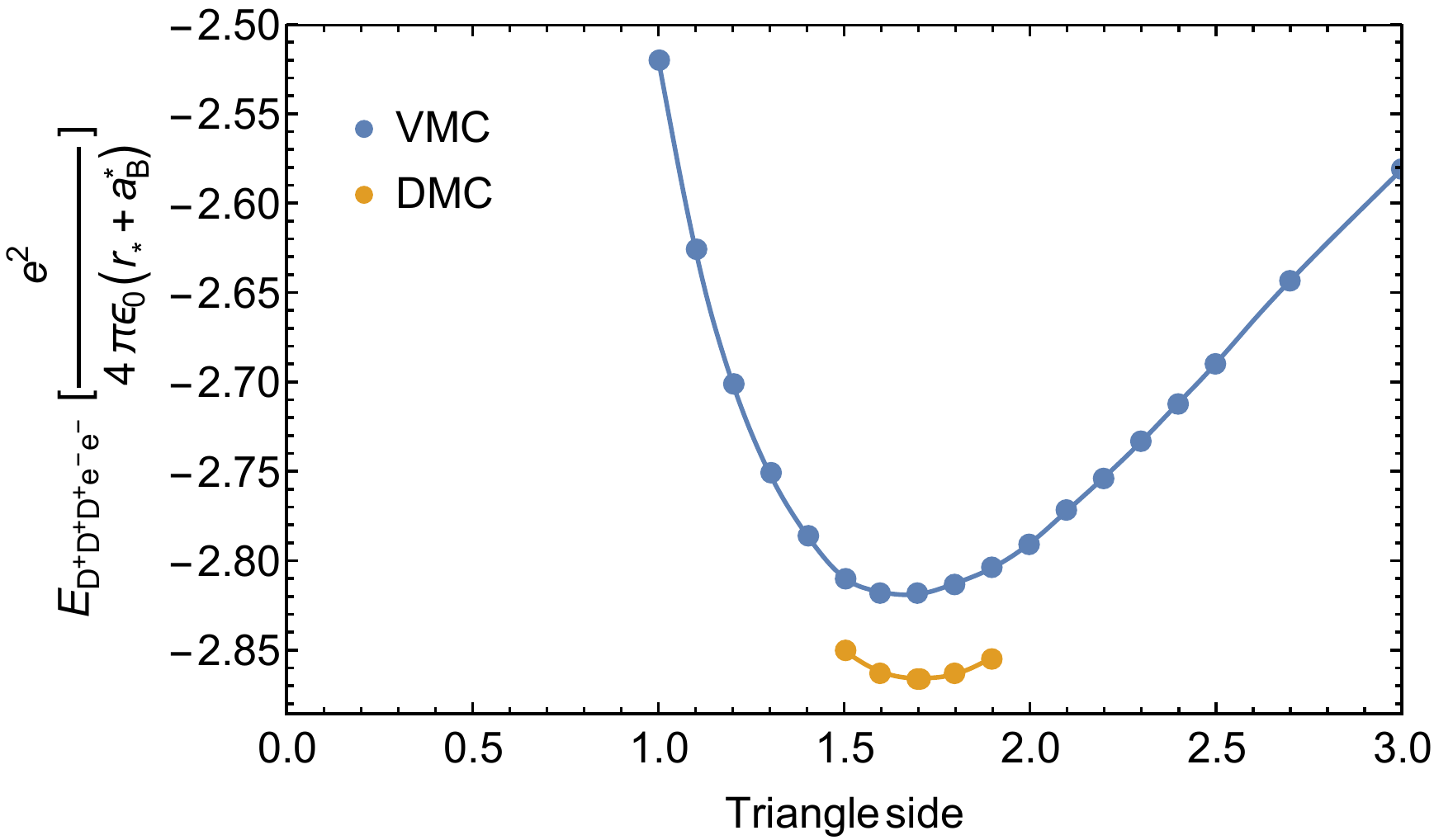}}
  \caption{Total energy of the complex of three donors and two electrons, with
  the donors placed in the corners of an equilateral triangle. Example for
  $r_{\ast} / a_{\text{B}}^{\ast} = 1$.\label{fig:2-DDDee-side}}
\end{figure}

\begin{figure}[h]
  \resizebox{!}{6cm}{\includegraphics{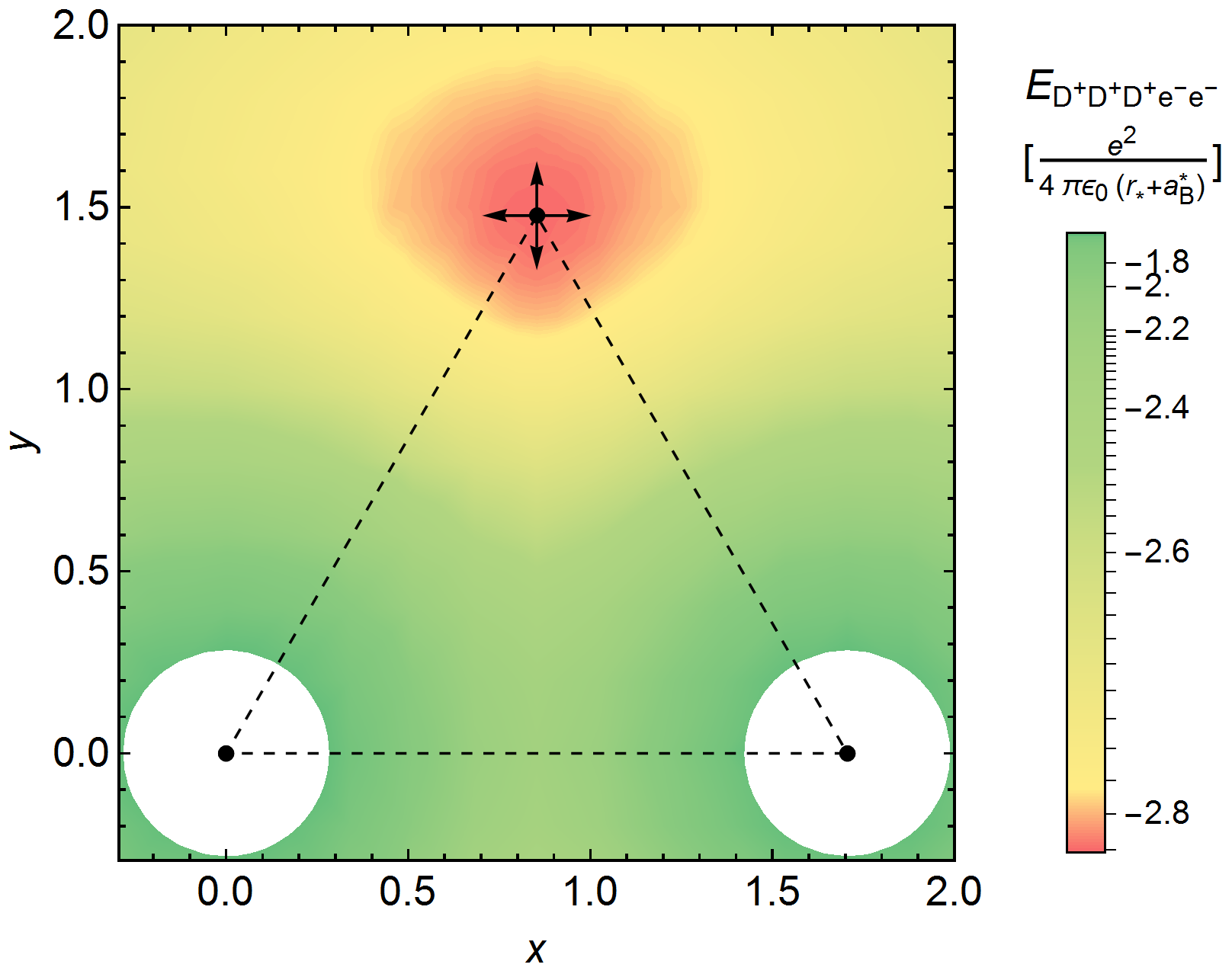}}
  \
  \resizebox{!}{6cm}{\includegraphics{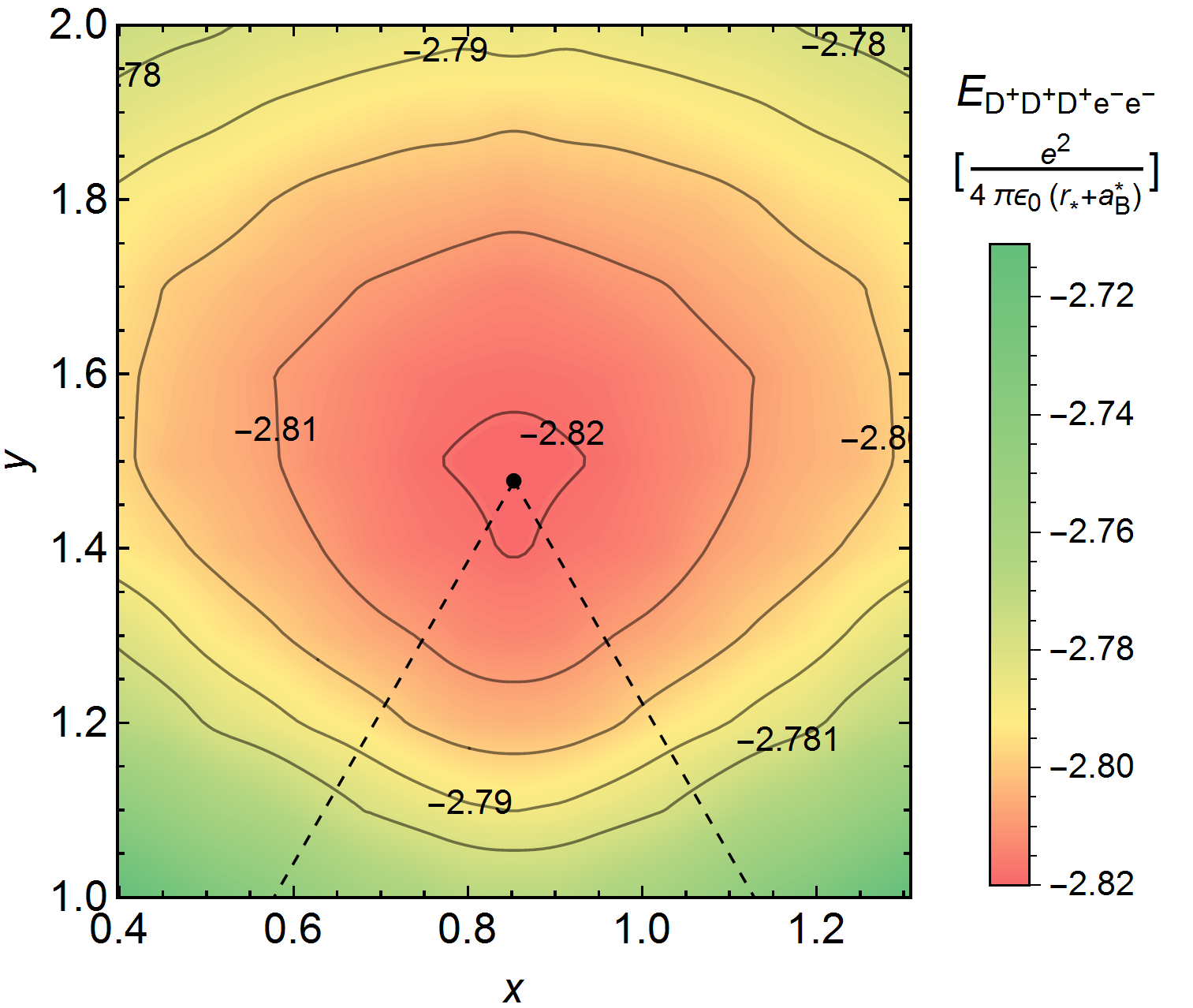}}
  \caption{Total energy of the complex of three donors and two electrons. Left
  plot shows the zoom. We fix two of the donor atoms and change the position
  of the third one.\label{fig:2-DDDee-triangle}}
\end{figure}

After finding the side length $a_{\min}$ that minimises the total energy of
the system, we then change the position of one of the donor atoms (fix the
remaining two to $(0, 0)$ and $(a_{\min}, 0)$ coordinates) and again observe
the effect on the total energy. Figure \ref{fig:2-DDDee-triangle} presents the
results, which clearly show that the preferred structure is indeed an
equilateral triangle.

\section{Contact pair correlation density}\label{ch:contactPDF}

The Mott--Wannier model of charge carrier complexes is valid provided the
complexes extend over many unit cells of the underlying crystal. However, when
charge carriers are present at the same point of space there is likely to be
an energy penalty due to exchange effects and the distortion of the same set
of unit cells. We may represent this effect by introducing additional contact
interactions $A_{\mathrm{ee}} \delta (\vec{r}_{\mathrm{e{\uparrow}}} -
{\nobreak} \vec{r}_{\mathrm{e{\downarrow}}})$ and $A_{\mathrm{eh}}  [\delta
(\vec{r}_{\mathrm{e{\uparrow}}} - \vec{r}_{\mathrm{h}}) + \delta
(\vec{r}_{\mathrm{e{\downarrow}}} - \vec{r}_{\mathrm{h}})]$, where
$A_{\mathrm{ee}}$ and $A_{\mathrm{eh}}$ are constants and
$\vec{r}_{\mathrm{e{\uparrow}}}$, $\vec{r}_{\mathrm{e{\downarrow}}}$, and
$\vec{r}_{\mathrm{h}}$ are the two electron positions and the hole position in
a negative trion. Determining $A_{\mathrm{ee}}$ and $A_{\mathrm{eh}}$ by
\tmtextit{ab initio} calculations would be extremely challenging, and so we
leave them as free parameters to be determined in experiments. If we evaluate
the additional, small contact interaction within first-order perturbation
theory then we find that the correction due to the contact interaction in
a~negative trion can be written as $A_{\mathrm{eh}}
\rho_{\mathrm{eh}}^{\text{X}^-} (0) + A_{\mathrm{ee}}
\rho_{\mathrm{ee}}^{\text{X}^-} (0)$, where the electron--hole pair density is
\begin{equation}
  \rho_{\text{eh}}^{\text{X}^-} (\vec{r}) = \left\langle \delta \left( \vec{r}
  - \vec{r}_{\text{e}{\uparrow}} + \vec{r}_{\text{h}} \right) + \delta \left(
  \vec{r} - \vec{r}_{\text{e}{\downarrow}} + \vec{r}_{\text{h}} \right)
  \right\rangle,
\end{equation}
and similarly for the electron--electron pair density. The pair density can be
evaluated by binning the interparticle distances sampled in VMC and DMC
calculations and using extrapolated estim{\nobreak}ation (see
Ch.~\ref{ch:2-extrapolatedestimation}).

\begin{figure}[h]
  \resizebox{7.5cm}{!}{\includegraphics{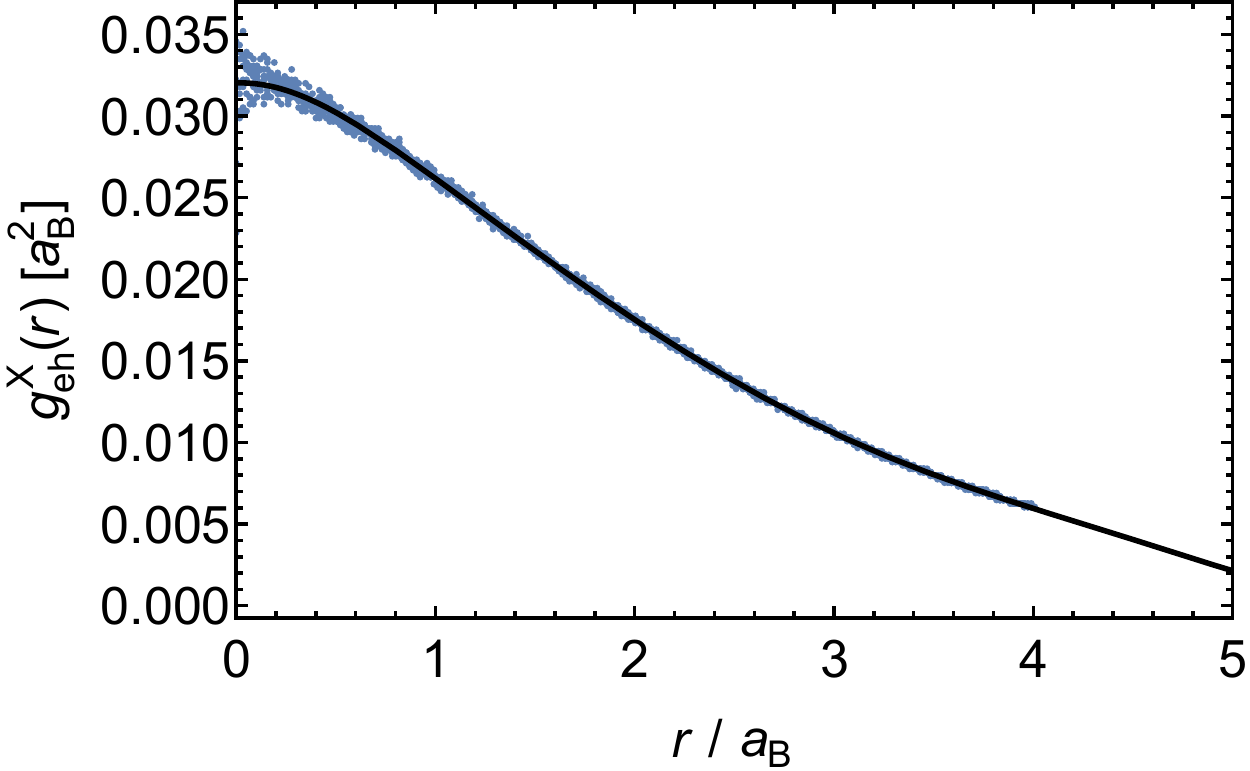}}
  \caption{Example of extrapolation of the exciton pair density
  $g_{\tmop{eh}}^{\text{X}} (r)$ to zero separation using
  Eq.~(\ref{eq:2-kimball1}) (black line). Example for $r_{\ast} = 8
  a_{\text{B}}, m_e / m_h = 0.3$.\label{fig:2-contactexample}}
\end{figure}

Contact pair-density data have been calculated by extrapolating the
electron--hole and electron--electron pair densities to zero separation for
each $m_e / m_h$ and $r_{\ast}$ value considered. The model functions from
Eqs.~(\ref{eq:2-kimball1}) and (\ref{eq:2-kimball2}) were fitted to our
pair-density data with small $r$ (less than $0.1 a_{\text{B}}$--$1
a_{\text{B}}$, depending on the noise in the data), with the data being
weighted by $2 \pi r$ -- see Fig.~\ref{fig:2-contactexample} for an example.
Figure \ref{fig:2-contactPDFtrion} presents the calculated values of contact
electron--hole pair-density for a negative trion.

\begin{figure}[h]
  \resizebox{10cm}{!}{\includegraphics{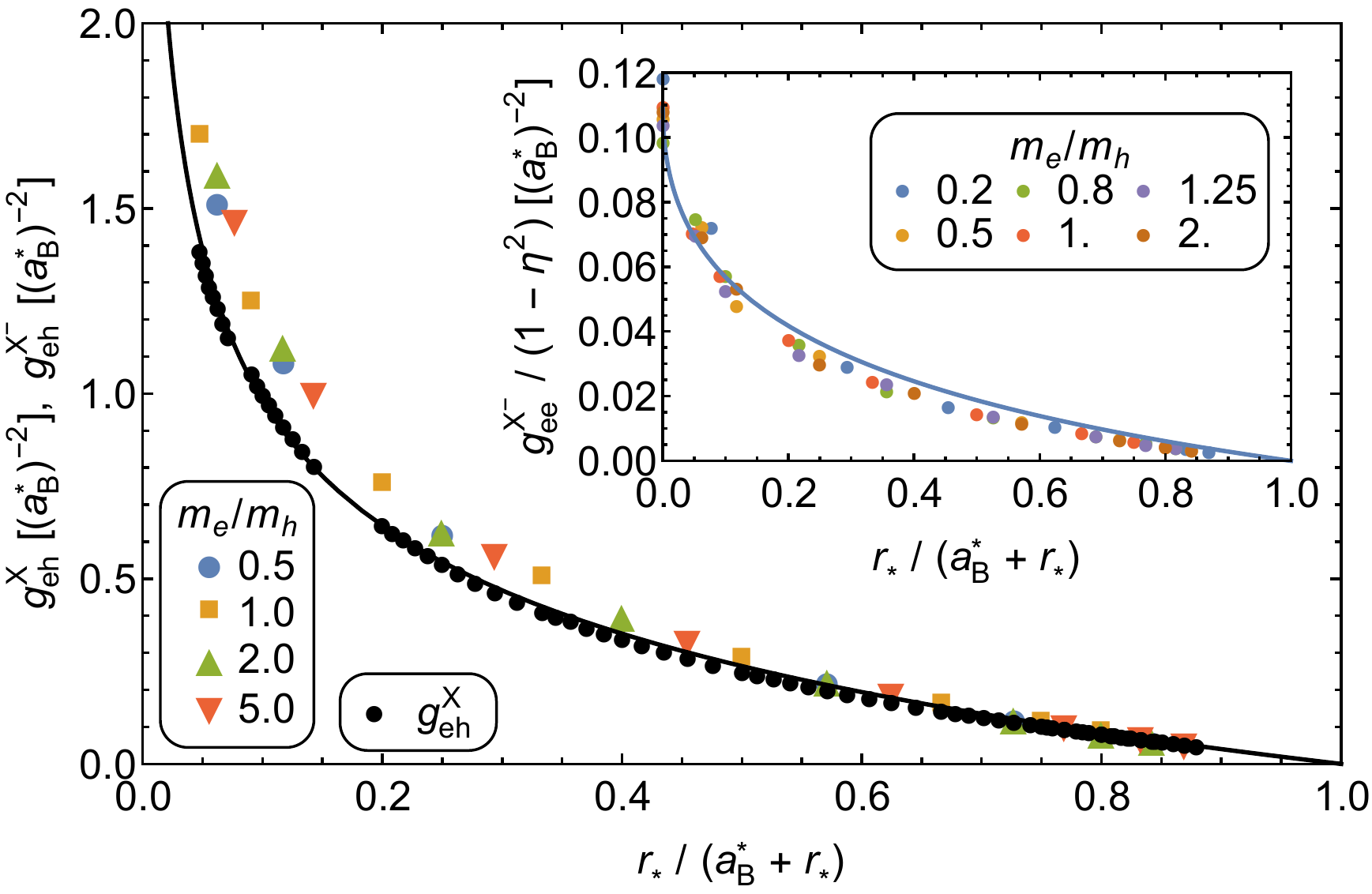}}
  \caption{Contact exciton pair density $g_{\tmop{eh}}^{\text{X}}$, contact
  electron--hole pair density $g_{\tmop{eh}}^{\text{X}^-}$ for a negative
  trion. The black line is the interpolation formula from
  Eq.~(\ref{eq:2-contactint}). The contact electron--electron pair density
  $g_{\tmop{ee}}^{\text{X}^-}$ for a negative trion is shown in an inset. The
  blue line is the interpolation formula for $g_{\tmop{ee}}^{\text{X}^-}$ from
  Eq.~(\ref{eq:2-contactint2}).\label{fig:2-contactPDFtrion}}
\end{figure}

We devised the following interpolation formulas for the contact pair
densities,
\begin{eqnarray}
  g_{\tmop{eh}}^{\text{X}} & \approx & \frac{8.}{\left( a_{\text{B}}^{\ast}
  \right)^2}  \frac{1 - \nu}{1 + 20. \sqrt{\nu}},  \label{eq:2-contactint}\\
  g_{\tmop{eh}}^{\text{X}^-} & \approx & g_{\tmop{eh}}^{\text{X}}, 
  \label{eq:2-contactinttrioneh}\\
  g_{\tmop{ee}}^{\text{X}^-} & \approx & \frac{0.11}{\left(
  a_{\text{B}}^{\ast} \right)^2}  \frac{1 - \sqrt{\nu}}{1 + \sqrt{\nu}}  (1 -
  \eta^2),  \label{eq:2-contactint2}
\end{eqnarray}
which can be used to extract values with up to 5\% error. Equation
(\ref{eq:2-contactinttrioneh}) is valid if the trion wave function can be
approximated as a product of spatially separated exciton and electron wave
functions (see Appendix \ref{ch:AppendixCPDF} for details).

\begin{figure}[h]
  \centering
  \begin{minipage}{.48\textwidth}
    \centering
    \includegraphics[width=.99\linewidth]{XXbinding-ratio.pdf}
    \captionof{figure}{
      Electron--hole contact pair density for the biexciton complex. The black 
      line is $2 g_{\tmop{eh}}^{\text{X}}$ (Eq.~\ref{eq:2-contactintxxeh}). 
      Results for $g^{\tmop{XX}}$ were obtained by E.~Mostaani.
    }
    \label{fig:2-contactPDFbiex1}
  \end{minipage}\hspace{0.03\linewidth}
  \begin{minipage}{.48\textwidth}
    \centering
    \includegraphics[width=.8\linewidth]{XXbindinghistogram.pdf}
    \captionof{figure}{
      Electron--electron (ee), and hole--hole (hh) contact pair densities for
      the biexciton complex.
    }
    \label{fig:2-contactPDFbiex2}
  \end{minipage}
\end{figure}

A similar procedure can be used to extract contact pair densities for
biexcitons. The results are presented in Figs.~\ref{fig:2-contactPDFbiex1} and
\ref{fig:2-contactPDFbiex2}. The electron--hole contact pair density can be
approximated as (for explanation see Appendix \ref{ch:AppendixCPDF}):
\begin{eqnarray}
  g_{\tmop{eh}}^{\tmop{XX}} & \approx & 2 g_{\tmop{eh}}^{\text{X}} . 
  \label{eq:2-contactintxxeh}
\end{eqnarray}

\section{Conclusions and comparison with experiments}

Using the DMC values for binding energies of excitons, trions, donor-bound
excitons, and biex{\nobreak}citons, we compiled Table
\ref{tab:2-BEcomparison}, which compares our results with the previous
theoretical and experimental work.

We can clearly see that the logarithmic limit results overestimate the values
of the binding energies: near $r_{\ast} \rightarrow \infty$, the energy has a
square root behaviour (see Fig.~\ref{fig:2-xcorr} and its discussion), so the
energy values for finite $r_{\ast}$ are much smaller that the logarithmic
limit values.

Exciton and trion binding energies match well with the experimental values.
Our results are also consistent with the subsequent DMC calculations
{\cite{Mayers2015}}, the results from the stochastic variational method
{\cite{Zhang2015excited}}, and the theoretical work that uses path-integral
Monte Carlo {\cite{Kylanpaa2015}}. However, values obtained using density
functional theory with the random phase approximation (DFT+RPA) seem to
underestimate the binding energies. We also notice that the positive trion
(X$^+$) energies are slightly smaller than the negative trion (X$^-$) energies
for a given material.

In addition to TMDCs, we have also calculated binding energies for
phosphorene, to see if the theory considered in this work goes beyond the
description of TMDCs only. We can see that although the results seem to match
the experimental values, the spread in the experimental data is high enough
that we cannot conclude yet if the match is real.
{\setlength{\tabcolsep}{0.25em}
\begin{table}[h!]
  \begin{tabular}{llcccccc}
    \hline \hline
    & \multirow{2}{*}{Material} & \multirow{2}{*}{$\frac{m_e}{m_h}$} & 
    \multirow{2}{*}{$r_{\ast}  [\text{\r A}]$} & 
    \multicolumn{4}{c}{$E^{\text{b}}$ 
    [meV] 
    } \\
    \cline{5-8}
    &  &  &  & This work &
    Theory & Log & Experiments\\
    \hline
    X & MoS$_2$ & 0.7 & 41.5 & 580 & 540 {\cite{Berkelbach2013}}, 550
    {\cite{Zhang2015excited,Mayers2015}} & -- & 570 {\cite{Klots2014}}, 500
    {\cite{Mai2014}}\\
    & MoSe$_2$ & 0.7 & 51.7 & 500 & 470 {\cite{Berkelbach2013}}, 480
    {\cite{Zhang2015excited,Mayers2015}} & -- & 550 {\cite{Ugeda2014}}\\
    & MoTe$_2$ & 0.8 & 60.0 & 450 &  & -- & \\
    & WS$_2$ & 0.6 & 37.9 & 560 & 500 {\cite{Berkelbach2013}}, 520
    {\cite{Zhang2015excited,Mayers2015}} & -- & 320 {\cite{Chernikov2014}},
    700 {\cite{Ye2014}}\\
    & WSe$_2$ & 0.6 & 45.1 & 500 & 450 {\cite{Berkelbach2013}}, 470
    {\cite{Zhang2015excited,Mayers2015}} & -- & 370 {\cite{He2014}}\\
    & WTe$_2$ & 0.4 & 53.9 & 440 &  & -- & \\
    & Phosph. & 1.1 & 3.66 & 1460 &  & -- & \\
    \hline
    X$^-$ & MoS$_2$ &  &  & 37 & 26 {\cite{Berkelbach2013}}, 33.8
    {\cite{Mayers2015}} & 50 & 34 {\cite{Zhang2014}}, 35{\cite{Zhang2013}},
    40{\cite{Lin2014,Lui2014}}\\
    & MoSe$_2$ &  &  & 32 & 21 {\cite{Berkelbach2013}}, 28.4
    {\cite{Mayers2015}} & 40 & 30 {\cite{Ross2013,Singh2014}}\\
    & MoTe$_2$ &  &  & 28 &  & 34 & 25 {\cite{Lezama2015}}\\
    & WS$_2$ &  &  & 37 & 26 {\cite{Berkelbach2013}}, 34.0
    {\cite{Mayers2015}} & 55 & 20--40\cite{Mitioglu2013},
    34{\cite{Zhu2015}}, 36{\cite{Chernikov2014}}\\
    & WSe$_2$ &  &  & 32 & 22 {\cite{Berkelbach2013}}, 29.5
    {\cite{Mayers2015}} & 46 & 30 {\cite{Srivastava2015,Jones2013,Wang2014}}\\
    & WTe$_2$ &  &  & 29 &  & 40 & \\
    & Phosph. &  &  & 132 &  & 553 & 100 {\cite{Yang2015}}, 90--190
    {\cite{Zhang2014phosph}}\\
    \hline
    X$^+$ & MoS$_2$ &  &  & 37 &  & 49 & \\
    & MoSe$_2$ &  &  & 31 &  & 39 & 30 {\cite{Ross2013}}\\
    & MoTe$_2$ &  &  & 28 &  & 34 & \\
    & WS$_2$ &  &  & 37 &  & 53 & \\
    & WSe$_2$ &  &  & 32 &  & 45 & 30 {\cite{Srivastava2015}}, 24
    {\cite{Jones2013}}\\
    & WTe$_2$ &  &  & 29 &  & 38 & \\
    & Phosph. &  &  & 130 &  & 556 & \\
    \hline
    D$^+$X & MoS$_2$ &  &  & 10.1 &  & 18 & \\
    & MoSe$_2$ &  &  & 9.1 &  & 14 & \\
    & MoTe$_2$ &  &  & 6.7 &  & 10 & \\
    & WS$_2$ &  &  & 12.0 &  & 23 & \\
    & WSe$_2$ &  &  & 10.9 &  & 19 & \\
    & WTe$_2$ &  &  & 15.3 &  & 22 & \\
    & Phosph. &  &  & 7.3 &  & 105 & \\
    \hline
    XX & MoS$_2$ &  &  & 23 & 22\tmtextit{.}7 {\cite{Mayers2015}},
    22\tmtextit{.}7(5) {\cite{Kylanpaa2015}} & 27 & 60 {\cite{Sie2015}}, 70
    {\cite{Mai2015}}\\
    & MoSe$_2$ &  &  & 19 & 17\tmtextit{.}7 {\cite{Mayers2015}},
    19\tmtextit{.}3(5) {\cite{Kylanpaa2015}} & 22 & \\
    & MoTe$_2$ &  &  & 16 & 14\tmtextit{.}4(4) {\cite{Kylanpaa2015}} & 18 &
    \\
    & WS$_2$ &  &  & 25 & 23\tmtextit{.}3 {\cite{Mayers2015}},
    23\tmtextit{.}9(5) {\cite{Kylanpaa2015}} & 30 & 65
    {\cite{Plechinger2015}}\\
    & WSe$_2$ &  &  & 21 & 20\tmtextit{.}2 {\cite{Mayers2015}},
    20\tmtextit{.}7(5) {\cite{Kylanpaa2015}} & 25 & 52 {\cite{You2015}}\\
    & WTe$_2$ &  &  & 19 &  & 23 & \\
    & Phosph. &  &  & 135 &  & 300 & \\
    \hline \hline
  \end{tabular}
  \caption{Exciton (X), negative and positive trion (X$^-$ and X$^+$),
  donor-bound exciton (D$^+$X) and biexciton (XX) binding energies for
  selected TMDCs and phosphorene (``Phosph.''). We show our numerical DMC 
  results, previous
  numerical calculations using the DFT+RPA method {\cite{Berkelbach2013}}, the
  stochastic variational method {\cite{Zhang2015excited}}, subsequent (to our
  work) DMC results {\cite{Mayers2015}}, and results from the
  path\mbox{-}integral Monte Carlo method {\cite{Kylanpaa2015}}. We also show
  logarithmic limit results (``Log'', valid for $r_{\ast} \rightarrow
  \infty$), and various experimental values. For the trion, donor-bound
  exciton and biexciton, the appropriate values of $m_e / m_h$ and $r_{\ast}$
  are the same as for an exciton and thus are left blank. Parameters for TMDCs
  are taken from
  Refs.~\cite{Bogdan,Berkelbach2013,Kumar2012,Dawson1987,Liu2011,Lv2015}.
  Phosphorene parameters can be found in
  Ref.~{\cite{Seixas2015}}.\label{tab:2-BEcomparison}}
\end{table}}

Although the results for biexciton (XX) binding energy seem to agree between
all theoretical works, there is a huge discrepancy with the experiments. The
experimental values seem to be three times higher than the theoretical ones
(compare for example the expected spectrum in Fig.~\ref{spectrum} with the one
from Ref.~{\cite{You2015}}). We provide a few possible explanations of this
disagreement. Since the DMC method is statistically exact for these systems,
either the model is inappropriate or there is an issue with the experimental
results.

Firstly, the determination of which absorption/luminescence peak matches a
specific charge carrier complex may be impossible or very hard in some types
of experiment, and therefore the mismatch in the binding energies may be
simply due to misclassification of the peaks. For example, the biexciton peak
may actually be a trion peak and vice versa. If this is the case, then the
theory underestimates the binding energies roughly twice, which could be due
to the contact interaction not being included in our calculations.

Secondly, it may be that the contact interaction for a biexciton complex has
very high values of the constants $A_{i j}$ (from
Chapter~\ref{ch:contactPDF}). However, there is no physical reason why the
contact interaction should be much different for excitons, trions and
biexcitons, and therefore we would rather expect to see similar discrepancy
for all the complexes.

If the Mott-Wannier model breaks down for the system of charge carrier
complexes, then we should not be able to describe accurately binding energies
of excitons and trions. However, the sizes of the complexes are much bigger
than the lattice constant, and therefore the Mott-Wannier model should be
valid in our considerations.

Lastly, there is one prominent difference between Fig.~\ref{spectrum} and any
experimental spectrum: due to finite temperature in the experiment, the peaks
will always have a non-zero thickness (spread), and therefore if two peaks are
close to each other, it is very hard to experimentally distinguish the two
unless high precision is obtained. Hence, the XX peak may be obscured in the
data by the background noise and the trion peaks. The peak that is now
identified as the XX peak may then be another complex, for example a biexciton
bound to an impurity (as Fig.~\ref{fig:2-loglimit} suggests that the
donor\mbox{-}bound biexciton has a higher binding energy than a trion).
Ref.~{\cite{Zhang2015excited}} also shows that an excited biexciton state may
be responsible for this peak.

The contact interaction may be determined in the future using our contact pair
density results. Either one has to calculate the values of the contact
interaction constants \tmtextit{ab initio}, or they could be determined
through the experiment. Ref.~{\cite{Qiu2016}} also suggests that instead of
approximating the effective interaction as a combination of the Keldysh
interaction and a~Dirac delta contact interaction, one can treat the in-plane
dielectric constant~$\varepsilon$ as function to be determined \tmtextit{ab
initio}. However, that approach requires determination of the dielectric
function $\varepsilon$ for each 2D material separately. The method used in
this work is much more general, and our results can be used for any known or
as yet unknown 2D semiconductor for which the Mott-Wannier approximation with
the Keldysh interaction is valid.

To summarise, using diffusion Monte Carlo we have studied charge carrier
complexes in 2D semiconductors, in particular in transition-metal
dichalcogenides. The binding energies obtained are statistically correct and
were calculated for a full range of mass ratios and in\mbox{-}plane
sus{\nobreak}cept{\nobreak}ibil{\nobreak}ities. Excitonic and trionic energies
match the experiments very well, but biexcitonic binding energies are greatly
underestimated. We provided possible explanations of this behaviour. A
classification of trions and biexcitons is also presented. Finally, we have
given results for the contact pair densities, which may be used in the future
to determine the contact interaction between charge carriers.

\chapter*{Epilogue}

Although nanotechnology provides a prospect of expanding our technological
capabilities, the theoretical understanding of quantum systems is a crucial
first step on this road. In this work, we have studied low-dimensional quantum
systems with properties that could be used to advance current technology
beyond its present limits.

Firstly, a one-dimensional generalised $t$-$V$ model of fermions on a lattice
was investigated. The fermions have finite-range interactions that cause the
existence of Mott insulating densities in the system. Otherwise, the model
behaves as a Luttinger liquid. We have succeeded in showing how to extend
previous analytical analysis past nearest-neighbours interactions. We have
also adapted a~new method, the strong coupling expansion, usually used in
investigations of lattice field theories, and used it to determine
higher\mbox{-}order corrections to the ground-state energy and critical
parameters of the extended $t$\mbox{-}$V$ model near the insulating phase. The
method is insensitive to the presence or absence of integrability and goes
beyond perturbation theory. It works best for systems with low degeneracy of
the unperturbed ground state, and was shown in this work to be very versatile,
as it can be used both analytically and numerically. We have also summarised
the strong and weak points of all the methods that were used on the
generalised $t$\mbox{-}$V$ model, in order to provide guidance in choosing the
correct methodology for future investigations of models with long-range
interactions.

Phase diagrams that include possible charge density waves of the system were
also studied. We have shown how to determine analytically all possible phases
and their energies in low Mott insulating densities for any value of the
interaction range. Higher densities were investigated using brute\mbox{-}force
analysis and example systems were used to show that the number of possible
insulating phases grows quickly with the range of interactions. At finite
temperature, this may indicate the loss of insulating properties of the
system.

Due to the generality of this model (\tmtextit{i.e.} the potentials considered
have no given values), it may describe an experimental one\mbox{-}dimensional
system of fermions in an optical lattice. Otherwise, it~provides a theoretical
framework for how to deal with and what to expect from systems with
finite-range interactions.

The second model investigated was a system of charge carrier complexes in
two-dimensional semiconductors. In transition-metal dichalcogenides, complexes
of two (excitons), three (trions) and four (biexcitons) charge carriers were
found experimentally to have large binding energies that are prominently
visible on the photoluminescence and photoabsorption spectra. These complexes
are crucial in the understanding of the optoelectronic properties of the 2D
semiconductors. We have provided a~classification of trions and biexcitons in
transition-metal dichalcogenides that incorporates the difference in spin
polarisation for molybdenum- and tungsten-based materials, and that can be
used to explain the fine structure in the spectra of those materials. Using
diffusion Monte Carlo, a numerical method that is statistically exact for the
charge carrier complexes, we have calculated the binding energies of complexes
with distinguishable particles. Our investigations also include a case where a
complex is bound to a charged impurity. The results were found to be
consistent with other theoretical and experimental work. Our results are
however much more complete: we provide a full range of results that are
calculated using the Mott-Wannier model with the Keldysh interaction.

There is however a~disagreement between the theory and experiments on the
biexciton binding energy. We suggest some resolutions of this issue: it may be
either an effect of misclassification of the peak in the experiment,
underestimation of the contact interaction in the theory, or the combination
of both. We have also extracted contact pair densities, which in the future
may be used to determine the strength of the contact interaction.

Our results have one major advantage: due to the full spectrum of input
parameters (effective masses and the in-plane susceptibility of the material)
that were investigated, they can be used to determine properties of charge
carriers in a wide range of systems, rather than being focused on just a
number of existing materials. We have also provided interpolating formulas
that can be utilised to easily extract binding energies for any
two\mbox{-}dimensional semiconductor, for which the Mott-Wannier model is
applicable.

In conclusion, our work provides a major theoretical advancement in the
understanding of one- and two-dimensional quantum systems that have possible
applications in electronic devices and it is our hope that it will be used in
the near future to advance our technological progress.

\appendix
\chapter{Strong coupling expansion}

\section{Truncated Hamiltonians}\label{ch:appendixsceham}

Truncated Hamiltonians from Chapter \ref{ch:1-SCEcritical}. For the sake of
simplicity, the zeros in the truncated Hamiltonians are represented as dots.
Off\mbox{-}diagonal elements should be multiplied by $(- t)$.

\paragraph{$Q = 1 / 2$ (half-filling), $p = 1$ (integrable), SCE step 3:}
\begin{equation}
  \resizebox{\textwidth}{!}{$
  \scriptsize
  \left(\begin{array}{cccccccc}
    \cdot & \sqrt{L} & \cdot & \cdot & \cdot & \cdot & \cdot & \cdot\\
    \sqrt{L} & U & 2 & \sqrt{2 L - 10} & \cdot & \cdot & \cdot & \cdot\\
    \cdot & 2 & U & \cdot & 2 & \sqrt{L - 6} & \cdot & \cdot\\
    \cdot & \sqrt{2 L - 10} & \cdot & 2 U & \sqrt{\frac{2}{L - 5}} & (L - 7)
    \sqrt{\frac{8}{(L - 6) (L - 5)}} & \sqrt{\frac{8 (L - 7)}{(L - 6) (L -
    5)}} & \sqrt{\frac{3 (L - 7) (L - 8)}{L - 5}}\\
    \cdot & \cdot & 2 & \sqrt{\frac{2}{L - 5}} & U & \cdot & \cdot & \cdot\\
    \cdot & \cdot & \sqrt{L - 6} & (L - 7) \sqrt{\frac{8}{(L - 6) (L - 5)}} &
    \cdot & 2 U & \cdot & \cdot\\
    \cdot & \cdot & \cdot & \sqrt{\frac{8 (L - 7)}{(L - 6) (L - 5)}} & \cdot &
    \cdot & 2 U & \cdot\\
    \cdot & \cdot & \cdot & \sqrt{\frac{3 (L - 7) (L - 8)}{L - 5}} & \cdot &
    \cdot & \cdot & 3 U
  \end{array}\right)
  $}
\end{equation}

\paragraph{$Q = 1 / 3$, $p = 2$ (non-integrable), SCE step 3:}
\begin{equation}
  \resizebox{\textwidth}{!}{$
  \left(\begin{array}{cccccccccccc}
    \cdot & \sqrt{\frac{2 L}{3}} & \cdot & \cdot & \cdot & \cdot & \cdot &
    \cdot & \cdot & \cdot & \cdot & \cdot\\
    \sqrt{\frac{2 L}{3}} & U_2 & \sqrt{3} & 2 & \sqrt{\frac{4 L}{3} - 10} &
    \cdot & \cdot & \cdot & \cdot & \cdot & \cdot & \cdot\\
    \cdot & \sqrt{3} & U_1 & \cdot & \cdot & \sqrt{\frac{1}{3}} &
    \sqrt{\frac{2 L - 17}{3}} & \sqrt{\frac{5}{3}} & \cdot & \cdot & \cdot &
    \cdot\\
    \cdot & 2 & \cdot & U_2 & \cdot & 1 & \cdot & \sqrt{\frac{9}{5}} & 2 &
    \sqrt{\frac{2 L - 21}{3}} & \sqrt{\frac{1}{5}} & \cdot\\
    \cdot & \sqrt{\frac{4 L}{3} - 10} & \cdot & \cdot & 2 U_2 & \cdot &
    \sqrt{\frac{6 (2 L - 17)}{2 L - 15}} & \sqrt{\frac{24}{5 (2 L - 15)}} &
    \sqrt{\frac{6}{2 L - 15}} & \sqrt{\frac{8 (2 L - 21)}{2 L - 15}} & -
    \sqrt{\frac{6}{5 (2 L - 15)}} & \sqrt{\frac{2 (L - 12) (2 L - 21)}{2 L -
    15}}\\
    \cdot & \cdot & \sqrt{\frac{1}{3}} & 1 & \cdot & 2 U_2 & \cdot & \cdot &
    \cdot & \cdot & \cdot & \cdot\\
    \cdot & \cdot & \sqrt{\frac{2 L - 17}{3}} & \cdot & \sqrt{\frac{6 (2 L -
    17)}{2 L - 15}} & \cdot & U_1 + U_2 & \cdot & \cdot & \cdot & \cdot &
    \cdot\\
    \cdot & \cdot & \sqrt{\frac{5}{3}} & \sqrt{\frac{9}{5}} &
    \sqrt{\frac{24}{5 (2 L - 15)}} & \cdot & \cdot & U_1 & \cdot & \cdot &
    \cdot & \cdot\\
    \cdot & \cdot & \cdot & 2 & \sqrt{\frac{6}{2 L - 15}} & \cdot & \cdot &
    \cdot & U_2 & \cdot & \cdot & \cdot\\
    \cdot & \cdot & \cdot & \sqrt{\frac{2 L - 21}{3}} & \sqrt{\frac{8 (2 L -
    21)}{2 L - 15}} & \cdot & \cdot & \cdot & \cdot & 2 U_2 & \cdot & \cdot\\
    \cdot & \cdot & \cdot & \sqrt{\frac{1}{5}} & - \sqrt{\frac{6}{5 (2 L -
    15)}} & \cdot & \cdot & \cdot & \cdot & \cdot & U_1 & \cdot\\
    \cdot & \cdot & \cdot & \cdot & \sqrt{\frac{2 (L - 12) (2 L - 21)}{2 L -
    15}} & \cdot & \cdot & \cdot & \cdot & \cdot & \cdot & 3 U_2
  \end{array}\right)
  $}
\end{equation}

\paragraph{$Q = 1 / 4$, $p = 3$ (non-integrable), SCE step 3:}
\begin{equation}
  \resizebox{\textwidth}{!}{$
  \left( \begin{array}{ccccccccccccc}
    \cdot & \sqrt{\frac{L}{2}} & \cdot & \cdot & \cdot & \cdot & \cdot & \cdot
    & \cdot & \cdot & \cdot & \cdot & \cdot\\
    \sqrt{\frac{L}{2}} & U_3 & \sqrt{3} & 2 & \sqrt{L - 10} & \cdot & \cdot &
    \cdot & \cdot & \cdot & \cdot & \cdot & \cdot\\
    \cdot & \sqrt{3} & U_2 & \cdot & \cdot & \sqrt{\frac{10}{3}} &
    \sqrt{\frac{1}{3}} & \sqrt{\frac{1}{6} (3 L - 34)} & \sqrt{\frac{5}{3}} &
    \cdot & \cdot & \cdot & \cdot\\
    \cdot & 2 & \cdot & U_3 & \cdot & \cdot & 1 & \cdot & \sqrt{\frac{9}{5}} &
    2 & \sqrt{\frac{L - 14}{2}} & \sqrt{\frac{1}{5}} & \cdot\\
    \cdot & \sqrt{L - 10} & \cdot & \cdot & 2 U_3 & \cdot & \cdot &
    \sqrt{\frac{2 (3 L - 34)}{L - 10}} & \sqrt{\frac{16}{5 (L - 10)}} &
    \sqrt{\frac{4}{L - 10}} & \sqrt{\frac{8 (L - 14)}{L - 10}} & -
    \sqrt{\frac{4}{5 (L - 10)}} & \sqrt{\frac{3 (L - 16) (L - 14)}{2 (L -
    10)}}\\
    \cdot & \cdot & \sqrt{\frac{10}{3}} & \cdot & \cdot & U_1 & \cdot & \cdot
    & \cdot & \cdot & \cdot & \cdot & \cdot\\
    \cdot & \cdot & \sqrt{\frac{1}{3}} & 1 & \cdot & \cdot & 2 U_3 & \cdot &
    \cdot & \cdot & \cdot & \cdot & \cdot\\
    \cdot & \cdot & \sqrt{\frac{1}{6} (3 L - 34)} & \cdot & \sqrt{\frac{2 (3 L
    - 34)}{L - 10}} & \cdot & \cdot & U_2 + U_3 & \cdot & \cdot & \cdot &
    \cdot & \cdot\\
    \cdot & \cdot & \sqrt{\frac{5}{3}} & \sqrt{\frac{9}{5}} &
    \sqrt{\frac{16}{5 (L - 10)}} & \cdot & \cdot & \cdot & U_2 & \cdot & \cdot
    & \cdot & \cdot\\
    \cdot & \cdot & \cdot & 2 & \sqrt{\frac{4}{L - 10}} & \cdot & \cdot &
    \cdot & \cdot & U_3 & \cdot & \cdot & \cdot\\
    \cdot & \cdot & \cdot & \sqrt{\frac{L - 14}{2}} & \sqrt{\frac{8 (L -
    14)}{L - 10}} & \cdot & \cdot & \cdot & \cdot & \cdot & 2 U_3 & \cdot &
    \cdot\\
    \cdot & \cdot & \cdot & \sqrt{\frac{1}{5}} & - \sqrt{\frac{4}{5 (L - 10)}}
    & \cdot & \cdot & \cdot & \cdot & \cdot & \cdot & U_2 & \cdot\\
    \cdot & \cdot & \cdot & \cdot & \sqrt{\frac{3 (L - 16) (L - 14)}{2 (L -
    10)}} & \cdot & \cdot & \cdot & \cdot & \cdot & \cdot & \cdot & 3 U_3
  \end{array} \right)
  $}
\end{equation}
\section{Ground states formulas}\label{ch:appendixscegs}

Here we present the ground states that diagonalise the truncated Hamiltonians
from the previous section.

\paragraph{$Q = 1 / 2$ (half-filling), $p = 1$ (integrable), SCE step 3:}
\begin{equation}
  | \tmop{GS} \rangle = \left(\begin{array}{c}
    1 - \frac{1}{2} L \frac{t^2}{U_1^2} + \frac{1}{8} (L^2 + 2 L)
    \frac{t^4}{U_1^4} - \frac{1}{48} L (L + 10) (L - 4) \frac{t^6}{U_1^6} + O
    \left( \frac{t^8}{U_1^8} \right)\\
    \sqrt{L} \frac{t}{U_1} - \frac{1}{2} \sqrt{L} (L + 2) \frac{t^3}{U_1^3} +
    \frac{1}{8} \sqrt{L} L (L + 6) \frac{t^5}{U_1^5} + O \left(
    \frac{t^7}{U_1^7} \right)\\
    2 \sqrt{L} \frac{t^2}{U_1^2} - \sqrt{L} (L + 5) \frac{t^4}{U_1^4} + O
    \left( \frac{t^6}{U_1^6} \right)\\
    \sqrt{\frac{1}{2} L (L - 5)} \frac{t^2}{U_1^2} - \sqrt{\frac{1}{8} L (L -
    5)} (L + 4) \frac{t^4}{U_1^4} + O \left( \frac{t^6}{U_1^6} \right)\\
    5 \sqrt{L} \frac{t^3}{U_1^3} + O \left( \frac{t^5}{U_1^5} \right)\\
    (2 L - 13) \sqrt{\frac{L}{L - 6}} \frac{t^3}{U_1^3} + O \left(
    \frac{t^5}{U_1^5} \right)\\
    \sqrt{\frac{L (L - 7)}{L - 6}} \frac{t^3}{U_1^3} + O \left(
    \frac{t^5}{U_1^5} \right)\\
    \sqrt{\frac{1}{6} L (L - 7) (L - 8)} \frac{t^3}{U_1^3} + O \left(
    \frac{t^5}{U_1^5} \right)
  \end{array}\right)
\end{equation}

\paragraph{$Q = 1 / 3$, $p = 2$ (non-integrable), SCE step 3:}
\begin{equation}
  \resizebox{\textwidth}{!}{$
  | \tmop{GS} \rangle = \left(\begin{array}{c}
    1 - \frac{L}{3 U_2^2} t^2 + \left( \frac{L (L + 3)}{18 U_2^4} - \frac{2
    L}{U_1 U_2^3} - \frac{L}{U_1^2 U_2^2} \right) t^4 + \left( - \frac{L (L -
    6) (L + 15)}{162 U_2^6} + \frac{2 L (L + 2)}{3 U_1 U_2^5} + \frac{L (4 L -
    129)}{12 U_1^2 U_2^4} - \frac{9 L}{U_1^3 U_2^3} - \frac{5 L}{U_1^4 U_2^2}
    \right) t^6 + O (t^8)\\
    \sqrt{\frac{2 L}{3}} t + \sqrt{\frac{2 L}{3}} \left( - \frac{3 + L}{3
    U_2^3} + \frac{3}{U_1 U_2^2} \right) t^3 + \sqrt{\frac{2 L}{3}} \left(
    \frac{L (L + 9)}{18 U_2^5} - \frac{3 L + 8}{U_1 U_2^4} - \frac{2 L - 17}{2
    U_1^2 U_2^3} + \frac{5}{U_1^3 U_2^2} \right) t^5 + O (t^7)\\
    \sqrt{2 L}  \frac{1}{U_1 U_2} t^2 + \sqrt{2 L} \left( - \frac{L + 2}{3 U_1
    U_2^3} + \frac{1}{6 U_1^2 U_2^2} + \frac{5}{3 U_1^3 U_2} \right) t^4 + O
    (t^6)\\
    \sqrt{\frac{2 L}{3}} \frac{2}{U_2^2} t^2 + \sqrt{\frac{2 L}{3}} \left( -
    \frac{2 L + 15}{3 U_2^4} + \frac{23}{2 U_1 U_2^3} + \frac{3}{U_1^2 U_2^2}
    \right) t^4 + O (t^6)\\
    \sqrt{L (2 L - 15)} \frac{1}{3 U_2^2} t^2 + \sqrt{\frac{L}{2 L - 15}}
    \left( - \frac{(L + 6) (2 L - 15)}{9 U_2^4} + \frac{4 L - 29}{U_1 U_2^3} +
    \frac{2}{U_1^2 U_2^2} \right) t^4 + O (t^6)\\
    \sqrt{\frac{2 L}{3}} \left( \frac{1}{U_2^3} + \frac{1}{2 U_1 U_2^2}
    \right) t^3 + O (t^5)\\
    \sqrt{\frac{2 L (2 L - 17)}{3}}  \frac{1}{U_1 U_2^2} t^3 + O (t^5)\\
    \sqrt{\frac{2 L}{15}}  \left( \frac{8}{U_1 U_2^2} + \frac{5}{U_1^2 U_2}
    \right) t^3 + O (t^5)\\
    \sqrt{\frac{2 L}{3}}  \frac{5}{U_2^3} t^3 + O (t^5)\\
    \sqrt{L (4 L - 42)}  \frac{2}{3 U_2^3} t^3 + O (t^5)\\
    \sqrt{\frac{2 L}{15}}  \frac{1}{U_1 U_2^2} t^3 + O (t^5)\\
    \sqrt{2 L (L - 12) (2 L - 21)}  \frac{1}{9 U_2^3} t^3 + O (t^5)
  \end{array}\right)
  $}
\end{equation}

\paragraph{$Q = 1 / 4$, $p = 3$ (non-integrable), SCE step 3:}
\begin{equation}
  \resizebox{\textwidth}{!}{$
  | \tmop{GS} \rangle = \left( \begin{array}{c}
    1 - \frac{Lt^2}{4 U_3^2} + \left( \frac{L (L^{} + 4)}{32 U_3^4} - \frac{3
    L}{2 U_2 U_3^3} - \frac{3 L}{4 U^2_2 U_3^2} \right) t^4 + L \left( -
    \frac{(L - 8) (L + 20)}{384 U_3^6} + \frac{3 (L - 43)}{16 U_2^2 U_3^4} +
    \frac{3 L + 8}{8 U_2 U_3^5} - \frac{5}{2 U_1^2 U_2^2 U_3^2} - \frac{5}{U_1
    U_2^3 U_3^2} - \frac{5}{U_1 U_2^2 U_3^3} - \frac{15}{4 U_2^4 U_3^2} -
    \frac{27}{4 U_2^3 U_3^3} \right) t^6 + O (t^8)\\
    \sqrt{\frac{L}{2}} \frac{1}{U_3} t + \sqrt{\frac{L}{2}} \left( - \frac{L +
    4}{4 U_3^3 } + \frac{3}{U_2 U_3^2} \right) t^3 + \sqrt{\frac{L}{2}} \left(
    \frac{L (L + 12)}{32 U_3^5} - \frac{3 L - 34}{4 U_2^2 U_3^3} - \frac{9 L +
    32}{4 U_2 U_3^4} + \frac{5}{U_2^3 U_3^2} + \frac{10}{U_1 \text{} U_2^2
    U_3^2} \right) t^5 + O (t^7)\\
    \sqrt{\frac{3 L}{2}} \frac{1}{U_2 U_3} t^2 + \sqrt{\frac{3 L}{2}} \left(
    \frac{10}{3 U_1 U_2^2 U_3} - \frac{(3 L + 8)}{12 U_2 U_3^3} + \frac{5}{3
    U_2^3 U_3} + \frac{1}{6 U_2^2 U_3^2} \right) t^4 + O (t^6)\\
    \sqrt{2 L} \frac{1}{U_3^2} t^2 + \sqrt{2 L} \left( - \frac{L + 10}{4
    U_3^4} + \frac{3}{2 U_2^2 U_3^2} + \frac{23}{4 U_2 U_3^3} \right) t^4 + O
    (t^6)\\
    \sqrt{\frac{L (L - 10)}{8}} \frac{1}{U_3^2} t^2 + \sqrt{\frac{L}{2 (L -
    10)}} \left( \frac{3 L - 29}{U_2 U_3^3} - \frac{(L - 10) (L + 8)}{8 U_3^4}
    + \frac{2}{U_2^2 U_3^2} \right) t^4 + O (t^6)\\
    \frac{\sqrt{5 L}}{U_1 U_2 U_3} t^3 + O (t^5)\\
    \sqrt{\frac{L}{2}} \left( \frac{1}{U_3^3} + \frac{1}{2 U_2 U_3^2} \right)
    t^3 + O (t^5)\\
    \frac{\sqrt{L (3 L - 34)}}{2 U_2 U_3^2} t^3 + O (t^5)\\
    \sqrt{\frac{2 L}{5}} \left( \frac{5}{2 U_2^2 U_3} + \frac{4}{U_2 U_3^2}
    \right) t^3 + O (t^5)\\
    \sqrt{\frac{L}{2}} \frac{5}{U_3^3} t^3 + O (t^5)\\
    \sqrt{L (L - 14)} \frac{1}{U_3^3} t^3 + O (t^5)\\
    \sqrt{\frac{L}{10}} \frac{1}{U_2 U_3^2} t^3 + O (t^5)\\
    \sqrt{\frac{L (L - 14) (L - 16)}{3}} \frac{1}{4 U_3^3} t^3 + O (t^5)
  \end{array} \right)
  $}
\end{equation}

\newpage
\section{Correlators}\label{ch:appendixscecorr}

Here we present the fermion--fermion correlators for critical densities, see
Chapter \ref{ch:1-SCEcritical}.

\paragraph{$Q = 1 / 3$, $p = 2$ (non-integrable), SCE step 3:}
\begin{eqnarray}
  \hat{N}_1 & = & \tmop{diag} (0, 0, 1, 0, 0, 0, 1, 1, 0, 0, 1, 0) \\
  \hat{N}_2 & = & \tmop{diag} (0, 1, 0, 1, 2, 2, 1, 0, 1, 2, 0, 3) \\
  \hat{N}_3 & = & \tmop{diag} \left( \frac{L}{3}, \frac{L - 6}{3}, \frac{L -
  8}{3}, \frac{L - 6}{3}, \frac{(L - 9) (2 L - 21)}{3 (2 L - 15)}, \frac{L -
  9}{3}, \right. \\
  &  & \left. \frac{\frac{2}{3} L^2 - 15 L + 86}{2 L - 17}, \frac{L - 9}{3},
  \frac{L - 6}{3}, \frac{\frac{2}{3} L^2 - 15 L + 87}{2 L - 21}, \frac{L -
  9}{3}, \frac{(L - 15) (2 L - 27)}{3 (2 L - 21)} \right) \nonumber\\
  \hat{N}_4 & = & \tmop{diag} \left( 0, 1, \frac{5}{3}, 1, \frac{4 L - 33}{2 L
  - 15}, 1, \frac{16 L - 171}{3 (2 L - 17)}, 3, 1, \frac{4 L - 45}{2 L - 21},
  3, \frac{6 (L - 12)}{2 L - 21} \right) \\
  \hat{N}_5 & = & \tmop{diag} \left( 0, 1, \frac{5}{3}, 2, \frac{4 L - 39}{2 L
  - 15}, 2, \right. \\
  &  & \left. \frac{16 L - 156}{3 (2 L - 17)}, 1, 2, \frac{6 (L - 12)}{2 L -
  21}, 1, \frac{3 (2 L^2 - 54 L + 369)}{(2 L - 21) (L - 12)} \right) \nonumber
\end{eqnarray}

\paragraph{$Q = 1 / 4$, $p = 3$ (non-integrable), SCE step 3:}
\begin{eqnarray}
  \hat{N}_1 & = & \tmop{diag} (0, 0, 0, 0, 0, 1, 0, 0, 0, 0, 0, 0, 0) \\
  \hat{N}_2 & = & \tmop{diag} (0, 0, 1, 0, 0, 0, 0, 1, 1, 0, 0, 1, 0) \\
  \hat{N}_3 & = & \tmop{diag} (0, 1, 0, 1, 2, 0, 2, 1, 0, 1, 2, 0, 3) \\
  \hat{N}_4 & = & \tmop{diag} \left( \frac{L}{4}, \frac{L}{4} - 2, \frac{L}{4}
  - \frac{8}{3}, \frac{L}{4} - 2, \frac{(L - 14) (L - 12)}{4 (L - 10)},
  \frac{L}{4} - \frac{29}{10}, \frac{L}{4} - 3, \right. \\
  &  & \left. \frac{3 L^2 - 90 L + 680}{4 (3 L - 34)}, \frac{L}{4} - 3,
  \frac{L}{4} - 2, \frac{L^2 - 30 L + 232}{4 (L - 14)}, \frac{L}{4} - 3,
  \frac{(L - 20) (L - 18)}{4 (L - 14)} \right) \nonumber\\
  \hat{N}_5 & = & \tmop{diag} \left( 0, 1, \frac{4}{3}, 1, \frac{2 (L - 12)}{L
  - 10}, 1, 0, \frac{7 (L - 14)}{3 L - 34}, 2, 1, \frac{2 (L - 16)}{L - 14},
  2, \frac{3 (L - 18)}{L - 14} \right) 
\end{eqnarray}
\section{Near-critical ground state energies for even number of
particles}\label{ch:appendixsceeven}

For even $N$, the energies will have an additional correction, due to
additional phase factor that is acquired during hopping of the particle in its
bosonic spin-half formulation (\tmtextit{cf.} footnote in Chapter
\ref{ch:1-jordanwigner}). The ground state energies were calculated to be:
\begin{eqnarray}
  E_{0 \circ} & = & - 2 N \frac{t^2}{U_p} + O (t^4), \\
  E_{1 \circ} & = & - 2 \left( \cos \left( \frac{\pi}{L} - | \phi | \right)
  \right) t - 2 \left( N - 2 \sin^2  \left( \frac{\pi}{L} - | \phi | \right)
  \right)  \frac{t^2}{U_p} + 2 F' (L) \frac{t^3}{U_p^2} + O (t^4), \\
  E_{2 \circ} & = & - 4 \left( \cos \frac{\pi}{N + 2} \cos \frac{\pi}{L}
  \right) t + A (N) A' (N, L)  \frac{t^2}{U_p} + B (N) B' (N, L) 
  \frac{t^3}{U_p^2} + O (t^4), 
\end{eqnarray}
where functions $A (N)$ and $B (N)$ are taken from Eq.~(\ref{eq:1-corr2}) and
other approximations are
\begin{equation}
  A' (N, L) \approx 1 - \frac{1}{(0.139639 L^2 - 0.0358 L + 0.8911) (0.1683 N
  + 0.354)},
\end{equation}
\begin{equation}
  B' (N, L) \approx 1 + \frac{2.44 N + 1}{0.76 L^2 - 5. L + 68.},
\end{equation}
\begin{equation}
  F' (L, N, \phi) \approx 2 + 2.81105 \left( 1 - \cos \frac{10.2633}{L}
  \right) + F'_1 (L, N) | \phi | + F'_2 (L, N) | \phi |^2 .
\end{equation}

\chapter{Charge density waves}

\section{Higher critical densities: Mathematica
code}\label{ch:appendixCDWMath}

We start by generating a partial basis for a specific density \tmverbatim{Q}
and interaction range \tmverbatim{p}. Particle number \tmverbatim{n} is set to
one, and will be increased after every iteration. Basis is generated according
to Theorem \ref{1-CDWth1}.

{\vspace{0.4cm}\setstretch{1.}
\begin{tmindent}
  \begin{tmcode}
{\color[HTML]{707070}(* System setup *)}
\tmcolor{blue}{Q} = 1/3; {\color[HTML]{707070}(* Particle density *)}
\tmcolor{blue}{p} = 4; {\color[HTML]{707070}(* Maximum interaction range *)}
\tmcolor{blue}{U} = \{\tmcolor{blue}{U1}, \tmcolor{blue}{U2}, \tmcolor{blue}{U3}, \tmcolor{blue}{U4}\}; {\color[HTML]{707070}(* All non-zero potential energies *)}
\tmcolor{blue}{n} = 12; {\color[HTML]{707070}(* Number of particles in the system *)}

{\color[HTML]{707070}(* Generate partial basis *)}
\tmcolor{blue}{f}[\tmtextit{{\color[HTML]{008000}n\_}}, \tmtextit{{\color[HTML]{008000}k\_}}] := Permutations[Join[ConstantArray[1, \tmtextit{{\color[HTML]{008000}k}}], ConstantArray[0, \tmtextit{{\color[HTML]{008000}n}}-\tmtextit{{\color[HTML]{008000}k}}]]]
\tmcolor{blue}{Possibilities} = Map[Join[\{1\}, Table[0, \{{\color[HTML]{008080}m}, 2, 1/\tmcolor{blue}{Q}\}], \tmtextit{{\color[HTML]{008000}\#}}]\&, \tmcolor{blue}{f}[(\tmcolor{blue}{n}-1)/\tmcolor{blue}{Q}, \tmcolor{blue}{n}-1]];
  \end{tmcode}
\end{tmindent}}

\begin{figure}[h]
  \centering
  \resizebox{10cm}{!}{\includegraphics{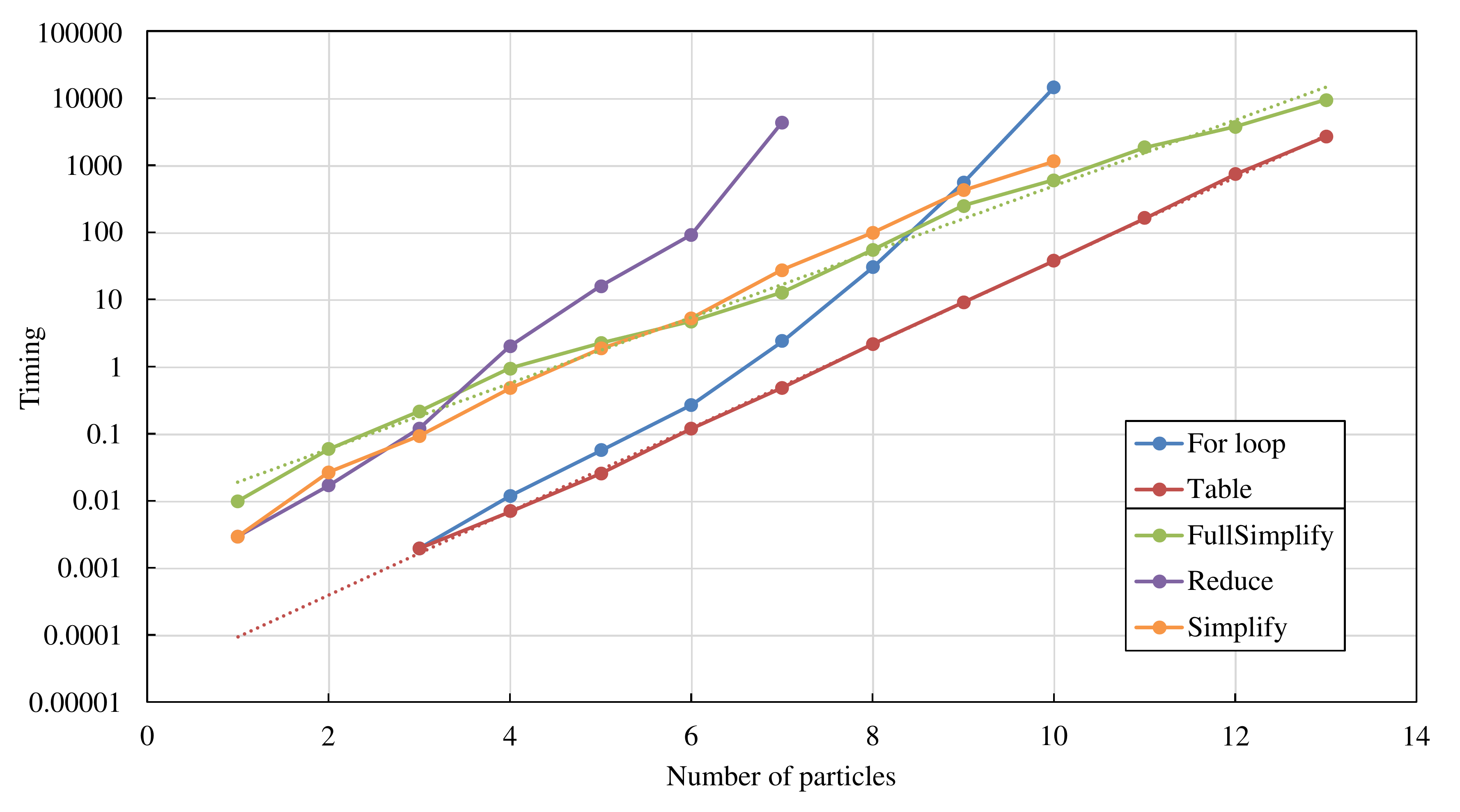}}
  \caption{Efficiency of setting up tables (\tmverbatim{For[]} loop and
  \tmverbatim{Table[]} function) and running simplification
  (\tmverbatim{Simplify[]}, \tmverbatim{FullSimplify[]} and
  \tmverbatim{Reduce[]} functions) for an example system $p = 4, Q = 1 / 2$.
  Dotted lines show exponential scaling.\label{fig:appendixCDWefficiency}}
\end{figure}

We calculate energy density for every state and then the list of energies is
simplified by removing any duplicates. For \tmverbatim{n}~$>$~1, we also need
to make sure that the list consists of energies calculated in previous
iterations. Instead of using loop statements to fill in tables, it was found
that the Mathematica's \tmverbatim{Table[]} function was more efficient and
the time consumption scaled exactly exponentially with the system size (see
Fig.~\ref{fig:appendixCDWefficiency}).

{\vspace{0.4cm}\setstretch{1.}
\begin{tmindent}
\begin{tmcode}
\tmcolor{blue}{Energies} = Table[0,\{Length[\tmcolor{blue}{Possibilities}]\}];
(\tmcolor{blue}{Energies} = Table[
\ \ \ \tmcolor{blue}{ene} = 0;
\ \ \ Do[
\ \ \ \ \ \ \ If[\tmcolor{blue}{Possibilities}[[{\color[HTML]{008B8B}i}, {\color[HTML]{008B8B}k}]] == 1 \&\&
\ \ \ \ \ \ \ \ \ \ \ \tmcolor{blue}{Possibilities}[[{\color[HTML]{008B8B}i}, Mod[{\color[HTML]{008B8B}k} + {\color[HTML]{008B8B}j}, Length[\tmcolor{blue}{Possibilities}[[1]]], 1]]] == 1,
\ \ \ \ \ \ \ \ \ \ \ \tmcolor{blue}{ene} = \tmcolor{blue}{ene} + \tmcolor{blue}{U}[[{\color[HTML]{008B8B}j}]]
\ \ \ \ \ \ \ ]
\ \ \ \ \ \ \ , \{{\color[HTML]{008B8B}j},1,\tmcolor{blue}{p}\}, \{{\color[HTML]{008B8B}k},1,Length[\tmcolor{blue}{Possibilities}[[1]]]\}
\ \ \ ];
\ \ \ \tmcolor{blue}{ene}
\ \ \ , \{{\color[HTML]{008B8B}i},1,Length[\tmcolor{blue}{Energies}]\}
]) //AbsoluteTiming {\color[HTML]{707070}(* Print timing for efficiency check *)}

\tmcolor{blue}{Energies} = DeleteDuplicates[
\ \ \ FullSimplify[Join[\tmcolor{blue}{PreviousEnergies}, \tmcolor{blue}{Energies}/Length[\tmcolor{blue}{Possibilities}[[1]]]]]
];
\end{tmcode}
\end{tmindent}
}

Next step is to assess whether the condition $\forall_{\beta \neq \alpha}
E_{\alpha} < E_{\beta}$ is false (\tmtextit{c.f.} Chapter
\ref{ch:1-CDWdetailscalc}) for a given $E_{\alpha}$. The list of energies
$E_{\alpha}$ that do not render this condition false must define the phases of
the system. Mathematica provides three simplifying statements that can be used
in this case:
\begin{itemize}
  \item \tmtexttt{Simplify[]}, which uses algebraic and other simple
  transformations to find the simplest form possible;
  
  \item \tmtexttt{FullSimplify[]}, which uses much more advanced
  transformations, that could involve elementary and special functions; the
  final form is at least as simple as the one returned by
  \tmtexttt{Simplify[]};
  
  \item \tmtexttt{Reduce[]}, which solves equations or inequalities and
  eliminates quantifiers in the statement provided.
\end{itemize}
Mathematica's \tmverbatim{Reduce[]} function was found to give the most
reliable simplification results, however it also needed much higher
computational resources. \tmverbatim{Simplify[]} and
\tmverbatim{FullSimplify[]} were found to be quite similar in resource
consumption (see Fig. \ref{fig:appendixCDWefficiency}) and the latter was
chosen due to higher reliability for simplifying complicated conditions.

{\vspace{0.4cm}\setstretch{1.}
\begin{tmindent}
  \begin{tmcode}
\tmcolor{blue}{a} = \tmcolor{blue}{U1}>0 \&\& \tmcolor{blue}{U2}>0 \&\& \tmcolor{blue}{U3}>0 \&\& \tmcolor{blue}{U4}>0;
\tmcolor{blue}{len} = Length[\tmcolor{blue}{Energies}];
Row[\{ProgressIndicator[Dynamic[\tmcolor{blue}{len}], \{0, \tmcolor{blue}{len}\}], " ", Dynamic[\tmcolor{blue}{len}]\}]
\ \ \ {\color[HTML]{707070}(* Show progress *)}
For[{\color[HTML]{008B8B}i} = 1, {\color[HTML]{008B8B}i} <= Length[\tmcolor{blue}{Energies}], {\color[HTML]{008B8B}i}++, \ \ \ \tmcolor{blue}{b} = \tmcolor{blue}{a};
\ \ \ For[{\color[HTML]{008B8B}j} = 1, {\color[HTML]{008B8B}j} <= Length[\tmcolor{blue}{Energies}], {\color[HTML]{008B8B}j}++,
\ \ \ \ \ \ \ If[{\color[HTML]{008B8B}i} != {\color[HTML]{008B8B}j},
\ \ \ \ \ \ \ \ \ \ \ \tmcolor{blue}{b} = \tmcolor{blue}{b} \&\& \tmcolor{blue}{Energies}[[{\color[HTML]{008B8B}i}]] < \tmcolor{blue}{Energies}[[{\color[HTML]{008B8B}j}]]
\ \ \ \ \ \ \ ]
\ \ \ ];
\ \ \ If[Not[FullSimplify[\tmcolor{blue}{b}]],
\ \ \ \ \ \ \ \tmcolor{blue}{Energies} = Drop[\tmcolor{blue}{Energies}, \{{\color[HTML]{008B8B}i}\}];
\ \ \ \ \ \ \ {\color[HTML]{008B8B}i}--
\ \ \ ];
\ \ \ \tmcolor{blue}{len} = Length[\tmcolor{blue}{Energies}]
] //AbsoluteTiming {\color[HTML]{707070}(* Print timing for efficiency check *)}
\tmcolor{blue}{PreviousEnergies} = \tmcolor{blue}{Energies}
  \end{tmcode}
\end{tmindent}}

Another simplification followed, this time using \tmverbatim{Reduce[]}
function. Final step consists of checking the generated states against Theorem
\ref{1-CDWth2}.

\section{Matrix product states}\label{ch:appendixMPS}

\subsection{Brief introduction}

In the matrix product states approach, the idea is that the ground state of
the desired system can be represented as a tensor product of matrices (matrix
product state, MPS) residing separately on each site, \tmtextit{i.e.}:
\begin{equation}
  | \psi \rangle = \sum_{\{ \sigma_i \}} A^{\sigma_1} A^{\sigma_2} \cdots
  A^{\sigma_L} | \sigma_1 \sigma_2 \cdots \sigma_L \rangle,
\end{equation}
or in a diagrammatic form presented in Figure \ref{fig:appendixmps-tensor}a.
$| \sigma_1 \sigma_2 \cdots \sigma_L \rangle$ are states of the computational
basis (can be a Fock basis or a spin basis). $A$ are matrices that reside on
sites, and may be rectangular. Their dimensions are such that by contracting
them, $A^{\sigma_1} A^{\sigma_2} \cdots A^{\sigma_L}$, we recover a~scalar.

\begin{figure}[h]
  \resizebox{12cm}{!}{\includegraphics{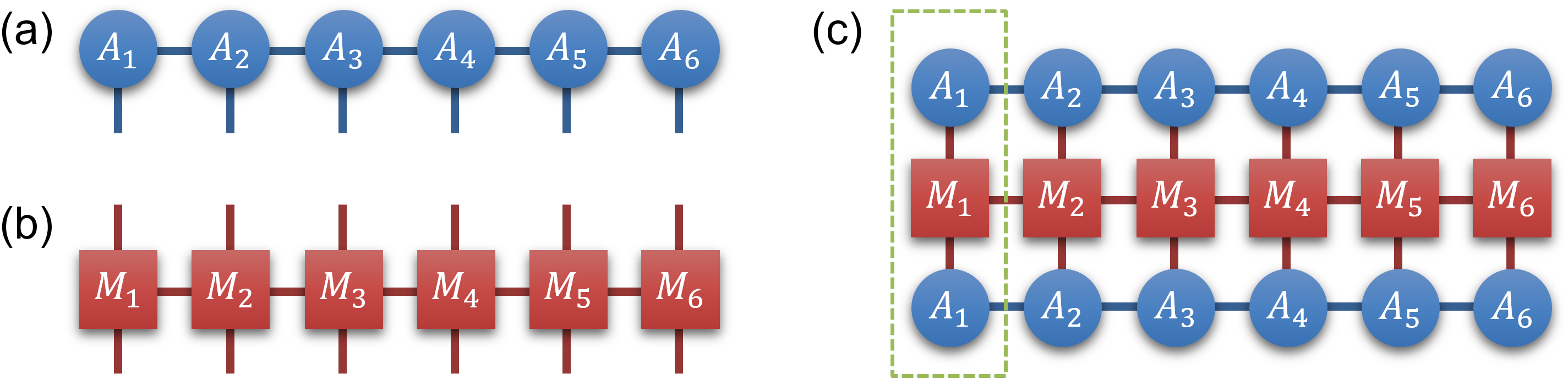}}
  \caption{(a) State $| \psi \rangle$ as a matrix product state. Every leg
  denotes tensor index, line connecting two tensors is a contraction. (b) The
  Hamiltonian or any other operator in a matrix product operator form.
  (c)~Calculation of the energy in the system, $\langle \psi | H | \psi
  \rangle$, as an application of MPS on both sides of an MPO. Notice that one
  can locally contract the tensors (green box).\label{fig:appendixmps-tensor}}
\end{figure}

Similarly, the Hamiltonian of the model can be represented as a matrix product
operator (MPO),
\begin{equation}
  H = \sum_{\{ \sigma_i \}} \sum_{\{ \sigma'_j \}} M^{\sigma_1 \sigma'_1}
  M^{\sigma_2 \sigma'_2} \cdots M^{\sigma_L \sigma'_L} | \sigma_1 \sigma_2
  \cdots \sigma_L \rangle \langle \sigma_1' \sigma_2' \cdots \sigma_L' |,
\end{equation}
or diagrammatically on Figure \ref{fig:appendixmps-tensor}b. The ingenuity of
the matrix product states approach lies in the fact that one can locally
contract an application of the Hamiltonian on a state -- see
Fig.~\ref{fig:appendixmps-tensor}c. This makes computational algorithms, such
as density matrix product state (DMRG) approach much more efficient.

\subsection{MPO representation of the Hamiltonian}

We will now devise an MPO representation of the Hamiltonian of the generalised
$t$-$V$ model. Firstly, we start with the spin equivalent of the Hamiltonian
in a $p = 2$ case, in which there is now the following potential term:
\begin{equation}
  U_1 \sum_{i = 1}^L \mathbbm{P}^{\uparrow}_i \mathbbm{P}^{\uparrow}_{i + 1} +
  U_2 \sum_{i = 1}^L \mathbbm{P}^{\uparrow}_i \mathbbm{P}^{\uparrow}_{i + 2} .
\end{equation}
To represent the Hamiltonian as an MPO, we shall consider the action of the
automaton, presented in Fig.~\ref{fig:appendixmps-auto1}.

\begin{figure}[h]
  \centering
  \begin{minipage}{.48\textwidth}
    \centering
    \includegraphics[width=.55\linewidth]{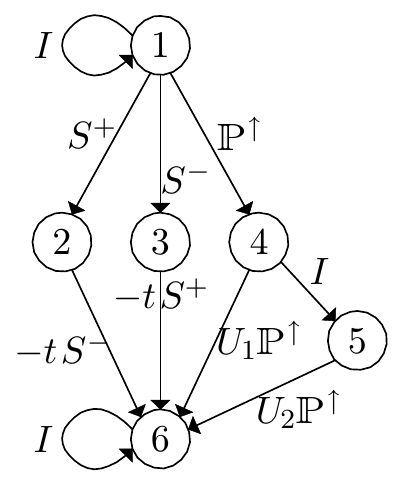}
    \captionof{figure}{
      Automaton for the $p = 2$ generalised $t$-$V$ model.
    }
    \label{fig:appendixmps-auto1}
  \end{minipage}\hspace{0.03\textwidth}
  \begin{minipage}{.48\textwidth}
    \centering
    \includegraphics[width=.9\linewidth]{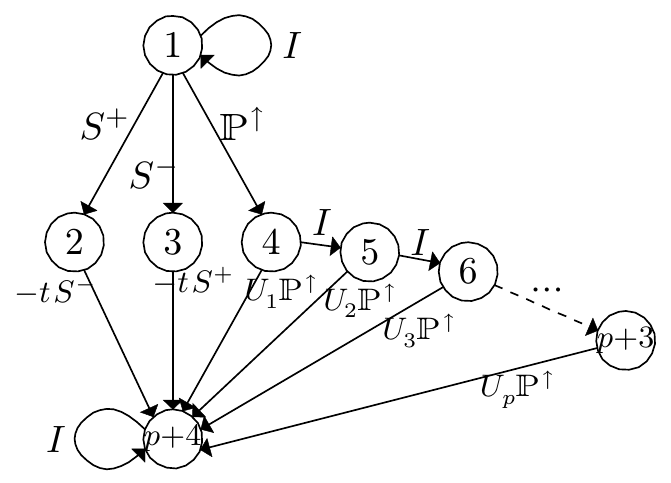}
    \captionof{figure}{
      Automaton for the generalised $t$-$V$ model with any interaction range 
      $p$.
    }
    \label{fig:appendixmps-auto2}
  \end{minipage}
\end{figure}

Such an automaton can be written as the following one-site matrix:
\begin{equation}
  M^{[i]} = \left(\begin{array}{cccccc}
    I & 0 & 0 & 0 & 0 & 0\\
    S^+ & 0 & 0 & 0 & 0 & 0\\
    S^- & 0 & 0 & 0 & 0 & 0\\
    \mathbbm{P}^{\uparrow} & 0 & 0 & 0 & 0 & 0\\
    0 & 0 & 0 & I & 0 & 0\\
    0 & - t S^- & - t S^+ & U_1 \mathbbm{P}^{\uparrow} & U_2
    \mathbbm{P}^{\uparrow} & I
  \end{array}\right),
\end{equation}
where each row/column is a transition between specific automaton state.
Additionally, for open-boundary conditions, we have:
\begin{equation}
  M^{[1]} = \left(\begin{array}{cccccc}
    0 & - t S^- & - t S^+ & U_1 \mathbbm{P}^{\uparrow} & U_2
    \mathbbm{P}^{\uparrow} & I
  \end{array}\right), \qquad M^{[1]} = \left(\begin{array}{c}
    I\\
    S^+\\
    S^-\\
    \mathbbm{P}^{\uparrow}\\
    0\\
    0
  \end{array}\right) .
\end{equation}
A similar automaton (see Fig.~\ref{fig:appendixmps-auto2}) can be devised for
the Hamiltonian with any interaction range $p$. The corresponding one-site
matrix is:
\begin{equation}
  M^{[i]} = \left(\begin{array}{cccccccccc}
    I & 0 & 0 & 0 & 0 & 0 & \cdots & 0 & 0 & 0\\
    S^+ & 0 & 0 & 0 & 0 & 0 & \cdots & 0 & 0 & 0\\
    S^- & 0 & 0 & 0 & 0 & 0 & \cdots & 0 & 0 & 0\\
    \mathbbm{P}^{\uparrow} & 0 & 0 & 0 & 0 & 0 & \cdots & 0 & 0 & 0\\
    0 & 0 & 0 & I & 0 & 0 & \cdots & 0 & 0 & 0\\
    0 & 0 & 0 & 0 & I & 0 & \cdots & 0 & 0 & 0\\
    \vdots & \vdots & \vdots & \vdots & \ddots & \ddots & \ddots & \vdots &
    \vdots & \vdots\\
    0 & 0 & 0 & 0 & 0 & 0 & \ddots & 0 & 0 & 0\\
    0 & 0 & 0 & 0 & 0 & 0 & \ddots & I & 0 & 0\\
    0 & - t S^- & - t S^+ & U_1 \mathbbm{P}^{\uparrow} & U_2
    \mathbbm{P}^{\uparrow} & U_3 \mathbbm{P}^{\uparrow} & \cdots & U_{p \um 1}
    \mathbbm{P}^{\uparrow} & U_p \mathbbm{P}^{\uparrow} & I
  \end{array}\right) .
\end{equation}
For a periodic system, the problem is much more complex. For example, an
automaton of a periodic XY model is presented in
Fig.~\ref{fig:appendixmps-auto3}. The corresponding one-site representation
is:

\begin{figure}
  \centering
  \begin{minipage}{.48\textwidth}
    \centering
    \includegraphics[width=0.95\linewidth]{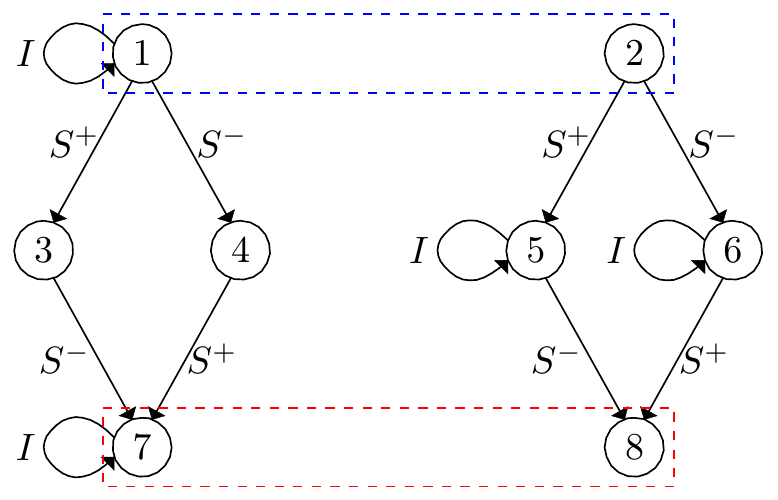}
    \captionof{figure}{
      Automaton for a simple periodic XY model.
    }
    \label{fig:appendixmps-auto3}
  \end{minipage}\hspace{0.03\textwidth}
  \begin{minipage}{.48\textwidth}
    \centering\vspace{0.5cm}
    \includegraphics[width=.3\linewidth]{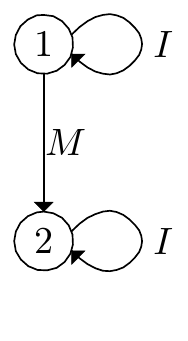}
    \captionof{figure}{
      Automaton for a one-site operator $M$.
    }
    \label{fig:appendixmps-auto4}
  \end{minipage}
\end{figure}

\begin{equation}
  M^{[i]} = \left(\begin{array}{cccccccc}
    I & \cdot & \cdot & \cdot & \cdot & \cdot & \cdot & \cdot\\
    \cdot & \cdot & \cdot & \cdot & \cdot & \cdot & \cdot & \cdot\\
    S^+ & \cdot & \cdot & \cdot & \cdot & \cdot & \cdot & \cdot\\
    S^- & \cdot & \cdot & \cdot & \cdot & \cdot & \cdot & \cdot\\
    \cdot & S^+ & \cdot & \cdot & I & \cdot & \cdot & \cdot\\
    \cdot & S^- & \cdot & \cdot & \cdot & I & \cdot & \cdot\\
    \cdot & \cdot & S^- & S^+ & \cdot & \cdot & I & \cdot\\
    \cdot & \cdot & \cdot & \cdot & S^- & S^+ & \cdot & \cdot
  \end{array}\right),
\end{equation}
\begin{equation}
  M^{[1]} = \left(\begin{array}{cccccccc}
    \cdot & \cdot & S^- & S^+ & S^- & S^+ & I & \cdot
  \end{array}\right), \qquad M^{[L]} = \left(\begin{array}{c}
    I\\
    \cdot\\
    S^+\\
    S^-\\
    S^+\\
    S^-\\
    \cdot\\
    \cdot
  \end{array}\right),
\end{equation}
where zeroes are represented as dots for clarity.

One-site operators, such as the particle number operator $\sum_i \hat{n}_i$,
can be always represented as an MPO using the automaton from
Fig.~\ref{fig:appendixmps-auto4} and the following matrix representation:
\begin{equation}
  M^{[i]} = \left(\begin{array}{cc}
    I & 0\\
    M & I
  \end{array}\right) .
\end{equation}

\chapter{Quantum Monte Carlo}

\section{Fitting formulas}\label{ch:appendixQMCfitting}

\subsection{Exciton}\label{ch:appendixQMCfittingX}

Fitting parameters from Eq.~(\ref{eq:2-xbindfit}) are given in
Table~\ref{tab:2-xbindfit}.

\begin{table}[h!]
  \centering
  \begin{tabular}{ccc}
    \hline\hline
    Parameter & Estimate & Standard error\\
    \hline
    $a_1$ & $- 55.566$ & 0.024\\
    $a_2$ & $102.45$ & 0.14\\
    $a_3$ & $- 99.57$ & 0.25\\
    $a_4$ & $43.06$ & 0.16\\
    $a_5$ & $- 4.380$ & 0.029\\
    $b_1$ & $- 4.718$ & 0.013\\
    $b_2$ & $3.718$ & 0.013\\
    \hline\hline
  \end{tabular}
  \caption{Fitting parameters for the exciton binding
  energy.\label{tab:2-xbindfit}}
\end{table}

\newpage

\subsection{Trion}\label{ch:appendixQMCfittingXminus}

Fitting parameters from Eq.~(\ref{eq:2-trionbindfit}) are given in
Table~\ref{tab:2-trionbindfit}.

\begin{table}[h!]
  \begin{tabular}{ccc}
    \hline\hline
    Parameter & Estimate & Standard error\\
    \hline
    $a_{00}$ & 0.8210 & 0.0015\\
    $a_{01}$ & $- 1.76$ & 0.16\\
    $a_{02}$ & 1.5 & 1.5\\
    $a_{03}$ & 0.4 & 5.8\\
    $a_{04}$ & 0.2 & 11.\\
    $a_{05}$ & $- 4.$ & 13.\\
    $a_{06}$ & 4. & 8.\\
    $a_{07}$ & $- 1.5$ & 1.9\\
    $a_{10}$ & $- 6.31$ & 0.05\\
    $a_{11}$ & 13.2 & 0.6\\
    $a_{12}$ & $- 10.$ & 4.\\
    $a_{13}$ & $- 11.$ & 13.\\
    $a_{14}$ & 26. & 19.\\
    $a_{15}$ & $- 19.$ & 14.\\
    $a_{16}$ & 5. & 4.\\
    $a_{20}$ & 38.6 & 0.6\\
    $a_{21}$ & $- 66.5$ & 2.1\\
    $a_{22}$ & 58. & 7.\\
    $a_{23}$ & $- 8.$ & 17.\\
    $a_{24}$ & $- 22.$ & 21.\\
    $a_{25}$ & 17. & 14.\\
    $a_{26}$ & $- 3.3$ & 4.0\\
    $a_{30}$ & $- 148.6$ & 3.6\\
    $a_{31}$ & 187. & 7.\\
    $a_{32}$ & $- 140.$ & 10.\\
    $a_{33}$ & 40. & 11.\\
    $a_{34}$ & 0. & 8.\\
    $a_{35}$ & $- 2.8$ & 2.5\\
    $a_{40}$ & 360. & 11.\\
    $a_{41}$ & $- 302.$ & 15.\\
    $a_{42}$ & 172. & 11.\\
    $a_{43}$ & $- 35.$ & 6.\\
    $a_{44}$ & 2.8 & 1.8\\
    $a_{50}$ & $- 550.$ & 19.\\
    $a_{51}$ & 279. & 18.\\
    $a_{52}$ & $- 107.$ & 7.\\
    $a_{53}$ & 9.9 & 1.6\\
    $a_{60}$ & 518. & 20.\\
    $a_{61}$ & $- 137.$ & 12.\\
    $a_{62}$ & 27.3 & 2.1\\
    $a_{70}$ & $- 273.$ & 12.\\
    $a_{71}$ & 27.6 & 3.1\\
    $a_{80}$ & 62.2 & 3.0\\
    \hline\hline
  \end{tabular}
  \caption{Fitting parameters for the trion binding
  energy.\label{tab:2-trionbindfit}}
\end{table}

\newpage

\subsection{Biexciton}\label{ch:appendixQMCfittingXXminus}

Fitting parameters from Eq.~(\ref{eq:2-XXbinding-fit}) are given in Table
\ref{tab:2-biexbindfit}. Notice that $a_{2 j}$ parameters are missing, since
they can be incorporated into $a_{0 j}$.

\begin{table}[h!]
  \begin{tabular}{ccc}
    \hline\hline
    Parameter & Estimate & Standard error\\
    \hline
    $a_{00}$ & 6.9 & {\hspace{2.5em}}2.3\\
    $a_{01}$ & $- 40.$ & 11.\\
    $a_{02}$ & 161. & 21.\\
    $a_{03}$ & $- 460.$ & 32.\\
    $a_{04}$ & 900. & {\underline{5}}0.\\
    $a_{05}$ & $- 1170.$ & {\underline{7}}0.\\
    $a_{06}$ & 1010. & {\underline{6}}0.\\
    $a_{07}$ & $- 514.$ & 37.\\
    $a_{08}$ & 118. & 9.\\
    $a_{10}$ & $- 3.3$ & 0.7\\
    $a_{11}$ & 20.2 & 3.9\\
    $a_{12}$ & $- 82.$ & 10.\\
    $a_{13}$ & 195. & 20.\\
    $a_{14}$ & $- 265.$ & 31.\\
    $a_{15}$ & 189. & 26.\\
    $a_{16}$ & $- 54.$ & 8.\\
    $a_{30}$ & $- 3.7$ & 2.9\\
    $a_{31}$ & 12. & 11.\\
    $a_{32}$ & $- 17.$ & 13.\\
    $a_{33}$ & 12. & 8.\\
    $a_{34}$ & $- 4.5$ & 3.1\\
    $a_{40}$ & 1.3 & 1.3\\
    $a_{41}$ & $- 4.$ & 4.\\
    $a_{42}$ & 2.7 & 3.5\\
    \hline\hline
  \end{tabular}
  \caption{Fitting parameters for the biexciton binding
  energy.\label{tab:2-biexbindfit}}
\end{table}

\section{Contact pair density}\label{ch:AppendixCPDF}

Here we present calculations that justify approximations of the contact pair
density for a~negative trion and a biexciton from
Eqs.~(\ref{eq:2-contactinttrioneh}) and (\ref{eq:2-contactintxxeh}).

The contact pair density for a trion is defined as:
\begin{equation}
  \rho_{\tmop{eh}}^{\text{X}^-} (0) = \left\langle \delta \left(
  \vec{r}_{\text{e}_1} - \vec{r}_{\text{h}_1} \right) + \delta \left(
  \vec{r}_{\text{e}_2} - \vec{r}_{\text{h}_1} \right) \right\rangle .
\end{equation}
Let us assume that the trion wave function $\psi$ can be separated into
exciton wave function $\phi$ and a free electron wave function $\varphi$, that
are spatially separated:
\begin{equation}
  \psi \left( \vec{r}_{\text{e}_1}, \vec{r}_{\text{e}_2}, \vec{r}_{\text{h}_1}
  \right) = \phi \left( \vec{r}_{\text{e}_1}, \vec{r}_{\text{h}_1} \right)
  \varphi \left( \vec{r}_{\text{e}_2} \right) .
\end{equation}
We can write the normalisation constant of the wave function $\psi$ as:
\begin{eqnarray}
  C & = & \int | \psi |^2 \mathd \vec{r}_{\text{e}_1} \mathd
  \vec{r}_{\text{e}_2} \mathd \vec{r}_{\text{h}_1} \\
  & = & \underbrace{\int \left| \phi \left( \vec{r}_{\text{e}_1},
  \vec{r}_{\text{h}_1} \right) \right|^2 \mathd \vec{r}_{\text{e}_1} \mathd
  \vec{r}_{\text{h}_1}}_{C_{\phi}}  \underbrace{\int \left| \varphi \left(
  \vec{r}_{\text{e}_2} \right) \right|^2 \mathd
  \vec{r}_{\text{e}_2}}_{C_{\varphi}} \nonumber\\
  & = & C_{\phi} C_{\varphi} . \nonumber
\end{eqnarray}
Now, we can evaluate the trion contact pair density:
\begin{eqnarray}
  \rho_{\tmop{eh}}^{\text{X}^-} & = & \frac{1}{C} \int \left[ \delta \left(
  \vec{r}_{\text{e}_1} - \vec{r}_{\text{h}_1} \right) \underset{}{} + \delta
  \left( \vec{r}_{\text{e}_2} - \vec{r}_{\text{h}_1} \right) \right] | \psi
  |^2 \mathd \vec{r}_{\text{e}_1} \mathd \vec{r}_{\text{e}_2} \mathd
  \vec{r}_{\text{h}_1} \\
  & = & \frac{1}{C} \int \left| \phi \left( \vec{r}_{\text{e}_1},
  \vec{r}_{\text{e}_1} \right) \right|^2  \left| \varphi \left(
  \vec{r}_{\text{e}_2} \right) \right|^2 \mathd \vec{r}_{\text{e}_1} \mathd
  \vec{r}_{\text{e}_2} \nonumber\\
  &  & + \frac{1}{C} \underbrace{\int \left| \phi \left(
  \vec{r}_{\text{e}_1}, \vec{r}_{\text{e}_2} \right) \right|^2  \left| \varphi
  \left( \vec{r}_{\text{e}_2} \right) \right|^2 \mathd \vec{r}_{\text{e}_1}
  \mathd \vec{r}_{\text{e}_2}}_{\substack{\approx\ 0 \text{, since the wave
  functions } \phi \text{ and } \varphi \text{ are spatially separated}\\
  \text{(exciton and free electron are distant from each other)}
  }} \nonumber\\
  & \approx & \frac{1}{C_{\phi}} \int \left| \phi \left(
  \vec{r}_{\text{e}_1}, \vec{r}_{\text{e}_1} \right) \right|^2 \mathd
  \vec{r}_{\text{e}_1} \nonumber\\
  & = & \rho_{\tmop{eh}}^{\text{X}}, \nonumber
\end{eqnarray}
which explains Eq.~(\ref{eq:2-contactinttrioneh}).

Similarly, the biexciton contact pair density is:
\begin{equation}
  \rho_{\tmop{eh}}^{\tmop{XX}} (0) = \left\langle \delta \left(
  \vec{r}_{\text{e}_1} - \vec{r}_{\text{h}_1} \right) + \delta \left(
  \vec{r}_{\text{e}_1} - \vec{r}_{\text{h}_2} \right) + \delta \left(
  \vec{r}_{\text{e}_2} - \vec{r}_{\text{h}_1} \right) + \delta \left(
  \vec{r}_{\text{e}_2} - \vec{r}_{\text{h}_2} \right) \right\rangle .
\end{equation}
We separate the biexciton wave function $\psi$ into two exciton wave functions
$\phi$ and $\varphi$:
\begin{equation}
  \psi \left( \vec{r}_{\text{e}_1}, \vec{r}_{\text{e}_2},
  \vec{r}_{\text{h}_1}, \vec{r}_{\text{h}_2} \right) = \phi \left(
  \vec{r}_{\text{e}_1}, \vec{r}_{\text{h}_1} \right) \varphi \left(
  \vec{r}_{\text{e}_2}, \vec{r}_{\text{h}_2} \right) .
\end{equation}
The normalisation of $\psi$ is:
\begin{eqnarray}
  C & = & \int | \psi |^2 \mathd \vec{r}_{\text{e}_1} \mathd
  \vec{r}_{\text{e}_2} \mathd \vec{r}_{\text{h}_1} \mathd \vec{r}_{\text{h}_2}
  \\
  & = & \underbrace{\int \left| \phi \left( \vec{r}_{\text{e}_1},
  \vec{r}_{\text{h}_1} \right) \right|^2 \mathd \vec{r}_{\text{e}_1} \mathd
  \vec{r}_{\text{h}_1}}_{C_{\phi}}  \underbrace{\int \left| \varphi \left(
  \vec{r}_{\text{e}_2}, \vec{r}_{\text{h}_2} \right) \right|^2 \mathd
  \vec{r}_{\text{e}_2} \mathd \vec{r}_{\text{h}_2}}_{C_{\varphi}} \nonumber\\
  & = & C_{\phi} C_{\varphi} . \nonumber
\end{eqnarray}
Finally, we evaluate the biexciton contact pair density:
\begin{eqnarray}
  \rho_{\tmop{eh}}^{\tmop{XX}} & = & \frac{1}{C} \int \left[ \delta \left(
  \vec{r}_{\text{e}_1} - \vec{r}_{\text{h}_1} \right) \underset{}{} + \delta
  \left( \vec{r}_{\text{e}_1} - \vec{r}_{\text{h}_2} \right) \right. \\
  &  & \left. + \underset{}{} \delta \left( \vec{r}_{\text{e}_2} -
  \vec{r}_{\text{h}_1} \right) + \delta \left( \vec{r}_{\text{e}_2} -
  \vec{r}_{\text{h}_2} \right) \right] | \psi |^2 \mathd \vec{r}_{\text{e}_1}
  \mathd \vec{r}_{\text{e}_2} \mathd \vec{r}_{\text{h}_1} \mathd
  \vec{r}_{\text{h}_2} \nonumber\\
  & = & \frac{1}{C} \int \left| \phi \left( \vec{r}_{\text{e}_1},
  \vec{r}_{\text{e}_1} \right) \right|^2  \left| \varphi \left(
  \vec{r}_{\text{e}_2}, \vec{r}_{\text{h}_2} \right) \right|^2 \mathd
  \vec{r}_{\text{e}_1} \mathd \vec{r}_{\text{e}_2} \mathd \vec{r}_{\text{h}_2}
  \nonumber\\
  &  & + \frac{1}{C} \underbrace{\int \left| \phi \left(
  \vec{r}_{\text{e}_1}, \vec{r}_{\text{h}_1} \right) \right|^2  \left| \varphi
  \left( \vec{r}_{\text{e}_2}, \vec{r}_{\text{e}_1} \right) \right|^2 \mathd
  \vec{r}_{\text{e}_1} \mathd \vec{r}_{\text{e}_2} \mathd
  \vec{r}_{\text{h}_1}}_{\substack{
    \approx\ 0 \text{, since the wave functions } \phi \text{ and } \varphi
    \text{ are spatially separated}\\
    \text{(the two excitons do not occupy the same space)}
  }} \nonumber\\
  &  & + \frac{1}{C} \underbrace{\int \left| \phi \left(
  \vec{r}_{\text{e}_1}, \vec{r}_{\text{e}_2} \right) \right|^2  \left| \varphi
  \left( \vec{r}_{\text{e}_2}, \vec{r}_{\text{h}_2} \right) \right|^2 \mathd
  \vec{r}_{\text{e}_1} \mathd \vec{r}_{\text{e}_2} \mathd
  \vec{r}_{\text{h}_2}}_{\approx\ 0 \text{, as above}} \nonumber\\
  &  & + \frac{1}{C} \int \left| \phi \left( \vec{r}_{\text{e}_1},
  \vec{r}_{\text{h}_1} \right) \right|^2  \left| \varphi \left(
  \vec{r}_{\text{e}_2}, \vec{r}_{\text{e}_2} \right) \right|^2 \mathd
  \vec{r}_{\text{e}_1} \mathd \vec{r}_{\text{e}_2} \mathd \vec{r}_{\text{h}_1}
  \nonumber\\
  & \approx & \frac{1}{C_{\phi}} \int \left| \phi \left(
  \vec{r}_{\text{e}_1}, \vec{r}_{\text{e}_1} \right) \right|^2 \mathd
  \vec{r}_{\text{e}_1} + \frac{1}{C_{\varphi}} \int \left| \varphi \left(
  \vec{r}_{\text{e}_2}, \vec{r}_{\text{e}_2} \right) \right|^2 \mathd
  \vec{r}_{\text{e}_2} \nonumber\\
  & = & \rho_{\tmop{eh}}^{\text{X}_1} + \rho_{\tmop{eh}}^{\text{X}_2}
  \nonumber\\
  & = & 2 \rho_{\tmop{eh}}^{\text{X}}, \nonumber
\end{eqnarray}
which justifies Eq.~(\ref{eq:2-contactintxxeh}).

\end{document}